\newcommand{\ie}{\text{i.\,e.}\ }
\newcommand{\eg}{\emph{e.\,g.}\ }
\newcommand{\ar}{\operatorname{ar}}
\newcommand{\Npos}{\mathbb N_{> 0}}
\newcommand{\F}{\mathcal{F}}
\newcommand{\x}{\textbf{x}}
\newcommand{\rot}{\operatorname{ROT}}
\newcommand{\dist}{\operatorname{dist}}
\newcommand{\indexSetRotation}{[D]^2}
\newcommand{\indexSetH}{([D]^2)^2}
\newcommand{\sampler}{\textsc{EstimateFrequencies}}
\newcommand{\setOfMaxCliques}[1][G]{\mathcal{K}^{#1}}
\newcommand{\maxcl}{\operatorname{maxcl}}
\renewcommand{\phi}{\varphi}
\renewcommand{\mod}{\mathrel{\mathrm{mod}}}
\newcommand*\circled[1]{\tikz[baseline=(char.base)]{
		\node[shape=circle,draw,inner sep=1pt] (char) {#1};}}
\DeclareMathOperator{\zigzag}{\circled{{\rm z}}}
\newcommand{\margin}[1]{$\bullet$\marginpar{\raggedright\footnotesize #1}}
\providecommand{\abs}[1]{\left\lvert#1\right\rvert}
\newcommand{\other}[1]{\tilde{#1}}%%%%Can be anything but '
\newcommand{\struc}[1]{#1}
\newcommand{\univ}[1]{U(#1)}
\newcommand{\classStruc}[1]{\mathcal{#1}}
\newcommand{\rel}[2]{#1(#2)}
\newcommand{\gaifman}[1]{G(#1)}
\newcommand{\graphProp}{\mathcal{P}_{\operatorname{graph}}}
\newcommand{\graphFormula}{\psi_{\operatorname{graph}}}
\newcommand{\underlyingGraph}[1]{\underline{G}(#1)}
\newcommand{\vet}[1]{\ensuremath{\overline{#1}}}
\definecolor{C1}{RGB}{1,1,1}
\definecolor{C2}{RGB}{0,0,170}
\definecolor{C3}{RGB}{251,86,4}
\definecolor{C4}{RGB}{50,180,110}
\newtheorem{fact}{Fact}
\newtheorem{example}{Example}
\newtheorem{question}{Open Question}
\newtheorem{observation}{Observation}
\crefname{hypothesis}{Hypothesis}{Hypotheses}
\title{On Testability of First-Order Properties in Bounded-Degree Graphs and Connections to Proximity-Oblivious Testing
	%\thanks{Submitted to the editors DATE.
		%\funding{This work was funded by the Fog Research Institute under contract no.~FRI-454.}
		%}
}
\author{Isolde Adler\thanks{University of Bamberg, Bamberg, Germany
		(\email{isolde.adler@uni-bamberg.de}%, \url{}
		).}
	\and Noleen Köhler\thanks{Universit\'{e} Paris-Dauphine, PSL University, CNRS UMR7243, LAMSADE, Paris, France
		(\email{noleen.kohler@dauphine.psl.eu}).}
	\and Pan Peng\thanks{University of Science and Technology of China, Hefei, China
		(\email{ppeng@ustc.edu.cn}).}}
\begin{document}

\maketitle

% REQUIRED
\begin{abstract}
 We study property testing of properties that are definable in first-order logic (FO) in the
bounded-degree graph and relational structure models. We show that any FO property that is defined by a formula with quantifier prefix $\exists^*\forall^*$ is testable (i.e., testable with constant query complexity), while there exists an FO property that is expressible by a formula with quantifier prefix  $\forall^*\exists^*$ that is not testable. In the dense graph model, a similar picture is long known (Alon,
Fischer, Krivelevich, Szegedy, Combinatorica 2000), despite the very different nature of the two models. In particular, we obtain our lower bound by an FO formula that defines a class of bounded-degree expanders,
based on zig-zag products of graphs. We expect this to be of independent
interest.

We then use our class of FO definable bounded-degree expanders to answer a long-standing open problem for \emph{proximity-oblivious testers (POTs)}. POTs are a class of particularly simple testing algorithms, where a basic test is performed a number of times that may depend on the proximity parameter, but the basic test itself is independent of the proximity parameter.

In their seminal work, Goldreich and Ron [STOC 2009; SICOMP 2011] show that
the graph properties that are constant-query proximity-oblivious testable in the bounded-degree model are precisely the properties that can be expressed as a \emph{generalised subgraph freeness (GSF)} property that satisfies the \emph{non-propagation} condition.
It is left open whether the non-propagation condition is necessary. Indeed, calling
properties expressible as a generalised subgraph freeness property \emph{GSF-local properties}, they ask
whether all GSF-local properties are non-propagating.
We give a negative answer by showing that our 
is GSF-local and propagating. Hence in particular, our property does not
admit a POT, despite being GSF-local.
For this result we establish a new connection between FO properties and GSF-local properties via neighbourhood profiles.

Finally, motivated by our lower bound and by Hanf-locality of FO, we explore testability of properties that speak about isomorphism types of neighbourhoods.
\end{abstract}

% REQUIRED
\begin{keywords}
Graph property testing, first-order logic, proximity-oblivious testing, locality, lower bound
\end{keywords}

% REQUIRED
\begin{MSCcodes}
68Q25, 68R10, 68W20, 03B70  
\end{MSCcodes}

\section{Introduction}
Graph property testing is a framework for studying sampling-based algorithms that solve a relaxation of classical decision problems on graphs. Given a graph $G$ and a property $\classStruc{P}$ (e.\,g.\ triangle-freeness), the goal of a property testing algorithm, called a \emph{property tester}, is to distinguish if a graph satisfies $\classStruc{P}$ or is \emph{far} from satisfying $\classStruc{P}$, where the definition of \emph{far} depends on the model. The general notion of property testing was first proposed by Rubinfeld and Sudan \cite{rubinfeld1996robust}, with the motivation for the study of program checking. Goldreich, Goldwasser and Ron \cite{goldreich1998property} then introduced the property testing for combinatorial objects and graphs. They formalized the \emph{dense graph model} for testing graph properties, in which the algorithm can query if any pair of vertices of the input graph $G$ with $n$ vertices are adjacent or not, and the goal is to distinguish, with probability at least $2/3$, the case of $G$ satisfying a property $\classStruc{P}$ from the case that one has to modify (delete or insert) more than $\varepsilon n^2$ edges to make it satisfy $\classStruc{P}$, for any specified proximity parameter $\varepsilon\in (0,1]$.
A property $\classStruc{P}$ is called testable (in the dense graph model), if it can be tested with constant query complexity, i.e., the number of queries made by the tester is bounded by a function of $\varepsilon$ and is independent of the size of the input graph. Since \cite{goldreich1998property}, much effort has been made on the testability of graph properties in this model, culminating in the work by Alon et al.~\cite{alon2009combinatorial}, who showed that a property is testable if and only if it can be reduced to testing for a finite number of regular partitions. 

Since Goldreich and Ron's seminal work~\cite{GoldreichRon2002} introducing property testing on bounded-degree graphs, much 
attention has been paid to property testing in sparse graphs. 
Nevertheless, our understanding of testability of properties in such graphs is still limited. In the \emph{bounded-degree graph model}~\cite{GoldreichRon2002}, the algorithm has oracle access to the input graph $G$ with maximum degree $d$, which is assumed to be a constant, and is allowed to perform \emph{neighbour queries} to the oracle. That is, for any specified vertex $v$ and index $i\leq d$, the oracle returns the $i$-th neighbour of $v$ if it exists or a special symbol $\bot$ otherwise in constant time. 
A graph $G$ with $n$ vertices is called \emph{$\varepsilon$-far} from satisfying a property $\classStruc{P}$, if one needs to modify more than $\varepsilon dn$ edges to make it satisfy $\classStruc{P}$. 
The goal now becomes to distinguish, with probability at least $2/3$, if $G$ satisfies a property $\classStruc{P}$ or is $\varepsilon$-far from satisfying $\classStruc{P}$, for any specified proximity parameter $\varepsilon\in (0,1]$. 
Again, a property $\classStruc{P}$ is testable in the bounded-degree model, if it can be tested with constant query complexity, where the constant can depend on $\varepsilon, d$ while being  independent of $n$. So far, it is known that some properties are testable, including subgraph-freeness, $k$-edge connectivity, cycle-freeness, being Eulerian, degree-regularity~\cite{GoldreichRon2002}, minor-freeness~\cite{benjamini2010every,hassidim2009local,kumar2019random}, hyperfinite properties \cite{NewmanSohler2013}, $k$-vertex connectivity~\cite{yoshida2012property,forster2019computing}, and subdivision-freeness~\cite{kawarabayashi2013testing}. We now discuss the contributions of this paper. 

\subsection{Our contributions}
\subsubsection{Non-testability of first-order logic} We study the testability of properties definable in first-order  logic (FO) in the bounded-degree graph model. Recall that formulas of first-order logic on graphs are built from predicates for the edge relation and equality, using
Boolean connectives $\vee,
\wedge,\neg$ and universal and existential quantifiers $\forall,\exists$, where the variables represent graph vertices. First-order logic can e.\,g.\ express subgraph-freeness (i.\,e., no isomorphic copy of some fixed graph $H$ appears as a subgraph) and subgraph containment (i.\,e., an isomorphic copy of some fixed $H$ appears as a subgraph).  
Note however, that there are constant-query testable properties, such as connectivity and cycle-freeness, that cannot be expressed in FO. We study the question of which first-order properties are testable in the bounded-degree graph model. 
Our study extends to the bounded-degree \emph{relational structure} model \cite{AdlerH18}, while we focus on the classes of relational structures  
with binary relations, i.e., edge-coloured directed graphs. In this model for relational structures, one can perform neighbour queries,
 querying for both in- and out-neighbours and the edge colour that connects them.
This model is natural in the context of relational databases, where each
(edge-)relation is given by a list of the tuples it contains. 

We consider the testability of first-order 
properties in the bounded-degree model according to 
quantifier alternation, inspired by a similar study for dense graphs by Alon et al.~\cite{alon2000efficient}. On relational structures of bounded-degree over a fixed finite signature, 
we have the following simple observation: Any 
first-order property definable by a sentence \emph{without} quantifier alternations is testable.
This means the sentence either consists of a quantifier prefix of the form $\exists^*$ (any
finite number of existential quantifications), followed by a quantifier-free formula, or it consists of
a quantifier prefix of the form $\forall^*$ (any finite number of universal quantifications), 
followed by a quantifier-free formula.
Basically, every property of the form $\exists^*$ is testable because the structure required by 
the quantifier-free part of the formula can be planted
with a small number of tuple modifications if the input structure is large enough (depending on the formula), and
we can use an exact algorithm to determine the answer in constant time otherwise.
Every property of the form $\forall^*$ is testable because a formula of the form $\forall \bar x \phi(\bar x)$,
where $\phi$ is quantifier-free, is logically equivalent to a formula of the form 
$\neg \exists \bar x \psi(\bar x)$, where $\psi$ is quantifier-free. Testing $\neg \exists \bar x \psi(\bar x)$
then amounts to testing for the absence of a finite number of induced substructures, which can be done
similar to testing subgraph freeness~\cite{GoldreichRon2002}. 
The testability of a property becomes less clear if it is defined by a sentence \emph{with} quantifier alternations. 
Formally, we let $\Pi_2$ (resp. $\Sigma_2$) denote the set of properties that can be expressed by a formula in the $\forall^*\exists^*$-prefix (resp. $\exists^*\forall^*$-prefix) class. 
We obtain the following.
\\[-0.3cm]

\textit{Every first-order property in $\Sigma_2$ is testable in the bounded-degree model (Theorem \ref{thm:sigma2}). On the other hand, there is a first-order property in $\Pi_2$, that is not testable in the bounded-degree model (Theorem~\ref{thm:pi2}).}\\[-0.3cm]

The theorems that we refer to in the above statement speak about relational structures, while we also give a lower bound on graphs (Theorem~\ref{thm:simpleDelta2}),
so the statement also holds when restricted to FO on graphs. 
Interestingly, the above dividing line is the
same as for FO properties in dense graph model~\cite{alon2000efficient},
despite the very different nature of the two models.
Our proof uses a number of new proof techniques, combining graph theory, combinatorics and logic. 

We remark that our lower bound, i.e., the existence of a property in $\Pi_2$ that is not testable, is somewhat 
astonishing (on an intuitive level)  
due to the following two reasons. Firstly, it is proven by constructing a 
first-order definable class of structures that encode a class of expander graphs, which highlights that FO is surprisingly expressive on bounded-degree graphs, despite its locality~\cite{Hanf1965,Gaifman82,FaginStockmeyerVardi1995}.   
Secondly, it is known that property testing algorithms in the bounded-degree model proceed by sampling vertices from the input graph and exploring their local neighbourhoods, and FO can only express `local' properties, while our lower bound shows that this is not sufficient for testability. We elaborate on this in the following.
On one hand, Hanf's Theorem~\cite{Hanf1965} gives insight into first-order logic on graphs of bounded-degree and 
implies a strong normal form, called \emph{Hanf Normal Form} (HNF) in~\cite{BolligKuske2012}, which we briefly sketch.
For a graph $G$ of maximum degree $d$ and a vertex $x$ in $G$, the
neighbourhood of fixed radius $r$ around $x$ in $G$ can be described by a first-order formula $\tau_r(x)$, up to isomorphism.
A \emph{Hanf sentence} is a first-order sentence of the form `there are at least $\ell$ vertices $x$ of 
neighbourhood (isomorphism) type $\tau_r(x)$'. A  first-order sentence is in HNF, if it is a Boolean combination of Hanf sentences.
By Hanf's Theorem, every first-order sentence is equivalent to a sentence in HNF on bounded-degree graphs~\cite{Hanf1965,FaginStockmeyerVardi1995,EF95}. Note that Hanf sentences only speak about local 
neighbourhoods. Hence this theorem gives evidence that first-order logic can only express local
properties. 
On the other hand, 
if a property is constant-query testable in the bounded-degree graph model, then it can be tested by approximating the distribution of local neighbourhoods (see \cite{CzumajPS16} and~\cite{goldreich2011proximity}). That is, a constant-query tester can essentially only test properties that are close to being defined by a distribution of local neighbourhoods. For these reasons\footnote{Furthermore, previously, typical FO properties were all known to be testable, including degree-regularity for a fixed given degree, containing a $k$-clique and a dominating set of size $k$ for fixed $k$ (which are trivially testable), and the aforementioned subgraph-freeness and subgraph containment (see e.g.~\cite{goldreich2017introduction}).}, a priori, it could be true that every property that can be expressed in first-order logic is testable in the bounded-degree model. Indeed, the validity of this statement was raised as an open question in~\cite{AdlerH18}. However, our lower bound gives a negative answer to this question.

\subsubsection{GSF-locality is not sufficient for proximity oblivious testing}
Typical property testers make decisions regarding the global property of the graph based on local views only. In the extreme case, a tester could make the size 
of the local views independent of the distance $\varepsilon$ to a predetermined set of graphs. Motivated by this, Goldreich and Ron \cite{goldreich2011proximity} initiated the study of (one-sided error) \emph{proximity-oblivious testers (POTs)} for graphs, where a tester simply repeats a basic test for a number of times that depends on the proximity parameter, while the basic tester is oblivious of the proximity parameter. They gave characterizations of graph properties that can be tested with constant query complexity by a POT in both dense graph model and the bounded-degree model. In each model, it is known that the class of properties that have constant-query POTs is a strict subset of the class of properties that are testable (by standard testers).

Informally, 
a (one-sided error) POT for a property $\mathcal{P}$ is a tester that always accepts a graph $G$ if it satisfies $\mathcal{P}$, and rejects $G$ with probability that is a monotonically increasing function of the distance of $G$ from the property $\mathcal{P}$. We say $\mathcal{P}$ is \emph{proximity-oblivious testable} if such a tester exists for $\mathcal{P}$ with constant query complexity. To characterise the class of proximity-oblivious testable properties in the bounded-degree model, Goldreich and Ron \cite{goldreich2011proximity} introduced a notion of generalized subgraph freeness (GSF), that extends the notions of induced subgraph freeness and (non-induced) subgraph freeness. A graph property is called a \emph{GSF-local} property if it is expressible as a GSF property. It has been shown in \cite{goldreich2011proximity} that a graph property is constant-query proximity-oblivious testable if and only if it is a GSF-local property that satisfies a so-called \emph{non-propagation} condition. 
Informally, a GSF-local property $\mathcal{P}$ is non-propagating if repairing a graph $G$ that does not satisfy $\mathcal{P}$ does not trigger a global ``chain reaction'' of necessary modifications.  
We refer the reader to Section~\ref{sec:gsf_preliminaries} for the formal definitions. 

A major question that is left open in~\cite{goldreich2011proximity}
 is whether every GSF-local property satisfies the non-propagation condition.
By using the aforementioned non-testable FO property and establishing a new connection between FO properties and GSF-local properties, we resolve this question 
by showing the following negative result.  \\[-0.3cm]

\textit{	There exists a GSF-local property of graphs of degree at most $3$ that is not testable in the bounded-degree model. Thus, not all GSF-local properties are non-propagating (Theorem~\ref{thm:existenceLocalNonTestableProperty}).}\\[-0.3cm]

We expect this result will shed some light on a full characterisation of testable properties in the bounded-degree model. Indeed, in a recent work by Ito, Khoury and Newman \cite{ito2019characterization}, the authors gave a characterization of testable \emph{monotone}
graph properties and testable \emph{hereditary}
graph properties with one-sided error in the bounded-degree graph model; and they asked the open question ``\emph{is every property that is defined by a set of forbidden configurations testable?}''. Since their definition of a property defined by a set of ``forbidden configuration'' is equivalent to a GSF-local property, 
our result above also gives a negative answer to their question.

We complete the picture by showing the following. \\[-0.3cm]

\textit{Every GSF-local property of graphs of degree at most $2$ is non-propagating (Theorem~\ref{thm:degreeTwoCase}).}

\subsubsection{Neighbourhood freeness and neighbourhood regularity.} 
Motivated by our lower bounds, we turn back to FO sentences in Hanf-normal form. While Hanf sentences are testable (they are in $\Sigma_2$) we  ask whether properties defined by negated Hanf sentences are testable. Towards this, we  
give testers with constant query complexity for some first-order 
properties that speak about isomorphism types of neighbourhoods. Given a bounded-degree graph, an \emph{$r$-ball} around a vertex $x$ is the neighbourhood of radius $r$ around $x$ in the graph. We call the isomorphism types of $r$-balls \emph{$r$-types}. 
We consider two basic such properties, called \emph{$\tau$-neighbourhood regularity} and \emph{$\tau$-neighbourhood-freeness}, that correspond to ``all vertices have $r$-type $\tau$'' and ``no vertex has $r$-type $\tau$'', respectively. (Neighbourhood-regularity can be seen as a generalisation of degree-regularity, which is known to be testable \cite{goldreich2017introduction} and testing neighbourhood freeness corresponds to testing a negated Hanf sentence.) 
As we show in Lemma \ref{lem:existencesigma2}, there exist $1$-types $\tau,\tau'$ such that neither $\tau$-neighbourhood-freeness nor $\tau'$-neighbourhood regularity can be defined by a formula in $\Sigma_2$.  
Thus, our previous tester for $\Sigma_2$ cannot be applied to these properties. We give constant-query testers for them under certain conditions on 
	$\tau$ (Theorem~\ref{thm:dNeighbourhoodFreeness}, \ref{thm:1NeighbourhoodFreeness} and \ref{thm:neighbourhoodRegularity}).
	Both $\tau$-neighbourhood-freeness and $\tau$-neighbourhood regularity can be defined by formulas in $\Pi_2$ for any neighbourhood type $\tau$. Thus, our results imply that there are properties defined by formulas in $\Pi_2\setminus \Sigma_2$ which are testable.

\subsection{Our techniques}
\subsubsection{On the testability and non-testability of FO properties} 
For showing that every property $\classStruc{P}$ defined by a formula $\varphi$ in $\Sigma_2$ (i.e.\ of the form $\exists^*\forall^*$) is testable, we show that $\classStruc{P}$ is equivalent to 
{the union of properties $\classStruc{P}_i$, each of which is} `indistinguishable' from a property $\classStruc{Q}_i$ that is defined by a formula of form $\forall^*$. Here the indistinguishability means we can transform any structure satisfying $\classStruc{P}_i$, into a structure satisfying $\classStruc{Q}_i$ by modifying a small fraction of the tuples of the structure and vice versa. This allows us to reduce the problem of testing $\classStruc{P}$ to testing properties defined by $\forall^*$ formulas.  
Then the testability of $\classStruc{P}$ follows, as any property of the form $\forall^*$ is testable and testable properties are closed under union \cite{goldreich2017introduction}. The main challenge here is to deal with the interactions between existentially quantified variables and universally quantified variables.
Intuitively, the degree bound limits the structure that can be imposed by the universally quantified variables. Using this, we are able to deal with the existential variables together with these interactions by `planting' a required
constant size substructure in such a way, that we are only a constant number of modifications `away' from a formula of the form $\forall^*$.

Complementing this, we use Hanf's theorem to observe that every FO property on degree-regular structures is in $\Pi_2$ (see Lemma \ref{lem:d-regHNF}). Thus to prove that there exists a property defined by a formula in $\Pi_2$ which is not testable, it suffices to show the existence of an FO property that is not testable and degree-regular.  
For the latter, 
we note that it suffices to construct a formula $\phi$, that defines a class of relational structures with binary relations only (edge-coloured directed graphs) whose underlying undirected graphs are expander graphs. To see this, we use an earlier result that if a property is constant-query testable, then the distance between the local (constant-size) neighbourhood distributions of a relational structure $\struc{A}$  satisfying the property $\phi$ and a relational structure $\struc{B}$ that is $\varepsilon$-far from having the property must be relatively large (see \cite{AdlerH18} which in turn is built upon the so-called ``canonical testers'' for bounded-degree graphs in \cite{CzumajPS16,goldreich2011proximity}). We then exploit a result of Alon (see Proposition 19.10 in~\cite{LovaszBook2012}), that the neighbourhood distribution of an arbitrarily large relational structure $\struc{A}$
can be approximated by the neighbourhood distribution of a structure $\struc{H}$ of small constant size. Thus, for any $\struc{A}$ in $\phi$, by taking the union of ``many'' disjoint copies of the ``small'' structure $\struc{H}$, we obtain another structure $\struc{B}$ such that the local neighbourhood distributions of $\struc{A}$ and $\struc{B}$ have small distance. If the underlying undirected graphs of the structures in $\phi$ are expander graphs, it immediately follows that $\struc{B}$ is far from the property defined by the formula $\phi$, from which we can conclude that the property $\phi$ is not testable. We remark that for simple undirected graphs, it was known before that any property that only consists of expander graphs is not testable~\cite{fichtenberger2019every}.

Now we construct a formula $\phi$, that defines a class of relational structures with binary relations only whose underlying
undirected graphs are expander graphs, arising from the zig-zag product by Reingold, Vadhan and Wigderson~\cite{Reingold00entropywaves}. For expressibility in FO, we hybridise the zig-zag construction of 
expanders with a tree structure. Roughly speaking, we start with a small graph
$H$, which is a good expander, and the formula $\phi$ expresses that each model\footnote{When the context is clear, we use ``model'' to indicate that a structure satisfies some formula. This should not be confused with the names for our computational models, e.g., the bounded-degree model.}
looks like a rooted $k$-ary tree (for a suitable fixed $k$), where level $0$ consists of the root only,  
level $1$ contains $G_1:=H^2$, and level $i$ contains the zig-zag product
of $G_{i-1}^2$ with $H$. The class of trees is not definable in FO. 
However, we achieve that every finite model of our formula is connected and looks like a 
$k$-ary tree with the desired graphs on the levels.
This structure is
obtained by a recursive `copying-inflating' mechanism, 
to mimic the expander construction locally between consecutive levels.
For this we use a constant number of edge-colours, one set of colours for the edges of the tree, and  
another for the edges of the `level' graphs $G_i$.
On the way, many technicalities need to be tackled, such as encoding the zig-zag construction
into the local copying mechanism (and achieving the right degrees), and finally proving connectivity.
We then show that the underlying 
undirected graphs of the models of $\phi$ are expander graphs. 
Using a hardness reduction which inserts carefully designed gadgets to encode the different edge-colours, we finally 
obtain a non-testable property of undirected $3$-regular graphs.

\subsubsection{On GSF-locality and POTs} We then proceed to showing that this property of $3$-regular graphs is GSF-local.
For this, we first study the relation between locality of first-order logic and GSF-locality.
Hanf's Theorem \cite{Hanf1965} implies that we can understand locality of FO as prescribing upper and lower bounds for the number of occurrences of certain local neighbourhood (isomorphism) types. 
On the other hand, a GSF-local property as defined in \cite{goldreich2011proximity} prescribes the absence of some constant-size \emph{marked} graphs, where a marked graph $F$ specifies an induced subgraph and how it `interacts' with the rest of the graph (see Definition \ref{def:gsf}). Intuitively, such a property just specifies a condition that the local neighbourhoods of a graph $G$ should satisfy, i.e., certain types of local neighbourhoods cannot occur in $G$, or equivalently, these types have $0$ occurrences. However, it does not follow that every GSF-local property is FO definable, because the set of forbidden marked graphs depends on the size $n$ 
of the graphs in the class. Indeed, it is not hard to come up with
undecidable properties that are GSF-local.

To establish a connection between FO properties and GSF-local properties, we first encode the bounds on the number of occurrences of local neighbourhood types into what we call \emph{neighbourhood profiles}, and characterise FO definable properties of bounded-degree relational structures as finite unions of properties defined by neighbourhood profiles (Lemma~\ref{lemma:FO-neighbourhood}). We then show that every FO formula defined by a non-trivial finite union of properties each of which is defined by a \emph{$0$-profile}, \ie the prescribed lower bounds are all $0$, is GSF-local (Theorem~\ref{thm:subsetOfFOIsLocal}). Given the fundamental roles of local properties in graph theory, graph limits \cite{LovaszBook2012}, we believe this new connection is of independent interest. 

For technical reasons, we make use of the property defined by our formula $\phi$ above, which is a property of \emph{relational structures} that is not testable in the bounded-degree model, instead of directly using our non-testable graph property of $3$-regular graphs. We prove that the property defined by $\phi$
can actually be defined by $0$-profiles (Lemma \ref{lem:neighbouhoodProfilOfPZigZag}). 
We then derive that our non-testable graph property of $3$-regular graphs is also GSF-local (Lemma~\ref{lemma:graphproperty_gsf_local}), by showing that the reduction 
maintains definability by $0$-profiles.  

\subsubsection{On testing neighbourhood regularity, neighbourhood-freeness} In order to obtain our testers for $\tau$-neighbourhood regularity and $\tau$-neighbourhood-freeness, we show that if a graph $G$ is $\varepsilon$-far from having the property, it contains a linear fraction of constant-size neighbourhoods certifying that $G$ does not satisfy the property. Such a statement may be intuitively true, but it is tricky to prove. Assume we want to test for $\tau$-freeness, for some fixed $r$-neighbourhood type $\tau$,
	and assume a graph $G$ has one vertex $x$ with forbidden neighbourhood of type $\tau$.
	Changing the $r$-neighbourhood of $x$  
	by edge modifications, in order to remove $\tau$, might introduce new forbidden 
	neighbourhoods around vertices close to $x$, triggering
	a `chain reaction' of necessary modifications.
	This means that a graph might be $\epsilon$-far from being $\tau$-free, but we do not see it by sampling constantly many neighbourhoods in the graph. Such a subtle difficulty has already been observed for testing degree-regularity (see Claim 8.5.1 in \cite{goldreich2017introduction}). We show that under appropriate assumptions, such a `chain reaction' can be bypassed 
	by carefully fixing the neighbourhood of $x$ without changing the neighbourhood type of the vertices surrounding $x$. 
	Though fairly simple, it provides non-trivial analysis, handling the subtle difficulty of relating local distance to global distance without triggering a `chain reaction'.

\subsection{Other related work}
Besides the aforementioned works on testing properties with constant query complexity in the bounded-degree graph model, Goldreich and Ron~\cite{goldreich2011proximity} have obtained a characterisation for a class of properties that are testable by a constant-query proximity-oblivious tester in bounded-degree graphs (and dense graphs). Such a class is a rather restricted subset of the class of all constant-query testable properties. 
Fichtenberger et al.~\cite{fichtenberger2019every} showed that every testable property is either finite or contains an infinite hyperfinite subproperty (see Definition~\ref{def:hyperfinite}). Ito et al.~\cite{ito2019characterization} gave characterisations of one-sided error (constant-query) testable monotone graph properties, and one-sided error testable hereditary graph properties in the bounded-degree (directed and undirected) graph model.

In the bounded-degree graph model,   
there are also properties (e.g.\ bipartiteness, expansion, $k$-clusterability) that require $\Omega(\sqrt{n})$  
queries, and properties (e.g.\ $3$-colorability) that require $\Omega(n)$ queries. We refer the reader to Goldreich's recent book~\cite{goldreich2017introduction}.

Property testing on relational structures was recently motivated by the application in databases. Besides the aforementioned work \cite{AdlerH18}, Chen and Yoshida  \cite{chen2019testability} studied the testability of relational database queries for each relational structure in the framework of property testing.

The notion of POT was implicitly defined in~\cite{blum1993self}. Goldreich and Shinkar~\cite{goldreich2016two}  studied two-sided error POTs for both dense graph and bounded-degree graph models. Goldreich and Kaufman~\cite{goldreichkauf2011proximity} investigated the relation between local conditions that are invariant in an adequate sense and properties that have a constant-query proximity-oblivious testers. 
 
This paper combines and extends the results of two conference papers, \cite{AdlerKP21} 
and~\cite{AdlerK021}.
In this paper we modified the
property for the lower bound in~\cite{AdlerKP21} slightly so that it is GSF-local, which allows
us to use it both as a non-testable $\Pi_2$-property and as a GSF-local property that is
propagating. We also give an improved reduction from relational structures to undirected graphs which reduces the original (large) degree bound to $3$.
Finally, we complete the picture by showing that all GSF-local properties of degree at most $2$
are non-propagating. 
\subsection{Structure of the paper}
Section~\ref{sec:preliminaries} contains the preliminaries, including logic,
property testing and the zig-zag construction of expander graphs.
In Section~\ref{sec: definitionFormula} we construct the FO formula $\phi$ and prove properties 
of its models.
In Section~\ref{sec:FOnontestability}, we prove that there is a $\Pi_2$-property that is not testable, by proving that the property defined by $\phi$ on bounded-degree structures is not constant-query testable. Using a reduction, in Section~\ref{sec:reduction_graphs} we then 
provide a $\Pi_2$-property of undirected graphs of degree at most $3$ that is non-testable. In Section \ref{sec:testableSigma2}, we show that all $\Sigma_2$ properties are testable.
In Section~\ref{sec:GSFlocality} we then turn to POTs, showing that our $\Pi_2$-property of undirected graphs of degree at most $3$ is GSF-local and propagating. We then show that 
all GSF-local properties of degree at most $2$ are non-propagating.
In Section~\ref{sec:freeness} we give positive results for some first-order properties that
speak about isomorphism types of neighbourhoods. We conclude in Section \ref{sec:conclusion}. 

\section{Preliminaries}\label{sec:preliminaries} 
We let $\mathbb{N}$ denote the set of natural numbers including $0$, and $\Npos:=\mathbb N\setminus\{0\}$. For $n\in \mathbb{N}$ we let $[n]:=\{0,1,\dots,n-1\}$ denote the set of the first $n$ natural numbers. For a set $S$ and $k\in \mathbb{N}$ we denote the Cartesian product $S\times\dots \times S$ of $k$ copies of $S$ by $S^k$. We use $\binom{S}{2}$ 
to denote the set of all two-element subsets of $S$, we denote the disjoint union of sets by $\sqcup$ and the symmetric difference by $\triangle$.
\subsection{Undirected graphs}\label{app:undirectedGraphs}
Unless otherwise specified we allow graphs to have self-loops and parallel edges. We represent an undirected graph $G$ as a triple $(V,E,f)$, where $V$ is the set of vertices, $E$ is the set of edges and $f:E\rightarrow V\cup \binom{V}{2}$ is the incidence map. An isomorphism from $G_1=(V_1,E_1,f_1)$ to $G_2=(V_2,E_2,f_2)$ is a pair of bijective maps $(h_V,h_E)$, where $h_V:V_1\rightarrow V_2$ and $h_E:E_1\rightarrow E_2$, such that $h_V(f_1(e))=f_2(h_E(e))$ for any $e\in E_1$, where $h_V(X):=\{h_V(x)\mid x\in X\}$ for any set $X\subseteq V_1$. Undirected graphs without self-loops and parallel edges are called \emph{simple}. 
For a simple graph $G$, we also represent $G$ as a tuple $G=(V(G),E(G))$, where $V(G)$ is the vertex set and
$E(G)\subseteq \binom{V}{2}$. 
The \emph{degree} $\deg_G(v)$ of a vertex $v$ in a graph $G$ is the number of edges to which $v$ is incident. In particular, self-loops contribute one to the degree. We will say that a graph $G$ is \emph{$d$-regular} for some $d\in \mathbb{N}$ if every vertex in $G$ has degree $d$. We specify paths in graphs by tuples of vertices. We further let all paths and cycles be simple, \ie no vertex appears twice.   
The \emph{length} of a path on $n$ vertices is $n-1$.  We define the distance between two vertices $v$ and $w$ in a graph $G$, denoted $\dist_G(v,w)$, as the length of a shortest path from $v$ to $w$ or $\infty$ if there is no path from $v$ to $w$ in $G$. Any subset $S\subseteq V$ of vertices \emph{induces} a graph  $G[S]:=(S,\{e\in E\mid f(e)\in S\cup \binom{S}{2}\},f|_{S})$. A \emph{connected component} of $G$ is a graph induced by a maximal set $S$, such that each pair $v,w\in S$ has finite distance in $G$. A graph is connected if it has only one connected component. We refer the reader to~\cite{diestelBook}
for the basic notions of graph theory.

We also consider rooted undirected trees. By specifying a root we can uniquely direct the edges away from the root. This allows us to use the terminology of \emph{children} and \emph{parents} for undirected rooted trees. We call a  tree \emph{$k$-ary} if every vertex has either none or exactly $k$ children and we call it complete if, for every $i\in \mathbb N$, there are either exactly $k^i$ or no vertices of distance $i$ to the root of the tree.

\subsection{Relational structures and first-order logic}
We will briefly introduce structures and first-order logic and point the reader to~\cite{EF95} for a more detailed introduction.
A  (relational) \textit{signature} is a finite set $\sigma =\{R_1,\dots,R_\ell\}$ of relation symbols $R_i$. Every relation symbol $R_i$, $1\leq i\leq \ell$  has an arity  $\ar(R_i)\in \Npos$.
A \textit{$\sigma$-structure} is a tuple $\struc{A}=(\univ{A},\rel{R_1}{\struc{A}},\dots,\rel{R_\ell}{\struc{A}})$, where $\univ{A}$ is a \emph{finite} set, called the \emph{universe} of $\struc{A}$ and $\rel{R_i}{\struc{A}}\subseteq \univ{A}^{\ar(R_i)}$ is an $\ar(R_i)$-ary relation on $\univ{A}$. Note that if $\sigma=\{E_1,\ldots,E_{\ell}\}$ is a signature where each $E_i$ is a binary relation
symbol, then $\sigma$-structures are directed graphs 
with $\ell$ edge-colours. Let $\sigma_{\operatorname{graph}}:=\{E\}$ be a signature with one binary relation symbol $E$. Then we can understand undirected simple graphs as $\sigma_{\operatorname{graph}}$-structures for which the relation $E$ is symmetric (every undirected edge is represented by two tuples) and irreflexive. Using this we can transfer all notions defined below to simple graphs. Typically we name graphs $G,H,F$, we denote the set of vertices of a graph $G$ by $V(G)$, the set of edges by $E(G)$ and vertices are typically named $u,v,w,u',v',w',\dots$. In contrast when we talk about a general relational structure we use $A,B$ and $a,b,a',b',\dots$ to denote elements from the universe.

In the following we let $\sigma$ be a relational signature.
Two $\sigma$-structures $\struc{A}$ and $\struc{B}$ are \emph{isomorphic} if there is a bijective map from $\univ{A}$ to $\univ{B}$ that preserves all relations. 
For a $\sigma$-structure $\struc{A}$ and a subset $S\subseteq \univ{A}$, we let $ \struc{A}[S]$
denote the \emph{substructure} of $ \struc{A}$ \emph{induced} by $S$, i.\,e.\ $ \struc{A}[S]$ has universe $S$ and $\rel{R}{\struc{A}[S]}:=\rel{R}{ \struc{A}}\cap S^{\text{ar}(R)}$ for all $R\in \sigma$.
The \emph{degree} of an element $a\in \univ{A}$ denoted by $\deg_{\struc{A}}(a)$ is defined to be the number of tuples in $\struc{A}$ %\pnote{$\cup_i R_i(A)$?}
containing $a$.
We define the \textit{degree} of $\struc{A}$, denoted by $\deg(\struc{A})$, to be the maximum degree of its elements. A structure $\struc{A}$ is \emph{$d$-regular} for some $d\in \mathbb{N}$ if every element $a\in \univ{A}$ has degree $d$.
Given a signature $\sigma$ and a constant $d$, we let $\classStruc{C}_{\sigma,d}$ be the class of all $\sigma$-structures of degree at most $d$, and let $\mathcal{C}_d$ the set of all graphs of degree at most $d$. Note that the degree of a graph differs by exactly a factor $2$ from the degree of the corresponding $\sigma_{\operatorname{graph}}$-structure. Let $\classStruc{C}$ be any class of $\sigma$-structures which is closed under isomorphism. A \emph{property} $\classStruc{P}$ in $\classStruc{C}$ is a subset of $\classStruc{C}$ which is closed under isomorphism. We say that a structure $\struc{A}$ has property $\classStruc{P}$ if $\struc{A}\in \classStruc{P}$. 

Syntax and semantic of FO is defined in the usual way (see \eg \cite{EF95}).  
We use $\exists^{\geq m}x\,\phi$ (and $\exists^{= m}x\,\phi$, $\exists^{\leq m}x\,\phi$, respectively)
as a shortcut for the FO formula expressing that  the number of witnesses $x$ satisfying $\phi$
is at least $m$ (exactly $m$, at most $m$, respectively).
We say that a variable occurs \emph{freely} in an FO formula if at least one of its occurrences is not bound by any quantifier.
We use $\varphi(x_1,\dots,x_k)$ to express that the set of variables which occur  freely in the FO formula $\varphi$ is a subset of $\{x_1,\dots,x_k\}$. For a formula $\varphi(x_1,\dots,x_k)$, a $\sigma$-structure $\struc{A}$ and $a_1,\dots,a_k\in \univ{A}$ we write $\struc{A}\models \varphi(a_1,\dots,a_k)$ if $\varphi$ evaluates to true after assigning $a_i$ to $x_i$, for $1\leq i\leq k$. A \emph{sentence} of FO is a formula with no free variables. For an FO sentence $\varphi$ we say that $\struc{A}$ is a \emph{model} of $\varphi$ or $\struc{A}$ satisfies $\varphi$ if $\struc{A}\models \varphi$. Let $\classStruc{C}$ be a class of $\sigma$-structures closed under isomorphism. Every FO-sentence $\phi$ over $\sigma$ defines a property $\classStruc{P}_\phi\subseteq \classStruc{C}$ on $\classStruc{C}$, where
$\classStruc{P}_\phi:=\{\struc{A}\in \classStruc{C}\mid \struc{A}\models \varphi\}$. 

\paragraph{Hanf normal form}
The \textit{Gaifman graph} of a $\sigma$-structure $\struc{A}$ is the undirected graph $\gaifman{\struc{A}}=(\univ{A},E)$, where $\{v,w\}\in E$, if $v\not=w$ and there is an $R\in \sigma$ and a tuple $\overline{a}=(a_1,\dots,a_{\ar(R)})\in \rel{R}{\struc{A}}$, such that $v=a_j$ and $w=a_k$ for some $1\leq k,j\leq \ar(R)$. We use $\gaifman{\struc{A}}$ to apply graph theoretic notions to relational structures. Note that for any graph the Gaifman graph of the corresponding  symmetric $\sigma_{\operatorname{graph}}$-structure is the graph itself.
We say that a $\sigma$-structure $\struc{A}$ is \emph{connected} if its Gaifman graph $\gaifman{\struc{A}}$ is connected.
For two elements $a,b\in \univ{A}$, we define the \emph{distance} between $a$ and $b$ in $\struc{A}$, denoted by
$\dist_{\struc{A}}(a,b)$, as the length of a shortest path from $a$ to $b$ in $\gaifman{\struc{A}}$, or $\infty$ if there is no such path. 
For $r\in \mathbb{N}$ and   $a\in \univ{A}$, the \textit{$r$-neighbourhood} of $a$ is the set $N_r^{\struc{A}}(a):=\{b\in \univ{A}: \dist_{\struc{A}}(a,b)\leq r\}$. We define $\mathcal{N}_r^{\struc{A}}(a):=\struc{A}[N_r^{\struc{A}}(a)]$ to be the substructure of $\struc{A}$ induced by the $r$-neighbourhood of $a$. 
For $r\in \mathbb{N}$ an \emph{$r$-ball} is a tuple $(\struc{B},b)$, where $\struc{B}$ is a $\sigma$-structure, $b\in \univ{B}$ and $\univ{B}=N_r^{\struc{B}}(b)$, \ie $\struc{B}$ has radius $r$ and $b$ is the centre.
Note that by definition $(\mathcal{N}_r^{\struc{A}}(a),a)$ is an $r$-ball for any $\sigma$-structure $\struc{A}$ and $a\in \univ{A}$. Two $r$-balls $(\struc{B},b),(\struc{B}',b')$ are isomorphic if there is an isomorphism of $\sigma$-structure from $\struc{B}$ to $\struc{B}'$ that maps $b$ to $b'$.
We call the isomorphism classes of $r$-balls \emph{$r$-types}. 
For an $r$-type $\tau$ and an element $a\in \univ{A}$ we say that $a$ \emph{has} ($r$-)type $\tau$  if $(\mathcal{N}_r^{\struc{A}}(a),a)\in \tau$. 
Moreover, given such an $r$-type $\tau$, there is a formula $\phi_{\tau}(x)$ such that 
for every $\sigma$-structure $\struc{A}$ and for every $a\in \univ{A}$, $\struc{A}\models\phi_{\tau}(a)$ iff
$(\mathcal{N}_r^{\struc{A}}(a),a)\in \tau$.
A \emph{Hanf-sentence} is a sentence of the form $\exists ^{\geq m} x \phi_{\tau}(x)$, for some $m\in\Npos$, where $\tau$ is an 
$r$-type.  
An FO sentence is in \emph{Hanf normal form}, if it is a Boolean combination\footnote{By Boolean combination we 
	always mean \emph{finite} Boolean combination.} of Hanf sentences.
Two formulas $\phi(x_1,\dots,x_k)$ and $\psi(x_1,\dots,x_k)$ of signature $\sigma$ are called
\emph{$d$-equivalent}, denoted by $\phi(x_1,\dots,x_k) \equiv_d\psi(x_1,\dots,x_k)$, if they are equivalent on $\classStruc{C}_{\sigma,d}$, i.\,e.\ for all $\struc{A}\in \classStruc{C}_{\sigma,d}$ and all 
$(a_1,\dots,a_k) \in \univ{A}^{k}$  we have 
$\struc{A}\models\phi(a_1,\dots,a_k)$ iff $\struc{A}\models\psi(a_1,\dots,a_k)$.
Hanf's locality theorem for first-order logic~\cite{Hanf1965} implies the following.

\begin{theorem}[Hanf~\cite{Hanf1965}]\label{thm:Hanf}
	Let $d\in\mathbb N$. Every sentence of first-order logic is $d$-equivalent to a 
	sentence in Hanf normal form.
\end{theorem}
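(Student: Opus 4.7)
The plan is to derive the theorem from the classical locality result for first-order logic combined with the finiteness of local neighbourhood types on bounded-degree structures. First I would fix a sentence $\phi$ of quantifier rank $q$. The core tool is the locality lemma in the Fagin--Stockmeyer--Vardi style: there exist $r=r(q)$ (one may take $r=3^q$) and a threshold $s=s(q,d)$ such that any two $\sigma$-structures $\struc{A}, \struc{B}$ which, for every $r$-type $\tau$, either have equally many elements of type $\tau$ or both have more than $s$ such elements, are indistinguishable by any FO sentence of quantifier rank at most $q$. This is established by an explicit duplicator strategy in the $q$-round Ehrenfeucht--Fraisse game: one maintains the invariant that after round $j$ the pebbled tuples agree on their $3^{q-j}$-neighbourhoods up to isomorphism, and the threshold $s$ is chosen large enough so that the duplicator can always find a fresh response of the correct type far from the already-pebbled elements.

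Next, on $\classStruc{C}_{\sigma,d}$ the size of any $r$-ball is bounded by a function of $d$ and $r$, so there are only finitely many $r$-types up to isomorphism. To each $\struc{A}\in \classStruc{C}_{\sigma,d}$ I would associate its $(r,s)$-profile, namely the function $\tau \mapsto \min(n_\tau(\struc{A}), s)$, where $n_\tau(\struc{A})$ counts the elements of $\struc{A}$ of $r$-type $\tau$. The profile takes only finitely many values, and by the locality step any two structures with identical profiles are $d$-equivalent on quantifier rank $q$, hence in particular agree on $\phi$.

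A standard closure argument then finishes the proof. The class $\classStruc{P}_\phi \cap \classStruc{C}_{\sigma,d}$ is a finite union of profile classes, and each single profile is $d$-equivalent on $\classStruc{C}_{\sigma,d}$ to a conjunction of Hanf sentences and their negations: for each $r$-type $\tau$ one writes $\exists^{\geq m_\tau} x\, \phi_\tau(x) \wedge \neg \exists^{\geq m_\tau+1} x\, \phi_\tau(x)$ when $m_\tau<s$, and $\exists^{\geq s} x\, \phi_\tau(x)$ when $m_\tau=s$. Taking the disjunction over all profiles realised by models of $\phi$ in $\classStruc{C}_{\sigma,d}$ yields the required HNF sentence that is $d$-equivalent to $\phi$.

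The main obstacle is the locality lemma itself: crafting the duplicator strategy so that the chosen threshold $s$ simultaneously guarantees that the local isomorphism invariant survives each round and that fresh witnesses of each required type remain available outside the neighbourhoods of previously pebbled elements. The bounded-degree hypothesis only enters at the very end, and is precisely what turns what would a priori be a Boolean combination ranging over all $r$-types (of which there could be infinitely many on unbounded-degree structures) into a genuine \emph{finite} Boolean combination, as demanded by the definition of Hanf normal form.
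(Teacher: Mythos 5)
The paper does not prove this statement; it is cited as a known result of Hanf (via~\cite{Hanf1965}, with alternative references to Fagin--Stockmeyer--Vardi and Ebbinghaus--Flum given in the surrounding discussion), so there is no paper proof to compare against. Your sketch reproduces the standard argument from those references: the locality lemma proved by an Ehrenfeucht--Fra\"iss\'e game strategy yields $q$-equivalence for structures whose truncated $r$-type counts agree, bounded degree makes the number of $r$-types finite, and partitioning $\classStruc{C}_{\sigma,d}$ into finitely many $(r,s)$-profile classes gives a finite disjunction of conjunctions of Hanf sentences and negated Hanf sentences, hence a sentence in Hanf normal form $d$-equivalent to $\phi$. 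That argument is correct. One small correction to your closing remark: the bounded-degree hypothesis does not only enter ``at the very end.'' It is also needed inside the locality lemma, because the threshold $s$ must be chosen from a uniform bound on $r$-ball sizes (so that the duplicator can always find a fresh witness of the required type outside the neighbourhoods of the already pebbled elements, uniformly over all structures in the class); you implicitly concede this by writing $s = s(q,d)$, but the later sentence downplays it. Without bounded degree, the threshold cannot be fixed uniformly and, separately, the set of $r$-types is infinite, and both failures independently block the construction of a single HNF sentence.
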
 
\paragraph{Quantifier alternations of first-order formulas}
Let $\sigma$ be any relational signature.
We use the following recursive definition, classifying first-order formulas according to the number of quantifier alterations in their quantifier prefix. Let $\Sigma_0=\Pi_0$ be the class of all quantifier free first-order formulas over $\sigma$. Then for every $i\in \Npos$ we let $\Sigma_i$ be the set of all FO formulas $\varphi(y_1,\dots,y_\ell)$ for which there is $k\in \mathbb{N}$ and a formula $\psi(x_1,\dots,x_k,y_1,\dots,y_\ell)\in \Pi_{i-1}$ such that 
\[\varphi\equiv \exists x_1\dots \exists x_k\psi(x_1,\dots,x_k,y_1,\dots,y_\ell).\] Analogously, $\Pi_i$ consists of all FO formulas $\varphi(y_1,\dots,y_\ell)$ for which there is $k\in \mathbb{N}$ and a formula $\psi(x_1,\dots,x_k,y_1,\dots,y_\ell)\in \Sigma_{i-1}$ such that 
\[\varphi\equiv \forall x_1\dots \forall x_k\psi(x_1,\dots,x_k,y_1,\dots,y_\ell).\] 
We further say that a property $\classStruc{P}\subseteq \classStruc{C}$ is in $\Sigma_i$ or $\Pi_i$ if there is an FO-sentence $\varphi$ in $\Sigma_i$ or $\Pi_i$, respectively, such that $\classStruc{P}=\classStruc{P}_{\varphi}$.
\begin{example}[Substructure freeness]
	Let $\struc{B}$ be a $\sigma$-structure, and let $d\in \mathbb N$. The property \[\classStruc{P}:=\{\struc{A}\in \classStruc{C}_{\sigma,d}\mid \struc{A} \text{ does not contain }\struc{B}\text{ as substructure}\}\]
	is in $\Pi_1$.
\end{example}

\subsection{Property testing}\label{sec:boundeddeg_model}
In the following, we give definitions of two models for property testing - the bounded-degree model for simple graphs introduced in ~\cite{GoldreichRon2002} and a bounded-degree model for relational structures similar to the model introduced in~\cite{AdlerH18}. The model for relational structures described here is chosen to simplify notation. It differs from the model in ~\cite{AdlerH18} in the way the query access is defined, however, they are equivalent in the sense that testability in either model implies testability in the other model. This can be easily seen using a local reduction as defined in Section~\ref{sec:localreduction}. The bounded-degree model for relational structures extends the bounded-degree model for undirected graphs introduced in ~\cite{GoldreichRon2002} and conforms with the bidirectional model of ~\cite{CzumajPS16}. 

For notational convenience, $\classStruc{C}$ will either denote a class of graphs of bounded-degree $d$ closed under isomorphism, or a class of $\sigma$-structures of bounded-degree $d$ closed under isomorphism for some signature $\sigma$ and some $d\in \mathbb{N}$. Let $\classStruc{P}$ be a property on $\classStruc{C}$. We will further refer to both graphs and $\sigma$-structures as structures. Let $\classStruc{P}_n$ be the subset of $\classStruc{P}$ with $n$ vertices/elements. Thus $\classStruc{P}=\cup_{n\in \mathbb{N}}\classStruc{P}_n$. 
We define the distance of a structure $\struc{A}$ on $n$ vertices/elements to a property $\mathcal{P}=\bigcup_{n\in \mathbb{N}}\mathcal{P}_n$  as $$\dist(\struc{A},\classStruc{P}):=\min_{B\in \classStruc{P}_n}\frac{\sum_{R\in \sigma}|\rel{R}{A}\triangle \rel{R}{B}|}{dn}.$$
For $\epsilon\in (0,1)$ we say that a structure $\struc{A}$ on $n$ vertices/elements is \emph{$\epsilon$-close} to $\classStruc{P}$  if $\dist(\struc{A},\classStruc{P})\leq \epsilon$, that is one can modify $\struc{A}$ into a structure in $\classStruc{P}$ by adding/deleting at most $\epsilon d n$ tuples of $\struc{A}$. We say that $\struc{A}$ is $\epsilon$-far from $\classStruc{P}$ if $\struc{A}$ is not $\epsilon$-close to $\classStruc{P}$.

An algorithm that processes a structure $\struc{A}\in \classStruc{C}$
does not obtain an encoding of $\struc{A}$ as a bit string in the usual way. Instead, we assume that the algorithm receives the number $n$ of elements/vertices of $\struc{A}$, and that the elements/vertices of $\struc{A}$ are numbered $1,2,\ldots,n$. In addition, the algorithm has direct access to $\struc{A}$ using an \emph{oracle} which answers \emph{neighbour queries}
in $\struc{A}$ in constant time. A \emph{query} to a $\sigma$-structure  $\struc{A}$ of bounded-degree $d$ has the form $(a,i)$ for an element $a\in \univ{A}$, $i\in \{1,\dots,d\}$ and is answered by $\operatorname{ans}(a,i):=(R,a_1,\dots,a_{\ar(R)})$ where $(a_1,\dots,a_{\ar(R)})$ is the $i$-th tuple (according to some fixed ordering) containing $a$ and $(a_1,\dots,a_{\ar(R)})\in \rel{R}{\struc{A}}$.  A \emph{query} to a graph $G$ of bounded-degree $d$ has the form $(v,i)$ for  $v\in V(G)$, $i\in \{1,\dots,d\}$ and is answered by $\operatorname{ans}(v,i):=w$ where $w$ is the $i$-th neighbour of $v$.   

 Now we give the formal definitions of standard property testing and proximity-oblivious testing.
\begin{definition}[(Standard) property testing]
	Let $\classStruc{P}=\cup_{n\in \mathbb{N}}\classStruc{P}_{n}$ be a property. An \emph{$\epsilon$-tester} for $\classStruc{P}_n$ is a probabilistic algorithm which, given query access to a structure $\struc{A}\in \classStruc{C}$ with $n$ vertices/elements, 
	\begin{itemize}
		\item accepts $\struc{A}$ with probability $2/3$ if $\struc{A}\in \classStruc{P}_n$.
		\item rejects $\struc{A}$ with probability $2/3$ if $\struc{A}$ is $\epsilon$-far from $\classStruc{P}_n$.
	\end{itemize} 
	We say that a property $\classStruc{P}$ is \emph{testable} if for every $n\in \mathbb{N}$ and $\epsilon\in (0,1)$, there exists an $\epsilon$-tester for $\classStruc{P}_n$ that makes at most $q=q(\epsilon,d)$ queries. We say the property $\classStruc{P}$ is testable with \emph{one-sided error} if the $\epsilon$-tester always accepts $\struc{A}$ if $\struc{A}\in \classStruc{P}$. 
\end{definition}
We introduce below the formal definition of proximity-oblivious testers. 
\begin{definition}[Proximity-oblivious testing (with one-sided error)]
	Let $\classStruc{P}=\cup_{n\in \mathbb{N}}\classStruc{P}_{n}$ be a property. Let $\eta:(0,1]\to (0,1]$   
	be a monotonically non-decreasing  
	function. A \emph{proximity-oblivious tester (POT)} with detection probability $\eta$ for $\classStruc{P}_n$ is a probabilistic algorithm which, given query access to a structure $\struc{A}\in \classStruc{C}$ with $n$ vertices/elements,
	\begin{itemize}
		\item accepts $\struc{A}$ with probability $1$ if $\struc{A}\in \classStruc{P}_n$.
		\item rejects $\struc{A}$ with probability at least $\eta(\dist(\struc{A},\classStruc{P}_n))$ if $\struc{A}\notin\classStruc{P}_n$. 
	\end{itemize} 
	We say that a property $\classStruc{P}$ is \emph{proximity-oblivious testable} if for every $n\in \mathbb{N}$, there exists a POT for $\classStruc{P}_n$ of constant query complexity with detection probability $\eta$. 
\end{definition}
We remind the reader of the following which we argued in the introduction.
\begin{remark}\label{rem:Sigma1Pi1}
	Let $d\in \mathbb N$.
	Every propery definable in $\Sigma_1$ is testable on $C_d$, and every property definable in
	$\Pi_1$ is testable on $C_d$. 
\end{remark}

\subsection{Expansion and the zig-zag product}\label{sec:zigZagProduct}

In this section we  recall a construction of a class of expanders introduced in ~\cite{Reingold00entropywaves}. This construction uses undirected graphs with parallel edges and self-loops. 

Let $G=(V,E,f)$ be an undirected $D$-regular graph on $N$ vertices.  We follow the convention that each self-loop counts $1$ towards the degree. Let $I$ be a set of size $D$. 
Then a \emph{rotation map} of $G$ is a function $\rot_G:V\times I\rightarrow V\times I$ such that for every two not necessarily different vertices $u,v\in V$
\begin{displaymath}
|\{(i,j)\in I\times I \mid \rot_G(u,i)=(v,j)\}|=2|\{e\in E\mid f(e)=\{u,v\} \}|
\end{displaymath}
and $\rot_{G}$ is self inverse, i.e. $\rot_G(\rot_G(v,i))=(v,i)$ for all $v\in V, i\in I$. A rotation map is a representation of a graph that additionally fixes for every vertex $v$ an order on all edges incident to $v$. We let the normalised adjacency matrix $M$ of $G$ be 
defined by 
\begin{displaymath}
M_{u,v}:=\frac{1}{D}\cdot|\{e \mid f(e) = \{u,v\}\}|.
\end{displaymath}
Since $M$ is real, symmetric, contains no negative entries and all columns sum up to $1$, all its eigenvalues are  in the real  interval $[-1,1]$. Let $1=\lambda_1\geq \lambda_2\geq \dots\geq \lambda_N\geq -1$ denote the eigenvalues of $M$. We let $\lambda(G):=\max\{|\lambda_2|,|\lambda_N|\}$. Note that these notions do not depend on the rotation map. We say that a graph is an $(N,D,\lambda)$-graph, if $G$ has $N$ vertices, is $D$-regular and $\lambda(G)\leq \lambda$.
We will use the following lemma.
\begin{lemma}[\cite{Hoory06expandergraphs}]\label{lem:connectedBipartiteEigenvalues}
The graph $G$ is connected if and only if $\lambda_2<1$. Furthermore, if $G$ is connected, then $G$ is bipartite if and only if $\lambda_N=-1$.
\end{lemma}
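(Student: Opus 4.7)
The plan is to exploit the fact that, by $D$-regularity, the normalized adjacency matrix $M$ is symmetric with non-negative entries and row sums equal to $1$; in particular $M\mathbf{1}=\mathbf{1}$, so $\lambda_1=1$ is always attained by the all-ones vector. The whole lemma then reduces to analyzing the eigenspaces at the extremal eigenvalues $+1$ and $-1$ via a Perron--Frobenius style maximum principle applied to an arbitrary eigenvector.

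For the first claim I would prove the sharper statement that the multiplicity of the eigenvalue $1$ equals the number of connected components of $G$. The key step is: if $Mf=f$ and $v$ attains the maximum of $f$, then the eigen-equation $f(v)=\sum_u M_{v,u}f(u)$ writes $f(v)$ as a convex combination of values $\leq f(v)$, so $f(u)=f(v)$ for every $u$ with $M_{v,u}>0$; iterating shows that $f$ is constant on the connected component of $v$. Consequently the indicator vectors of the connected components span the $1$-eigenspace, so $\lambda_2<1$ iff this eigenspace is one-dimensional iff $G$ is connected.

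For the second claim I may assume $G$ is connected. If $G$ is bipartite with parts $A,B$, then the vector $f$ that equals $+1$ on $A$ and $-1$ on $B$ satisfies $Mf=-f$ (every neighbour of a vertex in $A$ lies in $B$ and vice versa), giving $\lambda_N=-1$. Conversely, if $Mf=-f$, I would run the analogous maximum principle on $|f|$: at a vertex $v$ where $|f|$ is maximal, combining $-f(v)=\sum_u M_{v,u}f(u)$ with the triangle inequality $|f(v)|\leq \sum_u M_{v,u}|f(u)|\leq |f(v)|$ forces equality throughout, so $f(u)=-f(v)$ on every neighbour $u$ of $v$. Iterating through the connected graph shows that $f$ takes only values $\pm c$ for some $c>0$, and that every edge connects a $+c$-vertex to a $-c$-vertex; this vertex partition is then a bipartition of $G$.

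The main obstacle I expect is careful bookkeeping around self-loops and parallel edges, which are allowed in the zig-zag setting. In the bipartite converse a self-loop at $v$ contributes positive weight to $M_{v,v}$, so the max-principle step actually forces $(1+M_{v,v})|f(v)|\leq (1-M_{v,v})|f(v)|$ and hence $f(v)=0$, which then propagates by connectedness to $f\equiv 0$, correctly ruling out $\lambda_N=-1$ whenever $G$ contains a self-loop (consistent with self-loops being odd closed walks and thus incompatible with bipartiteness). With this adjustment the convex-combination step remains valid using the exact entries $M_{v,u}=(1/D)\cdot|\{e:f(e)=\{u,v\}\}|$, and the argument goes through uniformly in the presence of multi-edges and loops.
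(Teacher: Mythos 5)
The paper itself does not prove this lemma; it states it with a citation to the Hoory--Linial--Wigderson survey on expander graphs, so there is no ``paper's own proof'' to compare against. Your argument is the standard one that one would find in such a reference, and it is essentially correct: Perron--Frobenius-style maximum-principle reasoning on the (doubly stochastic, symmetric) matrix $M$, applied once at $+1$ to identify the eigenspace with the span of component indicators, and once at $-1$ (after passing to $|f|$) to extract a sign pattern that must be a bipartition.

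Two small points of bookkeeping worth tightening if you were to write this out in full. First, in the $\lambda_2$ direction your maximum-principle step only shows that $f$ is constant on the component containing the \emph{global} argmax; to conclude that component indicators span the entire $1$-eigenspace you should either run the argument separately on each block of $M$ (using that $M$ is block-diagonal across components), or iterate by subtracting off the multiple of the indicator you just found. Second, your observation about self-loops is exactly right and is the nontrivial thing to get right in this paper's setting, where $G$ may have loops and parallel edges. Note that it slots in cleanly only after you have first used the iterating step to establish that $|f|\equiv c$ is constant on the whole (connected) graph; only then does the inequality $(1+M_{w,w})c\le(1-M_{w,w})c$ at a looped vertex $w$ force $c=0$. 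Parallel edges cause no trouble anywhere since they only rescale the $M_{u,v}$ entries and never change which entries are nonzero, so the ``convex combination with equality'' step is unaffected.
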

For any subsets $S,T\subseteq V$ let $\langle S,T\rangle_G:=\{e\in E\mid f(e)\cap S\not=\emptyset,f(e)\cap T\not=\emptyset\}$ be the set of edges \emph{crossing} between $S$ and $T$. 
\begin{definition}\label{def:expansionRatio}
	For any set $S\subseteq V$, we let $h(S):=\frac{|\langle S,\overline{S}\rangle_G|}{|S|}$ be 
	the \emph{expansion} of $S$. We let $h(G)$ be the \emph{expansion ratio} of $G$ defined by 
	$h(G):=\min_{\{S\subset V \mid |S|\leq N/2\}}h(S)$.
\end{definition}

For any constant $\epsilon>0$ we call a sequence $\{G_m\}_{m\in\Npos}$ of graphs of increasing number of vertices a \emph{family of $\epsilon$-expanders}, if  $h(G_m)\geq \epsilon$ for all $m\in \Npos$. We say that a  family of graphs is a family of expanders if it is a family of $\epsilon$-expanders for some constant $\epsilon>0$. We further often call  a graph from a family of expanders an expander.
There exists the following connection between $h(G)$ and $\lambda(G)$.
\begin{theorem}[\cite{Dod1984difference,AM1985lambda1}]\label{thm:boundExpansionRatioInTermsOfLambda}
	Let $G$ be a $D$-regular graph. Then it holds that
	$h(G)\geq {D(1-\lambda(G))}/{2}.$
\end{theorem}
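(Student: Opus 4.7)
The plan is to establish the standard discrete Cheeger inequality in its ``easy'' direction, $h(G) \geq D(1-\lambda_2)/2$, and then observe that the bound stated in the theorem follows at once, since $\lambda(G) = \max(|\lambda_2|,|\lambda_N|) \geq \lambda_2$ implies $1-\lambda(G) \leq 1-\lambda_2$. Fix a set $S \subseteq V$ with $1 \leq |S| \leq N/2$. The goal is to bound $|\langle S, \overline{S}\rangle_G|/|S|$ from below by $D(1-\lambda_2)/2$, uniformly in $S$, and then take the minimum.

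First I would set up the test vector $f := \mathbf{1}_S - \alpha \mathbf{1}$ on $V$, with $\alpha := |S|/N$, where $\mathbf{1}_S$ is the indicator of $S$ and $\mathbf{1}$ is the all-ones vector. Two easy identities are $\langle f,\mathbf{1}\rangle = 0$ and $\|f\|_2^2 = |S|(1-\alpha) = |S||\overline{S}|/N \geq |S|/2$, the last inequality using $|\overline{S}| \geq N/2$. So $f$ is a nonzero vector orthogonal to the top eigenvector of $M$.

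The main computation is to evaluate the Dirichlet form $\langle (I-M)f, f\rangle$ in two ways. Using that $M$ is symmetric and row-stochastic (both following from $D$-regularity together with the paper's convention $M_{u,v} = |\{e : f(e)=\{u,v\}\}|/D$), one obtains the standard identity
$$\langle (I-M)f, f\rangle = \frac{1}{2}\sum_{u,v \in V} M_{u,v}\bigl(f(u)-f(v)\bigr)^2.$$
Since $f$ is constant on each of $S$ and $\overline{S}$, only pairs straddling the cut contribute, with $(f(u)-f(v))^2 = 1$; unpacking the definition of $M_{u,v}$ and accounting for the symmetric double count, this evaluates to $|\langle S, \overline{S}\rangle_G|/D$. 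On the other hand, because $\mathbf{1}$ is the eigenvector of $M$ of eigenvalue $\lambda_1 = 1$ and $f \perp \mathbf{1}$, expanding $f$ in an orthonormal eigenbasis of $M$ yields $\langle Mf, f\rangle \leq \lambda_2 \|f\|_2^2$, hence $\langle (I-M)f, f\rangle \geq (1-\lambda_2)\|f\|_2^2 \geq (1-\lambda_2)|S|/2$.

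Combining the two evaluations gives $|\langle S, \overline{S}\rangle_G| \geq D(1-\lambda_2)|S|/2 \geq D(1-\lambda(G))|S|/2$, and taking the minimum over admissible $S$ yields $h(G) \geq D(1-\lambda(G))/2$, as required. The only step that needs care is the Dirichlet-form identity, because the paper's definition of $M$ explicitly incorporates parallel edges and self-loops; however, the $D$-regularity makes $M$ symmetric and row-stochastic, so the bookkeeping is mechanical. The rest of the argument is routine spectral linear algebra.
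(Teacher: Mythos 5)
The paper cites this result to \cite{Dod1984difference,AM1985lambda1} and does not supply its own proof, so there is no internal argument to compare against. Your proposal is the standard ``easy direction'' of the discrete Cheeger inequality, and it is correct: the test vector $f=\mathbf{1}_S-(|S|/N)\mathbf{1}$ is orthogonal to the top eigenvector, the Dirichlet-form identity $\langle (I-M)f,f\rangle=\tfrac12\sum_{u,v}M_{u,v}(f(u)-f(v))^2$ holds because $D$-regularity (with the paper's convention that a self-loop contributes $1$ to the degree) makes $M$ symmetric and row-stochastic, the cut-counting evaluates to $|\langle S,\overline S\rangle_G|/D$, and the Rayleigh bound gives $(1-\lambda_2)\|f\|_2^2\geq(1-\lambda_2)|S|/2$. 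The final replacement of $\lambda_2$ by $\lambda(G)$ is justified since $\lambda_2\leq|\lambda_2|\leq\lambda(G)$, so the bound can only weaken. This is exactly the argument one would find in the cited references; nothing is missing.
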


This implies that for a sequence of graphs $\{G_m\}_{m\in\Npos}$ of increasing number of vertices, if there is a constant $\epsilon<1$ such that $\lambda(G_m)\leq \epsilon$ for all $m\in \Npos$, then the sequence $\{G_m\}_{m\in \Npos}$ is a family of ${D(1-\varepsilon)}/{2}$-expanders. 
\begin{definition}
	Let $G$ be a $D$-regular graph on $N$ vertices with rotation map $\rot_{G}:V\times I\rightarrow V\times I$ and $I$ a set of size $D$. Then the \emph{square of $G$}, denoted by $G^2$, is a $D^2$-regular graph on $N$ vertices with rotation map
	$\rot_{G^2}(u,(k_1,k_2)):=(w,(\ell_2,\ell_1)),\text{ where}$ 
	\begin{align*}
	\rot_{G}(u,k_1)=&(v,\ell_1) \text{ and }
	\rot_{G}(v,k_2)=(w,\ell_2),
	\end{align*}
	and $u,v,w\in V$, $k_1,k_2,\ell_1,\ell_2\in I$.
\end{definition}
Note that the edges of $G^2$ correspond to walks of length $2$ in $G$ and the adjacency matrix of $G^2$ is the square of the adjacency matrix of $G$.
Note here that if $G$ is bipartite then $G^2$ is not connected, which can be easily seen by using Lemma~\ref{lem:connectedBipartiteEigenvalues}.
\begin{lemma}[\cite{Reingold00entropywaves}]\label{lem:expansionOfSquaring}
	If $G$ is a $(N,D,\lambda)$-graph then $G^2$ is a $(N,D^2,\lambda^2)$-graph.	
\end{lemma}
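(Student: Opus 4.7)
The plan is to verify the three defining properties of an $(N,D^2,\lambda^2)$-graph in turn: the vertex count, the regularity, and the spectral bound. The first is immediate from the definition of $G^2$, which preserves the vertex set $V$, so $G^2$ has $N$ vertices.

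For regularity, I would argue directly from the rotation map. By definition, $\rot_{G^2}$ takes inputs in $V\times(I\times I)$ and produces outputs in $V\times(I\times I)$, and the rotation map of a $D'$-regular graph $H$ on vertex set $V$ must satisfy $|\{(i,j) \in I'\times I' : \rot_H(u,i) = (v,j)\}| = 2|\{e : f(e) = \{u,v\}\}|$ for each pair of vertices $u,v$ (self-loops counted with multiplicity one towards the degree on each side). Summing over $v$, the total number of pairs $(k_1,k_2)\in I\times I$ leaving $u$ equals $|I\times I| = D^2$, which gives the $D^2$-regularity once one checks that the definition of $\rot_{G^2}$ is indeed self-inverse (which follows from $\rot_G$ being self-inverse, by unwinding the two-step composition).

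For the spectral bound, I would appeal to the observation made in the text right before the lemma: the edges of $G^2$ correspond to walks of length $2$ in $G$, and the normalised adjacency matrix of $G^2$ equals $M^2$, where $M$ is the normalised adjacency matrix of $G$. Since $M$ is real and symmetric it has an orthonormal eigenbasis with real eigenvalues $1=\lambda_1\geq \lambda_2\geq\dots\geq\lambda_N\geq -1$, and the eigenvalues of $M^2$ are exactly $\lambda_1^2,\dots,\lambda_N^2$. These are all nonnegative and bounded above by $1$, with $\lambda_1^2 = 1$ remaining the largest eigenvalue. Therefore
\[
\lambda(G^2) \;=\; \max\bigl\{\lambda_i^2 : i\geq 2\bigr\} \;=\; \bigl(\max_{i\geq 2}|\lambda_i|\bigr)^2 \;=\; \lambda(G)^2 \;\leq\; \lambda^2,
\]
where the outermost $\max$ in the definition of $\lambda(\cdot)$ collapses because all eigenvalues of $M^2$ are nonnegative, and the final inequality uses $\lambda(G)\leq\lambda$.

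I do not anticipate a serious obstacle: the main content is the spectral calculation, and the only slightly delicate point is the bookkeeping that confirms $\rot_{G^2}$ as defined is indeed a valid rotation map (i.e.\ self-inverse, with the right edge multiplicities), so that the notation $\lambda(G^2)$ is meaningful and the claim ``adjacency matrix of $G^2$ equals the square of that of $G$'' can be invoked cleanly.
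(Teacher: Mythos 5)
The paper does not supply a proof of this lemma: it is stated with the attribution \cite{Reingold00entropywaves} and the fact that the normalised adjacency matrix of $G^2$ is $M^2$ is mentioned only in passing immediately after the definition of the square. So there is no in-paper proof to compare against. Your argument is the standard one and it is correct: the vertex set is unchanged, $\rot_{G^2}$ is a self-inverse bijection on $V\times(I\times I)$ (the self-inverse check unwinds exactly as you describe, using $\rot_G\circ\rot_G=\mathrm{id}$ twice), so $G^2$ is $D^2$-regular, and since the normalised adjacency matrix of $G^2$ is $M^2$, which is positive semidefinite with eigenvalues $\lambda_1^2,\dots,\lambda_N^2$, one gets $\lambda(G^2)=\max_{i\geq 2}\lambda_i^2=\lambda(G)^2\leq\lambda^2$. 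The only point worth stating a touch more carefully is the identity $\max_{i\geq 2}\lambda_i^2=\bigl(\max_{i\geq 2}|\lambda_i|\bigr)^2=\lambda(G)^2$: it uses that the $\lambda_i$ for $i\geq 2$ all lie in $[\lambda_N,\lambda_2]$, so the extreme absolute value among them is attained at $i=2$ or $i=N$, which is exactly what $\lambda(G)$ records. With that observation made explicit, the proof is complete and matches the argument one would extract from \cite{Reingold00entropywaves}.
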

\begin{definition}
	Let $G_1=(V_1,E_1,f_1)$ be a $D_1$-regular graph on $N_1$ vertices, $I_1$ a set of size $D_1$ and $\rot_{G_1}:V_1\times I_1\rightarrow V_1\times I_1$ a rotation map of $G_1$. Let $G_2=(I_1,E_2,f_2)$ be a $D_2$-regular graph, let $I_2$ be a set of size $D_2$ and $\rot_{G_2}:I_1\times I_2\rightarrow I_1\times I_2$ be a rotation map of $G_2$. Then the \emph{zig-zag product of $G_1$ and $G_2$}, denoted by $G_1\zigzag G_2$, is the $D_2^2$-regular graph on vertex set $V_1\times I_1$ with rotation map given by
	$\rot_{G_1\zigzag G_2}((v,k),(i,j)):=((w,\ell),(j',i')),\text{ where}$
	\begin{align*}
	&\rot_{G_2}(k,i)=(k',i'), \,
	%\\&
	\rot_{G_1}(v,k')=(w,\ell'),\text{ and}
	%\\&
	\rot_{G_2}(\ell',j)=(\ell,j'),
	\end{align*}
	and $v,w\in V_1$, $k,k',\ell,\ell'\in I_1$, $i,i',j,j'\in I_2$.
\end{definition}

The zig-zag product $G_1\zigzag G_2$ can be seen as the result of the following construction. First pick some numbering of the vertices of $G_2$. Then replace every vertex in $G_1$ by a copy of $G_2$ where we colour edges from $G_1$, say, red, and edges from $G_2$ blue. We do this in such a way that the $i$-th edge in $G_1$ of a vertex $v$ will be incident to vertex $i$ of the to-$v$-corresponding-copy of $G_2$. Then for every red edge $(v,w)$ and for every tuple $(i,j)\in I_2\times I_2$ we add an edge to the zig-zag product $G_1\zigzag G_2$ connecting $v'$ and $w'$ where $v'$ is the vertex reached from $v$ by taking its $i$-th blue edge and $w'$ can be reached from $w$ by taking its $j$-th blue edge. 
Figure~\ref{fig:zigZagProduct} shows an example, where in the graph on the right hand side we show the $4$ edges that are added to the zig-zag product for the highlighted edge of the graph on the left hand side.

\begin{figure*}
	\centering
	\begin{tikzpicture}[level distance=7mm,scale = 0.45]
	
	\tikzstyle{ns1}=[line width=1]
	%vertices
	\node[draw,circle,fill=black,inner sep=0pt, minimum width=3pt] (1) at (-5,-3.464) {};
	\node[draw,circle,fill=black,inner sep=0pt, minimum width=3pt] (2) at (-1,-3.464) {};
	\node[draw,circle,fill=black,inner sep=0pt, minimum width=3pt] (3) at (1,-3.464) {};
	\node[draw,circle,fill=black,inner sep=0pt, minimum width=3pt] (4) at (5,-3.464) {};	
	\node[draw,circle,fill=black,inner sep=0pt, minimum width=3pt] (5) at (-4,-1.732) {};
	\node[draw,circle,fill=black,inner sep=0pt, minimum width=3pt] (6) at (-2,-1.732) {};
	\node[draw,circle,fill=black,inner sep=0pt, minimum width=3pt] (7) at (2,-1.732) {};
	\node[draw,circle,fill=black,inner sep=0pt, minimum width=3pt] (8) at (4,-1.732) {};
	\node[draw,circle,fill=black,inner sep=0pt, minimum width=3pt] (9) at (-5,0) {};
	\node[draw,circle,fill=black,inner sep=0pt, minimum width=3pt] (10) at (-1,0) {};
	\node[draw,circle,fill=black,inner sep=0pt, minimum width=3pt] (11) at (1,0) {};
	\node[draw,circle,fill=black,inner sep=0pt, minimum width=3pt] (12) at (5,0) {};
	\node[draw,circle,fill=black,inner sep=0pt, minimum width=3pt] (13) at (-4,1.732) {};
	\node[draw,circle,fill=black,inner sep=0pt, minimum width=3pt] (14) at (-2,1.732) {};
	\node[draw,circle,fill=black,inner sep=0pt, minimum width=3pt] (15) at (2,1.732) {};
	\node[draw,circle,fill=black,inner sep=0pt, minimum width=3pt] (16) at (4,1.732) {};
	\node[draw,circle,fill=black,inner sep=0pt, minimum width=3pt] (17) at (-5,3.464) {};
	\node[draw,circle,fill=black,inner sep=0pt, minimum width=3pt] (18) at (-1,3.464) {};
	\node[draw,circle,fill=black,inner sep=0pt, minimum width=3pt] (19) at (1,3.464) {};
	\node[draw,circle,fill=black,inner sep=0pt, minimum width=3pt] (20) at (5,3.464) {};

	\draw[ns1,C3]
	(-4.5,-4.33)--(1)--(5)--(6)--(2)--(3)--(7)--(8)--(4)--(4.5,-4.33)(-1.5,-4.33)--(2)(1.5,-4.33)--(3)(-5.5,-3.464)--(1)(-5.5,0)--(9)(-5.5,3.464)--(17)(5)--(9)--(13)(8)--(12)--(16)(-4.5,4.33)--(17)--(13)--(14)--(18)--(19)--(15)--(16)--(20)--(4.5,4.33)(-1.5,4.33)--(18)(1.5,4.33)--(19)(5.5,-3.464)--(4)(5.5,0)--(12)(5.5,3.464)--(20)(6)--(10)--(14)(7)--(11)--(15);
	\draw[ns1,C1] (10)--(11);	
	
	\node[minimum height=10pt,inner sep=0,font=\small] at (6.5,0) {$\zigzag$};

	\node[draw,circle,fill=black,inner sep=0pt, minimum width=3pt] (41) at (7.5,-0.866) {};
	\node[draw,circle,fill=black,inner sep=0pt, minimum width=3pt] (42) at (9.5,-0.866) {};
	\node[draw,circle,fill=black,inner sep=0pt, minimum width=3pt] (43) at (8.5,0.866) {};
	\draw[ns1,C2] (41)--(42)--(43)--(41);
	
	\node[minimum height=10pt,inner sep=0,font=\small] at (10.5,0) {$=$};	
	
	%vertices
	\node[draw,circle,fill=black,inner sep=0pt, minimum width=3pt] (21) at (11.5,-3.464) {};
	\node[draw,circle,fill=black,inner sep=0pt, minimum width=3pt] (22) at (12.25,-3.031) {};
	\node[draw,circle,fill=black,inner sep=0pt, minimum width=3pt] (23) at (12.25,-3.897) {};
	\node[draw,circle,fill=black,inner sep=0pt, minimum width=3pt] (24) at (16.5,-3.464) {};
	\node[draw,circle,fill=black,inner sep=0pt, minimum width=3pt] (25) at (15.75,-3.031) {};
	\node[draw,circle,fill=black,inner sep=0pt, minimum width=3pt] (26) at (15.75,-3.897) {};
	\node[draw,circle,fill=black,inner sep=0pt, minimum width=3pt] (27) at (17.5,-3.464) {};
	\node[draw,circle,fill=black,inner sep=0pt, minimum width=3pt] (28) at (18.25,-3.031) {};
	\node[draw,circle,fill=black,inner sep=0pt, minimum width=3pt] (29) at (18.25,-3.897) {};
	\node[draw,circle,fill=black,inner sep=0pt, minimum width=3pt] (30) at (22.5,-3.464) {};
	\node[draw,circle,fill=black,inner sep=0pt, minimum width=3pt] (31) at (21.75,-3.031) {};
	\node[draw,circle,fill=black,inner sep=0pt, minimum width=3pt] (32) at (21.75,-3.897) {};
	\node[draw,circle,fill=black,inner sep=0pt, minimum width=3pt] (33) at (13.5,-1.732) {};
	\node[draw,circle,fill=black,inner sep=0pt, minimum width=3pt] (34) at (12.75,-1.209) {};
	\node[draw,circle,fill=black,inner sep=0pt, minimum width=3pt] (35) at (12.75,-2.165) {};
	\node[draw,circle,fill=black,inner sep=0pt, minimum width=3pt] (36) at (14.5,-1.732) {};
	\node[draw,circle,fill=black,inner sep=0pt, minimum width=3pt] (37) at (15.25,-1.209) {};
	\node[draw,circle,fill=black,inner sep=0pt, minimum width=3pt] (38) at (15.25,-2.165) {};
	\node[draw,circle,fill=black,inner sep=0pt, minimum width=3pt] (39) at (19.5,-1.732) {};
	\node[draw,circle,fill=black,inner sep=0pt, minimum width=3pt] (40) at (18.75,-1.209) {};
	\node[draw,circle,fill=black,inner sep=0pt, minimum width=3pt] (41) at (18.75,-2.165) {};
	\node[draw,circle,fill=black,inner sep=0pt, minimum width=3pt] (42) at (20.5,-1.732) {};
	\node[draw,circle,fill=black,inner sep=0pt, minimum width=3pt] (43) at (21.25,-1.209) {};
	\node[draw,circle,fill=black,inner sep=0pt, minimum width=3pt] (44) at (21.25,-2.165) {};
	\node[draw,circle,fill=black,inner sep=0pt, minimum width=3pt] (45) at (11.5,0) {};
	\node[draw,circle,fill=black,inner sep=0pt, minimum width=3pt] (46) at (12.25,0.433) {};
	\node[draw,circle,fill=black,inner sep=0pt, minimum width=3pt] (47) at (12.25,-0.433) {};
	\node[draw,circle,fill=black,inner sep=0pt, minimum width=3pt] (48) at (16.5,0) {};
	\node[draw,circle,fill=black,inner sep=0pt, minimum width=3pt] (49) at (15.75,0.433) {};
	\node[draw,circle,fill=black,inner sep=0pt, minimum width=3pt] (50) at (15.75,-0.433) {};
	\node[draw,circle,fill=black,inner sep=0pt, minimum width=3pt] (51) at (17.5,0) {};
	\node[draw,circle,fill=black,inner sep=0pt, minimum width=3pt] (52) at (18.25,0.433) {};
	\node[draw,circle,fill=black,inner sep=0pt, minimum width=3pt] (53) at (18.25,-0.433) {};
	\node[draw,circle,fill=black,inner sep=0pt, minimum width=3pt] (54) at (22.5,0) {};
	\node[draw,circle,fill=black,inner sep=0pt, minimum width=3pt] (55) at (21.75,0.433) {};
	\node[draw,circle,fill=black,inner sep=0pt, minimum width=3pt] (56) at (21.75,-0.433) {};
	\node[draw,circle,fill=black,inner sep=0pt, minimum width=3pt] (57) at (13.5,1.732) {};
	\node[draw,circle,fill=black,inner sep=0pt, minimum width=3pt] (58) at (12.75,1.209) {};
	\node[draw,circle,fill=black,inner sep=0pt, minimum width=3pt] (59) at (12.75,2.165) {};
	\node[draw,circle,fill=black,inner sep=0pt, minimum width=3pt] (60) at (14.5,1.732) {};
	\node[draw,circle,fill=black,inner sep=0pt, minimum width=3pt] (61) at (15.25,1.209) {};
	\node[draw,circle,fill=black,inner sep=0pt, minimum width=3pt] (62) at (15.25,2.165) {};
	\node[draw,circle,fill=black,inner sep=0pt, minimum width=3pt] (63) at (19.5,1.732) {};
	\node[draw,circle,fill=black,inner sep=0pt, minimum width=3pt] (64) at (18.75,1.209) {};
	\node[draw,circle,fill=black,inner sep=0pt, minimum width=3pt] (65) at (18.75,2.165) {};
	\node[draw,circle,fill=black,inner sep=0pt, minimum width=3pt] (66) at (20.5,1.732) {};
	\node[draw,circle,fill=black,inner sep=0pt, minimum width=3pt] (67) at (21.25,1.209) {};
	\node[draw,circle,fill=black,inner sep=0pt, minimum width=3pt] (68) at (21.25,2.165) {};
	\node[draw,circle,fill=black,inner sep=0pt, minimum width=3pt] (69) at (11.5,3.464) {};
	\node[draw,circle,fill=black,inner sep=0pt, minimum width=3pt] (70) at (12.25,3.031) {};
	\node[draw,circle,fill=black,inner sep=0pt, minimum width=3pt] (71) at (12.25,3.897) {};
	\node[draw,circle,fill=black,inner sep=0pt, minimum width=3pt] (72) at (16.5,3.464) {};
	\node[draw,circle,fill=black,inner sep=0pt, minimum width=3pt] (73) at (15.75,3.031) {};
	\node[draw,circle,fill=black,inner sep=0pt, minimum width=3pt] (74) at (15.75,3.897) {};
	\node[draw,circle,fill=black,inner sep=0pt, minimum width=3pt] (75) at (17.5,3.464) {};
	\node[draw,circle,fill=black,inner sep=0pt, minimum width=3pt] (76) at (18.25,3.031) {};
	\node[draw,circle,fill=black,inner sep=0pt, minimum width=3pt] (77) at (18.25,3.897) {};
	\node[draw,circle,fill=black,inner sep=0pt, minimum width=3pt] (78) at (22.5,3.464) {};
	\node[draw,circle,fill=black,inner sep=0pt, minimum width=3pt] (79) at (21.75,3.031) {};
	\node[draw,circle,fill=black,inner sep=0pt, minimum width=3pt] (80) at (21.75,3.897) {};
	
	\foreach \x in {21,24,...,78}{
		\pgfmathtruncatemacro{\xx}{\x+1};
		\pgfmathtruncatemacro{\xxx}{\x+2};
		\draw[dotted,ns1,C2!50](\x)--(\xx)--(\xxx)--(\x);	
	}
	
	\draw[dashed,ns1,C3!50]
	(12.5,-4.33)--(23)(22)--(35)(33)--(36)(38)--(25)(24)--(27)(28)--(41)(39)--(42)(44)--(31)(32)--(21.5,-4.33)(15.5,-4.33)--(26)(18.5,-4.33)--(29)(34)--(47)(46)--(58)(12.5,4.33)--(71)(70)--(59)(57)--(60)(62)--(73)(72)--(75)(76)--(65)(63)--(66)(68)--(79)(80)--(21.5,4.33)(15.5,4.33)--(74)(18.5,4.33)--(77)(43)--(56)(55)--(67)(37)--(50)(49)--(61)(40)--(53)(52)--(64)(48)--(51);
	\draw[ns1,C1] (49).. controls (16.7,0.7) and (17.3,0.7) ..(52);
	\draw[ns1,C1] (50).. controls (16.7,-0.7) and (17.3,-0.7) ..(53);
	
	\draw[ns1,C1] (49).. controls (16,-0.5) and (17,-0.2) .. (53);
	\draw[ns1,C1] (50).. controls (15,0) and (15,2) ..(52);
	\end{tikzpicture}
	\caption{Zig-zag product of a $3$-regular grid with a triangle} \label{fig:zigZagProduct}
\end{figure*}
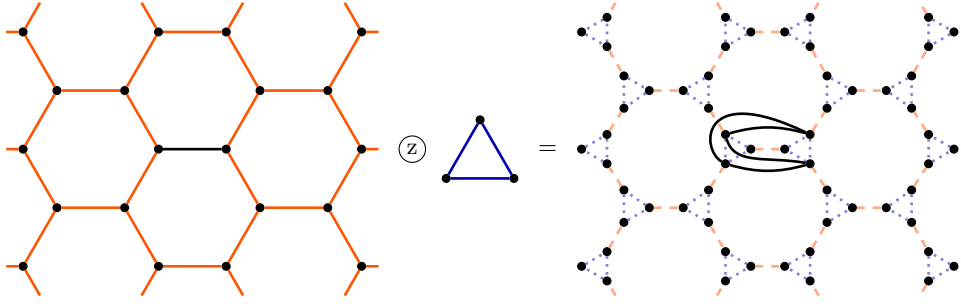

\begin{theorem}[\cite{Reingold00entropywaves}]\label{thm:expansionOfZigZag}
	If $G_1$ is an $(N_1,D_1,\lambda_1)$-graph and $G_2$ is a $(D_1,D_2,\lambda_2)$-graph then $G_1\zigzag G_2$ is a $(N_1\cdot D_1,D_2^2,g(\lambda_1,\lambda_2))$-graph, where 
	\begin{displaymath}
		g(\lambda_1,\lambda_2)=\frac{1}{2}(1-\lambda_2^2)\lambda_1+\frac{1}{2}\sqrt{(1-\lambda_2^2)^2\lambda_1+4\lambda_2^2}.
	\end{displaymath}
	This function has the following properties.
	\begin{enumerate}
		\item If both $\lambda_1<1$ and $\lambda_2<1$ then $g(\lambda_1,\lambda_2)< 1$. 
		\item $g(\lambda_1,\lambda_2)<\lambda_1+\lambda_2$.
	\end{enumerate}	
\end{theorem}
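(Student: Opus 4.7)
The plan is to bound the spectral norm of the normalised adjacency matrix $M$ of $G_1\zigzag G_2$ on the orthogonal complement of the all-ones vector. The vertex count $N_1 D_1$ and the degree $D_2^2$ follow directly from the definition, since the vertex set is $V_1\times I_1$ and rotations are labelled by pairs from $I_2\times I_2$. Let $M_1,M_2$ denote the normalised adjacency matrices of $G_1,G_2$, let $\tilde{B}:=I_{N_1}\otimes M_2$ (which applies $M_2$ inside each cloud $\{v\}\times I_1$), and let $\tilde{A}$ be the permutation matrix on $\mathbb{R}^{V_1\times I_1}$ induced by $\rot_{G_1}$. Unpacking the three-step definition of $\rot_{G_1\zigzag G_2}$ (inner $G_2$ step, outer $G_1$ step, inner $G_2$ step) yields the factorisation $M = \tilde{B}\,\tilde{A}\,\tilde{B}$.

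Introduce the orthogonal projector $P$ onto the \emph{parallel} subspace of vectors in $\mathbb{R}^{V_1\times I_1}$ that are constant on each cloud. Three elementary facts drive the analysis: (i) since the uniform vector on $I_1$ is a $1$-eigenvector of $M_2$, one has $\tilde{B}P = P\tilde{B} = P$; (ii) consequently $\tilde{B}-P$ vanishes on the parallel subspace and has operator norm at most $\lambda_2$ on its orthogonal complement; and (iii) under the canonical identification of the parallel subspace with $\mathbb{R}^{V_1}$, the operator $P\tilde{A}P$ corresponds to $M_1$. Writing $\tilde{B}=P+(\tilde{B}-P)$ and expanding gives
\begin{equation*}
M \;=\; P\tilde{A}P \;+\; P\tilde{A}(\tilde{B}-P) \;+\; (\tilde{B}-P)\tilde{A}P \;+\; (\tilde{B}-P)\tilde{A}(\tilde{B}-P).
\end{equation*}

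For a unit vector $x\perp\mathbf{1}_{V_1\times I_1}$, set $\alpha:=\|Px\|$ and $\beta:=\|(I-P)x\|$, so $\alpha^2+\beta^2=1$; note that $Px$ corresponds under the identification to a vector in $\mathbb{R}^{V_1}$ orthogonal to $\mathbf{1}_{V_1}$, so fact~(iii) gives $\|P\tilde{A}Px\|\le\lambda_1\alpha$. Split $Mx = PMx + (I-P)Mx$ and use facts~(i)--(iii) together with the unitarity of $\tilde{A}$ to estimate the norms of the two components in terms of $\alpha,\beta,\lambda_1,\lambda_2$. Applying the Pythagorean theorem to $\|Mx\|^2 = \|PMx\|^2 + \|(I-P)Mx\|^2$ and then optimising the resulting quadratic expression subject to $\alpha^2+\beta^2=1$ yields the bound $\|Mx\|\le g(\lambda_1,\lambda_2)$, which is the desired spectral inequality.

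Finally, the two properties of $g$ are verified algebraically. Property~(2) follows by completing the square under the radical, which yields an estimate of the square-root term of the form $(1-\lambda_2^2)\lambda_1 + 2\lambda_2$ and hence $g\le(1-\lambda_2^2)\lambda_1+\lambda_2\le\lambda_1+\lambda_2$. Property~(1) follows from the direct computation $g(1,\lambda_2)=g(\lambda_1,1)=1$ (the expression under the radical becomes a perfect square in each case) combined with strict monotonicity of $g$ in each argument on $[0,1)$, which is checked by a short partial-derivative computation. \textbf{Main obstacle.} The delicate step is the spectral bound itself: a naive application of the triangle inequality to the four summands produces only $\lambda_1+2\lambda_2+\lambda_2^2$, which is too loose. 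One must therefore carefully track which summands land in the parallel subspace and which in its complement, apply the Pythagorean theorem \emph{before} any triangle estimates, and only then carry out the quadratic optimisation in $(\alpha,\beta)$ that extracts the precise function $g$.
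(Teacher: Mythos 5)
The statement is cited from Reingold--Vadhan--Wigderson with no proof given in this paper, so the comparison is against RVW's own Theorem~4.3. Your setup --- the factorisation $M=\tilde B\tilde A\tilde B$, the cloud-projector $P$, and facts~(i)--(iii) --- does match theirs, but the spectral estimate you outline does not yield $g$. Executing the plan as written: $PMx=P\tilde APx+P\tilde A(\tilde B-P)x$ and $(I-P)Mx=(\tilde B-P)\tilde APx+(\tilde B-P)\tilde A(\tilde B-P)x$, and applying facts~(i)--(iii) with the triangle inequality inside each block gives $\|PMx\|\le\lambda_1\alpha+\lambda_2\beta$ and $\|(I-P)Mx\|\le\lambda_2\alpha+\lambda_2^2\beta$. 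Maximising $(\lambda_1\alpha+\lambda_2\beta)^2+(\lambda_2\alpha+\lambda_2^2\beta)^2$ over $\alpha^2+\beta^2=1$ returns the operator norm of the symmetric $2\times 2$ matrix with rows $(\lambda_1,\lambda_2)$ and $(\lambda_2,\lambda_2^2)$, which is $\tfrac12\bigl(\lambda_1+\lambda_2^2+\sqrt{(\lambda_1-\lambda_2^2)^2+4\lambda_2^2}\bigr)$ --- not $g(\lambda_1,\lambda_2)$. At $\lambda_1=\lambda_2=1/2$ this is about $0.89$ versus $g\approx 0.72$, and as $\lambda_1\to1$ it tends to $1+\lambda_2^2>1$, so your bound even fails property~(1). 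That property is precisely what the paper needs (connectivity of the iterated zig-zag products is deduced from $\lambda(G_1\zigzag G_2)<1$ whenever both factors have $\lambda<1$), so the loss is not cosmetic.

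The sharp $g$ is the larger root of $z^2=(1-\lambda_2^2)\lambda_1 z+\lambda_2^2$, and it is not reachable from a two-parameter optimisation in $(\alpha,\beta)$. RVW's argument extracts one further Pythagorean constraint from the unitarity of $\tilde A$: the fixed mass $\|\tilde APx\|^2=\alpha^2$ splits as $\|P\tilde APx\|^2+\|(I-P)\tilde APx\|^2$ with the additional bound $\|P\tilde APx\|\le\lambda_1\alpha$, so the $M_1$-controlled contribution to $PMx$ and the $\lambda_2$-damped leak into $(I-P)Mx$ trade off against each other and must be optimised jointly --- a genuine third variable. Your ``main obstacle'' paragraph correctly notes that the naive $\lambda_1+2\lambda_2+\lambda_2^2$ is too loose and that one must track which summands land where, but the remedy you then describe (Pythagoras, then optimise only in $(\alpha,\beta)$) bounds the two images of $\tilde APx$ independently and thus double-counts its mass, which is exactly what produces the weaker bound. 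As a minor separate point: the radicand as typeset in the theorem reads $(1-\lambda_2^2)^2\lambda_1+4\lambda_2^2$, but the RVW formula (and your own perfect-square check giving $g(\lambda_1,1)=1$) requires $(1-\lambda_2^2)^2\lambda_1^2+4\lambda_2^2$; the paper has a typo here, which you silently corrected.
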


\begin{definition}[\cite{Hoory06expandergraphs}]\label{dfn:expanders}
	Let $D$ be a sufficiently large prime power (e.g. $D=2^{16}$). Let $H$ be a  $(D^4,D,{1}/{4})$ expander (an explicit constructions for $H$ exist, cf.~\cite{Reingold00entropywaves}.) 
	We define $\{G_{m}\}_{m\in\Npos}$ by 
	\begin{eqnarray}
	G_1:=H^2, \hspace{20pt} G_m:=G_{m-1}^2\zigzag H \text{ for }m >1. \label{eqn:zigzagconstruction}
	\end{eqnarray}
\end{definition}
\begin{proposition}[\cite{Hoory06expandergraphs}]\label{prop:recursiveConstruction}
	For any $m\in\Npos$, the graph $G_m$ is a $(D^{4m},D^2,1/2)$-graph.
\end{proposition}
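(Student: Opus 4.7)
The plan is to prove the proposition by induction on $m$, using Lemma~\ref{lem:expansionOfSquaring} on squaring and Theorem~\ref{thm:expansionOfZigZag} on the zig-zag product, and exploiting the bound $g(\lambda_1,\lambda_2)<\lambda_1+\lambda_2$ to keep the spectral parameter under control across the recursion.

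For the base case $m=1$, I would observe that $H$ is a $(D^4,D,1/4)$-graph by choice, so Lemma~\ref{lem:expansionOfSquaring} immediately gives that $G_1=H^2$ is a $(D^4,D^2,1/16)$-graph. Since $1/16\leq 1/2$, this is in particular a $(D^{4\cdot 1},D^2,1/2)$-graph, as required.

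For the inductive step, assume that $G_{m-1}$ is a $(D^{4(m-1)},D^2,1/2)$-graph. Applying Lemma~\ref{lem:expansionOfSquaring} once more gives that $G_{m-1}^2$ is a $(D^{4(m-1)},D^4,1/4)$-graph. Because $G_{m-1}^2$ is $D^4$-regular and $H$ has exactly $D^4$ vertices, the zig-zag product $G_{m-1}^2\zigzag H$ is well defined, and Theorem~\ref{thm:expansionOfZigZag} shows that it is a $(D^{4(m-1)}\cdot D^4,\, D^2,\, g(1/4,1/4))$-graph. Using property~(2) of $g$ from Theorem~\ref{thm:expansionOfZigZag}, we have $g(1/4,1/4)<1/4+1/4=1/2$. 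Since $D^{4(m-1)}\cdot D^4=D^{4m}$, this concludes the induction.

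Since the entire argument only plugs numerical values into two previously established lemmas, there is essentially no obstacle; the only point to mind is bookkeeping, i.e. verifying that at each step the degree parameter of the ``outer'' graph matches the vertex count of the ``inner'' graph $H$ used in the zig-zag product, which is guaranteed because squaring preserves the vertex set and multiplies the degree (here $D^2\mapsto D^4=|V(H)|$). The additive bound $g(\lambda_1,\lambda_2)<\lambda_1+\lambda_2$ is precisely what keeps the spectral parameter from drifting upward over the iterations, so one does not need to invoke the sharper bound from property~(1).
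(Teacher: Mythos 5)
Your proof is correct. The paper cites this proposition from the expander survey and does not reproduce a proof, but your induction is exactly the standard argument: the base case follows from Lemma~\ref{lem:expansionOfSquaring}, the inductive step from applying Lemma~\ref{lem:expansionOfSquaring} and then Theorem~\ref{thm:expansionOfZigZag} with the additive bound $g(1/4,1/4)<1/2$, and the degree/vertex-count bookkeeping ($D^4$-regularity of $G_{m-1}^2$ matching $|V(H)|=D^4$) is handled correctly.
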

In the next section we will use the following lemma whose proof is deferred to Appendix~\ref{app:A}.
\begin{lemma}\label{lem:nonBipartitenessConnectedness}
	Let $G$ be a $D$-regular graph and $S$ be the set of vertices of a connected component of $G^2$. Then $\lambda(G^2[S])< 1$. 
\end{lemma}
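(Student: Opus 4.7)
The plan is to reduce the claim to Lemma~\ref{lem:connectedBipartiteEigenvalues} by showing that the normalized adjacency matrix of $G^2[S]$ has only non-negative eigenvalues, so that the ``non-bipartite'' side of that lemma holds for free and only the connectivity side has actual work to do.

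First I would verify the identity $M_{G^2}=M_{G}^{2}$ between the normalized adjacency matrices. Unpacking the definitions, the $(u,w)$-entry of $M_G^2$ equals $D^{-2}$ times the number of length-$2$ walks from $u$ to $w$ in $G$, while by the definition of $\rot_{G^2}$, the $(u,w)$-entry of the normalized adjacency matrix of $G^2$ counts exactly the pairs $(k_1,k_2)\in I^2$ for which $\rot_G(u,k_1)=(v,\ell_1)$ and $\rot_G(v,k_2)=(w,\ell_2)$ for some $v,\ell_1,\ell_2$, again normalized by $D^2$; these are in bijection with length-$2$ walks. Since $M_G$ is real symmetric, its spectrum is real, and the eigenvalues of $M_{G^2}=M_G^2$ are therefore squares of real numbers, hence all non-negative.

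Next, because $S$ is (the vertex set of) a connected component of $G^2$, there are no edges between $S$ and $V\setminus S$ in $G^2$, so $M_{G^2}$ is block-diagonal with $M_{G^2[S]}$ as one of its diagonal blocks. Therefore every eigenvalue of $M_{G^2[S]}$ is also an eigenvalue of $M_{G^2}$, and in particular non-negative. Denoting the eigenvalues of $M_{G^2[S]}$ by $1=\lambda_1\geq \lambda_2\geq\dots\geq \lambda_{|S|}$, we obtain $\lambda_{|S|}\geq 0>-1$.

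Finally, $G^2[S]$ is connected by the definition of a connected component, so Lemma~\ref{lem:connectedBipartiteEigenvalues} gives $\lambda_2<1$. Combined with $0\leq \lambda_{|S|}\leq \lambda_2$, this yields
\[
\lambda(G^2[S])=\max(|\lambda_2|,|\lambda_{|S|}|)=\lambda_2<1,
\]
as required. I do not anticipate a serious obstacle; the only mildly delicate point is the bookkeeping needed to check $M_{G^2}=M_G^2$ directly from the rotation-map definition of the square (keeping track of self-loops and parallel edges produced by back-and-forth walks), after which the argument is a short invocation of Lemma~\ref{lem:connectedBipartiteEigenvalues}.
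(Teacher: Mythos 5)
Your proof is correct and follows essentially the same route as the paper's: both observe that $M_{G^2}=M_G^2$ has non-negative spectrum, both use that $S$ being a component of $G^2$ makes $M_{G^2}$ block-diagonal so that eigenvalues of $M_{G^2[S]}$ lift to eigenvalues of $M_{G^2}$ (the paper phrases this as extending an eigenvector by zeros), and both close with Lemma~\ref{lem:connectedBipartiteEigenvalues} for the bound $\lambda_2<1$.
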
 
\begin{proof} 
	Let $1=\lambda_1\geq \lambda_2\geq \dots\geq \lambda_N$ be the eigenvalues of $G^2[S]$. Since $G^2[S]$ is connected, Lemma \ref{lem:connectedBipartiteEigenvalues} implies that $\lambda_1>\lambda_2$. Now assume that $-1$ is an eigenvalue of $G^2[S]$ with eigenvector $\overline{v}$. Then the vector $\overline{v}'$ defined by $\overline{v}'_v=\overline{v}_v$ for all $v\in S$ and $\overline{v}'_v=0$ otherwise is the eigenvector for eigenvalue $-1$ of the graph $G^2$. But $G^2$ can not have a negative eigenvalue as every eigenvalue of $G^2$ is a square of a real number. Therefore $\lambda_1\not= \lambda_N$ and $\lambda(G^2[S])<1$ as claimed. 
\end{proof}

\section{A class of expanders definable in FO}\label{sec: definitionFormula}
In this section we define a formula such that the underlying graphs of its models are expanders.
We start with a high-level description of the formula. 
Let $\{G_m\}_{m \in \Npos}$ be as in Definition~\ref{dfn:expanders}. 
Loosely speaking, each model of our formula is a structure which consists of the disjoint union of $G_1,\dots,G_n$ for some $n\in \Npos$ 
with some underlying tree structure connecting $G_{m-1}$ to $G_{m}$ for all $m\in \{2,\dots,n\}$. For illustration see Figure~\ref{fig:modelOfFormula}. 
The tree structure enables us to provide an FO-checkable certificate for the construction of expanders.
The tree structure is a  $D^4$-ary tree, that is used to connect a vertex $v$ of $G_{m-1}$ to every vertex of the copy of $H$ which will replace $v$ in $G_{m}$. 
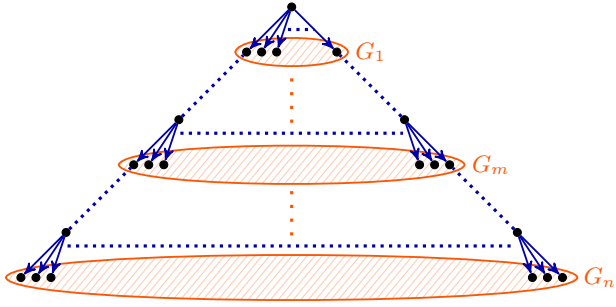
\begin{figure}
	\centering
	\begin{tikzpicture}
	\tikzstyle{ns1}=[line width=0.7]
	\tikzstyle{ns2}=[line width=1.2]
	\def \depth {3.6}
	\def \middle {2.1}
	\def \slope {1}
	\def \r {0.3}
	\def \levelDist {0.6}
	%G_n
	\begin{scope}
	\clip[postaction={fill=white,fill opacity=0.2}] (\depth*\slope+0.2,-\depth)..controls(\depth*\slope+0.1,-\depth+0.4)and(-\depth*\slope-0.1,-\depth+0.4)..(-\depth*\slope-0.2,-\depth)..controls(-\depth*\slope-0.1,-\depth-0.4)and(\depth*\slope+0.1,-\depth-0.4)..(\depth*\slope+0.2,-\depth);			
	\foreach \x in {-7.1,-7,...,15.2}%
	\draw[C3!30](\x, -5)--+(12,14.4);
	\end{scope}
	\draw[ns1,C3](\depth*\slope+0.2,-\depth)..controls(\depth*\slope+0.1,-\depth+0.4)and(-\depth*\slope-0.1,-\depth+0.4)..(-\depth*\slope-0.2,-\depth)..controls(-\depth*\slope-0.1,-\depth-0.4)and(\depth*\slope+0.1,-\depth-0.4)..(\depth*\slope+0.2,-\depth);
	%G_i
	\begin{scope}
	\clip[postaction={fill=white,fill opacity=0.2}] (\middle*\slope+0.2,-\middle)..controls(\middle*\slope+0.1,-\middle+0.34)and(-\middle*\slope-0.1,-\middle+0.34)..(-\middle*\slope-0.2,-\middle)..controls(-\middle*\slope-0.1,-\middle-0.34)and(\middle*\slope+0.1,-\middle-0.34)..(\middle*\slope+0.2,-\middle);		
	\foreach \x in {-7.1,-7,...,15.2}%
	\draw[C3!30](\x, -5)--+(12,14.4);
	\end{scope}
	\draw[ns1,C3](\middle*\slope+0.2,-\middle)..controls(\middle*\slope+0.1,-\middle+0.34)and(-\middle*\slope-0.1,-\middle+0.34)..(-\middle*\slope-0.2,-\middle)..controls(-\middle*\slope-0.1,-\middle-0.34)and(\middle*\slope+0.1,-\middle-0.34)..(\middle*\slope+0.2,-\middle);
	%G_1
	\begin{scope}
	\clip[postaction={fill=white,fill opacity=0.2}] (\levelDist*\slope+0.15,-\levelDist)..controls(\levelDist*\slope+0.1,-\levelDist+0.25)and(-\levelDist*\slope-0.1,-\levelDist+0.25)..(-\levelDist*\slope-0.15,-\levelDist)..controls(-\levelDist*\slope-0.1,-\levelDist-0.25)and(\levelDist*\slope+0.1,-\levelDist-0.25)..(\levelDist*\slope+0.15,-\levelDist);			
	\foreach \x in {-7.1,-7,...,15.2}%
	\draw[C3!30](\x, -5)--+(12,14.4);
	\end{scope}
	\draw[ns1,C3](\levelDist*\slope+0.15,-\levelDist)..controls(\levelDist*\slope+0.1,-\levelDist+0.25)and(-\levelDist*\slope-0.1,-\levelDist+0.25)..(-\levelDist*\slope-0.15,-\levelDist)..controls(-\levelDist*\slope-0.1,-\levelDist-0.25)and(\levelDist*\slope+0.1,-\levelDist-0.25)..(\levelDist*\slope+0.15,-\levelDist);
	\node[draw,circle,fill=black,inner sep=0pt, minimum width=3pt] (1) at (0,0) {};
	\node[draw,circle,fill=black,inner sep=0pt, minimum width=3pt] (2) at (\depth*\slope,-\depth) {};
	\node[draw,circle,fill=black,inner sep=0pt, minimum width=3pt] (3) at (-\depth*\slope,-\depth) {};
	\node[draw,circle,fill=black,inner sep=0pt, minimum width=3pt] (4) at (-\levelDist*\slope,-\levelDist) {};
	\node[draw,circle,fill=black,inner sep=0pt, minimum width=3pt] (5) at (-\levelDist*\slope+0.2,-\levelDist) {};
	\node[draw,circle,fill=black,inner sep=0pt, minimum width=3pt] (6) at (-\levelDist*\slope+0.4,-\levelDist) {};
	\node[draw,circle,fill=black,inner sep=0pt, minimum width=3pt] (7) at (\levelDist*\slope,-\levelDist) {};
	\node[draw,circle,fill=black,inner sep=0pt, minimum width=3pt] (8) at (\depth*\slope-\levelDist*\slope,-\depth+\levelDist) {};
	\node[draw,circle,fill=black,inner sep=0pt, minimum width=3pt] (9) at (-\depth*\slope+\levelDist*\slope,-\depth+\levelDist) {};
	\node[draw,circle,fill=black,inner sep=0pt, minimum width=3pt] (10) at (-\depth*\slope+0.2,-\depth) {};
	\node[draw,circle,fill=black,inner sep=0pt, minimum width=3pt] (11) at (-\depth*\slope+0.4,-\depth) {};
	\node[draw,circle,fill=black,inner sep=0pt, minimum width=3pt] (12) at (\depth*\slope-0.2,-\depth) {};
	\node[draw,circle,fill=black,inner sep=0pt, minimum width=3pt] (13) at (\depth*\slope-0.4,-\depth) {};
	
	\node[draw,circle,fill=black,inner sep=0pt, minimum width=3pt] (14) at (\middle*\slope-\levelDist*\slope,-\middle+\levelDist) {};
	\node[draw,circle,fill=black,inner sep=0pt, minimum width=3pt] (15) at (-\middle*\slope+\levelDist*\slope,-\middle+\levelDist) {};
	\node[draw,circle,fill=black,inner sep=0pt, minimum width=3pt] (16) at (\middle*\slope,-\middle) {};
	\node[draw,circle,fill=black,inner sep=0pt, minimum width=3pt] (17) at (-\middle*\slope,-\middle) {};
	\node[draw,circle,fill=black,inner sep=0pt, minimum width=3pt] (18) at (\middle*\slope-0.2,-\middle) {};
	\node[draw,circle,fill=black,inner sep=0pt, minimum width=3pt] (19) at (-\middle*\slope+0.2,-\middle) {};
	\node[draw,circle,fill=black,inner sep=0pt, minimum width=3pt] (20) at (\middle*\slope-0.4,-\middle) {};
	\node[draw,circle,fill=black,inner sep=0pt, minimum width=3pt] (21) at (-\middle*\slope+0.4,-\middle) {};

	\draw[ns2,C2,dotted] (15)--(4);
	\draw[ns2,C2,dotted] (9)--(17);
	\draw[ns2,C2,dotted] (8)--(16);
	\draw[ns2,C2,dotted] (14)--(7);
	\draw[->, >=stealth',ns1,C2](1)--(4);
	\draw[->, >=stealth',ns1,C2](1)--(5);
	\draw[->, >=stealth',ns1,C2](1)--(6);
	\draw[->, >=stealth',ns1,C2](1)--(7);
	\draw[->, >=stealth',ns1,C2](8)--(2);
	\draw[->, >=stealth',ns1,C2](9)--(3);
	\draw[->, >=stealth',ns1,C2](9)--(10);
	\draw[->, >=stealth',ns1,C2](9)--(11);
	\draw[->, >=stealth',ns1,C2](8)--(12);
	\draw[->, >=stealth',ns1,C2](8)--(13);
	\draw[->, >=stealth',ns1,C2](14)--(16);
	\draw[->, >=stealth',ns1,C2](14)--(18);
	\draw[->, >=stealth',ns1,C2](14)--(20);
	\draw[->, >=stealth',ns1,C2](15)--(17);
	\draw[->, >=stealth',ns1,C2](15)--(19);
	\draw[->, >=stealth',ns1,C2](15)--(21);
	%	\draw[dashed,->, >=stealth',ns1,C2](3)--(2);
	%	\draw[->, >=stealth',ns1,C2](15)--(16);
	\draw[dotted,ns2,C2](-0.05,-0.3)--(0.3,-0.3);
	%\draw[dotted,ns2,C1](-\depth*\slope+0.6,-\depth)--(\depth*\slope-0.6,-\depth);
	\draw[dotted,ns2,C2](-\depth*\slope+0.7*\levelDist*\slope+0.2,-\depth+0.7*\levelDist)--(\depth*\slope-0.7*\levelDist*\slope-0.2,-\depth+0.7*\levelDist);
	\draw[dotted,ns2,C2](-\middle*\slope+0.7*\levelDist*\slope+0.2,-\middle+0.7*\levelDist)--(\middle*\slope-0.7*\levelDist*\slope-0.2,-\middle+0.7*\levelDist);
	\draw[loosely dotted,ns2,C3](0,-\levelDist-0.35)--(0,-\middle+0.5);
	\draw[loosely dotted,ns2,C3](0,-\middle-0.35)--(0,-\depth+0.5);
	\node[minimum height=10pt,inner sep=0,font=\small,C3] at (1.05,-\levelDist) {$G_1$};
	\node[minimum height=10pt,inner sep=0,font=\small,C3] at (2.65,-\middle) {$G_m$};
	\node[minimum height=10pt,inner sep=0,font=\small,C3] at (4.1,-\depth) {$G_{n}$%\pnew{Previously it was $G_n$, pls check.}
	};

	\end{tikzpicture}
	\caption{Schematic representation of a model of $\varphi_{\zigzag}$, where the parts in red (grey) only contain relations from $E$ and relations in $F$ are blue (black). Relation $R$ and $L$ are omitted.}\label{fig:modelOfFormula}
\end{figure}
We use $D^4$ relations $\{F_{k}\}_{k \in \indexSetH}$ to enforce an ordering on the $D^4$ children of each vertex. We use additional relations to encode rotation maps. 
For $i,j\in \indexSetRotation$ 
let  $E_{i,j}$ be a binary relation. For every pair $i,j\in \indexSetRotation$ we represent an  edge  $\{v,w\}$ in  $G_m$ by the two tuples $(v,w)\in \rel{E_{i,j}}{\struc{A}}$ 
and $(w,v)\in \rel{E_{j,i}}{\struc{A}}$.
This allows us to encode the relationship $\rot_{G_m}(v,i)=(w,j)$ in first-order logic using the formula $`E_{i,j}(v,w)$'.

We use auxiliary relations $R$ and $L_{k}$ for $k\in \indexSetH$, to force the models to be degree regular. The relation $R$ contains the tuple $(r,r)$ for the  root $r$ of the tree, and $L_{k}$ will contain the tuple $(v,v)$ for every leaf $v$ of the tree. 

We now give the precise definition of the formula. We use $[n]:=\{0,1,\dots,n-1\}$ for $n\in \mathbb{N}$. Let 
\begin{displaymath}
\sigma:=\big\{ \{E_{i,j}\}_{i,j \in \indexSetRotation},\{F_{k}\}_{k \in \indexSetH},R,\{L_k\}_{k\in \indexSetH}\big\},
\end{displaymath} 
where $E_{i,j}$, $F_{k}$, $R$ and $L_k$ are binary relation symbols for $i,j\in [D]^2$ and $k\in ([D]^2)^2$.  
For convenience
we introduce auxiliary relations $E$ and $F$ with the property that for every $\sigma$-structure $\struc{A}$ we have $\rel{E}{\struc{A}}:=\bigcup_{i,j\in \indexSetRotation}\rel{E_{i,j}}{\struc{A}}$ and $\rel{F}{\struc{A}}:=\bigcup_{k\in \indexSetH}\rel{F_k}{\struc{A}}$. 
In any formula we can reverse using these auxiliary relations by replacing formulas of the form 
``$E(x,y)$'' by ``$\bigvee_{i,j\in\indexSetRotation}E_{i,j}(x,y)$'' and formulas of the form ``$F(x,y)$'' by
``$\bigvee_{k\in \indexSetH}F_{k}(x,y)$'' below. 

We use the following formula 
$\varphi_{\operatorname{root}}(x):= \forall y \lnot F(y,x)$ and we say that an element $a\in \univ{A}$ is a root of a structure $\struc{A}$ if $\struc{A}\models \varphi_{\operatorname{root}}(a)$.

We now define a formula $\varphi_{\operatorname{tree}}$, which expresses that any model restricted to the relation $F$ locally looks like a $D^4$-ary tree. More precisely, the formula defines that the structure has no more than one root, that every other vertex has exactly one parent and every vertex has either no children or exactly one child for each of the $D^4$ relations $F_k$. It also defines the self-loops used to make the structure degree regular. 
\begin{align*}
\varphi_{\operatorname{tree}}&:= \exists^{\leq 1} x \varphi_{\operatorname{root}}(x)\land \forall x \Big(\big(\varphi_{\operatorname{root}}(x)\land R(x,x)\big)\lor \\& 
\big(\exists^{=1} y F(y,x)\land \lnot \exists y R(x,y)\land \lnot \exists y R(y,x)\big)\Big)\land
\\&\forall x \bigg(\Big[\neg\exists y F(x,y)\land \bigwedge_{k\in \indexSetH} L_k(x,x)\land \forall y \big(y\not= x \rightarrow 
\\&
\bigwedge_{k\in \indexSetH}\lnot L_k(x,y) \wedge \bigwedge_{k\in \indexSetH}\lnot L_k(y,x)\big)\Big]
\\&\lor \Big[\lnot\exists y \bigvee_{k\in \indexSetH}\big(L_k(x,y)\lor L_k(y,x)\big)\land 
\bigwedge_{k\in \indexSetH}\exists y_{k} \Big(x\not=y_{k}\land F_{k}(x,y_{k})\land \\&
(\bigwedge_{k'\in \indexSetH,k'\not=k}\lnot F_{k'}(x,y_k))\land \forall y(y\not=y_k\rightarrow \lnot F_{k}(x,y))\Big)\Big]\bigg).
\end{align*}%\begin{align*}

\bigskip
The formula $\varphi_{\operatorname{rotationMap}}$ will define the properties the relations in $E$ need to have in order to encode rotation maps of $D^2$-regular graphs.  For this we  make sure that the edge colours encode a map, i.e. for any pair of a vertex $x$ and index $i\in \indexSetRotation$ there is only one pair of vertex $y$ and index $j\in \indexSetRotation$ such that $E_{i,j}(x,y)$ holds and that the map is self inverse, i.e. if $E_{i,j}(x,y)$ then $E_{j,i}(y,x)$.
\begin{align*}
\varphi_{\operatorname{rotationMap}}:=& \forall x \forall y \Big(\bigwedge_{i,j\in \indexSetRotation}(E_{i,j}(x,y)\rightarrow E_{j,i}(y,x))\Big)
\land
\\&\forall x \Big(\bigwedge_{i\in \indexSetRotation}\Big(\bigvee_{j\in \indexSetRotation}\big(\exists^{=1}y E_{i,j}(x,y)\land \bigwedge_{\substack{j'\in \indexSetRotation\\ j'\not=j}}\lnot \exists y E_{i,j'}(x,y)\big)\Big)\Big)
\end{align*}

We now define a formula $\varphi_{\operatorname{base}}$ which expresses that every root $x$ of a structure has a self-loop $(x,x)$ in each relation $E_{i,j}$ and that the $D^4$ children of a root form $G_1$. Let $H$ be the $(D^4,D,1/4)$-graph from Definition~\ref{dfn:expanders}. We assume that  $H$ has vertex set $\indexSetH$. We then identify vertex $k\in\indexSetH$ with the element $y$ such that $(x,y)\in \rel{F_k}{\struc{A}}$ for each root $x$. Let  $\rot_H:\indexSetH \times[D]\rightarrow \indexSetH\times[D]$ be any rotation map of $H$. Fixing a rotation map for $H$ fixes the rotation map for $H^2$. Recall that $G_1:=H^2$. We can define $G_1$ by a conjunction over all edges of $G_1$.  
\begin{align*}
\varphi_{\operatorname{base}}:=&\forall x  \Big(\varphi_{\operatorname{root}}(x)\rightarrow\Big[\bigwedge_{i,j\in \indexSetRotation}\Big(E_{i,j}(x,x)\land \forall y \Big(x\not= y\rightarrow \\& \big(\lnot E_{i,j}(x,y)\land \lnot E_{i,j}(y,x)\big)\Big)\Big)\land \\& \bigwedge_{\substack{ \rot_{H^2}(k,i)=(k',i')\\k,k'\in \indexSetH\\i,i'\in \indexSetRotation }}\exists y \exists y'\big(F_k(x,y)\land F_{k'}(x,y')\land E_{i,i'}(y,y')\big)\Big]\Big)
\end{align*}

We will now define a formula $\varphi_{\operatorname{recursion}}$ which will ensure that level $m$ of the tree contains $G_m$. Recall that  $G_m:= G_{m-1}^2\zigzag H$. We therefore express that if there is a path of length two between two vertices $x,z$ then for every pair $i,j\in [D]$ there is an edge connecting the corresponding children of $x$ and $z$ according to the definition of the zig-zag product. Here it is important that $x$ and $z$ either both have no children in the underlying tree structure or they both have children. This will also be encoded in the formula.
\begin{align*}
\varphi_{\operatorname{recursion}}:=& \forall x \forall z\bigg[\Big(\lnot \exists y F(x,y)\land \lnot \exists y F(z,y)\Big)\lor
\\&
\bigwedge_{\substack{k_1',k_2'\in \indexSetRotation\\\ell_1',\ell_2'\in \indexSetRotation}}\Big(\exists y \big[E_{k_1',\ell_1'}(x,y)\land E_{k_2',\ell_2'}(y,z)\big]\rightarrow 
\\&
\bigwedge_{\substack{i,j,i',j'\in [D], k,\ell\in \indexSetH\\\rot_H(k,i)=((k_1', k_2'),i')\\ \rot_H((\ell_2', \ell_1'),j)=(\ell,j')}}\exists x'\exists z'\big[ F_k(x,x')\land F_\ell(z,z')\land%\\&
E_{(i,j),(j',i')}(x',z')\big]\Big)\bigg]
\end{align*}

We finally let $\varphi_{\zigzag}:=\varphi_{\operatorname{tree}}\land \varphi_{\operatorname{rotationMap}}\land \varphi_{\operatorname{base}}\land \varphi_{\operatorname{recursion}}$.
This concludes defining the formula. 

\subsection{Proving expansion} 
In this section we prove that the formula $\varphi_{\zigzag}$ defines a property of expanders on bounded-degree relational structures.

Let $d:=2D^2+D^4+1$, which is chosen in such a way to allow for any element of a $\sigma$-structure in $\classStruc{C}_{\sigma,d}$ to be in $2D^2$ $E$-relations ($G_m$ is $D^2$ regular and every edge of $G_m$ is modelled by two tuples), to have either $D^4$ $F$-children or $D^4$ $L$-self-loops and to either have one $F$-parent or be in one $R$-self-loop. 

To each model $\struc{A}$ of $\varphi_{\zigzag}$ we will associate an undirected (with parallel edges and self loops) graph $\underlyingGraph{\struc{A}}$ with vertex set $\univ{A}$. For every tuple in each of the relations of $\struc{A}$, the graph $\underlyingGraph{\struc{A}}$ will have an edge. We will define $\underlyingGraph{\struc{A}}$ by a rotation map, which extends the rotation map encoded by the relation $E$. For this let $I:=\{0\}\sqcup \indexSetH\sqcup \indexSetRotation$ be an index set. Formally, we define the \emph{underlying graph} $\underlyingGraph{\struc{A}}$ of a model $\struc{A}$ of $\varphi_{\zigzag}$ to be the undirected graph with vertex set $\univ{A}$ given by the rotation map $\rot_{\underlyingGraph{\struc{A}}}: A\times I\rightarrow A\times I$ defined by 
\begin{align*}
\rot_{\underlyingGraph{\struc{A}}}(v,i):=\begin{cases} 
(v,0) & \text{if }i=0\text{ and }(v,v)\in \rel{R}{\struc{A}} \\
(w,j) & \text{if }i=0\text{ and }(w,v)\in \rel{F_j}{\struc{A}}\\
(w,0) & \text{if }i\in \indexSetH \text{ and }(v,w)\in \rel{F_i}{\struc{A}}\\
(v,i) & \text{if }i\in \indexSetH \text{ and }(v,v)\in \rel{L_i}{\struc{A}}\\
(w,j) & \text{if }i\in \indexSetRotation \text{ and }(v,w)\in \rel{E_{i,j}}{\struc{A}}. 
\end{cases}
\end{align*} 
We can understand this rotation map as labelling the tuples containing an element $v$ as follows: $(v,v)\in \rel{R}{\struc{A}}$ or $(w,v)\in \rel{F_k}{\struc{A}}$ respectively is labelled by  $0$, $(v,w)\in \rel{F_k}{\struc{A}}$ or $(v,v)\in \rel{L_k}{\struc{A}}$ respectively is labelled by $k$ and $(v,w)\in \rel{E_{i,j}}{\struc{A}}$ is labelled by $i$.
Note that $\underlyingGraph{\struc{A}}$ is $(D^2+D^4+1)$-regular. We chose the notion of an underlying graph here instead of the Gaifman graph, 
and it is  more convenient in particular for using results from \cite{Reingold00entropywaves}. However the Gaifman graph can be obtained from the underlying graph by ignoring self-loops and multiple edges.  

\begin{theorem}\label{thm:expansionOfModels}
	There is an $\epsilon>0$ such that the class $\{\underlyingGraph{\struc{A}}\mid \struc{A}\models \varphi_{\zigzag}\}$ is a family of $\epsilon$-expanders.
\end{theorem}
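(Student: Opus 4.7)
My approach has three phases: pinning down the combinatorial structure of every finite model of $\varphi_{\zigzag}$, importing the known expansion of the zig-zag graphs $G_m$ level by level, and bounding the edge expansion of $\underline{G}(\struc{A})$ via a case analysis on subsets $S\subseteq \univ{A}$ with $|S|\leq|\univ{A}|/2$.

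\textbf{Structure of models.} I would first show that every finite $\struc{A}\models \varphi_{\zigzag}$ has a well-defined depth $n\in\mathbb N$, a unique root, and a complete $D^4$-ary tree of depth $n$ carried by $\{F_k\}_{k\in \indexSetH}$. Writing $V_m$ for the set of elements at $F$-distance $m$ from the root, one has $|V_m|=D^{4m}$, and by $\varphi_{\operatorname{rotationMap}}$ the $E$-relations restricted to $V_m$ encode a $D^2$-regular rotation map. The conjunct $\varphi_{\operatorname{base}}$ forces the level-$1$ graph to be $G_1=H^2$; inductively, $\varphi_{\operatorname{recursion}}$, whose body literally transcribes the definition of $\zigzag$ applied to $\rot_H$, forces the graph on $V_m$ to be isomorphic to $G_m=G_{m-1}^2\zigzag H$ for $m\geq 2$. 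Thus the $E$-edges within $V_m$ form (a copy of) $G_m$, and the remaining relations $R, L_k$ contribute only self-loops that balance the degree.

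\textbf{Per-level expansion and combination.} By Proposition~\ref{prop:recursiveConstruction}, $\lambda(G_m)\leq 1/2$ for every $m\geq 1$, so Theorem~\ref{thm:boundExpansionRatioInTermsOfLambda} yields $h(G_m)\geq D^2/4$. A geometric-series calculation gives $|V\setminus V_n|\leq |V_n|/(D^4-1)$, so $V_n$ contains nearly all of $\univ{A}$ once $D$ is large. For any $S\subseteq \univ{A}$ with $|S|\leq|\univ{A}|/2$, let $S_n=S\cap V_n$. Since $G_n$ sits inside $\underline{G}(\struc{A})$ as an induced subgraph, one obtains
\begin{equation*}
|\langle S,\overline S\rangle_{\underline{G}(\struc{A})}|\ \geq\ |\langle S_n,\, V_n\setminus S_n\rangle_{G_n}|\ \geq\ \frac{D^2}{4}\,\min\bigl(|S_n|,\,|V_n|-|S_n|\bigr).
\end{equation*}
In the regime $|S|\geq 2|V\setminus V_n|$, the size estimate above forces $|S_n|\geq|S|/2$ and $|V_n|-|S_n|\geq \Omega(|S|)$, so the inequality immediately gives $|\langle S,\overline S\rangle|\geq \Omega(D^2)\,|S|$. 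In the complementary regime $|S|<2|V\setminus V_n|$, the set $S$ is small in absolute terms; here either an induction on $n$ applied to the sub-model on $V\setminus V_n$ (which is itself the underlying graph of a depth-$(n-1)$ model) or a direct count of tree edges between $V_{n-1}\cap S$ and $V_n\cap \overline S$ (which must be numerous since at most $|S_n|$ of the $D^4|V_{n-1}\cap S|$ children of $V_{n-1}\cap S$ can lie in $S$) supplies the required lower bound. Taking $\epsilon$ to be the smaller of the two constants yields a fixed $\epsilon>0$ uniformly over all models.

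\textbf{Main obstacle.} The bulk of the work lies in making the structural claim of the first step precise. The conjunction of $\varphi_{\operatorname{tree}}$, $\varphi_{\operatorname{rotationMap}}$, $\varphi_{\operatorname{base}}$ and $\varphi_{\operatorname{recursion}}$ must be shown to pin down the intended tree-plus-$G_m$ structure on every finite model, while carefully handling the interaction between leaves and internal nodes — for example, ruling out mixed configurations by exploiting the fact that the leaf-or-non-leaf dichotomy is enforced between any two vertices joined by an $E$-walk of length $2$, hence propagates within each connected component of $G_m^2$. A further subtlety is that $G_m^2$ itself may be disconnected, requiring the argument to be applied separately to each component; Lemma~\ref{lem:nonBipartitenessConnectedness} (together with the fact that every $G_m$ is non-bipartite, arising from an expander of $\lambda<1$) then ensures the recursive construction still produces $G_m$ as claimed. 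Once this structural backbone is in place, the expansion bound follows from the routine calculations sketched above.
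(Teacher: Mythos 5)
Your structural analysis (phase one) and the per-level bound $h(G_m)\geq D^2/4$ are correct and match the paper's Lemmas~\ref{lem:recursion}, \ref{lem:connected}, \ref{lem:exactFormOfModels} and Proposition~\ref{prop:recursiveConstruction}. Your first regime ($|S|\geq 2|V\setminus V_n|$) is also fine: the geometric-series estimate $|V\setminus V_n|<|V_n|/(D^4-1)$ indeed forces $|S_n|\geq|S|/2$ and $|V_n|-|S_n|\geq|S|/2$, so level-$n$ expansion alone gives the bound.

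The genuine difference from the paper, and where you have a real gap, is in the second regime. The paper does not split ``top vs.\ bottom'' and recurse; instead it locates the \emph{largest} level $m'$ with $|S_{m'}|>|T_{m'}|/2$ and counts the $F$-tree edges crossing from $S_{m'}$ into $T_{m'+1}\setminus S_{m'+1}$. Because $m'$ is maximal, $|S_{m'+1}|\leq|T_{m'+1}|/2$, so $D^4|S_{m'}|-|S_{m'+1}|\geq D^4(|S_{m'}|-D^{4m'}/2)>0$ by choice of $m'$; combined with the per-level cuts above and below $m'$ this yields the bound in a single pass. Your replacement — the ``direct count of tree edges between $V_{n-1}\cap S$ and $V_n\cap\overline S$'' — looks only at the last two levels. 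If the critical transition happens lower (e.g.\ $S$ contains all of levels $0,\dots,m'$ and essentially nothing above $m'$ for some $m'\ll n$), the quantity $D^4|V_{n-1}\cap S|-|S_n|$ can be zero or negative and gives no information. Worse, in the subcase $|S\cap(V\setminus V_n)|>|V\setminus V_n|/2$, the expansion of the sub-model only bounds the cut by $\epsilon\cdot\bigl(|V\setminus V_n|-|S\cap(V\setminus V_n)|\bigr)$, which may be $o(|S|)$; you then need to argue separately that $|V_{n-1}\cap S|=\Omega(|V_{n-1}|)$ (which does follow from the subcase hypothesis and the geometric decay of earlier levels, but you do not say so) in order for the tree-edge count to rescue the bound. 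Finally, the phrase ``the sub-model on $V\setminus V_n$ \ldots is itself the underlying graph of a depth-$(n-1)$ model'' is not literally true: the level-$(n-1)$ vertices of $\struc{A}$ have $F$-children and no $L_k$-loops, whereas the leaves of a depth-$(n-1)$ model carry $L_k$-loops, so $\underline G(\struc A)[V\setminus V_n]$ and the smaller model's underlying graph differ by self-loops. Since self-loops contribute nothing to edge cuts this is harmless for expansion, but it needs to be said; as written, the induction is applied to a structure that does not satisfy $\varphi_{\operatorname{tree}}$. In short: your plan can very likely be pushed through, but the second regime currently hides the exact case analysis that does the work, and the single-level tree-edge count does not, by itself, cover all configurations of $S$.
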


In the rest of this section, we give the proof of Theorem \ref{thm:expansionOfModels}.
Let $\struc{A}$ be a model of $\varphi_{\zigzag}$. 
Let  $\struc{A}|_F$ be the $\{(F_{k})_{k\in \indexSetH}\}$-structure $(\univ{A},(\rel{F_k}{\struc{A}})_{k\in \indexSetH})$. Recall that we denote the Gaifman graph of $\struc{A}|_F$ by $G(\struc{A}|_F)$. 
Let $\struc{A}|_E$ be the $\{(E_{i,j})_{i,j\in \indexSetRotation}\}$-structure $(\univ{A},(\rel{E_{i,j}}{\struc{A}})_{i,j\in \indexSetRotation})$.  We further define the \emph{underlying graph} $\underlyingGraph{\struc{A}|_E}$ of $\struc{A}|_E$ as the undirected graph specified by the rotation map  $\rot_{\underlyingGraph{\struc{A}|_E}}$ which is defined by $\rot_{\underlyingGraph{\struc{A}|_E}}(v,i):=(w,j)$ if $(v,w)\in \rel{E_{i,j}}{\struc{A}}$. This is well defined as $\struc{A}\models \varphi_{\operatorname{rotationMap}}$. 
We use the substructures $G(\struc{A}|_F)$ and $\underlyingGraph{\struc{A}|_E}$ to express the structural properties of models of $\varphi_{\zigzag}$. More precisely, we want to prove that $G(\struc{A}|_F)$ is a rooted complete tree and $\underlyingGraph{\struc{A}|_E}$ is the disjoint union of the expanders $G_1,\dots,G_{n}$ for some $n\in \mathbb{N}$ (Lemma~\ref{lem:exactFormOfModels}). To prove this we use two technical lemmas (Lemma~\ref{lem:recursion} and Lemma~\ref{lem:connected}). Lemma~\ref{lem:recursion} intuitively shows that the children in $G(\struc{A}|_F)$ of each connected part of $\underlyingGraph{\struc{A}|_E}$  form the  zig-zag product with $H$ of the square of the connected part. Lemma~\ref{lem:connected} shows that $G(\struc{A}|_F)$ is connected. To prove Theorem \ref{thm:expansionOfModels} we use that a tree with an expander on each level has good expansion. Loosely speaking, this is true because cutting the tree `horizontally' takes many edge deletions and for cutting the tree `vertically' we  cut many expanders. 

\begin{lemma}\label{lem:recursion}
	Let $\struc{A}$ be a model of $\varphi_{\zigzag}$ and assume $S$ is the set of all vertices belonging to a  connected component  
	of $(\underlyingGraph{\struc{A}|_E})^2$ not containing a root and let $S':=\{w\in \univ{A}\mid (v,w)\in \rel{F}{\struc{A}}, v\in S\}$.  
	If $S'\not=\emptyset$ then  $\underlyingGraph{\struc{A}|_E}[S']$ is a connected component of $\underlyingGraph{\struc{A}|_E}$ and  $\underlyingGraph{\struc{A}|_E}[S']\cong((\underlyingGraph{\struc{A}|_E})^2[S])\zigzag H$. 
\end{lemma}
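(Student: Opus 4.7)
My plan is to exploit $\varphi_{\operatorname{tree}}$, $\varphi_{\operatorname{rotationMap}}$ and $\varphi_{\operatorname{recursion}}$ to exhibit an explicit isomorphism
$h\colon S\times \indexSetH \to S'$ that identifies the zig-zag product structure on the left with $\underlyingGraph{\struc{A}|_E}[S']$ on the right, and then use a degree-counting argument to show that $S'$ carries no $E$-edges leaving it. The first step is to observe that $\varphi_{\operatorname{tree}}$ forces every element to have either no $F$-children or exactly $D^4$ $F$-children, one per relation $F_k$ with $k\in \indexSetH$, and moreover these children are pairwise distinct. Combined with the fact that $\varphi_{\operatorname{recursion}}$ propagates `having children' along paths of length $2$ in $\underlyingGraph{\struc{A}|_E}$: if $x$ and $z$ lie in the same connected component of $(\underlyingGraph{\struc{A}|_E})^2$ and one of them has a child under $F$, then so does the other. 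Applied to the connected component $S$, the hypothesis $S'\neq\emptyset$ upgrades to: every $v\in S$ has exactly $D^4$ children, one for each $k\in \indexSetH$.

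This allows me to define $h(v,k)$ as the unique $w$ with $(v,w)\in \rel{F_k}{\struc{A}}$. Uniqueness of parents (again from $\varphi_{\operatorname{tree}}$) makes $h$ injective, and by definition of $S'$ the map is onto $S'$; in particular $|S'|=|S|\cdot D^4$, matching the vertex set of $((\underlyingGraph{\struc{A}|_E})^2[S])\zigzag H$ with $H$ having vertex set $\indexSetH$. Next I would check that $h$ preserves adjacency in a labelled sense by reading $\varphi_{\operatorname{recursion}}$ against the definition of $\rot_{G_1\zigzag G_2}$: whenever $\rot_{(\underlyingGraph{\struc{A}|_E})^2[S]}(v,(k_1',k_2'))=(w,(\ell_2',\ell_1'))$, i.e.\ whenever there is a path $x\!\to\!y\!\to\!z$ in $\underlyingGraph{\struc{A}|_E}$ with the correct pair of $E$-labels, and whenever $\rot_H(k,i)=((k_1',k_2'),i')$ and $\rot_H((\ell_2',\ell_1'),j)=(\ell,j')$, the formula asserts $(h(v,k),h(w,\ell))\in \rel{E_{(i,j),(j',i')}}{\struc{A}}$. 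Comparing with the definition of $\rot_{G_1\zigzag G_2}$ in Section~\ref{sec:zigZagProduct}, this is exactly the zig-zag edge labelled by $(i,j)$ at $(v,k)$.

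To conclude the isomorphism, I want that the $D^2$ edges around each $h(v,k)$ supplied by $\varphi_{\operatorname{recursion}}$ are \emph{all} of its $E$-edges. By $\varphi_{\operatorname{rotationMap}}$, each element has exactly $D^2$ outgoing $E_{i,j}$-tuples (one for each $i\in \indexSetRotation$), so $\underlyingGraph{\struc{A}|_E}$ is $D^2$-regular; the zig-zag product $((\underlyingGraph{\struc{A}|_E})^2[S])\zigzag H$ is also $D^2$-regular. Since the zig-zag edges already saturate the degree of $h(v,k)$, every $E$-neighbour of $h(v,k)$ lies in $S'$; hence $S'$ is a union of connected components of $\underlyingGraph{\struc{A}|_E}$, and the labelled correspondence above yields the claimed isomorphism $\underlyingGraph{\struc{A}|_E}[S']\cong((\underlyingGraph{\struc{A}|_E})^2[S])\zigzag H$. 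Finally, to see that $S'$ is a \emph{single} component, I use that $(\underlyingGraph{\struc{A}|_E})^2[S]$ is connected by the definition of $S$, so Lemma~\ref{lem:nonBipartitenessConnectedness} gives $\lambda((\underlyingGraph{\struc{A}|_E})^2[S])<1$, and $\lambda(H)=1/4<1$; then Theorem~\ref{thm:expansionOfZigZag}(1) yields $\lambda<1$ on the zig-zag product, so it is connected by Lemma~\ref{lem:connectedBipartiteEigenvalues}.

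The main obstacle I expect is the bookkeeping in the second paragraph: one has to match the index conventions of $\varphi_{\operatorname{recursion}}$ (with pairs $(k_1',k_2'),(\ell_2',\ell_1')$ encoding the square rotation at $v$ and $w$) against the zig-zag rotation rule exactly, so that the edge labels produced by the formula coincide with those prescribed by $\rot_{G_1\zigzag G_2}$. Once that identification is made, the regularity-plus-saturation argument closes both the isomorphism and the `closed under edges' claim in one stroke.
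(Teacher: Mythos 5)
Your argument is correct and takes essentially the same route as the paper: extract the bijection $h\colon S\times\indexSetH\to S'$ from $\varphi_{\operatorname{tree}}$, read $\varphi_{\operatorname{recursion}}$ to show $h$ is a rotation-map isomorphism onto $\underlyingGraph{\struc{A}|_E}[S']$, use $D^2$-regularity of $\underlyingGraph{\struc{A}|_E}$ to conclude $S'$ is closed under $E$-edges, and invoke Lemma~\ref{lem:nonBipartitenessConnectedness}, Theorem~\ref{thm:expansionOfZigZag} and Lemma~\ref{lem:connectedBipartiteEigenvalues} for connectivity. Your one small streamlining --- propagating ``has $F$-children'' directly along edges of $(\underlyingGraph{\struc{A}|_E})^2[S]$, which is connected by hypothesis, instead of first proving the zig-zag product connected and lifting a path back via the paper's Claim~\ref{claim:edgeInducesPathOfLenghTwoOnParents} --- is sound and shaves a step; just note that this propagation relies on the inner conjunction in $\varphi_{\operatorname{recursion}}$ being non-empty, which the paper justifies via the $D$-regularity of $H$ in Claim~\ref{claim:pathOfLenghTwoCausesNoneOrTwoChildren} and which your write-up takes for granted.
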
 
We use connected components of $(\underlyingGraph{\struc{A}|_E})^2$ as the square of a connected component of $\underlyingGraph{\struc{A}|_E}$ may not be connected, in which case the zig-zag product with $H$ of the square of the connected component cannot be connected. 
\begin{proof}[Proof of Lemma \ref{lem:recursion}] Assume that $S'\not=\emptyset$. We first show that $\underlyingGraph{\struc{A}|_E}[S']\cong((\underlyingGraph{\struc{A}|_E})^2[S])\zigzag H$.  
	For this we use the following two claims. 
	\begin{claim}\label{claim:edgeInducesPathOfLenghTwoOnParents}
If $$\rot_{(\underlyingGraph{\struc{A}|_E})^2[S]\zigzag H}((u,k),(i,j))=((w,\ell),(j',i'))$$ for some $u,w\in S$, $k,\ell\in \indexSetH$, $i,j,i',j'\in [D]$  then there is $v\in S$ such that $(u,v)\in \rel{E_{k_1',\ell_1'}}{\struc{A}}$ and $(v,w)\in \rel{E_{k_2',\ell_2'}}{\struc{A}}$ where $\rot_{H}(k,i)=((k_1', k_2'),i')$ and $\rot_{H}((\ell_2', \ell_1'),j)=(\ell,j')$. %\hfill $(\ast)$
	\end{claim}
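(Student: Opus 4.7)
The proof will be obtained by peeling off three layers of nested rotation-map definitions—zig-zag, square, and the underlying graph—followed by a short structural argument to place the intermediate vertex inside $S$.

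First, I would apply the definition of the zig-zag product rotation map to the hypothesis, with $G_1 := (\underlyingGraph{\struc{A}|_E})^2[S]$. This directly produces intermediate labels $k' \in \indexSetH$, $\ell' \in \indexSetH$ and $i', j' \in [D]$ with $\rot_H(k,i) = (k',i')$, $\rot_{G_1}(u,k') = (w,\ell')$, and $\rot_H(\ell',j) = (\ell,j')$. Writing $k' = (k_1',k_2')$ and $\ell' = (\ell_2',\ell_1')$ in $([D]^2)^2 = \indexSetH$, the first and third identities match the $H$-related equations asserted in the claim. Since $S$ is a connected component of $(\underlyingGraph{\struc{A}|_E})^2$, every edge of $(\underlyingGraph{\struc{A}|_E})^2$ incident to a vertex of $S$ remains inside $S$, hence $\rot_{G_1}$ coincides with $\rot_{(\underlyingGraph{\struc{A}|_E})^2}$ on $S \times \indexSetH$. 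I would then apply the definition of the square's rotation map to $\rot_{(\underlyingGraph{\struc{A}|_E})^2}(u,(k_1',k_2')) = (w,(\ell_2',\ell_1'))$, exposing a vertex $v \in \univ{A}$ with $\rot_{\underlyingGraph{\struc{A}|_E}}(u,k_1') = (v,\ell_1')$ and $\rot_{\underlyingGraph{\struc{A}|_E}}(v,k_2') = (w,\ell_2')$. Reading off the case $i \in \indexSetRotation$ of the definition of $\rot_{\underlyingGraph{\struc{A}|_E}}$ immediately translates these equalities into $(u,v) \in \rel{E_{k_1',\ell_1'}}{\struc{A}}$ and $(v,w) \in \rel{E_{k_2',\ell_2'}}{\struc{A}}$, as required.

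It remains to place $v$ in $S$. Since $u \in S$ is not a root and is $E$-adjacent to $v$, the clause of $\varphi_{\operatorname{base}}$ forbidding any $E$-tuple between a root and a distinct vertex rules out $v$ being a root. Hence $v$ lies at some non-root level of the tree, and I would use the intended semantics of $\varphi_{\operatorname{base}}$ together with $\varphi_{\operatorname{recursion}}$ to show that $v$ carries an $E$-self-loop: self-loops already appear in $G_1 = H^2$ via length-two turn-around walks in $H$, and the zig-zag recursion preserves them across deeper levels. The walk $u \to v \to v$ through this self-loop is a length-two walk in $\underlyingGraph{\struc{A}|_E}$, which witnesses that $v$ lies in the same connected component of the square as $u$; so $v \in S$. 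I expect this final step to be the main obstacle, because the preceding definition chasing is essentially routine, whereas $v \in S$ is a genuine structural consequence of $\varphi_{\operatorname{base}}$ and $\varphi_{\operatorname{recursion}}$ rather than of the zig-zag definition alone.
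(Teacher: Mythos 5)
Your definition-chasing — unfold the zig-zag, restrict to $S$ (legitimate because $S$ is an entire connected component of $(\underlyingGraph{\struc{A}|_E})^2$), unfold the square, then translate $\rot_{\underlyingGraph{\struc{A}|_E}}$ into the $E_{i,j}$ relations — is exactly the paper's proof, and it correctly produces $v\in\univ{A}$ with the required tuples. Where you diverge is the final paragraph, and you were right to be uneasy about it: the paper's own proof never establishes $v\in S$. It exhibits a $v$ but says nothing about which component of $(\underlyingGraph{\struc{A}|_E})^2$ that $v$ falls in. Since $v$ is at distance one from $u$ in $\underlyingGraph{\struc{A}|_E}$, having $v\in S$ would require the $\underlyingGraph{\struc{A}|_E}$-component containing $u$ to be non-bipartite, which is precisely the kind of structural fact that Lemma~\ref{lem:recursion} is building towards and cannot be assumed at this point. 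The ``$v\in S$'' in the claim statement is an overstatement; everywhere the claim is invoked (in particular in the inductive application of Claim~\ref{claim:pathOfLenghTwoCausesNoneOrTwoChildren}) only $v\in\univ{A}$ is used.

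Your proposed patch does not close this (non-)gap. The logical shape is fine — an $E$-self-loop at $v$ would make $u\to v\to v$ a length-two walk, putting $v$ adjacent to $u$ in $(\underlyingGraph{\struc{A}|_E})^2$ — but the premise that the zig-zag recursion preserves self-loops at every vertex is not justified, and I do not think it holds. Squaring creates a self-loop at every non-isolated vertex, so $G_1=H^2$ and each $G_{m-1}^2$ have them. But for $(v,k)$ to have a self-loop in $G_{m-1}^2\zigzag H$ you need some $H$-neighbour $k'$ of $k$ to be a turn-around index of $G_{m-1}^2$ at $v$; the vertex $k$ has only $D$ neighbours in $H$ while the turn-around indices form a $D^2$-element subset of $\indexSetH$, which has $D^4$ elements, so nothing forces an intersection. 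The self-loop route is a dead end. The clean fix is simply to prove the claim with the conclusion weakened to $v\in\univ{A}$, which is what the paper actually does and is all the surrounding argument needs.
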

	\begin{proof}
		the precondition of the Claim and the definition of the zig-zag product, we have that $\rot_{(\underlyingGraph{\struc{A}|_E})^2[S]}(u,(k_1', k_2'))=(w,(\ell_2', \ell_1'))$ for   $\rot_{H}(k,i)=((k_1', k_2'),i')$ and $\rot_{H}((\ell_2', \ell_1'),j)=(\ell,j')$. 
		
		Since $\rot_{(\underlyingGraph{\struc{A}|_E})^2[S]}$ is equal to $\rot_{(\underlyingGraph{\struc{A}|_E})^2}$ restricted to elements of the set $S$, we have that $\rot_{(\underlyingGraph{\struc{A}|_E})^2}(u,(k_1', k_2'))=(w,(\ell_2', \ell_1'))$. Consequently, by the definition of the square of a graph $\rot_{(\underlyingGraph{\struc{A}|_E})^2}(u,(k_1', k_2'))=(w,(\ell_2', \ell_1'))$ implies that there is $v$ such that $\rot_{\underlyingGraph{\struc{A}|_E}}(u,k_1')=(v,\ell_1')$  and $\rot_{\underlyingGraph{\struc{A}|_E}}(v,k_2')=(w,\ell_2')$.
	\end{proof}
	
	\begin{claim}\label{claim:pathOfLenghTwoCausesNoneOrTwoChildren}
		If  $(u,v)\in \rel{E_{k_1',\ell_1'}}{\struc{A}}$ and $(v,w)\in \rel{E_{k_2',\ell_2'}}{\struc{A}}$ for $u,v,w\in \univ{A}$, $k_1',k_2',\ell_1',\ell_2'\in \indexSetH$ and there is $u'\in \univ{A}$ with $(u,u')\in \rel{F}{\struc{A}}$ then there is $w'\in \univ{A}$ such that $(w,w')\in \rel{F}{\struc{A}}$. Furthermore for any $i,i',j,j'\in [D]$ there are $\tilde{u},\tilde{w}\in \univ{A}$, $k,\ell\in \indexSetH$ such that $(\tilde{u},\tilde{w})\in \rel{E_{(i,j),(j'i')}}{\struc{A}}$ for $(u,\tilde{u})\in \rel{F_{k}}{\struc{A}}$ and $(w,\tilde{w})\in \rel{F_{\ell}}{\struc{A}}$ where $\rot_{H}(k,i)=((k_1', k_2'),i')$ and $\rot_{H}((\ell_2', \ell_1'),j)=(\ell,j')$.
	\end{claim}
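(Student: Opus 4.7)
The plan is to derive both assertions directly from $\varphi_{\operatorname{recursion}}$ instantiated at $x := u$ and $z := w$, feeding $v$ into the inner existential. Since $(u,u') \in \rel{F}{\struc{A}}$ witnesses $\exists y\, F(u,y)$, the left disjunct $\lnot\exists y\, F(u,y) \land \lnot\exists y\, F(w,y)$ of $\varphi_{\operatorname{recursion}}$ fails at $(u,w)$, so the right disjunct (the large conjunction) must hold at $(u,w)$. I would then focus on the $(k_1',k_2',\ell_1',\ell_2')$-conjunct matching the indices in the hypothesis: its premise $\exists y[E_{k_1',\ell_1'}(u,y) \land E_{k_2',\ell_2'}(y,w)]$ is witnessed by $v$ thanks to the two hypothesised $E$-tuples, so the conclusion of that implication holds.

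This conclusion is precisely the ``furthermore'' part: for every tuple $i,j,i',j' \in [D]$ and every $k,\ell \in \indexSetH$ satisfying the rotation conditions $\rot_H(k,i) = ((k_1',k_2'),i')$ and $\rot_H((\ell_2',\ell_1'),j) = (\ell,j')$, there exist $\tilde u, \tilde w \in \univ{A}$ with $(u,\tilde u) \in \rel{F_k}{\struc{A}}$, $(w,\tilde w) \in \rel{F_\ell}{\struc{A}}$ and $(\tilde u,\tilde w) \in \rel{E_{(i,j),(j',i')}}{\struc{A}}$, which is exactly what the claim asks for.

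To deduce the first assertion it then suffices to exhibit one valid instantiation so as to extract a child $\tilde w$ of $w$. Because $\rot_H$ is a self-inverse bijection on $\indexSetH \times [D]$, for an arbitrary choice of $i' \in [D]$ the pair $(k,i) := \rot_H((k_1',k_2'),i')$ automatically satisfies $\rot_H(k,i) = ((k_1',k_2'),i')$, and similarly for $j \in [D]$ the pair $(\ell,j') := \rot_H((\ell_2',\ell_1'),j)$ satisfies $\rot_H((\ell_2',\ell_1'),j) = (\ell,j')$; hence the rotation-indexed conjunction is non-vacuous. The witness $\tilde w$ coming from the corresponding conjunct then satisfies $(w,\tilde w) \in \rel{F_\ell}{\struc{A}} \subseteq \rel{F}{\struc{A}}$, so $w' := \tilde w$ fulfils the first part. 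I do not foresee any genuine obstacle: the argument is a syntactic unpacking of $\varphi_{\operatorname{recursion}}$, and the only point requiring a brief check is non-vacuity of the rotation-indexed conjunction, which follows at once from the bijectivity of $\rot_H$.
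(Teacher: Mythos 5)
Your proof is correct and takes essentially the same route as the paper: instantiate $\varphi_{\operatorname{recursion}}$ at $(x,z):=(u,w)$, observe the first disjunct fails because $(u,u')\in\rel{F}{\struc{A}}$, witness the antecedent of the $(k_1',k_2',\ell_1',\ell_2')$-conjunct with $y:=v$, read off the ``furthermore'' assertion from the consequent, and derive the first assertion by showing the inner conjunction is non-vacuous. Your phrasing of the non-vacuity step — choosing $i'$ freely and setting $(k,i):=\rot_H((k_1',k_2'),i')$ so that self-inverseness gives $\rot_H(k,i)=((k_1',k_2'),i')$ — is in fact slightly more careful than the paper's wording, which loosely asserts a valid $k$ exists ``for every $i,i'$'' where only $i'$ (and not $i$) can be chosen freely.
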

	\begin{proof}
		We only use that $\struc{A}\models \varphi_{\operatorname{recursion}}$. Since $\varphi_{\operatorname{recursion}}$ has the form $\forall x\forall z \psi(x,z)$ for some formula $\psi(x,z)$ we know that $\struc{A}\models \psi(u,w)$.
		Since $(u,u')\in \rel{F}{\struc{A}}$ we have  $\struc{A}\not\models  \lnot \exists y F(u,y)\land \lnot \exists y F(w,y)$. Since  $\struc{A}\models \exists y \big[E_{k_1',\ell_1'}(u,y)\land E_{k_2',\ell_2'}(w,z)\big]$  
		\begin{align*}
		&\struc{A}\models \bigwedge_{\substack{i,j,i',j'\in [D], k,\ell\in \indexSetH\\\rot_H(k,i)=((k_1', k_2'),i')\\ \rot_H((\ell_2', \ell_1'),j)=(\ell,j')}}\exists x'\exists z'\big[ F_k(u,x')\land 
		F_\ell(w,z')\land E_{(i,j),(j',i')}(x',z')\big]
		\end{align*}
		Since $H$ is $D$-regular, for every $k'_1,k'_2 \in [D]^2$ and $i,i' \in [D]$, there is $k \in ([D]^2)^2$ such that $\rot_H(k,i) = ((k'_1,k'_2,i')$ (and the same for $\ell'_1,\ell'_2,j,j'$). Thus, the above conjunction is not empty. This further implies that for any $i,i',j,j'\in [D]$ there are $\tilde{u},\tilde{w}\in \univ{A}$, $k,\ell\in \indexSetH$ as claimed. In particular there is $w' \in \univ{A}$ such that $(w,w') \in \rel{F}{\struc{A}}$. 
	\end{proof}

	We will argue that for every element $w\in S$ there is a $w'\in S'$ such that $(w,w')\in \rel{F}{\struc{A}}$.  For this pick any $u'\in S'$. Let $u\in S$ be the element such that  $(u,u')\in \rel{F}{\struc{A}}$. By combining Lemma \ref{lem:nonBipartitenessConnectedness}, Theorem \ref{thm:expansionOfZigZag} and Lemma \ref{lem:connectedBipartiteEigenvalues} it follows that $((\underlyingGraph{\struc{A}|_E})^2[S])\zigzag H$ is connected. Therefore, there exists a path $(u'_0,\dots,u'_m)$ in $((\underlyingGraph{\struc{A}|_E})^2[S])\zigzag H$ from $u'_0=(u,(k_1,k_2))$ to $u'_m=(w,(\ell_1,\ell_2))$ for some $k_1,k_2$,$\ell_1$,$\ell_2$ $\in \indexSetRotation$. By Claim \ref{claim:edgeInducesPathOfLenghTwoOnParents} there is a path $(u_0,v_0,u_1,v_1,\dots u_{m-1},v_{m-1},u_m)$ in $\underlyingGraph{\struc{A}|_E}$ from $u_0=u$ to $u_m=w$. By inductively using Claim \ref{claim:pathOfLenghTwoCausesNoneOrTwoChildren} on the path we find $w'$ such that $(w,w')\in \rel{F}{\struc{A}}$.

	Combining this with $\struc{A}\models \varphi_{\operatorname{tree}}$ implies that the map $f:S\times \indexSetH \rightarrow S'$, given by $f(v,k)=u$ if $(v,u)\in \rel{F_{k}}{\struc{A}}$, is well-defined. Furthermore, by Claim~\ref{claim:edgeInducesPathOfLenghTwoOnParents} and \ref{claim:pathOfLenghTwoCausesNoneOrTwoChildren}, we have that if it holds that  $\rot_{(\underlyingGraph{\struc{A}|_E})^2[S]\zigzag H}((u,k),(i,j))=((w,\ell),(j',i'))$ then \[\rot_{(\underlyingGraph{\struc{A}|_E})[S']}(f((u,k)),(i,j))=(f((w,\ell)),(j',i')).\] This proves that $f$ maps each edge in $((\underlyingGraph{\struc{A}|_E})^2[S])\zigzag H$ injectively to an edge in $\underlyingGraph{\struc{A}|_E}[S']$. Then the map $f$ together with the corresponding edge map is an isomorphism from $((\underlyingGraph{\struc{A}|_E})^2[S])\zigzag H$ to $\underlyingGraph{\struc{A}|_E}$ as both are $D^2$-regular.

	Moreover, $\underlyingGraph{\struc{A}|_E}[S']\cong((\underlyingGraph{\struc{A}|_E})^2[S])\zigzag H$ implies that  $\underlyingGraph{\struc{A}|_E}[S']$ is connected and $D^2$-regular. Since $\struc{A}\models \varphi_{\operatorname{rotationMap}}$ enforces that $\underlyingGraph{\struc{A}|_E}$ is $D^2$-regular, no vertex $v\in S'$ can have  neighbours which are not in $S'$ and therefore $\underlyingGraph{\struc{A}|_E}[S']$ is a connected component of $\underlyingGraph{\struc{A}|_E}$.
\end{proof}

\begin{lemma}\label{lem:connected}
	Let $\struc{A}\in \classStruc{C}_{\sigma,d}$ be a model  of $\varphi_{\zigzag}$. Then every connected component of $G(\struc{A}|_F)$  contains a root of $\struc{A}$. In particular for every model $\struc{A}\in \classStruc{C}_{\sigma,d}$ of $\varphi_{\zigzag}$ the graph $G(\struc{A}|_F)$ is connected.
\end{lemma}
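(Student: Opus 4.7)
The plan is to argue by contradiction. Suppose some connected component $C$ of $G(\struc{A}|_F)$ contains no root. By $\varphi_{\operatorname{tree}}$, every non-root element has a unique $F$-parent, so on $C$ the parent map is a total function $p : C \to C$. Since $C$ is finite, iterating $p$ from any vertex eventually yields a cycle $a_1, a_2, \ldots, a_m$ with $p(a_i) = a_{i+1 \bmod m}$; the case $m = 1$ is ruled out because $\varphi_{\operatorname{tree}}$ explicitly demands $F_k$-children to be distinct from their parent. Each $a_i$ has an $F$-child (namely $a_{i-1}$), hence by $\varphi_{\operatorname{tree}}$ each has exactly $D^4$ $F$-children.

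The crucial observation I will use is that $\varphi_{\operatorname{recursion}}$ forces every $E^2$-component to be \emph{uniform} in the leaf/non-leaf sense: if $x$ and $z$ are joined by an $E^2$-path, then the left disjunct ``neither has $F$-children'' or the right (zig-zag) disjunct must hold, and the right disjunct is a non-empty conjunction of existentials forcing \emph{both} $x$ and $z$ to have $F_k$-children. So in any $E^2$-component, either all vertices are leaves or all are non-leaves. Moreover, $\varphi_{\operatorname{base}}$ isolates each root in $\underlyingGraph{\struc{A}|_E}$ (roots carry only $E_{i,j}$ self-loops), so any non-root $E^2$-component contains only non-roots.

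Starting from $T_0 :=$ the $E^2$-component of $a_1$, which is non-root and (because $a_1$ has children) non-leaf, I will inductively produce $E^2$-components $T_0, T_1, T_2, \ldots$ with $T_{k+1}$ defined as the set of all $F$-children of vertices in $T_k$. Given $T_k$ as a non-root $E^2$-component of non-leaves, Lemma \ref{lem:recursion} gives that $T_{k+1}$ is an $E$-component of $\underlyingGraph{\struc{A}|_E}$ isomorphic to $(\underlyingGraph{\struc{A}|_E})^2[T_k]\zigzag H$. To promote this from an $E$-component to an $E^2$-component, I combine Lemma \ref{lem:nonBipartitenessConnectedness}, which (using that $\underlyingGraph{\struc{A}|_E}$ is $D^2$-regular by $\varphi_{\operatorname{rotationMap}}$) yields $\lambda((\underlyingGraph{\struc{A}|_E})^2[T_k]) < 1$, with $\lambda(H) \le 1/4 < 1$ and Theorem \ref{thm:expansionOfZigZag}, obtaining $\lambda(\underlyingGraph{\struc{A}|_E}[T_{k+1}]) < 1$; by Lemma \ref{lem:connectedBipartiteEigenvalues} this graph is connected and non-bipartite, hence a single $E^2$-component. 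It contains no root (roots have no $F$-parent), and the cycle vertex $a_{1-k \bmod m}$ lies in it (being a descendant of $a_1$ along the cycle), so it consists of non-leaves; the induction continues.

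From $|V(H)| = D^4$ we obtain $|T_{k+1}| = D^4 |T_k|$, so $|T_k| = D^{4k}|T_0| \geq D^{4k}$, while each $T_k \subseteq \univ{A}$ forces $D^{4k} \le |\univ{A}|$ for every $k$, contradicting finiteness of $\univ{A}$. Hence every connected component of $G(\struc{A}|_F)$ contains a root, and the ``in particular'' statement follows from the conjunct $\exists^{\le 1} x\, \varphi_{\operatorname{root}}(x)$ in $\varphi_{\operatorname{tree}}$, which ensures there is at most one root and hence at most one connected component. The main technical obstacle is the promotion of $T_{k+1}$ from an $E$-component (which is all Lemma~\ref{lem:recursion} provides) to a single $E^2$-component, because the iteration of Lemma~\ref{lem:recursion} requires an $E^2$-component as input; this is precisely what the non-bipartiteness argument via the two eigenvalue lemmas and the zig-zag spectral bound is designed to achieve.
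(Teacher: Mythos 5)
Your proof is correct, and it shares the paper's recursive backbone --- iterate Lemma~\ref{lem:recursion} along the levels generated by the $F$-cycle, promoting each level from an $E$-component to an $E^2$-component via the $\lambda<1$ spectral argument --- but it tightens two places. First, you seed the iteration with the $E^2$-component of a cycle vertex, so Lemma~\ref{lem:recursion} applies immediately; the paper seeds with the $E$-component of $c_0$ and must first climb to the parent set $\tilde{S}$ and argue that $(\underlyingGraph{\struc{A}|_E})^2[\tilde{S}]$ is a connected component before the recursion can even start. Second, and more substantially, you terminate by a raw size count: the zig-zag isomorphism gives $|T_{k+1}| = D^4|T_k|$, hence $D^{4k}\le |\univ{A}|$ for all $k$, which is immediately absurd. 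The paper's termination is more elaborate: it proves the unique-path structure in Claim~\ref{claim:containsCycle}, tracks the quantity $m_i$ (the maximum $G'$-distance from $S_i\cap V$ to the cycle), shows $m_{i+1}\ge m_i+1$ via the observation that an $F$-child of $v_i$ cannot be the next vertex on $v_i$'s path to the cycle, and finally derives a contradiction from the periodicity $m_\ell=m_0$. Your argument needs only the existence of the cycle, not the unique-path structure, and dispenses with the $m_i$ machinery entirely. Two small remarks: the cycle vertex in $T_{k+1}$ is $a_{(1-(k+1))\bmod m}$ rather than $a_{(1-k)\bmod m}$ (a harmless off-by-one); and the phrase ``hence a single $E^2$-component'' tacitly uses that non-bipartiteness plus connectedness of $\underlyingGraph{\struc{A}|_E}[T_{k+1}]$ makes its square connected, and that this square equals $(\underlyingGraph{\struc{A}|_E})^2[T_{k+1}]$ precisely because $T_{k+1}$ is already an $E$-component and so is closed under length-two $E$-walks --- worth spelling out.
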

Note that the connectivity of $G(\struc{A}|_F)$ for a model $\struc{A}\in \classStruc{C}_{\sigma,d}$  of $\varphi_{\zigzag}$ implies that $\struc{A}$ is connected as $G(\struc{A}|_F)$ is a subgraph of the Gaifman graph of $\struc{A}$ containing the same set of vertices. Hence the following corollary follows immediately from Lemma~\ref{lem:connected}.
\begin{corollary}
	Any model $\struc{A}\in \classStruc{C}_{\sigma,d}$ of $\varphi_{\zigzag}$ is connected.
\end{corollary}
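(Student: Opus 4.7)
The plan is to prove the first assertion (every connected component of $G(\struc{A}|_F)$ contains a root) by contradiction; the ``in particular'' statement then follows immediately, because $\varphi_{\operatorname{tree}}$ enforces $\exists^{\leq 1} x\,\varphi_{\operatorname{root}}(x)$, so if every connected component carries at least one root, there can be at most one such component.

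Suppose $C$ is a connected component of $G(\struc{A}|_F)$ containing no root. Every $v\in C$ is then a non-root and so, by $\varphi_{\operatorname{tree}}$, has a unique $F$-parent $p(v)$; the Gaifman edge induced by $(p(v),v)\in \rel{F}{\struc{A}}$ forces $p(v)\in C$, making $p$ a total function $C\to C$. Since $C$ is finite, iterating $p$ from any starting vertex eventually enters a cycle, giving $v_1,\dots,v_m\in C$ with $v_{i+1}=p(v_i)$ (indices mod $m$). Each $v_i$ has a child, namely $v_{i-1}$, and is therefore not a leaf, so the third clause of $\varphi_{\operatorname{tree}}$ assigns it exactly $D^4$ distinct $F$-children.

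For each $i$ let $U_i$ be the vertex set of the connected component of $(\underlyingGraph{\struc{A}|_E})^2$ containing $v_i$. By $\varphi_{\operatorname{base}}$ every root carries all $E_{i,j}$-self-loops and hence is isolated in $\underlyingGraph{\struc{A}|_E}$; as each $v_i$ is a non-root, no $U_i$ contains a root. Applying Lemma \ref{lem:recursion} with $S=U_{i+1}$, the children set $U_{i+1}':=\{w: (v,w)\in \rel{F}{\struc{A}}, v\in U_{i+1}\}$ is nonempty (because $v_{i+1}$ has $F$-children), and the lemma gives that $\underlyingGraph{\struc{A}|_E}[U_{i+1}']\cong((\underlyingGraph{\struc{A}|_E})^2[U_{i+1}])\zigzag H$ is a connected component of $\underlyingGraph{\struc{A}|_E}$ of size $D^4|U_{i+1}|$. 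Since $v_i$ is an $F$-child of $v_{i+1}\in U_{i+1}$, we have $v_i\in U_{i+1}'$ and hence $U_i\subseteq U_{i+1}'$.

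The main obstacle is strengthening this inclusion to the equality $U_i=U_{i+1}'$, which amounts to showing that the $E$-component $U_{i+1}'$ is also a single $E^2$-component. For this, Lemma \ref{lem:nonBipartitenessConnectedness} yields $\lambda((\underlyingGraph{\struc{A}|_E})^2[U_{i+1}])<1$; combined with $\lambda(H)\leq 1/4$, part (1) of Theorem \ref{thm:expansionOfZigZag} gives $\lambda(\underlyingGraph{\struc{A}|_E}[U_{i+1}'])<1$, so by Lemma \ref{lem:connectedBipartiteEigenvalues} the graph $\underlyingGraph{\struc{A}|_E}[U_{i+1}']$ is connected and non-bipartite; its square is therefore connected as well. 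Because $U_{i+1}'$ is a full $\underlyingGraph{\struc{A}|_E}$-component, this square equals $(\underlyingGraph{\struc{A}|_E})^2[U_{i+1}']$, so $U_{i+1}'$ is a single $E^2$-component and $U_i=U_{i+1}'$, giving $|U_i|=D^4|U_{i+1}|$. Traversing the parent cycle and using $U_{m+1}=U_1$ produces $|U_1|=D^{4m}|U_1|$, contradicting $1\leq |U_1|$ and $D\geq 2$ from Definition \ref{dfn:expanders}.
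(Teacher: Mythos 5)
Your argument is correct, and it takes a genuinely different route from the paper's. The paper proves the corollary by citing Lemma~\ref{lem:connected} directly; you instead re-derive the key claim of that lemma from scratch, and even your derivation differs internally. The paper's proof of Lemma~\ref{lem:connected} works with the sets $S_i$ defined by iterated $F$-children, starting from the $\underlyingGraph{\struc{A}|_E}$-component of a cycle vertex, and then tracks the quantity $m_i$ (the maximum distance in $G'$ from $S_i\cap V$ to the cycle), deriving the contradiction $\ell + m_0 \le m_\ell = m_0$ after showing $m_{i+1}\ge m_i+1$. You instead associate to each cycle vertex $v_i$ its $(\underlyingGraph{\struc{A}|_E})^2$-component $U_i$, and show via Lemma~\ref{lem:recursion} together with the spectral Lemmas~\ref{lem:nonBipartitenessConnectedness} and~\ref{lem:connectedBipartiteEigenvalues} and Theorem~\ref{thm:expansionOfZigZag} that $U_i = U_{i+1}'$ and $|U_i| = D^4|U_{i+1}|$, which around the cycle gives $|U_1| = D^{4m}|U_1|$. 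This size-counting contradiction is more elementary: it dispenses with the paper's Claim~\ref{claim:containsCycle} (unique path to the cycle) and the whole distance bookkeeping. Both approaches share the same engine — Lemma~\ref{lem:recursion} on $E^2$-components plus the eigenvalue argument for closing the induction — so the substance is common, but your accounting is tighter.

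One minor loose end: you stop at ``$G(\struc{A}|_F)$ has at most one connected component'', i.e.\ at the ``in particular'' clause of Lemma~\ref{lem:connected}. The corollary, however, asserts that $\struc{A}$ itself is connected, which by definition means that the Gaifman graph $G(\struc{A})$ is connected. You should add the paper's observation that $G(\struc{A}|_F)$ is a spanning subgraph of $G(\struc{A})$ (same vertex set, subset of the edges), so connectivity of $G(\struc{A}|_F)$ transfers to $G(\struc{A})$ and hence to $\struc{A}$. It is a trivial step, but the argument is not formally closed without it.
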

\begin{proof}[Proof of Lemma~\ref{lem:connected}]
	Assume that there is a connected component of $G(\struc{A}|_F)$  which contains no root of $\struc{A}$ and let $G'$ to be a connected component of $G(\struc{A}|_F)$ with vertex set $V\subseteq \univ{A}$ such that   $\struc{A}\not\models \varphi_{\operatorname{root}}(v)$ for every $v\in V$.  
For the next claim we should have in mind that $(\struc{A}|_F)[V]$ can be understood as a directed graph in which every vertex has in-degree $1$ and the corresponding undirected graph $G'$ is connected. Hence $(\struc{A}|_F)[V]$ must consist of a set of disjoint directed trees whose roots form a directed cycle. Consequently $G'$ has the structure as given in the following claim. 

	\begin{claim}\label{claim:containsCycle}
		$G'$ contains a cycle $(c_0,\dots,c_{\ell-1})$ and for every vertex $v$ of $G'$ there is exactly one path $(p_0,\dots,p_m)$ in $G'$ with $p_0=v$, $p_m$ on the cycle and $p_i$ not on the cycle for all $i\in [m]$.
	\end{claim}
	\begin{proof}
		Let $v_0$ be any vertex in $G'$ and let $S_0=\{v_0\}$. We will now recursively define  $v_i$  to be the vertex of $G'$ such that $(v_i,v_{i-1})\in \rel{F}{\struc{A}}$. Such a vertex always exists by the choice of $G'$ (i.e. that no root is in $G'$) and the fact that $\struc{A} \models \varphi_{\operatorname{tree}}$. Furthermore, such a vertex is unique as $\struc{A}\models \varphi_{\operatorname{tree}}$. We also let $S_i:=S_{i-1}\cup \{v_i\}$. Since $\univ{A}$ is finite the chain $S_0\subseteq S_1\subseteq \dots \subseteq S_i\subseteq \dots$ must become stationary at some point. Let $i\in \mathbb{N}$ be the minimum index such that $S_{i-1}=S_i$ and let $j< i$ be such that $v_i=v_j$. Then $(v_i,v_{i-1},\dots,v_{j+1},v_j)$ is a cycle in $G'$ as by construction $(v_k,v_{k-1})\in \rel{F}{\struc{A}}$ which implies that $\{v_k,v_{k-1}\}$ is an edge in the Gaifman graph $G(\struc{A}|_F)$.
		Let $C=\{c_0,\dots,c_{\ell-1}\}$ be the vertices of the cycle. 
		Since $G'$ is connected a path that satisfies the property as described in the assertion of the claim always exists. So let us argue that such a path is unique. Assume  there are two different such paths $(p_0,\dots,p_m)$ and $(p'_0,\dots,p'_{m'})$ and assume that $p_m=c_i$ and $p'_{m'}=c_j$. Let $k\leq \min\{m,m'\}$ be the minimum index such that $p_k\not=p'_k$. Such an index must exist as the paths are different and as $p_0=p'_0=v$ we also know that $k\geq 1$. Since $\struc{A}\models \varphi_{\operatorname{tree}}$ for every vertex $w$ of $G'$ there can only be one vertex $w'$ of $G'$ such that $(w',w)\in \rel{F}{\struc{A}}$. As $p_{m-1}\notin C$ and $(c_{(i-1)\,\,\mod\,\, \ell},p_m)\in \rel{F}{\struc{A}}$ it follows that $(p_{m},p_{m-1})\in \rel{F}{\struc{A}}$. Applying the argument inductively we get that $(p_k,p_{k-1})\in \rel{F}{\struc{A}}$. The same argument works for the path $(p'_0,\dots, p'_{m'})$ and therefore $(p'_k,p'_{k-1})\in \rel{F}{\struc{A}}$. By the choice of $k$ we know that $p_{k-1}=p'_{k-1}$ and $p_k\not=p'_k$ which contradicts $\struc{A}\models \varphi_{\operatorname{tree}}$.
	\end{proof}	
	
	Let $S_0$ be the vertex set of the connected component of $\underlyingGraph{\struc{A}|_E}$ with $c_0\in S_0$. Note  that $S_0$ might not be contained in $G'$.  
	
	We now recursively define the infinite sequence of sets $S_i:=\{w\in \univ{A}\mid (v,w)\in \rel{F}{\struc{A}}, v\in S_{i-1}\}$ for each $i\in\Npos$.  Let $m_i:=\max_{v\in S_i\cap V}\min_{j\in \{0,\dots,\ell-1\}}\{\dist_{G'}(c_j,v)\}$ and let $v_i\in S_i\cap V$ be a vertex of distance $m_i$ from $C$ in $G'$. Note here that $m_i$ is well defined as $c_{i\,\, \mod \,\, \ell}\in S_i$. 
	
	\begin{claim}\label{claim:characterisationOfG|_E[S_i]}
		$\underlyingGraph{\struc{A}|_E}[S_{i}]=(\underlyingGraph{\struc{A}|_E}[S_{i-1}])^2\zigzag H$.
		
	\end{claim}
	\begin{proof}
		We  show  the stronger statement that $\underlyingGraph{\struc{A}|_E}[S_i]$ is a connected component of $\underlyingGraph{\struc{A}|_E}$, $(\underlyingGraph{\struc{A}|_E}[S_{i}])^2\zigzag H=$ $\underlyingGraph{\struc{A}|_E}[S_{i+1}]$ and  $\lambda(\underlyingGraph{\struc{A}|_E}[S_i])<1$ for $i\in \mathbb{N}$ by induction.
		
		$\underlyingGraph{\struc{A}|_E}[S_0]$ is a connected component of $\underlyingGraph{\struc{A}|_E}$ by choice of $S_0$.	
		Let $\tilde{S}:=\{w\in \univ{A}\mid (w,v)\in \rel{F}{\struc{A}}, v\in S_0 \}$. 
		 
		We now argue that $(\underlyingGraph{\struc{A}|_E})^2[\tilde{S}]$ is a connected component of $(\underlyingGraph{\struc{A}|_E})^2$. Assuming the contrary, either a connected component of  $(\underlyingGraph{\struc{A}|_E})^2$ contains vertices from  both $\tilde{S}$ and $A\setminus \tilde{S}$ or $(\underlyingGraph{\struc{A}|_E})^2[\tilde{S}]$ splits into more than one connected component. Let $S'$ be the vertices of a connected component as in the first case. Then $|S'|>1$ and hence $S'$ can not contain any root as a root is not in any $E$-relation with any element different from itself. Hence by Lemma~\ref{lem:recursion} we get a connected component of $\underlyingGraph{\struc{A}|_E}$ on the children of $S'$ containing vertices both from $S_0$ and from $\univ{A}\setminus S_0$, which  contradicts $S_0$ being a connected component of $\underlyingGraph{\struc{A}|_E}$. Now let $S'$ be a connected component as in the second case, and pick $S'$ such that it does not contain a root (this is possible as there is at most one root). 
		Then by Lemma~\ref{lem:recursion} $S_0$ must have a non-empty intersection with at least two connected components of $\underlyingGraph{\struc{A}|_E}$, which is a contradiction. 
		
		Thus, by Lemma \ref{lem:nonBipartitenessConnectedness}  $\lambda((\underlyingGraph{\struc{A}|_E})^2[\tilde{S}])<1$. But by Lemma \ref{lem:recursion} $\underlyingGraph{\struc{A}|_E}[S_0]=((\underlyingGraph{\struc{A}|_E})^2[\tilde{S}])\zigzag H$. Then Theorem \ref{thm:expansionOfZigZag} and  $\lambda(H)<1$ ensure that $\lambda(\underlyingGraph{\struc{A}|_E}[S_0])<1$.
		
For $i>1$, by induction it holds that $\lambda(\underlyingGraph{\struc{A}|_E}[S_{i-1}])<1$, which, together with Lemma~\ref{lem:expansionOfSquaring} and Lemma~\ref{lem:connectedBipartiteEigenvalues}, implies that $(\underlyingGraph{\struc{A}|_E}[S_{i-1}])^2$ is a connected component\footnote{We remark that the statement that $(\underlyingGraph{\struc{A}|_E}[S_{i-1}])^2$ is a connected component does not directly follow from the fact that $\underlyingGraph{\struc{A}|_E}[S_{i-1}]$ is a connected component of $\underlyingGraph{\struc{A}|_E}$, as the square of a connected bipartite graph is not necessarily connected.} of $(\underlyingGraph{\struc{A}|_E})^2$ and that $(\underlyingGraph{\struc{A}|_E})^2[S_{i-1}]=(\underlyingGraph{\struc{A}|_E}[S_{i-1}])^2$. Since $c_{i\,\,\mod\,\, \ell}\in S_i$, by Lemma \ref{lem:recursion}, we have that $\underlyingGraph{\struc{A}|_E}[S_i]$ is a connected component of $\underlyingGraph{\struc{A}|_E}$ and $\underlyingGraph{\struc{A}|_E}[S_i]= (\underlyingGraph{\struc{A}|_E}[S_{i-1}])^2\zigzag H$.  Furthermore, using Lemma \ref{lem:expansionOfSquaring} and Theorem \ref{thm:expansionOfZigZag}, this proves $\lambda(\underlyingGraph{\struc{A}|_E}[S_i])<1$.
	\end{proof}
	\begin{claim}
		For every $v\in S_i$ there is $w\in V$ such that $(v,w)\in \rel{F}{\struc{A}}$.
	\end{claim}
	\begin{proof}
By Claim \ref{claim:characterisationOfG|_E[S_i]} we have that $\underlyingGraph{\struc{A}|_E}[S_{i+1}]=(\underlyingGraph{\struc{A}|_E}[S_i])^2\zigzag H$. This means that by definition of squaring and the zig-zag product we know that $|S_{i+1}|=D^4\cdot |S_i|$. But as in addition $\struc{A}\models \varphi_{\operatorname{tree}}$  we know that every element $v\in S_i$ will contribute to no more then $D^4$ elements to $S_{i+1}$. This means by construction of $S_{i+1}$ that for every element in $S_i$ there must be $w\in V$ such that $(v,w)\in \rel{F}{\struc{A}}$. 
	\end{proof}
	Therefore, for every $i\in \mathbb{N}_{>0}$ there is $w_i\in V$ such that $(v_i,w_i)\in \rel{F}{\struc{A}}$ where $v_i$ is the vertex of distance $m_i$ from $C$ in $G'$ picked above. Let $(u_0,\dots,u_{m_i})$ be the path in $G'$ from $u_0=v_i$ to $u_{m_i}\in C$. Note that it is impossible that $w_i = u_1$. This is true as for the path $(u_0,...,u_{m_i})$, we have that $(u_{j+1},u_{j})\in \rel{F}{\struc{A}}$ for all $j\in [m_i]$. Furthermore, since $v_i=u_0\not=u_1$, assuming that $w_i=u_1$ would imply $(v_i,u_1),(u_2,u_1)\in \rel{F}{\struc{A}}$,  which contradicts $\struc{A}\models \varphi_{\operatorname{tree}}$.
		 Then $(w_i,u_0,\dots,u_{m_i})$ is a path in $G'$ from $w_i$ to $C$. 
		Since $w_i\in S_{i+1}$ by construction, Claim~\ref{claim:containsCycle}  implies that $m_{i+1}\geq m_i+1$. Therefore $m_i\geq i+m_0$ inductively. But this yields a contradiction, because $\ell+m_0\leq m_\ell=m_0$ and the length of the cycle $\ell>0$. See Figure~\ref{fig:illustrationOfCycleFreenessOfModelsOfFi} for an illustration. Therefore  every connected component of $G(\struc{A}|_F)$  must contain a root of $\struc{A}$.  Furthermore, since every connected component of $G(\struc{A}|_F)$ must contain a root  and since $\struc{A}\models \exists^{\leq 1}x\varphi_{\operatorname{root}}(x)$ there can not be more than one root,  $G(\struc{A}|_F)$ is connected.
\end{proof}	

\begin{figure}
	\centering
	\begin{tikzpicture}
	\tikzstyle{ns1}=[line width=0.7]
	\tikzstyle{ns2}=[line width=1.2]
	\def \radius {2}
	\def \rad {4.5}
	\def \margin {2.7} % margin in angles, depends on the radius
	%S
	\begin{scope}
	\clip[postaction={fill=white,fill opacity=0.2}] (254:1.6)..controls(248:2.60)and(283:3.1)..(258:4.3)..controls(250:4.76)..(246:4.07)..controls(244:3.955)..(247:3.38)..controls(250:3.035)and(228:2.69)..(245:1.5)..controls(250:1.4)..(254:1.6);				
	\foreach \x in {-7.1,-7,...,15.2}%
	\draw[C3!20](\x, -5)--+(12,14.4);
	\end{scope}
	\draw[ns1,C3!50] (254:1.6)..controls(248:2.60)and(283:3.1)..(258:4.3)..controls(250:4.76)..(246:4.07)..controls(244:3.955)..(247:3.38)..controls(250:3.035)and(228:2.69)..(245:1.5)..controls(250:1.4)..(254:1.6);	
	%S0
	\begin{scope}
	\clip[postaction={fill=white,fill opacity=0.2}] (253:1.6)..controls(250:2.27)and(270:2.54)..(255:2.9)..controls(250:3.08)..(246:2.81)..controls(245:2.765)..(247:2.54)..controls(250:2.405)and(230:2.27)..(246:1.5)..controls(250:1.4)..(253:1.6);			
	\foreach \x in {-7.1,-7,...,15.2}%
	\draw[C3!30](\x, -5)--+(12,14.4);
	\end{scope}
	\draw[ns1,C3] (253:1.6)..controls(250:2.27)and(270:2.54)..(255:2.9)..controls(250:3.08)..(246:2.81)..controls(245:2.765)..(247:2.54)..controls(250:2.405)and(230:2.27)..(246:1.5)..controls(250:1.4)..(253:1.6);
	%S1
	\begin{scope}
	\clip[postaction={fill=white,fill opacity=0.2}] (293:1.6)..controls(290:2.36)and(310:2.72)..(295:3.12)..controls(290:3.344)..(286:3.08)..controls(285:3.02)..(287:2.72)..controls(290:2.54)and(270:2.36)..(286:1.5)..controls(290:1.4)..(293:1.6);			
	\foreach \x in {-7.1,-7,...,15.2}%
	\draw[C3!30](\x, -5)--+(12,14.4);
	\end{scope}
	\draw[ns1,C3] (293:1.6)..controls(290:2.36)and(310:2.72)..(295:3.12)..controls(290:3.344)..(286:3.08)..controls(285:3.02)..(287:2.72)..controls(290:2.54)and(270:2.36)..(286:1.5)..controls(290:1.4)..(293:1.6);
	%S2
	\begin{scope}
	\clip[postaction={fill=white,fill opacity=0.2}] (333:1.6)..controls(330:2.45)and(350:2.9)..(335:3.5)..controls(330:3.8)..(326:3.35)..controls(325:3.275)..(327:2.9)..controls(330:2.675)and(310:2.45)..(326:1.5)..controls(330:1.4)..(333:1.6);				
	\foreach \x in {-7.1,-7,...,15.2}%
	\draw[C3!30](\x, -5)--+(12,14.4);
	\end{scope}
	\draw[ns1,C3] (333:1.6)..controls(330:2.45)and(350:2.9)..(335:3.5)..controls(330:3.8)..(326:3.35)..controls(325:3.275)..(327:2.9)..controls(330:2.675)and(310:2.45)..(326:1.5)..controls(330:1.4)..(333:1.6);	
	%S_{\ell-1}
	\begin{scope}
	\clip[postaction={fill=white,fill opacity=0.2}] (214:1.6)..controls(210:2.6)and(232:3.2)..(216:4)..controls(210:4.2)..(206:3.8)..controls(204:3.7)..(207:3.2)..controls(210:2.9)and(188:2.6)..(205:1.5)..controls(210:1.4)..(214:1.6);			
	\foreach \x in {-7.1,-7,...,15.2}%
	\draw[C3!30](\x, -5)--+(12,14.4);
	\end{scope}
	\draw[ns1,C3] (214:1.6)..controls(210:2.6)and(232:3.2)..(216:4)..controls(210:4.2)..(206:3.8)..controls(204:3.7)..(207:3.2)..controls(210:2.9)and(188:2.6)..(205:1.5)..controls(210:1.4)..(214:1.6);
	%cycle
	\node[draw,circle,fill=black,inner sep=0pt, minimum width=5pt] (1) at (210:\radius) {};
	\node[draw,circle,fill=black,inner sep=0pt, minimum width=5pt] (2) at (250:\radius) {};
	\node[draw,circle,fill=black,inner sep=0pt, minimum width=5pt] (3) at (290:\radius) {};
	\node[draw,circle,fill=black,inner sep=0pt, minimum width=5pt] (4) at (330:\radius) {};
	\draw[->, >=stealth',ns1,C2] ({210+\margin}:\radius) arc ({210+\margin}:{250-\margin}:\radius);
	\draw[->, >=stealth',ns1,C2] ({250+\margin}:\radius) arc ({250+\margin}:{290-\margin}:\radius);
	\draw[->, >=stealth',ns1,C2] ({290+\margin}:\radius) arc ({290+\margin}:{330-\margin}:\radius);
	\draw[->, >=stealth',ns1,C2] ({330+\margin}:\radius) arc ({330+\margin}:{370-\margin}:\radius);
	\draw[dashed,->, >=stealth',ns1,C2] ({-30+\margin}:\radius) arc ({-30+\margin}:{210-\margin}:\radius);
	%Orange dashed circle
	%\draw[dashed,ns1,C3] ({-20+\margin}:3) arc ({-20+\margin}:{208-\margin}:3);
	%arrows S_0 to S_1
	\draw[->, >=stealth',ns1,C2] (250:1.43)--(290:1.45);
	\draw[->, >=stealth',ns1,C2] (250:1.43)--(285:1.52);
	\draw[->, >=stealth',ns1,C2] (250:1.43)--(283:1.61);
	\draw[loosely dotted,ns2,C2] (268:1.6)--(272:2.8);
	\draw[->, >=stealth',ns1,C2] (251:3.02)--(289:3.27);
	\draw[->, >=stealth',ns1,C2] (251:3.02)--(287:3.12);
	\draw[->, >=stealth',ns1,C2] (251:3.02)--(285.2:2.97);
	%arrows S_1 to S_2
	\draw[->, >=stealth',ns1,C2] (290:1.43)--(330:1.45);
	\draw[->, >=stealth',ns1,C2] (290:1.43)--(325:1.52);
	\draw[->, >=stealth',ns1,C2] (290:1.43)--(323:1.61);
	\draw[loosely dotted,ns2,C2] (308:1.65)--(312:3);
	\draw[->, >=stealth',ns1,C2] (290.5:3.28)--(330:3.69);
	\draw[->, >=stealth',ns1,C2] (290.5:3.28)--(327.3:3.49);
	\draw[->, >=stealth',ns1,C2] (290.5:3.28)--(325.8:3.34);
	%arrows out of S_2
	\draw[ns1,C2] (330:1.43)--(355:1.45);
	\draw[ns1,C2] (330:1.43)--(352:1.52);
	\draw[ns1,C2] (330:1.43)--(349:1.59);
	\draw[ns1,C2] (330.5:3.7)--(341:3.86);
	\draw[ns1,C2] (330.5:3.7)--(338.3:3.75);
	\draw[ns1,C2] (330.5:3.7)--(335.8:3.65);
	%arrows S_{\ell-1} to S_\ell
	\draw[->, >=stealth',ns1,C2] (210:1.43)--(250:1.45);
	\draw[->, >=stealth',ns1,C2] (210:1.43)--(245:1.52);
	\draw[->, >=stealth',ns1,C2] (210:1.43)--(243:1.61);
	\draw[loosely dotted,ns2,C2] (228:1.65)--(232:3.8);
	\draw[->, >=stealth',ns1,C2] (211:4.13)--(251:4.6);
	\draw[->, >=stealth',ns1,C2] (211:4.13)--(248:4.35);
	\draw[->, >=stealth',ns1,C2] (211:4.13)--(246.5:4.15);
	%arrows into S_{\ell-1}
	\draw[->, >=stealth',ns1,C2] (185:1.43)--(208:1.43);
	\draw[->, >=stealth',ns1,C2] (185:1.52)--(205:1.53);
	\draw[->, >=stealth',ns1,C2] (185:1.61)--(202:1.63);
	\draw[->, >=stealth',ns1,C2] (200:4.13)--(210:4.13);
	\draw[->, >=stealth',ns1,C2] (200:4.03)--(208:4);
	\draw[->, >=stealth',ns1,C2] (200:3.93)--(206.5:3.87);
	
	\node[minimum height=10pt,inner sep=0,font=\small] at (0:0) {$C$};
	\node[minimum height=10pt,inner sep=0,font=\small] at (248:\radius-0.2) {$c_{0}$};
	\node[minimum height=10pt,inner sep=0,font=\small] at (252:\radius+0.6) {$S_{0}$};	
	\node[minimum height=10pt,inner sep=0,font=\small] at (292:\radius+0.9) {$S_{1}$};
	\node[minimum height=10pt,inner sep=0,font=\small] at (332:\radius+1.2) {$S_{2}$};
	\node[minimum height=10pt,inner sep=0,font=\small] at (212:\radius+1.6) {$S_{\ell-1}$};
	\node[minimum height=10pt,inner sep=0,font=\small] at (254:\radius+2) {$S_{\ell}=S_0$};
	\end{tikzpicture}
	\caption{Illustration of the proof of Lemma \ref{lem:connected}.}\label{fig:illustrationOfCycleFreenessOfModelsOfFi}
\end{figure}
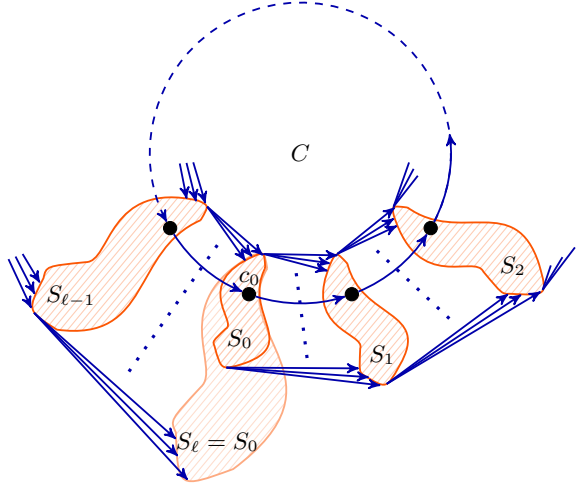

We let $\classStruc{P}_{\zigzag}:=\classStruc{P}_{\varphi_{\zigzag}}$ for the formula $\varphi_{\zigzag}$ from Section \ref{sec: definitionFormula}. 

\begin{lemma}\label{lem:exactFormOfModels}
	Any (finite) model $\struc{A}\in \classStruc{C}_{\sigma,d}$ of $\varphi_{\zigzag}$ has the following structure. 
	\begin{itemize}
		\item Either $\univ{A}=\emptyset$ or $|\univ{A}|=\sum_{m=0}^{n}D^{4m}$ for some $n\in \mathbb{N}_{n\geq 1}$.
		\item $G(\struc{A}|_F)$ is a  $D^4$-ary complete rooted tree, where the root is the unique element $r\in \univ{A}$ for which $\struc{A}\models \varphi_{\operatorname{root}}(r)$.
		\item $\underlyingGraph{\struc{A}|_E}[T_m]\cong G_m$ where $G_m$ is defined as in Definition~\ref{dfn:expanders} and  $T_m$ is the set of vertices of distance $m$ to $r$ in the tree $G(\struc{A}|_F)$ for any $m\in \{1,\dots,n\}$.
	\end{itemize} 
Furthermore for every $n\in \mathbb{N}_{\geq 1}$ there is a model of $\varphi_{\zigzag}$ of size $\sum_{m=0}^{n}D^{4m}$.
\end{lemma}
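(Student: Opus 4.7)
The empty case is immediate, so assume $\univ{A}\neq\emptyset$. By Lemma~\ref{lem:connected} the graph $G(\struc{A}|_F)$ is connected and contains a root, and $\struc{A}\models\varphi_{\operatorname{tree}}$ ensures the root is unique. I then read off from $\varphi_{\operatorname{tree}}$ that every non-root vertex has exactly one $F$-parent and every vertex has either $0$ or exactly $D^4$ $F$-children (one per $F_k$). Combined with connectivity and acyclicity (the acyclicity of $G(\struc{A}|_F)$ follows from uniqueness of the parent and the presence of a root, as in the proof of Claim~\ref{claim:containsCycle}), this makes $G(\struc{A}|_F)$ a rooted tree where each vertex is either a leaf or has $D^4$ children. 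Let $T_m$ denote the set of vertices at distance $m$ from the root; it remains to show that there is some $n\in\Npos$ with $|T_m|=D^{4m}$ for $0\leq m\leq n$, $T_m=\emptyset$ for $m>n$, and $\underlyingGraph{\struc{A}|_E}[T_m]\cong G_m$ for $1\leq m\leq n$.

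I would proceed by induction on $m$. The base case $m=1$ is exactly the statement of $\varphi_{\operatorname{base}}$: the root has $D^4$ children indexed by $\indexSetH$, and the $E_{i,i'}$-tuples among them encode the rotation map of $G_1=H^2$. For the inductive step, assuming $\underlyingGraph{\struc{A}|_E}[T_m]\cong G_m$, I observe that $T_m$ is a connected component of $\underlyingGraph{\struc{A}|_E}$: by Proposition~\ref{prop:recursiveConstruction} $G_m$ is connected and $D^2$-regular, and $\varphi_{\operatorname{rotationMap}}$ forces $\underlyingGraph{\struc{A}|_E}$ to be $D^2$-regular on non-root vertices, so no edge can leave $T_m$. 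Moreover, since $\lambda(G_m)\leq 1/2$, all eigenvalues of $G_m$ lie in $[-1/2,1/2]\cup\{1\}$, so $\lambda_N(G_m)>-1$, and Lemma~\ref{lem:connectedBipartiteEigenvalues} yields that $G_m$ is not bipartite. Consequently $G_m^2=(\underlyingGraph{\struc{A}|_E})^2[T_m]$ is connected, so $T_m$ is a connected component of $(\underlyingGraph{\struc{A}|_E})^2$ not containing the root.

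This puts me in position to apply Lemma~\ref{lem:recursion} with $S=T_m$ and $S'=T_{m+1}$. By the first part of the proof of Lemma~\ref{lem:recursion} (the argument showing ``every $w\in S$ has a child in $S'$''), if a single vertex of $T_m$ has $F$-children then every vertex of $T_m$ does, and otherwise $T_{m+1}=\emptyset$. In the first case Lemma~\ref{lem:recursion} gives $\underlyingGraph{\struc{A}|_E}[T_{m+1}]\cong G_m^2\zigzag H = G_{m+1}$ and in particular $|T_{m+1}|=D^{4(m+1)}$, closing the induction. Since $\univ{A}$ is finite, the recursion must terminate at some level $n\geq 1$ (note $T_1$ is nonempty by $\varphi_{\operatorname{base}}$), giving $|\univ{A}|=\sum_{m=0}^{n}D^{4m}$.

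For the existence statement, given any $n\geq 1$ I build $\struc{A}$ directly from the recursive construction of Definition~\ref{dfn:expanders}: take a fresh root $r$, take disjoint copies of the vertex sets of $G_1,\dots,G_n$, define the tree $F$-structure so that for each vertex $v$ at level $m$ the children at level $m+1$ are exactly the pairs $(v,k)$ for $k\in\indexSetH$ (matching the way $G_{m+1}=G_m^2\zigzag H$ replaces each vertex by a copy of $H$), interpret each $E_{i,j}$ on $T_m$ according to the rotation map of $G_m$, and use $R$ and the $L_k$ to plant the required self-loops at the root and the leaves, making $\struc{A}$ degree-regular. Checking $\varphi_{\operatorname{tree}}$, $\varphi_{\operatorname{rotationMap}}$, $\varphi_{\operatorname{base}}$ is immediate, and $\varphi_{\operatorname{recursion}}$ holds by construction because the edges at level $m+1$ were defined by the rotation map of $G_m^2\zigzag H$. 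The main subtlety in the whole argument is the non-bipartiteness step that lets me pass from a connected component of $\underlyingGraph{\struc{A}|_E}$ to a connected component of its square, which is what makes Lemma~\ref{lem:recursion} applicable level by level.
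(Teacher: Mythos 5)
Your proof is correct and follows essentially the same route as the paper: handle the empty structure, use Lemma~\ref{lem:connected} together with $\varphi_{\operatorname{tree}}$ to get the rooted tree, then induct on levels with $\varphi_{\operatorname{base}}$ as the base case and Lemma~\ref{lem:recursion} for the step, passing through non-bipartiteness of $G_m$ to make the square of the level a connected component. Your spelling-out of the non-bipartiteness step (via $\lambda(G_m)\leq 1/2 \Rightarrow \lambda_N>-1$) and of why no $E$-edge leaves $T_m$ are minor elaborations of what the paper leaves implicit, not a different argument.
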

\begin{proof}
	First note that the empty structure $\struc{A}_{\emptyset}\in \classStruc{P}_{\zigzag}$ as $\struc{A}_{\emptyset}\models \exists^{\leq 1}x\varphi_{\operatorname{root}}(x)$ and  therefore $\struc{A}_{\emptyset}\models \varphi_{\zigzag}$ as $\varphi_{\zigzag}$ is a conjunction of $\exists^{\leq 1}x\varphi_{\operatorname{root}}(x)$ and universally quantified formulas. Hence $\univ{A}=\emptyset$ is possible. Now assume that $\struc{A}$ is a model of $\varphi_{\zigzag}$ and $\univ{A}\not=\emptyset$. Then 
	Lemma  \ref{lem:connected} implies  that  $G(\struc{A}|_F)$ is connected.	Combining this with $\struc{A}\models \varphi_{\operatorname{tree}}$ proves that $G(\struc{A}|_F)$ is a rooted  tree. Let $n$ be the maximum distance of any vertex in $G(\struc{A}|_F)$ to the root and let $T_m$ be the vertices of distance  $m$ to the root for $m\leq n$. Then $\underlyingGraph{\struc{A}|_E}[T_1]\cong G_1$ because $\struc{A}\models \varphi_{\operatorname{base}}$. Now assume towards an inductive proof that $\underlyingGraph{\struc{A}|_E}[T_m]\cong G_m$ for some fixed $m\in \Npos$. Since $\lambda(G_m)<1$   by Lemma~\ref{lem:expansionOfSquaring} and Lemma~\ref{lem:connectedBipartiteEigenvalues} we get that $(\underlyingGraph{\struc{A}|_E})^2[T_m]$  is a connected component of $(\underlyingGraph{\struc{A}|_E})^2$. Hence by Lemma \ref{lem:recursion} we get that $\underlyingGraph{\struc{A}|_E}[T_{m+1}]\cong G_{m+1}$. Since $G_m$ has $D^{4m}$ vertices this also proves that $\struc{A}$ has $\sum_{m=0}^{n}D^{4m}$ vertices. Furthermore, for $n\in \mathbb{N}$ the existence of a  model of $\varphi_{\zigzag}$ of size $\sum_{m=0}^{n}D^{4m}$ is straightforward by the construction of the formula $\varphi_{\zigzag}$.
\end{proof}

Now we are ready to prove Theorem \ref{thm:expansionOfModels}.
\begin{proof}[Proof of Theorem \ref{thm:expansionOfModels}]
	We will prove that for $\epsilon={D^2}/{12}$ the claimed is true.
	Let $\struc{A}$ be the model of $\varphi_{\zigzag}$ of size $\sum_{m=0}^{n}D^{4m}$ and $S\subseteq \univ{A}$ with  $|S|\leq(\sum_{m=0}^{n}D^{4m})/{2}$. 
	Let $T_m$ be the vertices of distance $m$ to the root of the tree $G(\struc{A}|_F)$ and let $S_m:=T_m\cap S$.
	
	We can assume that $|S|>1$ as every vertex has degree at least $\epsilon$.
	Let us first assume that $|S_m|\leq {D^{4m}}/{2}$ for all $m\in [n]$. Then because  $G_m$ is a ${D^2}/{4}$-expander (this follows directly from Theorem \ref{thm:boundExpansionRatioInTermsOfLambda} as $\lambda(G_m)\leq {1}/{2}$ by Proposition~\ref{prop:recursiveConstruction}) and $\underlyingGraph{\struc{A}|_E}[T_m]\cong G_m$ 
	we know that 
	\begin{align*}
	|\langle S, \overline{S}\rangle_{\underlyingGraph{\struc{A}}} |\geq \sum_{m=1}^{n}\frac{D^2}{4} |S_m|\geq \frac{D^2}{12}\sum_{m=0}^{n}|S_m|=\frac{D^2}{12}|S|.
	\end{align*} 
	Now assume the opposite and choose $m'$ to be the largest index such that 
	\begin{align}\label{eq:choiceOfm'}
		|S_{m'}|>\frac{|T_{m'}|}{2}=\frac{D^{4m'}}{2}.
	\end{align}  
	We will use the following claim.
	\begin{claim}\label{claim:relationOfSizesOfLevels}
		$ \sum_{m=0}^{\tilde{m}-1}|T_{m}|\leq \frac{1}{2}|T_{\tilde{m}}|$ for all $\tilde{m}\leq n$.
	\end{claim}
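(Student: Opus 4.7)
The plan is to prove this claim via a direct geometric-series computation, using the exact structural information from Lemma~\ref{lem:exactFormOfModels}. The key observation is that the lemma tells us precisely what the sizes $|T_m|$ are.

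First, I would invoke Lemma~\ref{lem:exactFormOfModels} to conclude that for every $m\in\{0,1,\dots,n\}$ we have $|T_m| = D^{4m}$. Indeed, $T_0$ consists of the single root of $G(\struc{A}|_F)$ so $|T_0|=1=D^0$, and for $m\ge 1$ the lemma states that $\underlyingGraph{\struc{A}|_E}[T_m]\cong G_m$, which has $D^{4m}$ vertices by Proposition~\ref{prop:recursiveConstruction}.

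Next, I would bound the sum as a finite geometric series:
\begin{equation*}
\sum_{m=0}^{\tilde{m}-1}|T_m| \;=\; \sum_{m=0}^{\tilde{m}-1} D^{4m} \;=\; \frac{D^{4\tilde{m}}-1}{D^4-1} \;\le\; \frac{D^{4\tilde{m}}}{D^4-1}.
\end{equation*}
Since $D$ is a sufficiently large prime power (e.g., $D = 2^{16}$ in Definition~\ref{dfn:expanders}), we have $D^4-1 \ge 2$, so the right-hand side is at most $\frac{1}{2}D^{4\tilde{m}} = \frac{1}{2}|T_{\tilde{m}}|$, which is exactly the desired bound.

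I do not anticipate any obstacle here, as the statement is a routine consequence of the explicit level-size formula combined with the geometric-series identity; the only mild point to flag is that the bound relies on $D\geq 2$ (in fact much more), which is guaranteed by the choice of $D$ in Definition~\ref{dfn:expanders}. An equally clean alternative would be an easy induction on $\tilde{m}$: the base case $\tilde{m}=0$ gives an empty sum (bounded by $\tfrac{1}{2}$), and the inductive step uses $|T_{\tilde{m}}| = D^4\cdot|T_{\tilde{m}-1}|$ together with the induction hypothesis to obtain $\sum_{m=0}^{\tilde{m}-1}|T_m| \le \tfrac{1}{2}|T_{\tilde{m}-1}| + |T_{\tilde{m}-1}| = \tfrac{3}{2}|T_{\tilde{m}-1}| = \tfrac{3}{2D^4}|T_{\tilde{m}}| \le \tfrac{1}{2}|T_{\tilde{m}}|$.
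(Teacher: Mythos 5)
Your proof is correct. The paper proves the claim by a one-line induction: $\sum_{m=0}^{\tilde{m}-1}|T_m| \le \tfrac{1}{2}|T_{\tilde{m}-1}| + |T_{\tilde{m}-1}| = \tfrac{3}{2}|T_{\tilde{m}-1}| \le \tfrac{1}{2}|T_{\tilde{m}}|$, using $|T_{\tilde{m}}|=D^4\,|T_{\tilde{m}-1}|$ and $D^4\ge 3$ — which is precisely the ``equally clean alternative'' you sketch at the end. Your primary argument instead sums the geometric series $\sum_{m=0}^{\tilde m-1}D^{4m}=\frac{D^{4\tilde m}-1}{D^4-1}$ in closed form and uses $D^4-1\ge 2$. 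The two are interchangeable and rest on the same structural input (from Lemma~\ref{lem:exactFormOfModels}, $G(\struc{A}|_F)$ is a complete $D^4$-ary tree, hence $|T_m|=D^{4m}$). The closed form makes the dependence on $D$ a bit more transparent (the sum is within a factor $\approx 1/(D^4-1)$ of the top level), while the paper's induction avoids the explicit formula; either way the bound holds with lots of room to spare since $D$ is huge.
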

	\begin{proof}
		Inductively, we argue that 
		\[ \sum_{m=0}^{\tilde{m}-1}|T_m|= \sum_{m=0}^{\tilde{m}-2}|T_m|+|T_{\tilde{m}-1}|\leq \frac{1}{2}(3|T_{\tilde{m}-1}|) \leq \frac{1}{2}|T_{\tilde{m}}|.
		\]
	\end{proof}
	Claim \ref{claim:relationOfSizesOfLevels} implies that $\frac{3}{4}\cdot |T_{n}|\geq \frac{1}{2}|T_{n}|+\frac{1}{2}\sum_{m=0}^{n-1}|T_m| =  \frac{1}{2}|A| \geq|S|\geq |S_{n}|$. In the case that $m'=n$, using that $G_n$ is a ${D^2}/{4}$-expander we get 
	\begin{align*}
	|\langle S,\overline{S}\rangle_{\underlyingGraph{\struc{A}}}|\geq \frac{D^2}{4}(|T_{n}|-|S_{n}|)\geq \frac{D^2}{16}|T_{n}|\geq \frac{D^2}{12}|S|.
	\end{align*}
	\begin{figure}
		\centering
		\begin{tikzpicture}
		\tikzstyle{ns1}=[line width=0.7]
		\tikzstyle{ns2}=[line width=1.2]
		\definecolor{C5}{RGB}{180,180,180}
		\def \depth {4}
		\def \middle {1.9}
		\def \oneBeforeMiddle {2.5}
		\def \slope {1}
		\def \r {0.3}
		\def \levelDist {0.5}
		%G_n
		\draw[ns1,white] (\depth*\slope+0.2,-\depth) arc [start angle=0,end angle=360,x radius=\depth*\slope cm+0.2cm,y radius=0.3cm]
		node (1a) [pos=0.3125,inner sep=0pt,minimum width=0pt] {} 
		node (2a) [pos=.695,inner sep=0pt,minimum width=0pt] {};
		\begin{scope}
		\clip[postaction={fill=white,fill opacity=0.2}](2a) arc [start angle=250,end angle=360,x radius=\depth*\slope cm+0.2cm,y radius=0.3cm]arc [start angle=0,end angle=110,x radius=\depth*\slope cm+0.2cm,y radius=0.3cm]--(2a);			
		\foreach \x in {-7.1,-7,...,15.2}%
		\draw[C5!50](\x, -5)--+(12,14.4);
		\end{scope}
		\draw[ns1,C5] (2a) arc [start angle=250,end angle=360,x radius=\depth*\slope cm+0.2cm,y radius=0.3cm]arc [start angle=0,end angle=110,x radius=\depth*\slope cm+0.2cm,y radius=0.3cm]--(2a);
		%S_n
		\begin{scope}
		\clip[postaction={fill=white,fill opacity=0.2}](1a) arc [start angle=112.5,end angle=247.5,x radius=\depth*\slope cm+0.2cm,y radius=0.3cm]--(1a);		
		\foreach \x in {-7.1,-7,...,15.2}%
		\draw[C3!20](\x, -5)--+(-12,14.4);
		\end{scope}
		\draw[ns1,C3!50] (1a) arc [start angle=112.5,end angle=247.5,x radius=\depth*\slope cm+0.2cm,y radius=0.3cm]--(1a);
		%G_i+1
		\draw[ns1,white](\oneBeforeMiddle*\slope+0.2,-\oneBeforeMiddle) arc [start angle=0,end angle=360,x radius=\oneBeforeMiddle*\slope cm+0.2cm,y radius=0.28cm]
		node (1b) [pos=0.35,inner sep=0pt,minimum width=0pt] {} 
		node (2b) [pos=0.6625,inner sep=0pt,minimum width=0pt] {};
		\begin{scope}
		\clip[postaction={fill=white,fill opacity=0.2}](2b) arc [start angle=237.7,end angle=360,x radius=\oneBeforeMiddle*\slope cm+0.2cm,y radius=0.28cm]arc [start angle=0,end angle=122.3,x radius=\oneBeforeMiddle*\slope cm+0.2cm,y radius=0.28cm]--(2b);
		\foreach \x in {-7.1,-7,...,15.2}%
		\draw[C5!50](\x, -5)--+(12,14.4);
		\end{scope}
		\draw[ns1,C5](2b) arc [start angle=237.7,end angle=360,x radius=\oneBeforeMiddle*\slope cm+0.2cm,y radius=0.28cm]arc [start angle=0,end angle=122.3,x radius=\oneBeforeMiddle*\slope cm+0.2cm,y radius=0.28cm]--(2b);
		%S_i+1
		\begin{scope}
		\clip[postaction={fill=white,fill opacity=0.2}](1b) arc [start angle=126,end angle=234,x radius=\oneBeforeMiddle*\slope cm+0.2cm,y radius=0.28cm]--(1b);
		\foreach \x in {-7.1,-7,...,15.2}%
		\draw[C3!20](\x, -5)--+(-12,14.4);
		\end{scope}
		\draw[ns1,C3!50](1b) arc [start angle=126,end angle=234,x radius=\oneBeforeMiddle*\slope cm+0.2cm,y radius=0.28cm]--(1b);
		%edges between S_i and G_i+1
		\draw[ns1,C2](0.3*\middle*\slope,-\middle)--(0.3*\oneBeforeMiddle*\slope,-\oneBeforeMiddle);
		\draw[ns1,C2](0.3*\middle*\slope,-\middle)--(0.3*\oneBeforeMiddle*\slope-0.15,-\oneBeforeMiddle);
		\draw[ns1,C2](0.3*\middle*\slope,-\middle)--(0.3*\oneBeforeMiddle*\slope+0.15,-\oneBeforeMiddle);
		\draw[ns1,C2](0.15*\middle*\slope,-\middle)--(0.15*\oneBeforeMiddle*\slope,-\oneBeforeMiddle);
		\draw[ns1,C2](0.15*\middle*\slope,-\middle)--(0.15*\oneBeforeMiddle*\slope-0.15,-\oneBeforeMiddle);
		\draw[ns1,C2](0.15*\middle*\slope,-\middle)--(0.15*\oneBeforeMiddle*\slope+0.15,-\oneBeforeMiddle);
		\draw[ns1,C2](-0*\middle*\slope,-\middle)--(-0*\middle*\slope,-\oneBeforeMiddle);
		\draw[ns1,C2](-0*\middle*\slope,-\middle)--(-0*\middle*\slope-0.15,-\oneBeforeMiddle);
		\draw[ns1,C2](-0*\middle*\slope,-\middle)--(-0*\middle*\slope+0.15,-\oneBeforeMiddle);
		\draw[ns1,C2](-0.15*\middle*\slope,-\middle)--(-0.15*\oneBeforeMiddle*\slope,-\oneBeforeMiddle);
		\draw[ns1,C2](-0.15*\middle*\slope,-\middle)--(-0.15*\oneBeforeMiddle*\slope-0.15,-\oneBeforeMiddle);
		\draw[ns1,C2](-0.15*\middle*\slope,-\middle)--(-0.15*\oneBeforeMiddle*\slope+0.15,-\oneBeforeMiddle);
		\draw[ns1,C2](-0.3*\middle*\slope,-\middle)--(-0.3*\oneBeforeMiddle*\slope,-\oneBeforeMiddle);
		\draw[ns1,C2](-0.3*\middle*\slope,-\middle)--(-0.3*\oneBeforeMiddle*\slope-0.15,-\oneBeforeMiddle);
		\draw[ns1,C2](-0.3*\middle*\slope,-\middle)--(-0.3*\oneBeforeMiddle*\slope+0.15,-\oneBeforeMiddle);
		\draw[ns1,C2](-0.45*\middle*\slope,-\middle)--(-0.45*\oneBeforeMiddle*\slope,-\oneBeforeMiddle);
		\draw[ns1,C2](-0.45*\middle*\slope,-\middle)--(-0.45*\oneBeforeMiddle*\slope-0.15,-\oneBeforeMiddle);
		\draw[ns1,C2](-0.45*\middle*\slope,-\middle)--(-0.45*\oneBeforeMiddle*\slope+0.15,-\oneBeforeMiddle);
		%G_i
		\draw[ns1,white](\middle*\slope+0.2,-\middle) arc [start angle=0,end angle=360,x radius=\middle*\slope cm+0.2cm,y radius=0.25cm]
		node (1c) [pos=0.1875,inner sep=0pt,minimum width=0pt] {}
		node (2c) [pos=0.825,inner sep=0pt,minimum width=0pt] {};
		\begin{scope}
		\clip[postaction={fill opacity=0.2}](2c) arc [start angle=295,end angle=360,x radius=\middle*\slope cm+0.2cm,y radius=0.25cm]arc [start angle=0,end angle=65,x radius=\middle*\slope cm+0.2cm,y radius=0.25cm]--(2c);	
		\foreach \x in {-7.1,-7,...,15.2}%
		\draw[C5!50](\x, -5)--+(-12,14.4);
		\end{scope}
		\draw[ns1,C5](2c) arc [start angle=298,end angle=360,x radius=\middle*\slope cm+0.2cm,y radius=0.25cm]arc [start angle=0,end angle=62,x radius=\middle*\slope cm+0.2cm,y radius=0.25cm]--(2c);
		%S_i
		\begin{scope}
		\clip[postaction={fill opacity=0.2}](1c) arc [start angle=67.5,end angle=292.5,x radius=\middle*\slope cm+0.2cm,y radius=0.25cm]--(1c);		
		\foreach \x in {-7.1,-7,...,15.2}%
		\draw[C3!20](\x, -5)--+(12,14.4);
		\end{scope}
		\draw[ns1,C3!50](1c) arc [start angle=67.5,end angle=292.5,x radius=\middle*\slope cm+0.2cm,y radius=0.25cm]--(1c);

		\node[inner sep=0pt, minimum width=3pt] (1) at (\oneBeforeMiddle*\slope+0.2,-\oneBeforeMiddle) {};
		\node[inner sep=0pt, minimum width=0pt] (2) at (\depth*\slope+0.2,-\depth) {};
		\node[inner sep=0pt, minimum width=0pt] (3) at (-\depth*\slope-0.2,-\depth) {};
		\node[inner sep=0pt, minimum width=3pt] (4) at (-\oneBeforeMiddle*\slope-0.2,-\oneBeforeMiddle) {};
		\node[inner sep=0pt, minimum width=0pt] (8) at (\depth*\slope-\levelDist*\slope+0.2,-\depth+\levelDist) {};
		\node[inner sep=0pt, minimum width=0pt] (9) at (-\depth*\slope+\levelDist*\slope-0.2,-\depth+\levelDist) {};
		\node[inner sep=0pt, minimum width=3pt] (10) at (-\depth*\slope,-\depth) {};
		\node[inner sep=0pt, minimum width=3pt] (11) at (-\depth*\slope+0.2,-\depth) {};
		\node[inner sep=0pt, minimum width=3pt] (12) at (\depth*\slope,-\depth) {};
		\node[inner sep=0pt, minimum width=3pt] (13) at (\depth*\slope-0.2,-\depth) {};
		
		\node[inner sep=0pt, minimum width=3pt] (14) at (\middle*\slope-0.7*\levelDist*\slope+0.1,-\middle+1*\levelDist) {};
		\node[inner sep=0pt, minimum width=3pt] (15) at (-\middle*\slope+0.7*\levelDist*\slope-0.1,-\middle+1*\levelDist) {};
		\node[inner sep=0pt, minimum width=3pt] (16) at (\middle*\slope+0.2,-\middle) {};
		\node[inner sep=0pt, minimum width=3pt] (17) at (-\middle*\slope-0.2,-\middle) {};
		\node[inner sep=0pt, minimum width=3pt] (18) at (\middle*\slope,-\middle) {};
		\node[inner sep=0pt, minimum width=3pt] (19) at (-\middle*\slope,-\middle) {};
		\node[inner sep=0pt, minimum width=3pt] (20) at (\middle*\slope-0.2,-\middle) {};
		\node[inner sep=0pt, minimum width=3pt] (21) at (-\middle*\slope+0.2,-\middle) {};
		\node[inner sep=0pt, minimum width=3pt] (22) at (\oneBeforeMiddle*\slope,-\oneBeforeMiddle) {};
		\node[inner sep=0pt, minimum width=3pt] (23) at (\oneBeforeMiddle*\slope-0.2,-\oneBeforeMiddle) {};
		\node[inner sep=0pt, minimum width=3pt] (24) at (-\oneBeforeMiddle*\slope,-\oneBeforeMiddle) {};
		\node[inner sep=0pt, minimum width=3pt] (25) at (-\oneBeforeMiddle*\slope+0.2,-\oneBeforeMiddle) {};

		\draw[ns2,C5,dotted] (9)--(4);
		\draw[ns2,C5,dotted] (8)--(1);
		\draw[ns2,C5,dotted] (14)--(\middle*\slope-2*\levelDist*\slope+0.3,-\middle+2*\levelDist) ;
		\draw[ns2,C5,dotted] (15)--(-\middle*\slope+2*\levelDist*\slope-0.3,-\middle+2*\levelDist) ;
		\draw[ns1,C5](8)--(2);
		\draw[ns1,C5](9)--(3);
		\draw[ns1,C5](9)--(10);
		\draw[ns1,C5](9)--(11);
		\draw[ns1,C5](8)--(12);
		\draw[ns1,C5](8)--(13);
		\draw[ns1,C5](14)--(16);
		\draw[ns1,C5](14)--(18);
		\draw[ns1,C5](14)--(20);
		\draw[ns1,C5](15)--(17);
		\draw[ns1,C5](15)--(19);
		\draw[ns1,C5](15)--(21);
		\draw[ns1,C5](16)--(1);
		\draw[ns1,C5](16)--(22);
		\draw[ns1,C5](16)--(23);
		\draw[ns1,C5](17)--(4);
		\draw[ns1,C5](17)--(24);
		\draw[ns1,C5](17)--(25);
		%crossing edges S_n
		\draw[ns1,C3](-1.6,-\depth+0.28)--(-1.4,-\depth+0.28);
		\draw[ns1,C3](-1.6,-\depth+0.21)--(-1.4,-\depth+0.28);
		\draw[ns1,C3](-1.6,-\depth+0.21)--(-1.4,-\depth+0.21);
		\draw[ns1,C3](-1.6,-\depth+0.14)--(-1.4,-\depth+0.14);
		\draw[ns1,C3](-1.6,-\depth+0.14)--(-1.4,-\depth-0.14);
		\draw[ns1,C3](-1.6,-\depth+0.14)--(-1.4,-\depth);
		\draw[ns1,C3](-1.6,-\depth+0.07)--(-1.4,-\depth+0.07);
		\draw[ns1,C3](-1.6,-\depth)--(-1.4,-\depth);
		\draw[ns1,C3](-1.6,-\depth-0.07)--(-1.4,-\depth-0.07);
		\draw[ns1,C3](-1.6,-\depth-0.14)--(-1.4,-\depth-0.14);
		\draw[ns1,C3](-1.6,-\depth-0.14)--(-1.4,-\depth-0.07);
		\draw[ns1,C3](-1.6,-\depth-0.21)--(-1.4,-\depth-0.21);
		\draw[ns1,C3](-1.6,-\depth-0.28)--(-1.4,-\depth-0.28);
		\draw[ns1,C3](-1.6,-\depth-0.07)--(-1.4,-\depth-0.28);
		%crossing edges S_i+1
		\draw[ns1,C3](-1.6,-\oneBeforeMiddle+0.22)--(-1.4,-\oneBeforeMiddle+0.23);
		\draw[ns1,C3](-1.6,-\oneBeforeMiddle+0.14)--(-1.4,-\oneBeforeMiddle+0.23);
		\draw[ns1,C3](-1.6,-\oneBeforeMiddle+0.14)--(-1.4,-\oneBeforeMiddle+0.14);
		\draw[ns1,C3](-1.6,-\oneBeforeMiddle+0.07)--(-1.4,-\oneBeforeMiddle+0.07);
		\draw[ns1,C3](-1.6,-\oneBeforeMiddle+0.07)--(-1.4,-\oneBeforeMiddle+0.23);
		\draw[ns1,C3](-1.6,-\oneBeforeMiddle)--(-1.4,-\oneBeforeMiddle);
		\draw[ns1,C3](-1.6,-\oneBeforeMiddle-0.07)--(-1.4,-\oneBeforeMiddle-0.07);
		\draw[ns1,C3](-1.6,-\oneBeforeMiddle-0.07)--(-1.4,-\oneBeforeMiddle+0.14);
		\draw[ns1,C3](-1.6,-\oneBeforeMiddle-0.14)--(-1.4,-\oneBeforeMiddle-0.14);
		\draw[ns1,C3](-1.6,-\oneBeforeMiddle-0.14)--(-1.4,-\oneBeforeMiddle-0.23);
		\draw[ns1,C3](-1.6,-\oneBeforeMiddle-0.22)--(-1.4,-\oneBeforeMiddle-0.23);
		\draw[ns1,C3](-1.6,-\oneBeforeMiddle)--(-1.4,-\oneBeforeMiddle-0.14);
		\draw[ns1,C3](-1.6,-\oneBeforeMiddle+0.22)--(-1.4,-\oneBeforeMiddle+0.23);
		\draw[ns1,C3](-1.6,-\oneBeforeMiddle+0.22)--(-1.4,-\oneBeforeMiddle+0.23);
		%crossing edges S_i
		\draw[ns1,C3](0.8,-\middle+0.22)--(0.97,-\middle+0.22);
		\draw[ns1,C3](0.8,-\middle+0.22)--(0.97,-\middle+0.14);
		\draw[ns1,C3](0.8,-\middle+0.14)--(0.97,-\middle+0.14);
		\draw[ns1,C3](0.8,-\middle+0.07)--(0.97,-\middle+0.22);
		\draw[ns1,C3](0.8,-\middle+0.07)--(0.97,-\middle+0.07);
		\draw[ns1,C3](0.8,-\middle)--(0.97,-\middle+0.07);
		\draw[ns1,C3](0.8,-\middle)--(0.97,-\middle);
		\draw[ns1,C3](0.8,-\middle+0.07)--(0.97,-\middle-0.07);
		\draw[ns1,C3](0.8,-\middle-0.07)--(0.97,-\middle-0.07);
		\draw[ns1,C3](0.8,-\middle-0.14)--(0.97,-\middle-0.14);
		\draw[ns1,C3](0.8,-\middle-0.07)--(0.97,-\middle-0.22);
		\draw[ns1,C3](0.8,-\middle-0.22)--(0.97,-\middle-0.22);

		\node[minimum height=10pt,inner sep=0,font=\small,C3] at (-1.4,-\middle) {$S_{m'}$};
		\node[minimum height=10pt,inner sep=0,font=\small,C3] at (-2.15,-\oneBeforeMiddle-0.03) {$S_{m'+1}$};
		\node[minimum height=10pt,inner sep=0,font=\small,C3] at (-2.7,-\depth) {$S_{n}$};

		\end{tikzpicture}
		\caption[Schematic representation of $S$ crossing edges.]{Schematic representation of $S$ crossing edges (orange and blue) in the underlying undirected graph in the case of $m'<n$.}\label{fig:expansionOfModels}
	\end{figure}
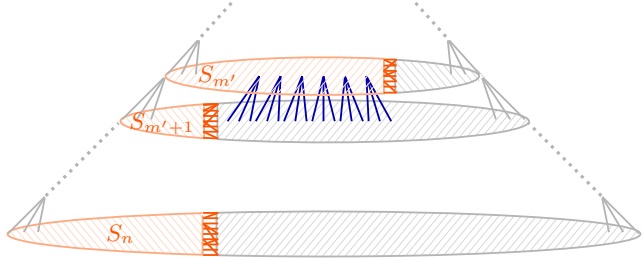
	Assume now that $m'<n$.  Since $S$ is the disjoint union of all $S_m$ we know that the set $\langle S,\overline{S}\rangle_{\underlyingGraph{\struc{A}}}$ contains the disjoint sets $\langle S_m,T_m\setminus S_m  \rangle_{\underlyingGraph{\struc{A}}}$, $\langle T_{m'}\setminus S_{m'},T_{m'} \rangle_{\underlyingGraph{\struc{A}}}$ and  $\langle S_{m'},T_{m'+1}\setminus S_{m'+1} \rangle_{\underlyingGraph{\struc{A}}} $ for all $m\in \{m',\dots,n\}$. Since every vertex in $T_{m'}$ has $D^4$ neighbours in $T_{m'+1}$ and on the other hand every vertex in $T_{m'+1}$ has one neighbour in $T_{m'}$ we know that $|\langle S_{m'},T_{m'+1}\setminus S_{m'+1} \rangle_{\underlyingGraph{\struc{A}}}|=|\langle S_{m'},T_{m'+1} \rangle_{\underlyingGraph{\struc{A}}} |-|\langle S_{m'},S_{m'+1} \rangle_{\underlyingGraph{\struc{A}}} | \geq D^4|S_{m'}|-|S_{m'+1}|\geq D^4(|S_{m'}|-{D^{4m'}}/{2})$. Since additionally  
	$G_m$ is an ${D^2}/{4}$-expander for every $m$ we get 
	\begin{align*}
	&|\langle S,\overline{S}\rangle_{\underlyingGraph{\struc{A}}}|\geq \sum_{m>m'}\frac{D^2}{4}|S_m|+\frac{D^2}{4}|T_{m'}\setminus S_{m'}|+D^4\Big(|S_{m'}|-\frac{D^{4m'}}{2}\Big)\\
	&=\frac{D^2}{4}\sum_{m>m'}|S_m|+\Big(D^4-\frac{D^2}{2}\Big)|S_{m'}|-\Big(D^4-\frac{D^2}{2}\Big)\frac{|T_{m'}|}{2}+\frac{D^2}{8}|S_{m'}|+\frac{D^2}{8}|S_{m'}|\\
	&\stackrel{\text{Equation }\ref{eq:choiceOfm'}}{\geq} \frac{D^2}{4}\sum_{m>m'}|S_m|+\frac{D^2}{8}|S_{m'}|+\frac{D^2}{8}\Big(\frac{|T_{m'}|}{2}\Big)\\
	&\stackrel{\text{Claim }\ref{claim:relationOfSizesOfLevels} }{\geq} \frac{D^2}{4}\sum_{m>m'}|S_m|+\frac{D^2}{8}|S_{m'}|+\frac{D^2}{8}\sum_{m<m'}|T_m|\\
	&\stackrel{|T_{m}|\geq |S_{m}|}{\geq} \frac{D^2}{12}|S|.
	\end{align*}
	By the choice of $\epsilon$ this shows that the models of $\varphi_{\zigzag}$ are a class of $\epsilon$-expanders.
	
\end{proof}

\section{On the non-testability of a $\Pi_2$-property}\label{sec:FOnontestability} 
In this section we prove that there exists an FO property on relational structures in $\Pi_2$ that is not testable. To do so, we first prove that the property $P_{\varphi_{\zigzag}}$ defined by the formula $\varphi_{\zigzag}$ in Section \ref{sec: definitionFormula} is not testable. Later we prove that $\varphi_{\zigzag}$ is equivalent to a sentence in $\Pi_2$. 

\subsection{Non-testability}
Recall that $r$-types are the isomorphism classes of $r$-balls and that restricted to the class $\classStruc{C}_{\sigma,d}$ there are finitely many $r$-types. Let $\tau_1,\dots,\tau_t$ be a list of all $r$-types of bounded degree $d$. We let $\rho_{\struc{A},r}$ be the $r$-type distribution of  $\struc{A}$, \ie   \[\rho_{\struc{A},r}(X)_:=\frac{\sum_{\tau\in X}|\{a\in \univ{A}\mid \mathcal{N}_r^{\struc{A}}(a)\in \tau\}|}{|\univ{A}|}\] for any $X\subseteq \{\tau_1,\dots,\tau_t\}$. For two $\sigma$-structures $\struc{A}$ and $\struc{B}$ we define the sampling distance of depth $r$ as $\delta_{\odot}^r(\struc{A},\struc{B}):=\sup_{X\subseteq \{\tau_1,\dots,\tau_t\}}|\rho_{\struc{A},r}(X)-\rho_{\struc{B},r}(X)|$. Note that $\delta_{\odot}^r(\struc{A},\struc{B})$ is just the total variance distance between $\rho_{\struc{A},r},\rho_{\struc{B},r}$, and it holds that  $$\delta_{\odot}^r(\struc{A},\struc{B})=\frac12\sum_{i=1}^{t}|\rho_{\struc{A},r}(\{\tau_i\})-\rho_{\struc{B},r}(\{\tau_i\})|.$$ Then the sampling distance of $\struc{A}$ and $\struc{B}$ is defined as $$\delta_\odot (\struc{A},\struc{B}):=\sum_{r=0}^{\infty}\frac{1}{2^r}\cdot\delta_\odot^r(\struc{A},\struc{B}).$$

The following theorem was proven for simple graphs and easily extends to $\sigma$-structures. 

\begin{theorem}[\cite{LovaszBook2012}]\label{thm:approximatingNeighbourhoodDistributionBySmallGraph} 
	For every $\lambda>0$ there is a positive integer $n_0$ such that for every $\sigma$-structure $\struc{A}\in \classStruc{C}_{\sigma,d}$ there is a $\sigma$-structure $\struc{H}\in \classStruc{C}_{\sigma,d}$ such that $|H|\leq n_0$ and $\delta_{\odot}(\struc{A},\struc{H})\leq \lambda$.
\end{theorem}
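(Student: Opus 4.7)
The plan is to first reduce the statement to approximating the $R$-type distribution for a single large radius $R = R(\lambda)$, and then to invoke a soft compactness argument on the finite-dimensional simplex of $R$-type distributions.

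The first step is a truncation. Since $\delta_\odot^r(\struc{A},\struc{B}) \leq 1$ for all $r$, the tail $\sum_{r>R} 2^{-r} \delta_\odot^r(\struc{A},\struc{B})$ is at most $2^{-R}$. Moreover, for every $r \leq R$ the $r$-type of a vertex is determined by its $R$-type via restriction to radius $r$; hence each set of $r$-types is the preimage under this restriction of a set of $R$-types, which yields $\delta_\odot^r(\struc{A},\struc{H}) \leq \delta_\odot^R(\struc{A},\struc{H})$. Summing the geometric series gives
\[
\delta_\odot(\struc{A},\struc{H}) \;\leq\; 2\,\delta_\odot^R(\struc{A},\struc{H}) + 2^{-R}.
\]
Choosing $R$ large enough that $2^{-R} < \lambda/2$ reduces the task to producing, for every $\struc{A}$, a structure $\struc{H}$ of uniformly bounded size with $\delta_\odot^R(\struc{A},\struc{H}) < \lambda/4$.

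The second step is the main argument. The number $t = t(R,d,\sigma)$ of isomorphism types of $R$-balls of degree at most $d$ over the signature $\sigma$ is finite, so every distribution $\rho_{\struc{A},R}$ is a point of the compact simplex $\Delta^{t-1}$, metrised by the total variation $\delta_\odot^R$. Let $S := \{\rho_{\struc{A},R} \mid \struc{A} \in \classStruc{C}_{\sigma,d}\} \subseteq \Delta^{t-1}$. Its closure $\overline{S}$ is compact, hence admits a finite $(\lambda/8)$-net $\{\rho_1,\dots,\rho_k\}$. Because each $\rho_i$ lies in $\overline{S}$, there exists a finite structure $\struc{H}_i \in \classStruc{C}_{\sigma,d}$ whose $R$-type distribution is within $\lambda/8$ of $\rho_i$ in $\delta_\odot^R$. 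Setting $n_0 := \max_i |\univ{H_i}|$ then yields the required constant: for any $\struc{A} \in \classStruc{C}_{\sigma,d}$, the distribution $\rho_{\struc{A},R}$ lies in $S$ and is within $\lambda/8$ of some $\rho_i$, so by the triangle inequality $\delta_\odot^R(\struc{A},\struc{H}_i) < \lambda/4$, and combining with the truncation bound gives $\delta_\odot(\struc{A},\struc{H}_i) < \lambda$.

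The main conceptual point is that this argument is purely existential: it does not exhibit $\struc{H}$ explicitly in terms of $\struc{A}$, it only guarantees that some $\struc{H}$ of bounded size must exist. A constructive proof, as in Alon's original argument for graphs (Proposition~19.10 in~\cite{LovaszBook2012}), would require an explicit gluing or disjoint-union construction preserving $R$-ball isomorphism types, which is technically delicate because naive gluing alters local neighbourhoods near the glued boundary. For the present purposes only the existence of $n_0$ is needed, and this follows immediately from the compactness of the finite-dimensional simplex of $R$-type distributions; notably, the argument passes verbatim from graphs to $\sigma$-structures, as it uses only the finiteness of the set of local $R$-types for a fixed signature and degree bound.
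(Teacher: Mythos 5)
Your proof is correct, and it gives a genuinely self-contained argument for a statement that the paper does \emph{not} prove: the paper merely cites it as Proposition~19.10 of~\cite{LovaszBook2012} (the graph version, attributed there to Alon) and asserts that it ``easily extends to $\sigma$-structures.'' So there is no proof in the paper to compare yours against; what you have supplied is a justification of the cited fact together with its transfer to relational structures.

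The two ingredients both check out. The truncation step is the data-processing inequality for total variation: for $r\le R$, the $r$-type of an element is a deterministic function of its $R$-type, so $\rho_{\struc{A},r}$ is a pushforward of $\rho_{\struc{A},R}$ and hence $\delta_\odot^r(\struc{A},\struc{H})\le\delta_\odot^R(\struc{A},\struc{H})$; combined with $\delta_\odot^r\le 1$ and $\sum_{r\le R}2^{-r}<2$, $\sum_{r>R}2^{-r}=2^{-R}$, this gives $\delta_\odot\le 2\delta_\odot^R+2^{-R}$. The compactness step is also sound: for fixed $R,d,\sigma$ the number of $R$-types is finite, so the $R$-type distributions sit inside the simplex $\Delta^{t-1}$, which is compact under total variation. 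One small stylistic point: the detour through the closure $\overline{S}$ is unnecessary. Any subset of a compact metric space is totally bounded and therefore admits a finite $(\lambda/8)$-net consisting of points of the subset itself; taking such a net inside $S$ directly, each net point is literally some $\rho_{\struc{H}_i,R}$, so you can save one application of the triangle inequality and one factor of two in the net resolution. This does not affect correctness, only tidiness.

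The conceptual observations at the end are the right ones to make: the argument is purely existential (it gives no effective bound on $n_0$), and it uses only finiteness of the set of local $R$-types, which is why it transfers without change to $\sigma$-structures of bounded degree over a fixed finite signature---exactly the extension the paper claims without elaboration. Be slightly careful, though, with the parenthetical claim that the proof in~\cite{LovaszBook2012} is ``constructive'' and would need an explicit gluing construction; the book's argument is, to the best of my recollection, also a soft compactness/subsequence argument in the same spirit as yours, so this contrast may be overstated. That caveat aside, the proof itself is complete and correct, and the choice of constants ($2^{-R}<\lambda/2$, net resolution $\lambda/8$, triangle inequality giving $\delta_\odot^R<\lambda/4$, hence $\delta_\odot<\lambda$) works out.
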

We make use of the following definition of repairable properties. 

\begin{definition}[\cite{AdlerH18}]
	Let $\epsilon\in (0,1]$. A property $\classStruc{P}\subseteq \classStruc{C}_{\sigma,d}$ is \emph{$\epsilon$-repairable}\footnote{In \cite{AdlerH18}, the notion of repairability is called locality.} on $\classStruc{C}_{\sigma,d}$ if there are numbers $r:=r(\epsilon)\in \mathbb{N}$, $\lambda:=\lambda(\epsilon) >0$ and $n_0:=n_0(\epsilon)\in \mathbb{N}$ such that for any $\sigma$-structure $\struc{A}\in \classStruc{P}$ and $\struc{B}\in \classStruc{C}_{\sigma,d}$ both on $n\geq n_0$ vertices, if $\sum_{i=1}^{t}|\rho_{\struc{A},r}(\{\tau_i\})-\rho_{\struc{B},r}(\{\tau_i\})|<\lambda $  then $\struc{B}$ is $\epsilon$-close to $P$, where $\tau_1,\dots,\tau_t$ is a list of all $r$-types of bounded degree $d$.

	The property $\classStruc{P}$ is repairable on $\classStruc{C}_{\sigma,d}$ if it is $\epsilon$-repairable on $\classStruc{C}_{\sigma,d}$ for every $\epsilon\in (0,1]$. 
\end{definition}
The following theorem relating testable properties and repairable properties was proven in \cite{AdlerH18}.
\begin{theorem}[\cite{AdlerH18}]\label{thm:Locality} 
	For every property $\classStruc{P}\in \classStruc{C}_{\sigma,d}$, $\classStruc{P}$ is testable if and only if $\classStruc{P}$ is repairable on $\classStruc{C}_{\sigma,d}$.
\end{theorem}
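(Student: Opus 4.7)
\textbf{Proof plan for Theorem~\ref{thm:Locality}.}

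The plan is to establish the two implications separately, with the heart of each argument being that constant-query testers on bounded-degree structures can be essentially reduced to comparing $r$-type distributions for some constant radius $r$.

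\emph{Testable $\Rightarrow$ repairable.} I would first invoke the canonical-tester paradigm (as established in the bounded-degree setting by Goldreich--Ron and Czumaj--Peng--Sohler): every $\epsilon$-tester for $\classStruc{P}$ with query complexity $q = q(\epsilon,d)$ can be converted, at the cost of a constant blow-up in query complexity, into a canonical tester $T$ that first draws $s = s(\epsilon,d)$ elements uniformly at random from $\univ{A}$, explores the $r$-neighborhood of each for some $r = r(\epsilon,d)$, and then decides based only on the resulting multiset of $r$-types. Fix a target $\epsilon \in (0,1]$ and let $r,s$ be the corresponding parameters of this canonical tester. Given $\struc{A}\in\classStruc{P}$ and $\struc{B}\in\classStruc{C}_{\sigma,d}$ on the same vertex set with $\sum_i |\rho_{\struc{A},r}(\{\tau_i\}) - \rho_{\struc{B},r}(\{\tau_i\})| < \lambda$, I would couple the sampling processes of $T$ on $\struc{A}$ and on $\struc{B}$: by a standard total-variation coupling, the multisets of sampled $r$-types can be made identical with probability at least $1 - s\lambda/2$. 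Choosing $\lambda := \lambda(\epsilon) := 1/(10s)$ forces the acceptance probabilities of $T$ on $\struc{A}$ and $\struc{B}$ to differ by at most $1/20$. Since $T$ accepts $\struc{A}$ with probability $\geq 2/3$, it accepts $\struc{B}$ with probability $> 1/2$, so $\struc{B}$ cannot be $\epsilon$-far from $\classStruc{P}$. Setting $n_0 := 1$ (or anything necessary for sampling with replacement to behave well, using Theorem~\ref{thm:approximatingNeighbourhoodDistributionBySmallGraph} to justify that small-$n$ effects can be absorbed into $\lambda$) completes this direction.

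\emph{Repairable $\Rightarrow$ testable.} Given $\epsilon\in(0,1]$, let $r=r(\epsilon/2)$, $\lambda=\lambda(\epsilon/2)$, $n_0=n_0(\epsilon/2)$ be the repairability parameters. For inputs of size $n<n_0$ the tester queries the entire structure and decides exactly (this uses $O(1)$ queries since $n_0$ is a constant). For $n \geq n_0$, the tester samples $s = \Theta(t^2/\lambda^2)$ elements uniformly at random (where $t$ is the number of $r$-types in $\classStruc{C}_{\sigma,d}$), explores each of their $r$-neighborhoods, and computes the empirical distribution $\hat\rho$. The tester accepts iff there exists some $\struc{A}\in\classStruc{P}_n$ with $\sum_i |\hat\rho(\{\tau_i\}) - \rho_{\struc{A},r}(\{\tau_i\})| < 2\lambda/3$. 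This check involves no further queries and can be done by a (possibly non-uniform, but finite) decision table indexed by $n$ and $\epsilon$. For completeness: if $\struc{B}\in\classStruc{P}_n$, then by a Hoeffding/DKW-style concentration argument the empirical $\hat\rho$ lies within $\lambda/3$ of $\rho_{\struc{B},r}$ with probability $\geq 2/3$, and $\struc{B}$ itself is a witness for acceptance. For soundness: if $\struc{B}$ is $\epsilon$-far, then by repairability (applied with $\epsilon/2$) every $\struc{A}\in\classStruc{P}_n$ has $\|\rho_{\struc{A},r} - \rho_{\struc{B},r}\|_{\mathrm{TV}} \geq \lambda$, so concentration of $\hat\rho$ around $\rho_{\struc{B},r}$ implies no witness is found with probability $\geq 2/3$.

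\emph{Main obstacle.} The forward direction is the delicate one, because it requires that any constant-query tester, no matter how intricate its querying pattern, is essentially a function of the local neighborhood distribution. The reduction to canonical testers is not entirely trivial; one has to argue that adaptivity, multiple vertices per query round, and correlated exploration can all be simulated by a sampler that draws vertices uniformly and independently and collects their $r$-neighborhoods. This is the step where one needs to cite (or reprove) the canonical-tester theorem of Czumaj--Peng--Sohler~\cite{CzumajPS16} (building on Goldreich--Ron~\cite{goldreich2011proximity}); everything else is then a coupling/concentration exercise. The reverse direction is comparatively routine once the $n < n_0$ case is handled by brute force and one accepts that the ``is $\hat\rho$ close to some $\rho_{\struc{A},r}$ for $\struc{A}\in\classStruc{P}_n$'' check is allowed to be non-uniform, which it is in the standard property-testing model.
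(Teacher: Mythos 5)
This theorem is imported by citation from~\cite{AdlerH18}; the present paper does not reprove it, so there is no in-paper proof to compare against line by line. Your strategy, however, matches the one the paper itself points to when it says the result ``is built upon the so-called `canonical testers' for bounded-degree graphs in~\cite{CzumajPS16,goldreich2011proximity}'': reduce to canonical testers and couple the type-sampling for the forward direction, and test proximity to the set of achievable $r$-type distributions for the backward direction. Both implications as you have written them are morally correct, and the backward direction (up to your harmless $\|\cdot\|_{\mathrm{TV}}$ vs.\ $\ell_1$ slip, a factor of two you could absorb in the thresholds) is essentially complete.

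Two points in the forward direction deserve sharpening. First, a canonical tester does not decide on the \emph{multiset} of $r$-types alone: it sees the induced substructure on the union of the sampled $r$-balls, including any overlaps and cross-edges between them. Your coupling on types therefore only controls the tester's behaviour on the event that the $s$ sampled balls are pairwise disjoint in both structures. That event has probability $1-O(s^2 d^{2r}/n)$, so it is fine, but it means $n_0$ cannot be taken to be $1$; it must be chosen as a function of $s$, $r$, $d$, $\lambda$ (the invocation of Theorem~\ref{thm:approximatingNeighbourhoodDistributionBySmallGraph} here is not the right tool --- that theorem is about replacing a large structure by a small one, not about bounding collision probabilities). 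Second, the canonical-tester theorem you cite is stated for undirected bounded-degree graphs; since Theorem~\ref{thm:Locality} is stated over $\classStruc{C}_{\sigma,d}$ for arbitrary relational $\sigma$, you need the analogue for bounded-degree relational structures, which is part of what~\cite{AdlerH18} supplies rather than something one can simply quote from~\cite{CzumajPS16}. Neither of these is a fatal flaw --- they are exactly the technical debts you flag in your ``main obstacle'' paragraph --- but they are where a careful write-up would need to do work rather than wave at constants.
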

We recall that $\classStruc{P}_{\zigzag}:=\classStruc{P}_{\varphi_{\zigzag}}$where $\varphi_{\zigzag}$ is the formula from Section \ref{sec: definitionFormula}. We also let $\sigma$, $D$ and $d$ be as defined in Section~$\ref{sec: definitionFormula}$.

\begin{theorem}\label{thm:nonTestabilityForStructures} 
	$\classStruc{P}_{\zigzag}$ is not testable on $\classStruc{C}_{\sigma,d}$.
\end{theorem}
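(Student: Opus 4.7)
The plan is to use Theorem~\ref{thm:Locality}, so it suffices to show that $\classStruc{P}_{\zigzag}$ is not $\epsilon$-repairable for some fixed $\epsilon > 0$. Let $\epsilon_0 > 0$ be the expansion constant from Theorem~\ref{thm:expansionOfModels}, and fix $\epsilon := \epsilon_0/(6d)$. Arguing by contradiction, I would suppose $\classStruc{P}_{\zigzag}$ is $\epsilon$-repairable with parameters $r$, $\lambda$, $n_0$, and then construct a pair of structures $\struc{A},\struc{B}\in \classStruc{C}_{\sigma,d}$ of a common large size $n\geq n_0$ whose $r$-type distributions differ in $\ell_1$ by less than $\lambda$, yet such that $\struc{B}$ is $\epsilon$-far from $\classStruc{P}_{\zigzag}$, contradicting repairability.

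For the construction, I would first take $\struc{A}$ to be an arbitrarily large model of $\varphi_{\zigzag}$ (available by Lemma~\ref{lem:exactFormOfModels}), then apply Theorem~\ref{thm:approximatingNeighbourhoodDistributionBySmallGraph} with parameter $\lambda':=\lambda/2^{r+2}$ to obtain a constant-size structure $\struc{H}\in \classStruc{C}_{\sigma,d}$ with $\delta_\odot(\struc{A},\struc{H})\leq \lambda'$. I would then take $\struc{B}$ to be the disjoint union of $\lfloor n/|\univ{H}|\rfloor$ copies of $\struc{H}$, padded with isolated elements so that $|\univ{B}|=n$. Because the copies are disjoint, every $r$-neighbourhood in $\struc{B}$ arising from within a copy is isomorphic to the corresponding one in $\struc{H}$, so $\delta_\odot^r(\struc{H},\struc{B})=O(|\univ{H}|/n)$. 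Combining this with $\delta_\odot^r(\struc{A},\struc{H})\leq 2^r\lambda'$ via the triangle inequality and choosing $n$ sufficiently large, one obtains $\sum_i |\rho_{\struc{A},r}(\{\tau_i\})-\rho_{\struc{B},r}(\{\tau_i\})|<\lambda$. Repairability would then produce some $\struc{B}'\in \classStruc{P}_{\zigzag}$ on the vertex set of $\struc{B}$ with $\sum_{R\in \sigma}|\rel{R}{\struc{B}}\triangle \rel{R}{\struc{B}'}|\leq \epsilon d n$.

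The contradiction then comes from expansion. Taking $S\subseteq \univ{B}$ to be the union of about half of the $\struc{H}$-copies (replacing $S$ by its complement if needed so that $|S|\leq n/2$), the construction of $\struc{B}$ ensures $|\langle S,\overline{S}\rangle_{\underlyingGraph{\struc{B}}}|=0$, while Theorem~\ref{thm:expansionOfModels} together with Definition~\ref{def:expansionRatio} yield $|\langle S,\overline{S}\rangle_{\underlyingGraph{\struc{B}'}}|\geq \epsilon_0 |S|\geq \epsilon_0 n/3$ for $n$ large. Each crossing edge of $\underlyingGraph{\struc{B}'}$ is witnessed by at least one tuple in $\bigcup_R (\rel{R}{\struc{B}'}\setminus \rel{R}{\struc{B}})$, so $\sum_{R\in \sigma}|\rel{R}{\struc{B}}\triangle \rel{R}{\struc{B}'}|\geq \epsilon_0 n/3 > \epsilon d n$ by the choice of $\epsilon$, giving the contradiction.

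The main obstacle I expect is purely bookkeeping: matching $|\univ{A}|=|\univ{B}|$ exactly, since by Lemma~\ref{lem:exactFormOfModels} the models of $\varphi_{\zigzag}$ come only in sizes $\sum_{m=0}^{m'}D^{4m}$ which need not be multiples of $|\univ{H}|$, and bounding the $O(|\univ{H}|/n)$ slack introduced by the padded isolates and by the few ``boundary'' vertices whose $r$-neighbourhoods could feel the padding. Both effects are absorbed by choosing $n$ much larger than $|\univ{H}|$, $r$, and $1/\lambda$, at which scale they become negligible compared to $\lambda$ and to $\epsilon_0 n$.
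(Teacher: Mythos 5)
Your proposal is correct and follows essentially the same route as the paper: reduce to non-repairability via Theorem~\ref{thm:Locality}, build $\struc{B}$ from many disjoint copies of the small approximating structure $\struc{H}$ from Theorem~\ref{thm:approximatingNeighbourhoodDistributionBySmallGraph}, and use expansion to show $\struc{B}$ is far from $\classStruc{P}_{\zigzag}$. The only cosmetic difference is that you apply the expansion bound of Theorem~\ref{thm:expansionOfModels} to the hypothetical repaired structure $\struc{B}'$, whereas the paper applies it to the chosen model $\struc{A}$ after aligning $\struc{B}$'s labelling to minimize the symmetric difference; both give the same contradiction, and your bookkeeping concerns (choosing $n$ of the right form $\sum_{m=0}^{k}D^{4m}$ and much larger than $|\univ{H}|$) are exactly the ones the paper handles by its choice of $n$.
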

\begin{proof}
	
	We prove non-repairability for $\classStruc{P}_{\zigzag}$ and  get non-testability with Theorem \ref{thm:Locality}. 
	Let $\epsilon:={1}/(144D^2)$ and let $r\in \mathbb{N}$, $\lambda >0$ and $n_0\in \mathbb{N}$ be arbitrary. We set $\lambda':={\lambda}/( t 2^{r+1})$, where $\tau_1,\dots,\tau_t$ are all $r$-types of bounded degree $d$, and let $n_0'$ be the positive integer from Theorem \ref{thm:approximatingNeighbourhoodDistributionBySmallGraph} corresponding to   $\lambda'$. We now pick $n\in \mathbb{N}$ such that $n=\sum_{i=0}^{k}D^{4i}$ for some $k\in \mathbb{N}$, $n\geq 4n_0$ and $n\geq 4({n_0'}/{\lambda})$. Let $\struc{A}\in \classStruc{C}_{\sigma,d}$ be a model of $\varphi_{\zigzag}$ on $n$ elements. By Theorem \ref{thm:approximatingNeighbourhoodDistributionBySmallGraph} there is a structure $\struc{H}\in \classStruc{C}_{\sigma,d}$ on $m\leq n_0'$ elements such that $\delta_{\odot}(\struc{A},\struc{H})\leq \lambda$. Let $\struc{B}$ be the structure consisting of $\lfloor{n}/{m}\rfloor$ copies of $\struc{H}$ and $n\mod m$ isolated elements (elements not being contained in any tuple). Note that we picked $\struc{B}$ such that $|A|=|B|$.

	We will first argue that $\struc{B}$ is in fact $\epsilon$-far from having the property $\classStruc{P}_{\zigzag}$. 
	First we rename the elements from $\univ{B}$ in such a way that $\univ{A}=\univ{B}$ and the number $\sum_{\tilde{R}\in \sigma}|\rel{\tilde{R}}{\struc{A}}\triangle \rel{\tilde{R}}{\struc{B}}|$ of edge modifications to turn $\struc{A}$ and $\struc{B}$ into the same structure is minimal. Pick a partition $\univ{A}=\univ{B}=S\sqcup S'$  in such a way  that $(S\times S')\cap \rel{\tilde{R}}{\struc{B}}=\emptyset$, $(S'\times S)\cap \rel{\tilde{R}}{\struc{B}}=\emptyset$ for any $\tilde{R}\in \sigma$ and $||S|-|S'||$ is minimal among all such partitions. Assume that $|S|\leq |S'|$. Since the connected components of $\gaifman{\struc{B}}$ are of size at most $m$ we know that $||S|-|S'||\leq m$. This is because otherwise we can get a partition $\univ{B}=T\sqcup T'$ with $||T|-|T'||<||S|-|S'||$ by picking all elements of any connected component of $G(\mathcal{B})$, which is contained in $S'$, and moving these elements from $S'$  to $S$. Since $|S|\leq |S'|$ and $m\leq {n}/{4}$ we know that $ {n}/{4}\leq |S|\leq {n}/{2}$.
	Since $(S\times S')\cap \tilde{R}^\mathcal{B}=\emptyset$ we know that $\mathcal{A}$ and $\mathcal{B}$ must differ in at least all tuples that correspond to an $S$ and $S'$ crossing edge in $U(\mathcal{A})$ \ie an edge in $\langle S, S'\rangle_{U(\mathcal{A})}$. Hence
	\begin{align*}
	\sum_{\tilde{R}\in \sigma}|\rel{\tilde{R}}{\struc{A}}\triangle \rel{\tilde{R}}{\struc{B}}|&\geq|\langle S,S'\rangle_{\underlyingGraph{\struc{A}}}|\stackrel{\text{Def }\ref{def:expansionRatio}}{\geq}|S|\cdot  h(\struc{A})\\&\stackrel{\text{Thm }\ref{thm:expansionOfModels}}{\geq}\frac{n}{4}\cdot\frac{D^2}{12}=\frac{1}{48} D^2n\geq \frac{1}{144 D^2}dn.
	\end{align*}
	Therefore $\struc{B}$ is $\epsilon$-far from 
	being in $\classStruc{P}_{\zigzag}$.
	
	However, the neighbourhood distributions of $\struc{A}$ and $\struc{B}$ are similar as the following shows, proving that $\classStruc{P}_{\zigzag}$ is not repairable.

	\begin{align*}
	&\sum_{i=1}^{t}|\rho_{\struc{A},r}(\{\tau_i\})-\rho_{\struc{B},r}(\{\tau_i\})|\\
	&=\sum_{i=1}^{t}\Big|\rho_{\struc{A},r}(\{\tau_i\})
	-\frac{n\mod m}{n}\cdot \rho_{K_1,r}(\{\tau_i\})-\Big\lfloor\frac{n}{m}\Big\rfloor \cdot\frac{m}{n}\cdot\rho_{\struc{H},r}(\{\tau_i\})\Big|
	\\&\leq\sum_{i=1}^{t}\Big|\rho_{\struc{A},r}(\{\tau_i\})-\rho_{\struc{H},r}(\{\tau_i\})\Big|+\sum_{i=1}^{t}\Big|\frac{n\mod m}{n}\cdot \rho_{K_1,r}(\{\tau_i\})\Big|\\&+\sum_{i=1}^{t}\Big|\rho_{\struc{H},r}(\{\tau_i\})-\Big\lfloor\frac{n}{m}\Big\rfloor \cdot\frac{m}{n}\cdot\rho_{\struc{H},r}(\{\tau_i\})\Big|\\
	&\leq\sum_{i=1}^{t}\Big|\rho_{\struc{A},r}(\{\tau_i\})-\rho_{\struc{H},r}(\{\tau_i\})\Big|+\frac{2m}{n}\\
	&\leq t\cdot\sup_{X\subseteq \struc{B}_r}\abs{\rho_{\struc{A},r}(X)-\rho_{\struc{H},r}(X)}+\frac{2m}{n}\\
	&\leq t\cdot 2^r\cdot \delta_{\odot}(\struc{A},\struc{H})+\frac{2m}{n}\\
	&\leq \frac{\lambda}{2}+\frac{\lambda}{2}=\lambda.
	\end{align*}
	The last inequality holds by choice of $\lambda'$ and Theorem \ref{thm:approximatingNeighbourhoodDistributionBySmallGraph}.
\end{proof}

\subsection{Every FO property on degree-regular structures is in $\Pi_2$}
We start with the following observation. 
\begin{observation}\label{ex:delta2}
	A Hanf sentence $\exists ^{\geq m} x\, \phi_{\tau}(x)$
	is short for
	\[\exists x_1\ldots \exists x_m  \big(\bigwedge_{1\leq i,j\leq m, i\neq j} x_i\neq x_j\wedge\bigwedge_{1\leq i\leq m} \phi_{\tau}(x_i)\big),\]
	and $\phi_{\tau}(x_i)$ can be expressed by an $\exists^*\forall$-formula, where the existential
	quantifiers ensure the existence of the desired $r$-neighbourhood with all tuples in relations / not in relations as required by $\tau$, 
	and the universal quantifier is used to express that there are no other elements in
	the $r$-neighbourhood of $x_i$.
\end{observation}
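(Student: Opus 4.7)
The plan is to justify the two assertions of the observation separately. The stated expansion of $\exists^{\geq m} x\, \phi_\tau(x)$ is merely the unfolding of its defining abbreviation, so no real argument is needed there; the substantive task is to exhibit $\phi_\tau(x)$ as an $\exists^*\forall^*$-formula.

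I first fix a representative $(\struc{B},b)$ of the $r$-type $\tau$ with $\univ{B}=\{b, b_1,\ldots,b_k\}$; since we work over $\classStruc{C}_{\sigma,d}$, the size $k+1$ is bounded by a constant depending only on $r$, $d$ and $|\sigma|$. I then set
\[
\phi_\tau(x) \;:=\; \exists y_1\cdots\exists y_k \,\Big[\,\iota(x,y_1,\ldots,y_k)\,\wedge\,\mu(x,y_1,\ldots,y_k)\,\Big],
\]
where $\iota$ is quantifier-free and asserts that the map $b\mapsto x,\,b_i\mapsto y_i$ induces an isomorphism from $\struc{B}$ to the substructure of $\struc{A}$ induced on $\{x,y_1,\ldots,y_k\}$. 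Concretely, with $y'_0:=x$ and $y'_i:=y_i$, the formula $\iota$ is the conjunction of the distinctness atoms $y'_i\neq y'_j$ for $i\neq j$, the positive atoms $R(y'_{i_1},\ldots,y'_{i_{\ar(R)}})$ for every tuple in $\rel{R}{\struc{B}}$, and the negated atoms $\neg R(\cdots)$ for every other potential tuple over $\{x,y_1,\ldots,y_k\}$.

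The formula $\mu$ enforces that no vertex outside $\{x,y_1,\ldots,y_k\}$ lies in the $r$-neighbourhood of $x$. Letting $L$ be the maximum arity in $\sigma$ and $S:=\{i\in\{0,\ldots,k\}:\dist_{\struc{B}}(b,b_i)\leq r-1\}$, I take
\[
\mu \;:=\; \forall t_1\cdots\forall t_L\;\bigwedge_{R\in\sigma,\,1\leq j\leq\ar(R),\,i\in S}\,\Big[\,R(t_1,\ldots,t_{j-1},y'_i,t_{j+1},\ldots,t_{\ar(R)})\;\to\;\bigwedge_{p\neq j}\bigvee_{q=0}^{k} t_p=y'_q\,\Big].
\]
Pushing these universals in front of $\iota$ (which is quantifier-free) yields the desired $\exists^*\forall^*$-shape.

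One direction of correctness is immediate: if $(\mathcal{N}_r^{\struc{A}}(a), a)\in\tau$, the isomorphism to $(\struc{B},b)$ supplies witnesses $y_i$ for which both $\iota$ and $\mu$ hold by construction. The converse, that $\struc{A}\models\iota(a,\vec y)\wedge\mu(a,\vec y)$ forces $N_r^{\struc{A}}(a)=\{a,y_1,\ldots,y_k\}$, is the main obstacle and I would handle it by a \emph{first exit} argument. Suppose for contradiction that $z\in N_r^{\struc{A}}(a)\setminus\{a,y_1,\ldots,y_k\}$ is witnessed by a shortest Gaifman path $a=v_0,\ldots,v_s=z$ of length $s\leq r$. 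Let $j$ be minimal with $v_j\notin\{a,y_1,\ldots,y_k\}$, so $v_{j-1}=y'_i$ for some $i$ and the prefix $v_0,\ldots,v_{j-1}$ lies inside the induced substructure; since $\iota$ identifies this substructure with $\struc{B}$, the prefix yields $\dist_{\struc{B}}(b,b_i)\leq j-1\leq r-1$, so $i\in S$. The tuple of $\struc{A}$ witnessing the Gaifman edge $\{v_{j-1},v_j\}$ then violates the corresponding conjunct of $\mu$, a contradiction. The only delicate point is that distances in $\struc{A}$ may be strictly smaller than in $\struc{A}[\{a,\vec y\}]$, but the first-exit formulation uses the induced-substructure distance only as an upper bound, which is the direction actually needed.
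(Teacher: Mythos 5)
Your construction of $\phi_\tau(x)$ matches the paper's informal description: an existential block guessing the witnesses, a quantifier-free $\iota$ fixing the induced substructure, and a universal part $\mu$ excluding further elements of $N_r^{\struc{A}}(x)$. Using a block $\forall t_1\cdots\forall t_L$ rather than a single $\forall$ is the right call for a general relational signature (one universal suffices only when every relation is binary; for arity $L>2$, forbidding a tuple through a witness and a fresh element requires quantifying the remaining positions), and the distinction does not affect the downstream uses, which only need the Hanf sentence to land in $\Sigma_2$.

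There is, however, a gap in the correctness argument for the inclusion $N_r^{\struc{A}}(a)\subseteq\{a,y_1,\ldots,y_k\}$. At the first-exit step you claim that, because the prefix $v_0,\ldots,v_{j-1}$ lies inside $\{a,y_1,\ldots,y_k\}$, the isomorphism $\iota$ yields $\dist_{\struc{B}}(b,b_i)\leq j-1$. But only the \emph{vertices} of the prefix lie in that set: a Gaifman edge $\{v_\ell,v_{\ell+1}\}$ of $\struc{A}$ may be witnessed by a tuple whose remaining entries are outside $\{a,y_1,\ldots,y_k\}$, and such a tuple is not a tuple of $\struc{A}[\{a,y_1,\ldots,y_k\}]\cong\struc{B}$, so the prefix need not be a path in $\struc{B}$ and $\dist_{\struc{B}}(b,b_i)$ may be much larger than $j-1$. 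Your closing remark inverts the relevant inequality rather than resolving it: you need an \emph{upper} bound on the induced-substructure distance, but the prefix only bounds the ambient distance, which is a \emph{lower} bound on the former. The repair is to feed $\mu$ back in inductively along the prefix: $y'_0=a$ corresponds to $b$, so $0\in S$ and $\mu$ forces the tuple witnessing $\{v_0,v_1\}$ to lie entirely in $\{a,y_1,\ldots,y_k\}$, whence by $\iota$ it is a $\struc{B}$-tuple and $\dist_{\struc{B}}(b,b_{i_1})\leq 1$; iterating (each step $\ell\leq j-1\leq r-1$ keeps the current index in $S$) shows the whole prefix is a $\struc{B}$-path of length $j-1$, so $i\in S$, and the final application of $\mu$ to the tuple witnessing $\{v_{j-1},v_j\}$ gives the contradiction as you intended.
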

Note that by the above, any Hanf sentence is in~$\Sigma_2$. We now show the following lemma. 

\begin{lemma}\label{lem:d-regHNF}	
	Let $d\in \mathbb N$ and let $\phi$ be an FO sentence.
	If every model of $\varphi$ is $d$-regular, then $\varphi$ is $d$-equivalent to a $\Pi_2$ sentence.
\end{lemma}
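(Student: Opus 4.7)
My plan is to combine Hanf's theorem (Theorem~\ref{thm:Hanf}) with the observation that $d$-regularity forces each formula $\phi_\tau(x)$ (``$x$ has $r$-type $\tau$'') to collapse to a purely existential formula. Applying Theorem~\ref{thm:Hanf} yields $\varphi\equiv_d\psi$ where $\psi$ is a Boolean combination of Hanf sentences $H_i:=\exists^{\geq m_i}x\,\phi_{\tau_i}(x)$. Each $H_i$ is a priori only $\Sigma_2$ (Observation~\ref{ex:delta2}), and a Boolean combination of $\Sigma_2$ sentences need not lie in $\Pi_2$, so the challenge is to dispose of the $\forall$-quantifier hidden inside each $\phi_{\tau_i}$.

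The key step is that on $d$-regular structures each $\phi_\tau(x)$ is equivalent to a $\Sigma_1$ formula. Fix a representative $r$-ball $(B,x_0)$ of $\tau$ with $B=\{x_0,v_1,\dots,v_{k-1}\}$, and call $\tau$ \emph{$d$-admissible} if every vertex of $B$ at distance $<r$ from $x_0$ has exactly $d$ neighbours in $B$. If $\tau$ is not $d$-admissible, then $\tau$ cannot be realized in any $d$-regular structure (an internal vertex would have to pick up extra neighbours within distance $r$ of $x$, contradicting the isomorphism to $B$), so $\phi_\tau$ is equivalent to $\bot$ on $d$-regular structures. If $\tau$ is $d$-admissible, set
\[
\phi_\tau'(x):=\exists y_1\dots\exists y_{k-1}\,\psi_\tau(x,y_1,\dots,y_{k-1})
\]
where the quantifier-free $\psi_\tau$ asserts pairwise distinctness of $x,y_1,\dots,y_{k-1}$ together with the requirement that $x\mapsto x_0,\,y_i\mapsto v_i$ is an isomorphism from the induced substructure onto $B$. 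The direction $\phi_\tau\Rightarrow\phi_\tau'$ is immediate. For the converse on a $d$-regular $\struc{A}$ with witnesses $y_1,\dots,y_{k-1}$, let $S:=\{x,y_1,\dots,y_{k-1}\}$ and prove by induction on $i$ that along any shortest path $x=z_0,z_1,\dots$ in $\struc{A}$ the vertex $z_i$ lies in $S$ and satisfies $\dist_{\struc{A}[S]}(x,z_i)=i$ as long as $i\leq r$: when $i<r$, the $B$-vertex matched to $z_i$ is internal and, by $d$-admissibility, has $d$ neighbours in $B$, so $z_i$ has $d$ neighbours in $S$, which exhausts its degree budget in $\struc{A}$ by $d$-regularity and forces $z_{i+1}\in S$. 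Hence $N_r^{\struc{A}}(x)=S$ and $x$ has type $\tau$.

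Each $H_i$ is therefore equivalent on $d$-regular structures to a $\Sigma_1$ sentence (either $\exists^{\geq m_i}x\,\phi_{\tau_i}'(x)$ or $\bot$), so $\psi$ reduces to a Boolean combination $\psi'$ of $\Sigma_1$ sentences. After de Morgan, $\psi'$ is a positive Boolean combination of $\Sigma_1$ and $\Pi_1$ sentences, and since $\Sigma_1\cup\Pi_1\subseteq\Pi_2$ and $\Pi_2$ is closed under $\wedge$ and $\vee$, we have $\psi'\in\Pi_2$. Let $\chi:=\forall x\,\alpha(x)$ where $\alpha(x)$ expresses that $x$ occurs in exactly $d$ tuples, realised as the conjunction of a $\Sigma_1$ lower bound and a $\Pi_1$ upper bound on the number of tuples containing $x$; then $\chi\in\Pi_2$. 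Setting $\psi'':=\chi\wedge\psi'\in\Pi_2$, we obtain $\varphi\equiv_d\psi''$ on $\classStruc{C}_{\sigma,d}$: on $d$-regular inputs $\psi''$ reduces to $\psi\equiv_d\varphi$ via the above equivalences, whereas on non-$d$-regular inputs both $\varphi$ (by hypothesis) and $\psi''$ (through $\chi$) are false. The main obstacle is the $\Sigma_1$-characterisation of $\phi_\tau$ in the key step, which crucially uses $d$-regularity to prevent any extra element from sneaking into $N_r^{\struc{A}}(x)$; the remaining quantifier-class bookkeeping is routine.
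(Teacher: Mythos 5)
Your proof is correct, and its engine — exhausting the degree budget of the internal vertices of a $d$-regular $r$-ball so that nothing can escape the witnessed copy, thereby collapsing $\phi_\tau$ to a $\Sigma_1$ formula — is exactly the key observation in the paper's proof (your ``$d$-admissible'' types are the paper's ``$d$-regular'' types). The packaging differs. The paper works in DNF, leaves negated Hanf sentences $\neg\exists^{\geq m}x\,\phi_\tau(x)$ untouched (they are already $\Pi_2$ for any $\tau$), collapses only the \emph{positive} Hanf literals, and observes — without spelling it out — that one may discard any DNF clause containing a positive Hanf literal with a non-$d$-regular type, since no model of $\phi$ inside $\classStruc{C}_{\sigma,d}$ can satisfy such a clause. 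Crucially, the collapse $\phi'_\tau\Rightarrow\phi_\tau$ needs only $\deg(\struc{A})\leq d$, not $d$-regularity of $\struc{A}$, so the paper's replacements are $\equiv_d$ on all of $\classStruc{C}_{\sigma,d}$ and no regularity axiom is required. Your route collapses \emph{every} $\phi_\tau$ uniformly, sending non-admissible ones to $\bot$; this avoids the case split on polarity and the clause-pruning step, but is sound only on $d$-regular inputs, which is why you must conjoin the $\Pi_2$ sentence $\chi$. Both routes are valid; yours is more uniform, the paper's a bit leaner. Two small points: in the relational-structure model $\deg_{\struc{A}}(a)$ counts \emph{tuples} containing $a$, not distinct neighbours, and these differ as soon as two relations connect the same pair, so $d$-admissibility and the ``exhausts its degree budget'' step should be phrased in terms of tuple counts (the argument is otherwise unchanged); and the membership $\chi\in\Pi_2$ is right but deserves a word — distribute the $\forall x$ over the two conjuncts before reading off the prefix, since otherwise $\forall x\,(\exists\cdots\wedge\forall\cdots)$ appears to be $\Pi_3$.
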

The lemma can be equivalently stated by the following syntactic formulation. Let $\varphi^d_{\operatorname{reg}}$ be the FO-sentence expressing that every element has degree $d$. Then for every FO-sentence $\varphi$ the sentence $\varphi \land \varphi^d_{\operatorname{reg}}$ is $d$-equivalent to a sentence in $\Pi_2$. 
\begin{proof}
	Before we begin, let us define an $r$-type 
	$\tau$ to be \emph{$d$-regular}, if for all structures $\struc{A}$ and all elements 
	$a\in \univ{A}$ of $r$-type $\tau$, every $b\in \univ{A}$ with 
	$\dist(a,b)<r$ has $\deg_{\struc{A}}(b)=d$.
	
	We first prove the following claim.	
	\begin{claim}\label{claim:Pi2}
		Let $d\in \mathbb N$, let $\phi$ be an FO sentence, and let $\psi$ be in HNF with 
		$\psi\equiv_d\phi$ such that $\psi$ is in DNF, where the literals
		are Hanf sentences or negated Hanf sentences. Furthermore, assume that the neighbourhood types in all positive Hanf sentences of $\psi$ are $d$-regular. Then $\phi$ is $d$-equivalent to a sentence in $\Pi_2$.
	\end{claim}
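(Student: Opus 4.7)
The plan is to reduce the claim to the closure of $\Pi_2$ under finite Boolean combinations together with two facts about the individual literals of $\psi$: (i) every negated Hanf sentence is $\Pi_2$, and (ii) every positive Hanf sentence whose neighbourhood type is $d$-regular is $d$-equivalent to a $\Sigma_1$-sentence. Once both facts are established, every literal of $\psi$ is $d$-equivalent to some $\Pi_2$-sentence, and the claim follows by combining these $\Pi_2$-sentences according to the Boolean structure of the DNF.

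For (i), I would expand $\neg\exists^{\geq m}x\,\phi_\tau(x)$ to $\forall x_1\cdots\forall x_m\bigl(\bigvee_{i<j}x_i=x_j\vee\bigvee_i\neg\phi_\tau(x_i)\bigr)$. By Observation \ref{ex:delta2}, $\phi_\tau(x_i)$ is equivalent to a formula of the shape $\exists\bar{y}_i\,\forall z_i\,\gamma(x_i,\bar{y}_i,z_i)$ with $\gamma$ quantifier-free, so $\neg\phi_\tau(x_i)$ has the shape $\forall\bar{y}_i\,\exists z_i\,\neg\gamma$. Pulling every $\forall\bar{y}_i$ into the outer universal prefix and every $\exists z_i$ into a single existential block behind them yields a $\Pi_2$-sentence.

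For (ii), I would argue that the universal quantifier in $\phi_\tau(x)\equiv\exists\bar{y}\,\forall z\,\gamma(x,\bar{y},z)$ becomes vacuous on $\classStruc{C}_{\sigma,d}$ when $\tau$ is $d$-regular. Suppose $\struc{A}\in\classStruc{C}_{\sigma,d}$ and that witnesses $\bar{y}$ realise the atomic diagram of $\tau$ around $x$. By $d$-regularity, every element of $\tau$ at distance strictly less than $r$ from $x$ already has its full quota of $d$ neighbours listed in the atomic diagram of $\tau$; since $\struc{A}$ has maximum degree $d$, these witnesses can have no further neighbours in $\struc{A}$. Hence no element outside $\{x\}\cup\bar{y}$ can sit within distance $r$ of $x$ in $\struc{A}$, so the purely existential assertion already forces the $r$-ball of $x$ in $\struc{A}$ to be isomorphic to $\tau$. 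Consequently $\exists^{\geq m}x\,\phi_\tau(x)$ is $d$-equivalent to a $\Sigma_1$-sentence, and $\Sigma_1\subseteq\Pi_2$.

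To assemble a single $\Pi_2$-sentence from the literals I will iterate the equivalences
\[
\forall\bar{x}_1\,\exists\bar{y}_1\,\chi_1\wedge\forall\bar{x}_2\,\exists\bar{y}_2\,\chi_2\;\equiv\;\forall\bar{x}_1\bar{x}_2\,\exists\bar{y}_1\bar{y}_2\,(\chi_1\wedge\chi_2)
\]
and
\[
\forall\bar{x}_1\,\exists\bar{y}_1\,\chi_1\vee\forall\bar{x}_2\,\exists\bar{y}_2\,\chi_2\;\equiv\;\forall\bar{x}_1\bar{x}_2\,\exists\bar{y}_1\bar{y}_2\,(\chi_1\vee\chi_2).
\]
The disjunction equivalence is the more delicate one: the forward direction is immediate, while the backward direction uses that if both disjuncts on the left failed, then the witnesses $\bar{x}_1^\ast$ and $\bar{x}_2^\ast$ for those failures would jointly refute the right-hand side. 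Applying these rules to $\psi=\bigvee_i\bigwedge_j L_{ij}$ produces one $\Pi_2$-sentence that is $d$-equivalent to $\psi$, and therefore to $\phi$. I expect step (ii) to require the most care: one must verify rigorously that, on $\classStruc{C}_{\sigma,d}$, the existential witnessing around $x$ really pins down the whole $r$-ball of $x$ (no ``hidden'' vertices at distance $\leq r$ can exist because every interior vertex of $\tau$ is already of full degree). The remaining manipulations are standard quantifier shuffling.
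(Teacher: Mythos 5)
Your proposal is correct and takes essentially the same route as the paper's proof: observe that a positive Hanf sentence with $d$-regular neighbourhood type is $d$-equivalent to an $\exists^*$-sentence (by dropping the vacuous universal tail of $\phi_\tau$), that a negated Hanf sentence is $\Pi_2$, and that $\Pi_2$ is closed under finite conjunction and disjunction. The main difference is one of detail: the paper simply asserts the closure of $\Pi_2$ under $\vee,\wedge$ and the $d$-regularity step, while you spell out the prenexing equivalences (correctly noting that the backward direction of the disjunction rule needs the disjoint variable trick) and give the degree-saturation argument explaining why the existential witnesses already pin down the entire $r$-ball when $\tau$ is $d$-regular.
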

	
	\begin{proof}
		Assume $\psi$ is of the form $\exists^{\geq m} x\, \phi_{\tau}(x)$, where $\tau$ is
		$d$-regular.
		As in Observation~\ref{ex:delta2}, we may assume $\phi_{\tau}(x)$ is an 
		$\exists^*\forall$-formula, which arises from a conjunction of an
		$\exists^*$-formula $\phi'_{\tau}(x)$ (expressing that
		$x$ has an `induced sub-neighbourhood' of type $\tau$) and a 
		universal formula saying that
		there are no further elements in the neighbourhood.
		We now have that $\psi\equiv_d\exists^{\geq m} x \,\phi'_{\tau}(x)$. To 
		see this, let
		$\struc{A}\models \exists^{\geq m} x \phi'_{\tau}(x)$ and 
		$\deg(\struc{A})\leq d$. Then 
		$\struc{A}\models \exists^{\geq m} x \phi_{\tau}(x)$ because $\tau$ is 
		$d$-regular. The converse is obvious.
		
		If $\psi$ is of form $\neg \exists^{\geq m} x\, \phi_{\tau}(x)$, where
		$\phi_{\tau}(x)$ is an	$\exists^*\forall$-formula, then 
		$\neg \exists^{\geq m} x\, \phi_{\tau}(x)$ is equivalent to a formula in $\Pi_2$.
		Since $\Pi_2$ is closed under disjunction and conjunction, this proves the claim.
	\end{proof}
	Now the proof follows from Claim~\ref{claim:Pi2}, because if $\phi$ only has $d$-regular models, then by Hanf's theorem there is a formula $\psi\equiv_d \phi$ satisfying the assumptions of the claim.
\end{proof}

\paragraph{Existence of a non-testable $\Pi_2$-property.}
With Lemma \ref{lem:d-regHNF} and Theorem \ref{thm:nonTestabilityForStructures}, we are ready to prove the following theorem.

\begin{theorem}\label{thm:pi2}
	There is a degree bound $d\in \mathbb{N}$ and a signature $\sigma$ such that there exists a property on $\classStruc{C}_{\sigma,d}$ definable by a formula in $\Pi_2$ that is not testable.
\end{theorem}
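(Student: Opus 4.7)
The plan is to combine Theorem~\ref{thm:nonTestabilityForStructures} with Lemma~\ref{lem:d-regHNF}, taking the signature $\sigma$ and the degree bound $d := 2D^2 + D^4 + 1$ exactly as fixed in Section~\ref{sec: definitionFormula}. Since Theorem~\ref{thm:nonTestabilityForStructures} already shows that $\classStruc{P}_{\zigzag} = \classStruc{P}_{\varphi_{\zigzag}}$ is not testable on $\classStruc{C}_{\sigma,d}$, it suffices to exhibit a $\Pi_2$ sentence $\psi$ that is $d$-equivalent to $\varphi_{\zigzag}$; then $\classStruc{P}_{\psi}$ coincides with $\classStruc{P}_{\zigzag}$ on $\classStruc{C}_{\sigma,d}$ and inherits non-testability.

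The key intermediate step is the observation that every model $\struc{A} \in \classStruc{C}_{\sigma,d}$ of $\varphi_{\zigzag}$ is $d$-regular, so that the auxiliary sentence $\varphi^d_{\operatorname{reg}}$ expressing $d$-regularity can be added as a conjunct without changing the defined property on $\classStruc{C}_{\sigma,d}$. The empty model is vacuously $d$-regular; a non-empty model has, by Lemma~\ref{lem:exactFormOfModels}, the structure of a rooted complete $D^4$-ary tree whose level-$m$ induced substructure in the $E$-relations is isomorphic to $G_m$. The conjuncts of $\varphi_{\zigzag}$ were designed so that the tuple-counts around each element add up to exactly $d$: $\varphi_{\operatorname{rotationMap}}$ (together with $\varphi_{\operatorname{base}}$ at the root) contributes $2D^2$ $E$-tuples per element, and $\varphi_{\operatorname{tree}}$ contributes either $D^4$ outgoing $F$-children or $D^4$ $L_k$-self-loops, together with either one incoming $F$-tuple or the $R$-self-loop of the root, yielding $2D^2 + D^4 + 1 = d$ in every case.

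Once $d$-regularity is in hand, the conclusion follows quickly. On $\classStruc{C}_{\sigma,d}$ we have $\varphi_{\zigzag} \equiv_d \varphi_{\zigzag} \wedge \varphi^d_{\operatorname{reg}}$, and the latter has only $d$-regular models over all structures. The syntactic formulation stated immediately after the proof of Lemma~\ref{lem:d-regHNF} therefore provides a $\Pi_2$ sentence $\psi$ with $\psi \equiv_d \varphi_{\zigzag} \wedge \varphi^d_{\operatorname{reg}}$, and hence $\psi \equiv_d \varphi_{\zigzag}$. Taking $\psi$ as the witness completes the argument.

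The only non-routine point is the bookkeeping needed to verify $d$-regularity from the rather intricate definition of $\varphi_{\zigzag}$; once the $R$- and $L_k$-self-loops are accounted for, all the tuple counts balance by design. All the substantive content -- the zig-zag-based lower bound and the Hanf-theoretic normal form for $d$-regular structures -- is already packaged in Theorem~\ref{thm:nonTestabilityForStructures} and Lemma~\ref{lem:d-regHNF}, so the proof of Theorem~\ref{thm:pi2} itself reduces to assembling these ingredients.
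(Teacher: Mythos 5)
Your proposal is correct and follows essentially the same route as the paper: fix $d=2D^2+D^4+1$ and $\sigma$ as in Section~\ref{sec: definitionFormula}, invoke Theorem~\ref{thm:nonTestabilityForStructures} for non-testability of $\classStruc{P}_{\zigzag}$, and invoke Lemma~\ref{lem:d-regHNF} (observing that all models of $\varphi_{\zigzag}$ are $d$-regular) to conclude $\varphi_{\zigzag}$ is $d$-equivalent to a $\Pi_2$ sentence. Your variant of routing through the syntactic form $\varphi_{\zigzag}\wedge\varphi^d_{\operatorname{reg}}$ is a harmless and slightly more self-contained way to satisfy the lemma's hypothesis, but the content is identical.
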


\begin{proof}
	Pick $d=2D^2+D^4+1$ for any large prime power $D$. Then using the construction from \cite{Reingold00entropywaves} we can find a $(D^4,D,1/4)$-graph $H$. By Theorem~\ref{thm:nonTestabilityForStructures}, using this base expander $H$ for the construction of the formula $\varphi_{\zigzag}$ we get a property which is not testable on $\classStruc{C}_{\sigma,d}$. Since all models of $\varphi_{\zigzag}$ are $d$-regular by construction, Lemma \ref{lem:d-regHNF} gives us that $\varphi_{\zigzag}$ is $d$-equivalent to a formula in $\Pi_2$.
\end{proof}

\section{Reducing to simple undirected graphs}\label{sec:reduction_graphs}
By our previous argument, to show the existence of a non-testable $\Pi_2$-property for simple graphs, \ie undirected graphs without parallel edges and without self-loops, it suffices to construct a non-testable FO graph property of degree regular graphs. 
To do so, we reduce testing the $\sigma$-structure property $\classStruc{P}_{\zigzag}$ from the previous sections to testing a property $\graphProp$ of simple graphs of bounded degree $3$. 
To construct the reduction we carefully translate the edge-coloured directed graphs ($\sigma$-structures) of our previous 
example in Section \ref{sec: definitionFormula} to simple graphs. 
We encode  
$\sigma$-structures by representing each type of directed edge by a constant size graph gadget, maintaining the degree regularity. We then translate the formula $\varphi_{\zigzag}$  into a formula $\graphFormula$ defining the graph property $\graphProp$. This proves the following result. 
\begin{theorem}\label{thm:simpleDelta2}
	There exists an FO property of simple graphs of bounded degree $3$ definable by a formula in $\Pi_2$ that is not testable. 
\end{theorem}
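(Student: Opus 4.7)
The plan is to construct a degree-preserving, distance-preserving local reduction from $\classStruc{C}_{\sigma,d}$ to simple $3$-regular graphs, transfer $\classStruc{P}_{\zigzag}$ across this reduction to obtain a non-testable graph property $\graphProp$, show that $\graphProp$ is FO-definable, and finally invoke Lemma~\ref{lem:d-regHNF} to conclude it is in $\Pi_2$.

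First, I would design a family of small, rigid gadgets, one per symbol in $\sigma$. For each element $a\in\univ{\struc{A}}$ of a $\sigma$-structure $\struc{A}\in\classStruc{C}_{\sigma,d}$, introduce an \emph{element cycle} $C_a$ of length $\Theta(d)$, whose vertices have degree $2$ inside the cycle and thus one free ``port'' each. For every tuple $(u,v)\in\rel{R}{\struc{A}}$, attach a constant-size
gadget $\Gamma_R$ with two distinguished ports, one glued to the appropriate port of $C_u$ and the other to $C_v$; the internal structure of $\Gamma_R$ (e.g.\ a short tag path or a small asymmetric cycle) is chosen so that (i) all internal vertices have degree exactly $3$, (ii) no two gadgets $\Gamma_R,\Gamma_{R'}$ for distinct symbols are isomorphic, (iii) the gadget is asymmetric in its two ports (so that a directed tuple can be distinguished from its reverse), and (iv) the gadgets contain no short cycles so that locally one can identify which element cycle a port belongs to. With each element cycle also topped up to degree $3$ (using a short pendant that distinguishes element-cycle vertices from gadget-internal vertices), the resulting graph $G_{\struc{A}}$ is simple and $3$-regular. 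Denote the reduction by $\iota:\struc{A}\mapsto G_{\struc{A}}$ and set $\graphProp := \{G \in \mathcal{C}_3 \mid G\cong G_{\struc{A}}\text{ for some }\struc{A}\in\classStruc{P}_{\zigzag}\}$.

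Next, I would write down the formula $\graphFormula$ in two layers. A $\Pi_1$ local-validity part $\psi_{\operatorname{loc}}$ says that the $r_0$-neighbourhood of every vertex looks like one of the finitely many permissible gadget-plus-cycle patches, for a fixed constant $r_0$ depending only on the gadget sizes; this forces the graph to decompose as a disjoint union of element cycles joined by valid gadgets. Using this decomposition, one defines FO formulas $\mathrm{elt}(x)$ (``$x$ lies on an element cycle'') and, for each $R\in\sigma$, an FO formula $\phi_R(x,y)$ saying ``there is an $R$-gadget with its first port on $x$'s element cycle and its second port on $y$'s element cycle''. This is an FO interpretation $I$ of $\sigma$-structures in graphs. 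The standard translation $\varphi_{\zigzag}^I$ (replacing each atomic $R(x,y)$ by $\phi_R(x,y)$ and restricting quantifiers to $\mathrm{elt}(x)$) is an FO sentence whose models among graphs satisfying $\psi_{\operatorname{loc}}$ are precisely the graphs $G_{\struc{A}}$ with $\struc{A}\models\varphi_{\zigzag}$. Set $\graphFormula := \psi_{\operatorname{loc}} \wedge \varphi_{\zigzag}^I$; then $\graphProp=\classStruc{P}_{\graphFormula}$.

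For non-testability, I would show that $\iota$ is a local reduction in the sense that distances change only by a constant factor. On one hand, edit distance is preserved up to $O(1)$: modifying one tuple in $\struc{A}$ corresponds to removing and inserting one constant-size gadget in $G_{\struc{A}}$, so $\dist(G_{\struc{A}},G_{\struc{B}})\le c_1\cdot\dist(\struc{A},\struc{B})$ for a constant $c_1$ depending on the gadget sizes. Conversely, any graph $G'$ that is $\varepsilon$-close in $\mathcal{C}_3$ to $\graphProp$ can, by rigidity of the gadgets and $\psi_{\operatorname{loc}}$-validity, be ``rounded'' to some $G_{\struc{B}}$ with $\struc{B}\in\classStruc{P}_{\zigzag}$ while inserting at most $O(\varepsilon dn)$ further edge edits, giving $\dist(\struc{A}',\struc{B})\le c_2\cdot\varepsilon$ for the pre-image $\struc{A}'$ of $G'$ (after rounding malformed patches). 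Plugging an $\struc{A}$ that is $\varepsilon_0$-far from $\classStruc{P}_{\zigzag}$ (which exists by Theorem~\ref{thm:nonTestabilityForStructures}) yields a $G_{\struc{A}}$ that is $\Omega(\varepsilon_0)$-far from $\graphProp$ but whose $r$-neighbourhood distribution is $\lambda$-close to that of $\iota(\struc{B})$ for $\struc{B}\in\classStruc{P}_{\zigzag}$; hence $\graphProp$ is not repairable and, by Theorem~\ref{thm:Locality}, not testable. Finally, since every graph in $\graphProp$ is $3$-regular by construction, Lemma~\ref{lem:d-regHNF} gives a sentence in $\Pi_2$ that is $3$-equivalent to $\graphFormula$, which proves the theorem.

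The main obstacle will be the rigidity/identifiability of the gadgets: the local constraints $\psi_{\operatorname{loc}}$ together with the interpretation formulas $\phi_R$ must uniquely recover the $\sigma$-structure from its encoding even on locally perturbed graphs, while simultaneously keeping the degree exactly $3$ and preserving distance up to a constant factor; designing gadgets with distinguishable, asymmetric port structure and large enough girth to prevent accidental spurious matches is the technical heart of the reduction.
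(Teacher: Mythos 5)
Your proposal follows essentially the same route as the paper's proof: a gadget-based, degree-$3$-preserving local reduction that replaces each element by a cycle of length $d$ and each tuple by a rigid, pairwise non-isomorphic ``arrow'' gadget, an FO interpretation (relativised quantifiers plus substituted atomic relations) to define $\graphProp$, transfer of non-testability via Lemma~\ref{lem:localReduction} applied to Theorem~\ref{thm:nonTestabilityForStructures}, and Lemma~\ref{lem:d-regHNF} on the resulting $3$-regular models to land in $\Pi_2$. The only cosmetic difference is that you sketch re-deriving non-repairability directly, whereas the paper simply invokes the local-reduction lemma; the gadget design and the chain of reductions are otherwise the same.
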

In the rest of this section, we prove the above theorem via local reductions from a structural property to a graph property, and the non-testable $\Pi_2$-property in Theorem \ref{thm:pi2}. This technique will also be in the proofs in Section \ref{sec:GSFlocality}. We remark that there is an alternative proof of the above theorem, which might be of independent interest. That is, we can prove that the property $\graphProp$ is a class of $\xi$-expanders for some constant $\xi>0$ (Lemma~\ref{lemma:undirected_expander}), 
which yields that there are classes of (simple undirected) expanders which are definable in FO (Theorem~\ref{thm:expandingClassDefinableInFO}). 
Then we make use of a result that no property of expanders is testable which is a corollary of the main result of \cite{fichtenberger2019every}. Details are outlined in Appendix~\ref{app:C}. 

\subsection{Local reductions}
 We first introduce the following notion of local reduction between two property testing models. In the following, when the context is clear, we will use $\mathcal{C}$ to denote both a class of structures and the corresponding property testing model, which can be either the bounded-degree model for graphs or bounded-degree model for relational structures.   
\begin{definition}[Local reduction]
	Let $\mathcal{C},\mathcal{C'}$ be two property testing models 
	and let $\mathcal{P}\subseteq\mathcal{C}$, $\mathcal{P}'\subseteq\mathcal{C'}$ be two properties. We say that a function $f:\mathcal{C}\rightarrow \mathcal{C'}$ is a local reduction from $\mathcal{P}$ to $\mathcal{P}'$ if there are constants $c_1,c_2\in \mathbb{N}_{\geq 1}$ such that for every $X\in \mathcal{C}$ the following properties hold.
	\begin{enumerate}
		\item If $X\in \mathcal{P}$ then $f(X)\in \mathcal{P}'$.
		\item If $X$ is $\epsilon$-far from $\mathcal{P}$ then $f(X)$ is $(\epsilon/c_1)$-far from $\mathcal{P}'$.
		\item For every query to $f(X)$ we can adaptively\footnote{By adaptively computing queries we mean that the selection of the next query may depend on the answer to the previous query.} compute $c_2$ queries to $X$ such that the answer to the query to $f(X)$ can be computed  from the answers to the $c_2$ queries to $X$.
	\end{enumerate}
\end{definition}
The following lemma is known. 
\begin{lemma}[Theorem 7.14 in \cite{goldreich2017introduction}]\label{lem:localReduction}
	Let $\mathcal{C},\mathcal{C'}$ be two property testing models, $\mathcal{P}\subseteq\mathcal{C}$, $\mathcal{P}'\subseteq\mathcal{C'}$ be two properties	and $f$ a local reduction from $\mathcal{P}$ to $\mathcal{P}'$. If $\mathcal{P}'$ is testable then so is $\mathcal{P}$.
\end{lemma}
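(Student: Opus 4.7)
The plan is the standard reduction argument: turn an $\epsilon$-tester for $\mathcal{P}'$ into an $\epsilon$-tester for $\mathcal{P}$ by running it on the (conceptual) image $f(X)$ while simulating the oracle for $f(X)$ through the oracle for $X$. Concretely, assume $\mathcal{P}'$ is testable with query complexity $q'(\varepsilon',d')$. Given query access to $X\in\mathcal{C}$ and a proximity parameter $\epsilon\in(0,1)$, I would set $\epsilon':=\epsilon/c_1$ (where $c_1$ is the constant from the definition of local reduction) and simulate the $\epsilon'$-tester for $\mathcal{P}'$ as if it were running on $f(X)$.

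Whenever the simulated tester issues a query $(v,i)$ to $f(X)$, I would invoke condition~(3) of the local reduction: adaptively compute up to $c_2$ queries to $X$, pose them to the oracle for $X$, and use the answers to determine the answer that $f(X)$'s oracle would have given. This answer is then fed back into the simulated tester, which either continues or halts with its own accept/reject decision; that decision is the output of the new tester on $X$. Since $c_2$ is a constant, the total query complexity of the new tester is at most $c_2\cdot q'(\epsilon/c_1,d')$, which depends only on $\epsilon$ and the degree bound, as required.

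For correctness, the two cases are handled by conditions~(1) and~(2) of the local reduction. If $X\in\mathcal{P}$ then $f(X)\in\mathcal{P}'$, so the simulated $\epsilon'$-tester accepts $f(X)$ with probability at least $2/3$, hence the new tester accepts $X$ with probability at least $2/3$. If $X$ is $\epsilon$-far from $\mathcal{P}$ then $f(X)$ is $(\epsilon/c_1)$-far from $\mathcal{P}'$, i.e.\ $\epsilon'$-far, so the simulated tester rejects $f(X)$ with probability at least $2/3$, and the new tester rejects $X$ with the same probability.

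There is no real obstacle here; the only point worth double-checking is that the simulation faithfully answers each oracle query to $f(X)$ using only constantly many queries to $X$, but this is exactly what condition~(3) supplies (and the adaptivity allowed there is harmless because the outer tester's randomness and state are preserved across the simulated query). Thus any testable $\mathcal{P}'$ yields, via $f$, a tester for $\mathcal{P}$, proving the lemma.
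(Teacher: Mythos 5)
Your proof is correct and is precisely the standard simulation argument that the cited source (Theorem 7.14 in Goldreich's book) uses; the paper itself gives no proof, simply citing it as known. Nothing to add.
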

\subsection{Constructing the local reduction}\label{sec:localreduction}
Now we construct a property $\graphProp$ of $3$-regular graphs from the property $\classStruc{P}_{\zigzag}$. We obtain this graph property as $f(\classStruc{P}_{\zigzag})$ by defining a map $f:\classStruc{C}_{\sigma,d}\rightarrow \mathcal{C}_3$. 
To define $f$ we  introduce a distinct arrow-graph gadget for every relation in $\sigma$ (\ie for every edge colour). The map $f$ then replaces every tuple in a certain relation (every coloured, directed edge) by the respective arrow-graph gadget. Here all arrow gadgets are designed to allow for $3$-regularity of the reduced graph. To obtain $3$-regularity we additionally replace every element of a structure in $\classStruc{P}_{\zigzag}$ by a cycle  
of length $d$ such that each arrow-graph gadget can be incident to a unique vertex of the circle. We further prove that this replacement operation defines a local reduction $f$ from $\classStruc{P}_{\zigzag}$ to $\graphProp$. Recall that a local reduction is a function maintaining distance that can be simulated locally by queries. Since by Lemma~\ref{lem:localReduction} local reductions preserve testability, we use the local reduction from $\classStruc{P}_{\zigzag}$ to $\graphProp$ to obtain non-testability of the property  $\graphProp$ from the non-testability of $\classStruc{P}_{\zigzag}$. We will now define $f$ formally.

We first define building blocks which will be combined to different arrow-graph gadgets.
Let $H_1(u,v)$ be the graph with vertex set $\{u=u_0,\dots,v=u_5\}$ and edge set $\{\{u_i,u_{i+3}\}\mid i\in \{0,1,2\}\}$. Next we let $H_2(u,v)$
be the graph with vertex set $\{u=u_0,\dots, v=u_5\}$ and edge set $\{\{u_0,u_6\},\{u_i,u_{i+2}\}\mid i\in \{1,2\}\}$. 
Let $H_3(u,v)$
be the graph with vertex set $\{u=u_0,\dots, v=u_9\}$ and edge set $\{\{u_0,u_9\},\{u_i,u_{i+2}\}\mid i\in \{1,2,5,6\}\}$. Let $H_4(u)$
be the graph with vertex set $\{u=u_0,\dots,u_4\}$ and edge set $\{\{u_0,u_3\},\{u_1,u_{4}\},\{u_2,u_4\}\}$.  See Figure~\ref{fig:buildingBlocks} for illustration.
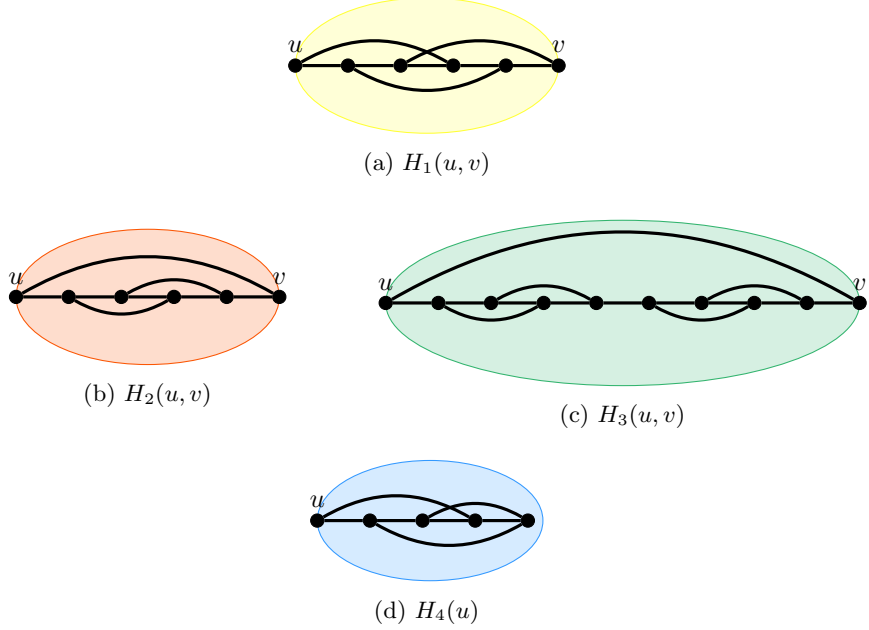
\begin{figure*}
	\definecolor{C1}{RGB}{1,1,1}
	\definecolor{C2}{RGB}{255,255,55}
	\definecolor{C3}{RGB}{251,86,4}
	\definecolor{C4}{RGB}{50,180,110}
	\definecolor{C5}{RGB}{51,153,255}
	\centering
	\centering
	\begin{minipage}{0.4\linewidth}
		\centering
		\begin{tikzpicture}
		\tikzstyle{ns1}=[line width=1.2]
		\tikzstyle{ns2}=[line width=1.2]
		\def \dist {0.7}
		\def \heightWriting {0.2}
		\draw [C2, fill=C2!20](2.5*\dist,0) ellipse (1.75cm and 0.9cm);
		\node[draw,circle,fill=black,inner sep=0pt, minimum width=5pt] (0) at (0,0) {};
		\node[draw,circle,fill=black,inner sep=0pt, minimum width=5pt] (1) at (\dist,0) {};	
		\node[draw,circle,fill=black,inner sep=0pt, minimum width=5pt] (2) at (2*\dist,0) {};
		\node[draw,circle,fill=black,inner sep=0pt, minimum width=5pt] (3) at (3*\dist,0) {};
		\node[draw,circle,fill=black,inner sep=0pt, minimum width=5pt] (4) at (4*\dist,0) {};
		\node[draw,circle,fill=black,inner sep=0pt, minimum width=5pt] (5) at (5*\dist,0) {};
		\draw[ns1] (0)--(1)--(2)--(3)--(4)--(5);
		\path[ns1] (0) edge   [bend left] (3);	
		\path[ns1] (1) edge   [bend right] (4);
		\path[ns1] (2) edge   [bend left] (5);

		\node[minimum height=10pt,inner sep=0] at (0,\heightWriting+0.05) {$u$};		
		\node[minimum height=10pt,inner sep=0] at (5*\dist,\heightWriting+0.05) {$v$};		
		
		\end{tikzpicture}
		\subcaption{$H_1(u,v)$} 	
	\end{minipage}
	\hfill
	\begin{minipage}{0.4\linewidth}
		\centering
		\begin{tikzpicture}
		\tikzstyle{ns1}=[line width=1.2]
		\tikzstyle{ns2}=[line width=1.2]
		\def \dist {0.7}
		\def \heightWriting {0.2}
		\draw [C3, fill=C3!20](2.5*\dist,0) ellipse (1.75cm and 0.9cm);
		\node[draw,circle,fill=black,inner sep=0pt, minimum width=5pt] (0) at (0,0) {};
		\node[draw,circle,fill=black,inner sep=0pt, minimum width=5pt] (1) at (\dist,0) {};	
		\node[draw,circle,fill=black,inner sep=0pt, minimum width=5pt] (2) at (2*\dist,0) {};
		\node[draw,circle,fill=black,inner sep=0pt, minimum width=5pt] (3) at (3*\dist,0) {};
		\node[draw,circle,fill=black,inner sep=0pt, minimum width=5pt] (4) at (4*\dist,0) {};
		\node[draw,circle,fill=black,inner sep=0pt, minimum width=5pt] (5) at (5*\dist,0) {};
		\draw[ns1] (0)--(1)--(2)--(3)--(4)--(5);
		\path[ns1] (0) edge   [bend left] (5);	
		\path[ns1] (1) edge   [bend right] (3);
		\path[ns1] (2) edge   [bend left] (4);

		\node[minimum height=10pt,inner sep=0] at (0,\heightWriting+0.05) {$u$};		
		\node[minimum height=10pt,inner sep=0] at (5*\dist,\heightWriting+0.05) {$v$};		
		
		\end{tikzpicture}
		\subcaption{$H_2(u,v)$} 	
	\end{minipage}
	\begin{minipage}{0.57\linewidth}
		\centering
		\begin{tikzpicture}
		\tikzstyle{ns1}=[line width=1.2]
		\tikzstyle{ns2}=[line width=1.2]
		\def \dist {0.7}
		\def \heightWriting {0.2}
		\draw [C4, fill=C4!20](4.5*\dist,0) ellipse (3.15cm and 1.1cm);
		\node[draw,circle,fill=black,inner sep=0pt, minimum width=5pt] (0) at (0,0) {};
		\node[draw,circle,fill=black,inner sep=0pt, minimum width=5pt] (1) at (\dist,0) {};	
		\node[draw,circle,fill=black,inner sep=0pt, minimum width=5pt] (2) at (2*\dist,0) {};
		\node[draw,circle,fill=black,inner sep=0pt, minimum width=5pt] (3) at (3*\dist,0) {};
		\node[draw,circle,fill=black,inner sep=0pt, minimum width=5pt] (4) at (4*\dist,0) {};
		\node[draw,circle,fill=black,inner sep=0pt, minimum width=5pt] (5) at (5*\dist,0) {};
		\node[draw,circle,fill=black,inner sep=0pt, minimum width=5pt] (6) at (6*\dist,0) {};
		\node[draw,circle,fill=black,inner sep=0pt, minimum width=5pt] (7) at (7*\dist,0) {};
		\node[draw,circle,fill=black,inner sep=0pt, minimum width=5pt] (8) at (8*\dist,0) {};
		\node[draw,circle,fill=black,inner sep=0pt, minimum width=5pt] (9) at (9*\dist,0) {};
		\draw[ns1] (0)--(1)--(2)--(3)--(4)--(5)--(6)--(7)--(8)--(9);
		\path[ns1] (0) edge   [bend left] (9);	
		\path[ns1] (1) edge   [bend right] (3);
		\path[ns1] (2) edge   [bend left] (4);
		\path[ns1] (5) edge   [bend right] (7);
		\path[ns1] (6) edge   [bend left] (8);

		\node[minimum height=10pt,inner sep=0] at (0,\heightWriting+0.05) {$u$};		
		\node[minimum height=10pt,inner sep=0] at (9*\dist,\heightWriting+0.05) {$v$};		
		
		\end{tikzpicture}
		\subcaption{$H_3(u,v)$} 	
	\end{minipage}
\hfill
	\begin{minipage}{0.35\linewidth}
		\centering
		\begin{tikzpicture}
		\tikzstyle{ns1}=[line width=1.2]
		\tikzstyle{ns2}=[line width=1.2]
		\def \dist {0.7}
		\def \heightWriting {0.2}
		\draw [C5, fill=C5!20](2*\dist+0.1,0) ellipse (1.5cm and 0.8cm);
		\node[draw,circle,fill=black,inner sep=0pt, minimum width=5pt] (0) at (0,0) {};
		\node[draw,circle,fill=black,inner sep=0pt, minimum width=5pt] (1) at (\dist,0) {};	
		\node[draw,circle,fill=black,inner sep=0pt, minimum width=5pt] (2) at (2*\dist,0) {};
		\node[draw,circle,fill=black,inner sep=0pt, minimum width=5pt] (3) at (3*\dist,0) {};
		\node[draw,circle,fill=black,inner sep=0pt, minimum width=5pt] (4) at (4*\dist,0) {};
		\draw[ns1] (0)--(1)--(2)--(3)--(4);
		\path[ns1] (0) edge   [bend left] (3);	
		\path[ns1] (1) edge   [bend right] (4);
		\path[ns1] (2) edge   [bend left] (4);

		\node[minimum height=10pt,inner sep=0] at (0,\heightWriting+0.05) {$u$};			
		
		\end{tikzpicture}
		\subcaption{$H_4(u)$} 	
	\end{minipage}
	
	\caption{Illustration of the different building blocks used to define the arrow gadgets.}\label{fig:buildingBlocks}
\end{figure*}

Let $\ell$ be the number of relations (the number of edge colours) in $\sigma$. 
We now introduce the different types of arrow-graph gadgets we need to define the local reduction.
For $1\leq k\leq \ell$, we let $H_{\rightarrow}^k(u_0,v_{2\ell})$ be the graph consisting of $2\ell-1$ vertex disjoint copies $H_1(u_0,v_0), \dots, H_1(u_{k-1},v_{k-1}), H_1(u_{k+1},v_{k+1})$, $\dots, H_1(u_{2\ell-1},v_{2\ell-1})$, one copy $H_2(u_k,v_k)$, one copy $H_3(u_{2\ell},v_{2\ell})$ and additional edges $\{v_i,u_{i+1}\}$ for each $i\in [2\ell]$ connecting the respective copies.  
Note that $H_{\rightarrow}^k(u_0,v_{2\ell})$ has $12\ell +10$ vertices and every vertex apart from $u_0,v_{2\ell}$ has degree $3$. We call $H_{\rightarrow}^k(u_0,v_{2\ell})$ a \emph{$k$-arrow}.  For any graph $G$ and vertices $u,v\in V(G)$, we say that there is a $k$-arrow from $u$ to $v$, denoted $u\xrightarrow{k}v$, if there are $12\ell+8$ vertices $w_1,\dots,w_{12\ell+8}\in V(G)$ and an isomorphism  $g:H_{\rightarrow}^k(u_0,v_{2\ell})\rightarrow \mathcal{N}_1^G(w_1,\dots,w_{12\ell+8})$ such that $g(u_0)=u$ and $g(v_{2\ell})=v$. Note that requiring an isomorphism with these properties guarantees that no vertex contained in a $k$-arrow has neighbours not contained in the $k$-arrow with the exception of the end vertices $u$ and $v$. For any collection $w_1,\dots,w_{12\ell+10}$ of vertices we let $E_{\rightarrow}^k(w_1,\dots,w_{12\ell+10})$ be a set of edges such that there is a graph isomorphism $f:H_{\rightarrow}^k(u_0,v_{2\ell})\rightarrow \big(\{w_1,\dots,w_{12\ell+10}\},E_{\rightarrow}^k(w_1,\dots,w_{12\ell+10})\big)$ with $f(u_0)=w_1$ and $f(v_{2\ell})=w_{12\ell+10}$.  

We now define a second arrow gadget. For $1\leq k\leq \ell$, let $H_{\circlearrowleft}^k(u_0)$ be the graph consisting of $\ell-1$ vertex disjoint copies $H_1(u_0,v_0), \dots$, $H_1(u_{k-1},v_{k-1})$, $H_1(u_{k+1},v_{k+1})$, $\dots, H_1(u_{\ell-1},v_{\ell-1})$, one copy $H_2(u_k,v_k)$, one copy $H_4(u_{\ell})$ and edges $\{v_i,u_{i+1}\}$ for each $i\in [\ell-1]$.  
Note that $H_{\circlearrowleft}^k(u_0)$ has $6\ell +5$ vertices and every vertex apart from $u_0$ has degree $3$. We call $H_{\circlearrowleft}^k$ a \emph{$k$-loop}.  For any graph $G$ and vertex $u\in V(G)$, we say that there is a $k$-loop at $u$, denoted $u\xrightarrow{k}u$, if there are $6\ell+4$ vertices $w_1,\dots,w_{6\ell+4}\in V(G)$ and an isomorphism  $g:H_{\circlearrowleft}^k(u_0)\rightarrow \mathcal{N}_1^G(w_1,\dots,w_{6\ell+4})$ such that $g(u_0)=u$. For any collection $w_1,\dots,w_{6\ell+5}$ vertices we let $E_{\circlearrowleft}^k(w_1\dots,w_{6\ell+5})$ be a set of edges for which there is an isomorphism $f:H_{\circlearrowleft}^k(u_0)\rightarrow \big(\{w_1,\dots,w_{6\ell+5}\},E_{\circlearrowleft}^k(w_1,\dots,w_{6\ell+5})\big)$ for which $f(u_0)=w_1$.

Finally, let $H_{\bot}(u_0)$ be the graph consisting of $\ell$ vertex disjoint copies $H_1(u_0,v_0)$, $\dots$, $H_1(u_{\ell-1},v_{\ell-1})$, one copy $H_4(u_{\ell})$ and additional edges $\{v_i,u_{i+1}\}$ for each $i\in [\ell-1]$.  
Note that $H_\bot(u_0)$ has $6\ell +5$ vertices and every vertex apart from $u_0$ has degree $3$. We call $H_{\bot}$ a \emph{non-arrow}.  For any graph $G$ and vertex $u\in V(G)$, we say that there is a non-arrow at $u$, denoted $u\not\rightarrow$, if there are $6\ell+4$ vertices $w_1,\dots,w_{6\ell+4}\in V(G)$ and an isomorphism  $g:H_\bot\rightarrow \mathcal{N}_1^G(w_1,\dots,w_{6\ell+4})$ such that $g(u_0)=u$. For any collection $w_1,\dots,w_{6\ell+5}$ vertices we let $E_{\bot}(w_1\dots,w_{6\ell+5})$ be a set of edges for which there is an isomorphism $f:H_{\bot}(u_0)\rightarrow \big(\{w_1,\dots,w_{6\ell+5}\},E_{\circlearrowleft}^k(w_1,\dots,w_{6\ell+5})\big)$ for which $f(u_0)=w_1$.  

We now define a function $f:\classStruc{C}_{\sigma,d}\rightarrow \mathcal{C}_3$ by	$f(\struc{A}):=G_\struc{A}$, where $G_{\struc{A}}$ is the graph on vertex set $V(G_{\struc{A}}):= \{u_{a,i},v^k_{a,i}\mid  1\leq i\leq d, a\in \univ{A},1\leq k\leq 6\ell+5\}$ and edge set $E(G_{\struc{A}})$ defined by
\begin{align*}
&\Big\{\{u_{a,i},v^1_{a,i}\}\mid a\in \univ{A}, 1\leq i\leq d\Big\}\\
\cup&\Big\{\{u_{a,d},u_{a,1}\},\{u_{a,i},u_{a,i+1}\}\mid a\in \univ{A}, 1\leq i\leq d-1\Big\} \\
\cup &\bigcup_{ \operatorname{ans}(a,i)=\operatorname{ans}(b,j)=(k,a,b)\atop{a\not=b}}E_{\rightarrow}^k\Big(v_{a,i}^1,\dots,v_{a,i}^{6\ell+5},v_{b,j}^{6\ell+5},\dots,v_{b,j}^1\Big) \\
\cup & \bigcup_{\operatorname{ans}(a,i)=(k,a,a)}E_{\circlearrowleft}^k\Big(v_{a,i}^1,\dots,v_{a,i}^{6\ell+5} \Big)  \\ 
\cup& \bigcup_{\operatorname{ans}(a,i)=\bot}E_{\bot}\Big(v_{a,i}^1,\dots,v_{a,i}^{6\ell+5}\Big),
\end{align*}
where $\operatorname{ans}(a,i)=(k,a,b)$ denotes that the $i$-th tuple of $a$ is $(a,b)$ and is in the $k$-th relation. 
Hence $G_{\struc{A}}$ is defined in such a way that every element $a\in \univ{A}$ is represented by an induced cycle $(u_{a,1},\dots,u_{a,d},u_{a,1})$ and if $(a,b)$ is a tuple in the $k$-th relation of $\sigma$ in $\struc{A}$, then $u_{a,i}\xrightarrow{k}u_{b,j}$ in $G_{\struc{A}}$ for some $1\leq i,j\leq d$,  and $u_{a,i}$ has a non-arrow for every $i$ satisfying that $\operatorname{ans}(a,i)=\bot$ for every $k$. Note that $G_{\struc{A}}$ is $3$ regular by construction for every $\struc{A}\in \classStruc{C}_{\sigma,d}$. For illustration see Figure~\ref{fig:arrowGraphGadgets}.
In the following we refer to vertices of $G_\struc{A}$ of the form $u_{a,i}$ by \emph{element-vertices} while we call vertices of the form $v_{a,i}^j$ \emph{relation-vertices}. 
The following is easy to observe from the construction and from the fact that $d=2D^2+D^4+1<3D^4+1=|\sigma|=\ell$ for some large prime power $D$ (see Section \ref{sec: definitionFormula} for definitions).  
\begin{fact}\label{rem:elementVerts}
	For every $u\in V(G_\struc{A})$, 
	$u$ is an element-vertex iff $u$ is contained in a cycle 
	of length $d$. Furthermore, two vertices $u,v\in V(G_\struc{A})$ correspond to the same element $a$ of $\struc{A}$ (\ie there are $i,j \in \{1,\dots,d\}$ such that $u=u_{a,i}$ and $v=u_{a,j}$) iff there is a cycle of length $d$ containing both $u$ and $v$.
\end{fact}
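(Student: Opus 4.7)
The forward direction is immediate from the construction: for any $a \in \univ{A}$, the vertices $u_{a,1}, u_{a,2}, \ldots, u_{a,d}$ are joined by the edges $\{u_{a,i}, u_{a,i+1}\}$ together with $\{u_{a,d}, u_{a,1}\}$, giving a cycle of length exactly $d$. Hence every element-vertex lies on such a cycle, and any two element-vertices from the same element lie on a common cycle of length $d$.

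For the reverse direction I will show the two equivalent statements: (i) no relation-vertex lies on any cycle of length $d$, and (ii) two element-vertices $u_{a,i}$ and $u_{b,j}$ with $a\neq b$ do not share a cycle of length $d$. The key structural observation is the \emph{block decomposition} of each arrow gadget. Each gadget ($H_{\rightarrow}^k$, $H_{\circlearrowleft}^k$, or $H_{\bot}$) is attached to the rest of $G_\struc{A}$ only at its anchor element-vertices ($u_0$, together with $v_{2\ell}$ for $H_{\rightarrow}^k$), and internally consists of copies of $H_1, H_2, H_3, H_4$ glued by single edges $\{v_j, u_{j+1}\}$. These bridging edges are cut edges within the gadget, so any cycle of $G_\struc{A}$ that uses vertices from an arrow gadget either stays inside a single building block, or leaves the gadget through its anchor(s).

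For (i), let $v$ be a relation-vertex in some gadget $G_0$ and let $C$ be a cycle through $v$. If $C$ is contained in one building block, then by direct inspection of $H_1,\dots,H_4$ (each of size at most $10$), we get $|C| \le 10 < d$. Otherwise $C$ must exit $G_0$ at an anchor; but any $u_0$-to-$v_{2\ell}$ path inside $H_{\rightarrow}^k$ is forced to cross every one of the $2\ell$ bridges and traverse each of the $2\ell+1$ inner blocks, using at least $3$ edges per block, yielding length $\ge 8\ell + 3$. Since $\ell = 3D^4+1 > D^4 + 2D^2 + 1 = d$, the portion of $C$ inside $G_0$ already exceeds $d$. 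A similar count handles $H_{\circlearrowleft}^k$ and $H_{\bot}$, and also the case where $C$ uses a bridge but returns through the external element-cycles and further gadgets—each such external excursion is either a full element-cycle ($d$ extra edges) or part of one, in each case making $|C| \neq d$. Thus no cycle of length exactly $d$ passes through $v$.

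For (ii), any cycle containing element-vertices $u_{a,i}$ and $u_{b,j}$ from distinct elements must cross between the element-cycle of $a$ and that of $b$, which requires at least one traversal of an $H_{\rightarrow}^k$-gadget. By the same bridge/block analysis, such a traversal contributes $\ge 8\ell + 3 > d$ edges, so the cycle cannot have length $d$. The \emph{main obstacle} is the bookkeeping in step (i): one must verify that the bridges in every gadget are cut edges (so cycles confined to a gadget lie inside a single block), check that the four building blocks really have girth and circumference bounded by a small constant, and finally assemble these facts with the key inequality $\ell > d$ to exclude every combination of internal block-cycles, partial gadget traversals, and element-cycle arcs that could sum to exactly $d$.
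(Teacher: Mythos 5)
Your proof is correct in structure and fleshes out exactly what the paper leaves as an ``easy observation''---the paper merely notes that the fact follows from the construction together with $d<\ell$, and your block-and-bridge decomposition is the natural way to make that precise. Two points of bookkeeping should be corrected, though neither affects the conclusion.

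First, the count ``at least $3$ edges per block'' is false for $H_2$ and $H_3$: both have the chord joining their two endpoints ($\{u_0,u_5\}$ in $H_2$, $\{u_0,u_9\}$ in $H_3$), so an endpoint-to-endpoint path through either can have length $1$. The correct lower bound on an anchor-to-anchor path through $H_\rightarrow^k$ is $3(2\ell-1)+1+1+2\ell=8\ell-1$, not $8\ell+3$. Since $\ell=3D^4+1>D^4+2D^2+1=d$, one still has $8\ell-1>d$, so the argument goes through.

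Second, for $H_\circlearrowleft^k$ and $H_\bot$ no ``similar count'' is needed: each has a single anchor $u_0=v_{a,i}^1$ and exactly one edge $\{u_{a,i},v_{a,i}^1\}$ joining the gadget to the rest of $G_{\struc{A}}$, so a simple cycle cannot enter and leave such a gadget (it would have to reuse that edge). Hence any cycle touching the interior of such a gadget stays entirely inside and, by your cut-edge argument, is confined to one block of at most $10$ vertices. Similarly, the last clause of your case (i) about cycles that ``use a bridge but return through the external element-cycles'' describes an impossibility: the bridge is a cut edge of the gadget, so after crossing it a cycle can only escape the far side through the second anchor---which is the anchor-to-anchor configuration you already bounded. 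The case split is therefore cleaner than your wording suggests: a cycle meeting a gadget's interior either lies entirely in one block (length $\le 10<d$) or traverses both anchors of an $H_\rightarrow^k$ gadget (length $\ge 8\ell+1>d$).
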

Note that we do not need to ask for cycles of length $d$ to be induced because the structure we obtain does not allow for cycles of length $d$ apart from the cycles corresponding to elements.

Now we  define property  $\graphProp:=\{f(\struc{A})\mid \struc{A}\in \classStruc{P}_{\zigzag}\}\subseteq \mathcal{C}_3$.
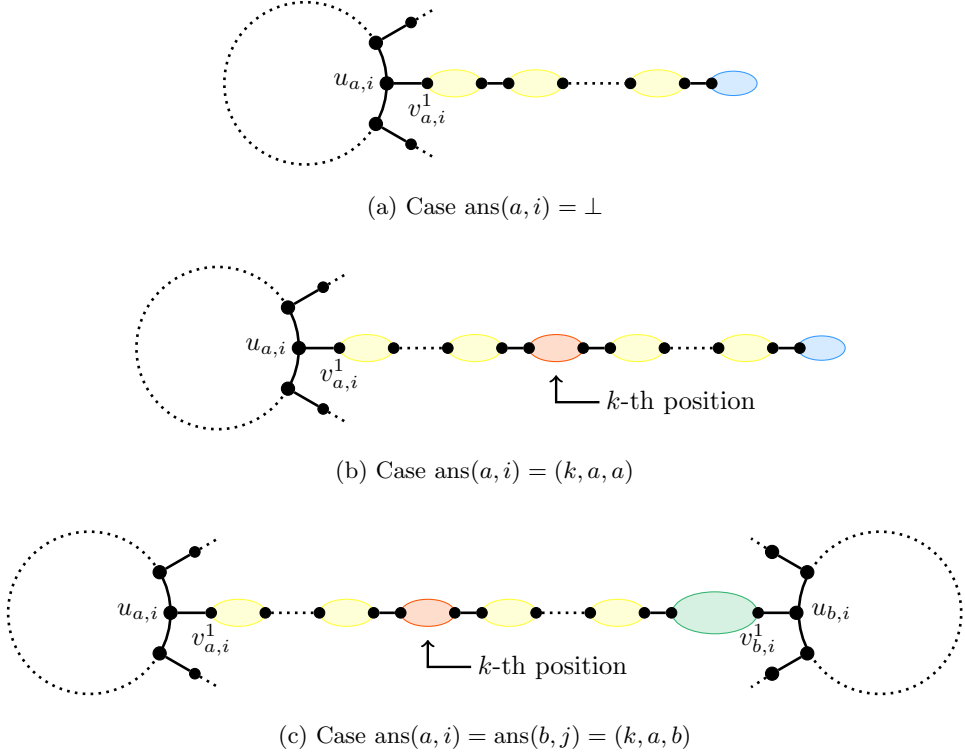
\begin{figure*}
	\definecolor{C1}{RGB}{1,1,1}
	\definecolor{C2}{RGB}{255,255,55}
	\definecolor{C3}{RGB}{251,86,4}
	\definecolor{C4}{RGB}{50,180,110}
	\definecolor{C5}{RGB}{51,153,255}
%	\centering
	\begin{minipage}{\linewidth}
		\centering
		\begin{tikzpicture}[scale = 0.9]
		\tikzstyle{ns1}=[line width=1]
		\tikzstyle{ns2}=[line width=0.9]
		\def \dist {0.4}
		\def \lengthEllipse {0.8}
		\def \heightWriting {0.4}
		\def \radius {1.2}
		\def \margin {0.4}
		\draw [C2, fill=C2!20](\radius+1.5*\dist+0.5*\lengthEllipse,0) ellipse (0.4cm and 0.2cm);
		\draw [C2, fill=C2!20](\radius+2.5*\dist+1.5*\lengthEllipse,0) ellipse (0.4cm and 0.2cm);
		\draw [C2, fill=C2!20](\radius+5*\dist+2.5*\lengthEllipse,0) ellipse (0.4cm and 0.2cm);
		\draw [C5, fill=C5!20](\radius+6*\dist+3.4*\lengthEllipse,0) ellipse (0.35cm and 0.18cm);
		\node[draw,circle,fill=black,inner sep=0pt, minimum width=5pt] (1) at (30:\radius) {};
		\node[draw,circle,fill=black,inner sep=0pt, minimum width=4pt] (1a) at (30:\radius+1.5*\dist) {};
		\node[draw,circle,fill=black,inner sep=0pt, minimum width=5pt] (2) at (0:\radius) {};
		\node[draw,circle,fill=black,inner sep=0pt, minimum width=5pt] (3) at (330:\radius) {};
		\node[draw,circle,fill=black,inner sep=0pt, minimum width=4pt] (3a) at (330:\radius+1.5*\dist) {};
		\draw[ns1,C1] ({30+\margin}:\radius) arc ({30+\margin}:{0-\margin}:\radius);
		\draw[ns1,C1] ({0+\margin}:\radius) arc ({0+\margin}:{-30-\margin}:\radius);
		\draw[ns1,C1,dotted] ({30+\margin}:\radius) arc ({30+\margin}:{330-\margin}:\radius);
		\node[circle,fill=white,inner sep=0pt, minimum width=0pt] (15) at (0,-0.3) {};
		\node[draw,circle,fill=black,inner sep=0pt, minimum width=4pt] (4) at (\radius+1.5*\dist,0) {};
		\node[draw,circle,fill=black,inner sep=0pt, minimum width=4pt] (5) at (\radius+1.5*\dist+\lengthEllipse,0) {};
		\node[draw,circle,fill=black,inner sep=0pt, minimum width=4pt] (6) at (\radius+2.5*\dist+\lengthEllipse,0) {};
		\node[draw,circle,fill=black,inner sep=0pt, minimum width=4pt] (7) at (\radius+2.5*\dist+2*\lengthEllipse,0) {};
		\node[draw,circle,fill=black,inner sep=0pt, minimum width=4pt] (8) at (\radius+5*\dist+2*\lengthEllipse,0) {};
		\node[draw,circle,fill=black,inner sep=0pt, minimum width=4pt] (9) at (\radius+5*\dist+3*\lengthEllipse,0) {};
		\node[draw,circle,fill=black,inner sep=0pt, minimum width=4pt] (10) at (\radius+6*\dist+3*\lengthEllipse,0) {};
		
		\draw[ns1] (4)--(2)(5)--(6)(9)--(10)(1)--(1a)(3)--(3a);
		\draw[ns1,dotted] (7)--(8)(1a)--(30:\radius+2.5*\dist)(3a)--(330:\radius+2.5*\dist);

		\node[minimum height=10pt,inner sep=0] at (0.6*\radius,0) {$u_{a,i}$};
		\node[minimum height=10pt,inner sep=0] at (\radius+1.5*\dist,-\heightWriting) {$v_{a,i}^1$};

		\end{tikzpicture}
		\subcaption{Case $\operatorname{ans}(a,i)=\bot$} 	
	\end{minipage}\\

	\begin{minipage}{\linewidth}
		\centering
		\begin{tikzpicture}[scale = 0.9]
		\tikzstyle{ns1}=[line width=1]
		\tikzstyle{ns2}=[line width=0.9]
		\def \dist {0.4}
		\def \lengthEllipse {0.8}
		\def \heightWriting {0.4}
		\def \radius {1.2}
		\def \margin {0.4}
		\draw [C2, fill=C2!20](\radius+1.5*\dist+0.5*\lengthEllipse,0) ellipse (0.4cm and 0.2cm);
		\draw [C2, fill=C2!20](\radius+3.5*\dist+1.5*\lengthEllipse,0) ellipse (0.4cm and 0.2cm);
		\draw [C3, fill=C3!20](\radius+4.5*\dist+2.5*\lengthEllipse,0) ellipse (0.4cm and 0.2cm);
		\draw [C2, fill=C2!20](\radius+5.5*\dist+3.5*\lengthEllipse,0) ellipse (0.4cm and 0.2cm);
		\draw [C2, fill=C2!20](\radius+7.5*\dist+4.5*\lengthEllipse,0) ellipse (0.4cm and 0.2cm);
		\draw [C5, fill=C5!20](\radius+8.5*\dist+5.4*\lengthEllipse,0) ellipse (0.35cm and 0.18cm);
		\node[draw,circle,fill=black,inner sep=0pt, minimum width=5pt] (1) at (30:\radius) {};
		\node[draw,circle,fill=black,inner sep=0pt, minimum width=4pt] (1a) at (30:\radius+1.5*\dist) {};
		\node[draw,circle,fill=black,inner sep=0pt, minimum width=5pt] (2) at (0:\radius) {};
		\node[draw,circle,fill=black,inner sep=0pt, minimum width=5pt] (3) at (330:\radius) {};
		\node[draw,circle,fill=black,inner sep=0pt, minimum width=4pt] (3a) at (330:\radius+1.5*\dist) {};
		\draw[ns1,C1] ({30+\margin}:\radius) arc ({30+\margin}:{0-\margin}:\radius);
		\draw[ns1,C1] ({0+\margin}:\radius) arc ({0+\margin}:{-30-\margin}:\radius);
		\draw[ns1,C1,dotted] ({30+\margin}:\radius) arc ({30+\margin}:{330-\margin}:\radius);
		\node[circle,fill=white,inner sep=0pt, minimum width=0pt] (15) at (0,-0.3) {};
		\node[draw,circle,fill=black,inner sep=0pt, minimum width=4pt] (4) at (\radius+1.5*\dist,0) {};
		\node[draw,circle,fill=black,inner sep=0pt, minimum width=4pt] (5) at (\radius+1.5*\dist+\lengthEllipse,0) {};
		\node[draw,circle,fill=black,inner sep=0pt, minimum width=4pt] (6) at (\radius+3.5*\dist+\lengthEllipse,0) {};
		\node[draw,circle,fill=black,inner sep=0pt, minimum width=4pt] (7) at (\radius+3.5*\dist+2*\lengthEllipse,0) {};
		\node[draw,circle,fill=black,inner sep=0pt, minimum width=4pt] (8) at (\radius+4.5*\dist+2*\lengthEllipse,0) {};
		\node[draw,circle,fill=black,inner sep=0pt, minimum width=4pt] (9) at (\radius+4.5*\dist+3*\lengthEllipse,0) {};
		\node[draw,circle,fill=black,inner sep=0pt, minimum width=4pt] (10) at (\radius+5.5*\dist+3*\lengthEllipse,0) {};
		\node[draw,circle,fill=black,inner sep=0pt, minimum width=4pt] (11) at (\radius+5.5*\dist+4*\lengthEllipse,0) {};
		\node[draw,circle,fill=black,inner sep=0pt, minimum width=4pt] (12) at (\radius+7.5*\dist+4*\lengthEllipse,0) {};
		\node[draw,circle,fill=black,inner sep=0pt, minimum width=4pt] (13) at (\radius+7.5*\dist+5*\lengthEllipse,0) {};
		\node[draw,circle,fill=black,inner sep=0pt, minimum width=4pt] (14) at (\radius+8.5*\dist+5*\lengthEllipse,0) {};
		
		\draw[ns1] (4)--(2)(7)--(8)(9)--(10)(13)--(14)(1)--(1a)(3)--(3a);
		\draw[ns1,dotted] (5)--(6)(11)--(12)(1a)--(30:\radius+2.5*\dist)(3a)--(330:\radius+2.5*\dist);

		\node[minimum height=10pt,inner sep=0] at (0.6*\radius,0) {$u_{a,i}$};
		\node[minimum height=10pt,inner sep=0] at (\radius+1.5*\dist,-\heightWriting) {$v_{a,i}^1$};	
		\node[minimum height=10pt,inner sep=0] at (\radius+4.5*\dist+4.8*\lengthEllipse,-2*\heightWriting) {$k$-th position};	
		\draw[ns2,<-] (\radius+4.5*\dist+2.5*\lengthEllipse,-\heightWriting)--(\radius+4.5*\dist+2.5*\lengthEllipse,-2*\heightWriting)--(\radius+4.5*\dist+3.3*\lengthEllipse,-2*\heightWriting);		
		
		\end{tikzpicture} 
		\subcaption{Case $\operatorname{ans}(a,i)=(k,a,a)$}
	\end{minipage}\\

	\begin{minipage}{\linewidth}
		\centering
		\begin{tikzpicture}[scale = 0.9]
		\tikzstyle{ns1}=[line width=1]
		\tikzstyle{ns2}=[line width=0.9]
		\def \dist {0.4}
		\def \lengthEllipse {0.8}
		\def \heightWriting {0.4}
		\def \radius {1.2}
		\def \margin {0.4}
		\draw [C2, fill=C2!20](\radius+1.5*\dist+0.5*\lengthEllipse,0) ellipse (0.4cm and 0.2cm);
		\draw [C2, fill=C2!20](\radius+3.5*\dist+1.5*\lengthEllipse,0) ellipse (0.4cm and 0.2cm);
		\draw [C3, fill=C3!20](\radius+4.5*\dist+2.5*\lengthEllipse,0) ellipse (0.4cm and 0.2cm);
		\draw [C2, fill=C2!20](\radius+5.5*\dist+3.5*\lengthEllipse,0) ellipse (0.4cm and 0.2cm);
		\draw [C2, fill=C2!20](\radius+7.5*\dist+4.5*\lengthEllipse,0) ellipse (0.4cm and 0.2cm);
		\draw [C4, fill=C4!20](\radius+8.5*\dist+5.8*\lengthEllipse,0) ellipse (0.63cm and 0.31cm);
		\node[draw,circle,fill=black,inner sep=0pt, minimum width=5pt] (1) at (30:\radius) {};
		\node[draw,circle,fill=black,inner sep=0pt, minimum width=4pt] (1a) at (30:\radius+1.5*\dist) {};
		\node[draw,circle,fill=black,inner sep=0pt, minimum width=5pt] (2) at (0:\radius) {};
		\node[draw,circle,fill=black,inner sep=0pt, minimum width=5pt] (3) at (330:\radius) {};
		\node[draw,circle,fill=black,inner sep=0pt, minimum width=4pt] (3a) at (330:\radius+1.5*\dist) {};
		\draw[ns1,C1] ({30+\margin}:\radius) arc ({30+\margin}:{0-\margin}:\radius);
		\draw[ns1,C1] ({0+\margin}:\radius) arc ({0+\margin}:{-30-\margin}:\radius);
		\draw[ns1,C1,dotted] ({30+\margin}:\radius) arc ({30+\margin}:{330-\margin}:\radius);
		\node[circle,fill=white,inner sep=0pt, minimum width=0pt] (15) at (0,-0.3) {};
		\node[draw,circle,fill=black,inner sep=0pt, minimum width=4pt] (4) at (\radius+1.5*\dist,0) {};
		\node[draw,circle,fill=black,inner sep=0pt, minimum width=4pt] (5) at (\radius+1.5*\dist+\lengthEllipse,0) {};
		\node[draw,circle,fill=black,inner sep=0pt, minimum width=4pt] (6) at (\radius+3.5*\dist+\lengthEllipse,0) {};
		\node[draw,circle,fill=black,inner sep=0pt, minimum width=4pt] (7) at (\radius+3.5*\dist+2*\lengthEllipse,0) {};
		\node[draw,circle,fill=black,inner sep=0pt, minimum width=4pt] (8) at (\radius+4.5*\dist+2*\lengthEllipse,0) {};
		\node[draw,circle,fill=black,inner sep=0pt, minimum width=4pt] (9) at (\radius+4.5*\dist+3*\lengthEllipse,0) {};
		\node[draw,circle,fill=black,inner sep=0pt, minimum width=4pt] (10) at (\radius+5.5*\dist+3*\lengthEllipse,0) {};
		\node[draw,circle,fill=black,inner sep=0pt, minimum width=4pt] (11) at (\radius+5.5*\dist+4*\lengthEllipse,0) {};
		\node[draw,circle,fill=black,inner sep=0pt, minimum width=4pt] (12) at (\radius+7.5*\dist+4*\lengthEllipse,0) {};
		\node[draw,circle,fill=black,inner sep=0pt, minimum width=4pt] (13) at (\radius+7.5*\dist+5*\lengthEllipse,0) {};
		\node[draw,circle,fill=black,inner sep=0pt, minimum width=4pt] (14) at (\radius+8.5*\dist+5*\lengthEllipse,0) {};
		\node[draw,circle,fill=black,inner sep=0pt, minimum width=4pt] (15) at (\radius+8.5*\dist+6.6*\lengthEllipse,0) {};
		
		\node[draw,circle,fill=black,inner sep=0pt, minimum width=5pt][shift={(2*\radius+7*\dist+6.6*\lengthEllipse,0)}] (16) at (150:\radius) {};
		\node[draw,circle,fill=black,inner sep=0pt, minimum width=5pt][shift={(2*\radius+7*\dist+6.6*\lengthEllipse,0)}] (16a) at (150:\radius+1.5*\dist) {};
		\node[draw,circle,fill=black,inner sep=0pt, minimum width=5pt][shift={(2*\radius+7*\dist+6.6*\lengthEllipse,0)}] (17) at (180:\radius) {};
		\node[draw,circle,fill=black,inner sep=0pt, minimum width=5pt][shift={(2*\radius+7*\dist+6.6*\lengthEllipse,0)}] (18) at (210:\radius) {};
		\node[draw,circle,fill=black,inner sep=0pt, minimum width=5pt][shift={(2*\radius+7*\dist+6.6*\lengthEllipse,0)}] (18a) at (210:\radius+1.5*\dist) {};
		\draw[ns1,C1][shift={(2*\radius+10*\dist+6.6*\lengthEllipse,0)}] ({150+\margin}:\radius) arc ({150+\margin}:{180-\margin}:\radius);
		\draw[ns1,C1][shift={(2*\radius+10*\dist+6.6*\lengthEllipse,0)}] ({180+\margin}:\radius) arc ({180+\margin}:{210-\margin}:\radius);
		\draw[ns1,C1,dotted][shift={(2*\radius+10*\dist+6.6*\lengthEllipse,0)}] ({210+\margin}:\radius) arc ({210+\margin}:{510-\margin}:\radius);
		
		\draw[ns1] (4)--(2)(7)--(8)(9)--(10)(13)--(14)(15)--(17)(1)--(1a)(3)--(3a)(16)--(16a)(18)--(18a);
		\draw[ns1,dotted] (5)--(6)(11)--(12)(1a)--(30:\radius+2.5*\dist)(3a)--(330:\radius+2.5*\dist);
		\draw[ns1,dotted][shift={(2*\radius+10*\dist+6.6*\lengthEllipse,0)}] (16a)--(150:\radius+2.5*\dist)(18a)--(210:\radius+2.5*\dist);

		\node[minimum height=10pt,inner sep=0] at (0.6*\radius,0) {$u_{a,i}$};
		\node[minimum height=10pt,inner sep=0] at (1.4*\radius+10*\dist+6.6*\lengthEllipse,0) {$u_{b,i}$ };
		\node[minimum height=10pt,inner sep=0] at (\radius+1.5*\dist,-\heightWriting) {$v_{a,i}^1$};
		\node[minimum height=10pt,inner sep=0] at (1*\radius+8.5*\dist+6.6*\lengthEllipse,-\heightWriting) {$v_{b,i}^1$};
		\node[minimum height=10pt,inner sep=0] at (\radius+4.5*\dist+4.8*\lengthEllipse,-2*\heightWriting) {$k$-th position};	
		\draw[ns2,<-] (\radius+4.5*\dist+2.5*\lengthEllipse,-\heightWriting)--(\radius+4.5*\dist+2.5*\lengthEllipse,-2*\heightWriting)--(\radius+4.5*\dist+3.3*\lengthEllipse,-2*\heightWriting);	
		\end{tikzpicture} 
		\subcaption{Case $\operatorname{ans}(a,i)=\operatorname{ans}(b,j)=(k,a,b)$}
	\end{minipage}
	\caption{Different types of arrows in $G_{\struc{A}}$. Here different coloured ellipses represent a copy of $H_1(u,v),H_2(u,v),H_3(u,v)$ or $H_4(u)$ respectively (see Figure~\ref{fig:buildingBlocks} for details).}\label{fig:arrowGraphGadgets}
	
\end{figure*}

\begin{lemma}\label{lem:local_reduction}
	The map	$f$ is a local reduction from $\classStruc{P}_{\zigzag}$ to $\graphProp$.
\end{lemma}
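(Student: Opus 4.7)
The plan is to verify the three conditions of a local reduction in turn, with constants $c_1,c_2$ depending only on $d$ and $\ell$.

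Condition (1) is immediate from the definition $\graphProp=\{f(\struc{A})\mid \struc{A}\in \classStruc{P}_{\zigzag}\}$.

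For condition (3), a vertex $w\in V(f(\struc{A}))$ is either an element-vertex $u_{a,i}$ or a relation-vertex $v^k_{a,i}$, and this (together with $a,i,k$) can be read directly from the vertex label. The three neighbours of $u_{a,i}$ are $u_{a,i-1}$, $u_{a,i+1}$ (taken cyclically) and $v^1_{a,i}$, requiring no oracle access. A relation-vertex lies inside a single gadget ($H_\bot$, $H_{\circlearrowleft}^k$, or $H_{\rightarrow}^k$) whose entire structure is determined by $\operatorname{ans}_{\struc{A}}(a,i)$ and, in the arrow case, by the answer $\operatorname{ans}_{\struc{A}}(b,j)$ of the partner tuple; so $c_2=2$ queries to $\struc{A}$ always suffice to answer a single query to $f(\struc{A})$.

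Condition (2), distance preservation, is the main content. I prove the contrapositive: if some $G'\in\graphProp$ with $V(G')=V(f(\struc{A}))$ satisfies $|E(f(\struc{A}))\triangle E(G')|\leq \epsilon'\cdot 3|V(f(\struc{A}))|$, then there exists $\struc{B}\in \classStruc{P}_{\zigzag}$ with $\univ{B}=\univ{A}$ and $\sum_{R\in\sigma}|\rel{R}{\struc{A}}\triangle\rel{R}{\struc{B}}|\leq c_1\epsilon' d|\univ{A}|$. Since $G'\in\graphProp$, we have $G'=f(\struc{B}_0)$ for some $\struc{B}_0\in\classStruc{P}_{\zigzag}$, up to a relabelling of vertices. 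Using Fact~\ref{rem:elementVerts}, element-vertices of $G'$ are exactly those lying in a $d$-cycle, so the partition of $V(G')$ into element-cycles is intrinsic to $G'$. Because damaging any element-cycle of $f(\struc{A})$ in $G'$ costs at least one edge modification, at most $\epsilon'\cdot 3|V(f(\struc{A}))|$ of the element-cycles of $f(\struc{A})$ are damaged. On the undamaged ones the element-cycles of $G'$ coincide literally with those of $f(\struc{A})$; on the damaged ones we match element-cycles of $G'$ to element-cycles of $f(\struc{A})$ arbitrarily. This matching fixes a bijection $\univ{A}\to\univ{B_0}$, and we use it to rename $\struc{B}_0$ as $\struc{B}$ on the universe $\univ{A}$ so that the skeletons of $f(\struc{A})$ and $f(\struc{B})$ agree. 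The renaming changes at most a constant number of edges per damaged cycle and is absorbed into $c_1$.

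After alignment, all edge differences between $f(\struc{A})$ and $f(\struc{B})$ lie in the arrow/loop/non-arrow gadgets at individual positions $(a,i)$. Each gadget has constant size $O(\ell)$ and is attached to the rest of the graph only through its end-vertices, so any change in the tuple answering $\operatorname{ans}_{\struc{A}}(a,i)$ contributes at most $O(\ell)$ edge differences inside that gadget, and conversely every in-gadget edge difference can be attributed to a tuple difference between $\struc{A}$ and $\struc{B}$ at one of the (at most two) incident positions. Summing over all positions yields the required tuple-edit bound with $c_1=O(\ell/d)$. The main obstacle is the alignment step for the damaged element-cycles: element-cycles of $G'$ need not literally coincide with those of $f(\struc{A})$, and one must verify that a greedy matching of damaged cycles loses only a constant multiplicative factor; this is handled by bounding the number of damaged cycles via a direct edge-counting argument and absorbing the resulting boundary cost into the constant $c_1$.
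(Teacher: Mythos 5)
Your overall strategy mirrors the paper's: condition~(1) is immediate, condition~(3) is handled by adaptively querying $\struc{A}$, and condition~(2) is proven by contraposition using the observation that a tuple of $\struc{A}$ whose element-cycles and connecting arrow in $f(\struc{A})$ are untouched by the edge modifications survives to the preimage structure. For condition~(2) the paper is somewhat more direct — it simply bounds the number of differing tuples $|R|$ by $2d|E|$, observing that each modified edge has two endpoints, each endpoint identifies a single element, and each element participates in at most $d$ tuples — whereas you make the damaged-cycle matching explicit. That is a legitimate way to handle the implicit bijection issue, though as you wrote it the greedy-matching-loses-a-constant-factor step is asserted rather than argued, and the paper's counting avoids this detour entirely.

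The concrete error is in condition~(3): the claim that $c_2=2$ suffices is wrong. To answer a query at a relation-vertex $v^m_{a,i}$ lying in a $k$-arrow, querying $(a,i)$ tells you that $\operatorname{ans}(a,i)=(k,a,b)$, but the gadget $E^k_{\rightarrow}\bigl(v^1_{a,i},\dots,v^{6\ell+5}_{a,i},v^{6\ell+5}_{b,j},\dots,v^1_{b,j}\bigr)$ is attached at $u_{b,j}$ for an index $j$ that is \emph{not} returned by the oracle — the query model only returns $(R,a_1,\dots,a_{\ar(R)})$, not the position of the tuple in $b$'s local ordering. You write ``by the answer $\operatorname{ans}_{\struc{A}}(b,j)$ of the partner tuple'' as if $j$ were known, but you cannot issue that query without knowing $j$. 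To determine the label of the far-side vertex $v^{6\ell+5}_{b,j}$ one must probe $(b,1),(b,2),\dots$ adaptively until $\operatorname{ans}(b,j')=(k,a,b)$, which takes up to $d$ additional queries; the correct constant is $c_2=d+1$, as the paper uses. This does not affect the truth of the lemma (any constant $c_2$ does), but the argument as stated is incorrect.
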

\begin{proof}
	First note that for any $\struc{A}\in \classStruc{P}_{\zigzag}$, we have that $f(\struc{A})\in \graphProp$ by definition. 
	
	Now let $c_1=2d^2(1+6\ell+5)$. We prove that  if $\struc{A}\in \classStruc{C}_{\sigma,d}$ is $\epsilon$-far from $\classStruc{P}_{\zigzag}$ then $f(\struc{A})$ is $\epsilon/c_1$-far from $\graphProp$ by contraposition. Therefore assume that $f(\struc{A})=:G_{\struc{A}}$ is not $\epsilon/c_1$-far from $\graphProp$ for some $\struc{A}\in \classStruc{C}_{\sigma,d}$. Then there is a set $E\subseteq \{e\subseteq V(G_{\struc{A}})\mid |e|=2 \}$ of size at most $\epsilon d |V(G_{\struc{A}})|/c_1$,  and a graph $G\in \graphProp$ such that  $G$ is obtained from $G_{\struc{A}}$ by modifying the tuples in $E$. By definition of $\graphProp$, there is a structure $\struc{A}_G\in \classStruc{P}_{\zigzag}$ such that $f(\struc{A}_G)=G$. First note that $|\univ{A_G}|=|\univ{A}|$, as $d(1+6\ell+5)|\univ{A}|=|V(G_{\struc{A}})|=|V(G)|=d(1+6\ell+5)|\univ{A_G}|$. Hence there must be a set $R$ of tuples that need to be modified to make $\struc{A}$ isomorphic to $\struc{A}_G$.  
	First note that $R$ cannot contain a tuple $(a,b)$ where $\{u_{a,i},v^k_{a,i},u_{b,i},v_{b,i}^k\mid  1\leq i\leq d, 1\leq k\leq \ell\}\cap e=\emptyset$ for every $e\in E$. This is because if $(a,b)$ is a tuple in the $k$-th relation of $\struc{A}$, then $u_{a,i}\xrightarrow{k}u_{b,j}$ in $G_{\struc{A}}$ for some $i,j\in \{1,\dots,d\}$. But since $\{u_{a,i},v^k_{a,i},u_{b,i},v_{b,i}^k\mid  1\leq i\leq d, 1\leq k\leq \ell\}\cap e=\emptyset$ for every $e\in E$, we have that $u_{a,i}\xrightarrow{k}u_{b,j}$ in $G$. Further, $(u_{a,1},\dots,u_{a,d},u_{a,1})$ and $(u_{b,1},\dots,u_{b,d},u_{b,1})$ are cycles of length $d$ in $G$. Hence by \ref{rem:elementVerts} there are elements $a,b$ in $\struc{A}_G$ corresponding to $(u_{a,1},\dots,u_{a,d},u_{a,1})$ and $(u_{b,1},\dots,u_{b,d},u_{b,1})$ such that $(a,b)$ is  a tuple in the $k$-th relation of $\struc{A}_G$, and hence $(a,b)$ cannot be in $R$. The same argument works when assuming that $(a,b)$ is a tuple in $\struc{A}_G$. Since for every $e\in E$, there  is at most $2d$ tuples $(a,b)$ such that  $\{u_{a,i},v^k_{a,i},u_{b,i},v_{b,i}^k\mid  1\leq i\leq d, 1\leq k\leq \ell\}\cap e\not=\emptyset$, we get that 
	\begin{displaymath}
	|R|\leq 2d \epsilon d |V(G_{\struc{A}})|/c_1=2d^2(1+6\ell+5)\epsilon d|\univ{A}|/c_1=\epsilon d |\univ{A}|.
	\end{displaymath}
	Hence $\struc{A}$ is not $\epsilon$-far to being in $\classStruc{P}_{\zigzag}$.

	Let $c_2:= d+1$. Let $\struc{A}\in \classStruc{C}_{\sigma,d}$ and $G_{\struc{A}}:=f(\struc{A})$. 
	First it is important to observe that we can pick an ordering of the vertices of $G_{\struc{A}}$ such that the position of each vertex depends solely on the number of elements of $\struc{A}$. Hence we can assume that for any element $a$ of $\struc{A}$ we can decide for any vertex $v\in V(G_\struc{A})$ whether $v$ is of the form $u_{a,i}$ and whether $v$ is of the form $v_{a,i}^k$. Now we argue how we can determine the answer to any neighbour query in $G_\struc{A}$. 
	First note that for any $a\in \univ{A}$ and $i\in \{1,\dots,d\}$ the vertex $u_{a,i}$  is adjacent in $G_{\struc{A}}$ to $v_{a,i}^1$ and the two neighbouring vertices on the cycle $(u_{a,1},\dots,u_{a,d},u_{a,1})$. Hence any neighbour query in $G_{\struc{A}}$ to $u_{a,i}$ 
	can be answered without querying $\struc{A}$.  Assume $v\in \{v_{a,i}^k\mid 1\leq k\leq \ell \}$ for some $a\in \univ{A}$ and some $1\leq i\leq d$. Then we can determine all neighbours of $v$ by querying $(a,i)$ and further if  $\operatorname{ans}(a,i)\not=\bot$ and $\operatorname{ans}(a,i)=(k,a,b)$, then we need to query $(b,j)$ for every $1\leq j\leq d$ to find out for which $j$ we have $\operatorname{ans}(b,j)=(k,a,b)$. Hence we can determine the answer to any query to $G_{\struc{A}}$ by making $c_2$ queries to $\struc{A}$. This proves that $f$ is a local reduction  from $\classStruc{P}_{\zigzag}$ to $\graphProp$.
\end{proof}
\subsection{The property of graphs is definable in FO}
In this section we find an FO sentence $\graphFormula$ which defines the property $\graphProp$. We do this by defining a formula expressing for two vertices $u,v$ that $u\xrightarrow{k}v$, a formula expressing for vertex $u$ that $u\xrightarrow{k}u$ and a formula expressing for vertex $u$ that $u\not\rightarrow$ and replacing formulas of the form $R(u,v),R(v,v)$ and $\lnot R(u,v)$ for $R\in  \sigma$ by the new formulas appropriately. We additionally restrict the scope of the quantifiers.
In the previous subsection we already defined $\ell:=|\sigma|$. We further rename the relations in $\sigma$ in an arbitrary way such that for this section we can assume that  $\sigma=\{R_1,\dots,R_\ell\}$.

We now translate the formula $\varphi_{\zigzag}$ into a formula $\graphFormula$
in the language of undirected graphs using the FO formulas defined in the following.
We let $\alpha(x)$ be a formula saying `$x$ is an element-vertex' and $\beta(x,y)$ be a formula saying `$x$ and $y$ represent the same element of $\mathcal{A}$', which is easy to do by Fact~\ref{rem:elementVerts}. We further let $\gamma(x)$ be a formula saying `$x$ is an internal vertex of either a $k$-arrow, a $k$-loop for any $k\in \{1,\dots,\ell\}$ or a non-arrow'. Here an `internal vertex' of an arrow refers to any vertex on this arrow except the two endpoints.  
Let $\delta^k_\rightarrow(x,y)$ 
denote `$x\xrightarrow{k} y$' for any $k\in \{1,\dots,\ell\}$, similarly, let $\delta^k_{\circlearrowleft}(x)$ denote `$x\xrightarrow{k}x$' for any $k\in \{1,\dots,\ell\}$. Given $\varphi_{\zigzag}$, formula $\graphFormula$ is obtained as follows. In $\varphi_{\zigzag}$ we replace each
expression $R_{k}(x,x)$ by $\delta^k_{\circlearrowleft}(x,x)$ and each expression $R_{k}(x,y)$ by $\delta^k_{\rightarrow}(x,y)$ (for $x\not=y$). In addition, we relativise all quantifiers in the following way. We replace every expression of the form $\exists x \,\chi (x,x_1,\dots,x_m)$ by $\exists x\,(\alpha(x)\wedge \chi(x,x_1,\dots,x_m))$ and every expression of the form $\forall x\, \chi(x,x_1,\dots,x_m)$ by $\forall x\,(\alpha(x)\rightarrow \exists y \beta(x,y)\land \chi(y,x_1,\dots,x_m))$. Let us call the resulting formula $\psi$. Then we set $\graphFormula$ to be the conjunction of the formula $\psi$ and the formula $\forall x \Big((\lnot \alpha(x)\rightarrow \gamma(x))\land (\alpha(x)\rightarrow \exists y \gamma(y)\land E(x,y))\Big)$.
\begin{lemma}\label{lem:correspondenceOfModels}
	For any $\struc{A}\in \classStruc{C}_{\sigma,d}$ the following proposition is true. $\struc{A}\models \varphi_{\zigzag}$ if and only if $f(\struc{A})\models \graphFormula$. Additionally we have that if $G\in \mathcal{C}_3$ is a model of $\graphFormula$ then $G\cong f(\struc{A})$ for some $\struc{A}\in \classStruc{A}_{\sigma,d}$.
\end{lemma}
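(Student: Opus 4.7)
The plan is to first establish a precise semantic correspondence between the structure $\struc{A}$ and its image $f(\struc{A})=G_{\struc{A}}$, then use it to prove both directions of the equivalence by induction on the formula structure, and finally leverage the added conjunct of $\graphFormula$ to reverse-engineer a structure out of any graph model. The key preliminary step is to show the following building-block lemmas for $\struc{A}\in\classStruc{C}_{\sigma,d}$:
\begin{itemize}
\item $G_{\struc{A}}\models \alpha(v)$ iff $v$ is an element-vertex $u_{a,i}$ of $G_{\struc{A}}$;
\item $G_{\struc{A}}\models \beta(v,w)$ iff $v,w$ correspond to the same element of $\univ{A}$;
\item $G_{\struc{A}}\models \gamma(v)$ iff $v$ is an internal vertex of some arrow/loop/non-arrow gadget;
\item $G_{\struc{A}}\models \delta^k_{\rightarrow}(u_{a,i},u_{b,j})$ iff the gadget attached at positions $i$ of $a$ and $j$ of $b$ is a $k$-arrow, i.e.\ $(a,b)\in \rel{R_k}{\struc{A}}$ with $a\neq b$; analogously for $\delta^k_{\circlearrowleft}$.
\end{itemize}
The first two items are immediate from Fact~\ref{rem:elementVerts}. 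The latter two require verifying that the gadgets $H^k_\rightarrow$, $H^k_\circlearrowleft$, $H_\bot$ are distinguishable by a local FO formula, which holds because the building blocks $H_1,H_2,H_3,H_4$ have pairwise distinct isomorphism types (in particular, distinct degree sequences when viewed as $1$-neighbourhoods), the position of the unique $H_2$-block within the gadget identifies the index $k$, and the end-gadgets $H_3$/$H_4$ distinguish $k$-arrows from $k$-loops/non-arrows.

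Given these building blocks, the equivalence $\struc{A}\models\varphi_{\zigzag}\iff G_{\struc{A}}\models\graphFormula$ follows by a straightforward induction on the subformula structure of $\varphi_{\zigzag}$. The substitution of atoms $R_k(x,y)$ (resp.\ $R_k(x,x)$) by $\delta^k_{\rightarrow}(x,y)$ (resp.\ $\delta^k_{\circlearrowleft}(x,x)$) preserves truth by the building-block lemmas. The relativisation of quantifiers, where each $\exists x\,\chi$ becomes $\exists x\,(\alpha(x)\wedge\chi)$ and each $\forall x\,\chi$ becomes $\forall x\,(\alpha(x)\rightarrow\exists y\,\beta(x,y)\wedge\chi(y,\cdot))$, ensures that quantification in $G_{\struc{A}}$ ranges precisely over one representative per element of $\univ{A}$, which mirrors quantification in $\struc{A}$; the $\beta$-witness allows us to substitute representatives freely when several $\alpha$-vertices encode the same element. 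The extra conjunct of $\graphFormula$ is automatically satisfied by $G_{\struc{A}}$, since by construction every non-element-vertex sits inside some gadget and every $u_{a,i}$ is adjacent to the gadget-internal vertex $v^1_{a,i}$.

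For the second assertion, let $G\in\mathcal{C}_3$ with $G\models\graphFormula$. The hard part will be reconstructing an $\struc{A}\in\classStruc{C}_{\sigma,d}$ with $G\cong f(\struc{A})$, because in principle a $3$-regular graph could contain `stray' cycles of length $d$ or partial gadgets not anchored at element-vertices. I plan to overcome this as follows. The additional conjunct $\forall x\bigl((\lnot\alpha(x)\rightarrow\gamma(x))\wedge(\alpha(x)\rightarrow\exists y\,\gamma(y)\wedge E(x,y))\bigr)$ forces every vertex of $G$ to be either an element-vertex or an internal vertex of a unique gadget attached at its endpoints to element-vertices (uniqueness follows from $3$-regularity and the fact that each internal gadget vertex already has its three incident edges accounted for by membership in a single gadget). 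Next, because the $d$-cycles on element-vertices are induced and edge-disjoint (a vertex of degree $3$ with one neighbour lying in a gadget can share at most two edges with any $d$-cycle, so two distinct $d$-cycles would force a degree $\geq 4$), the relation `lying on a common $d$-cycle' is an equivalence relation on $\{v\mid G\models\alpha(v)\}$; let $\univ{A}$ be the set of equivalence classes. Fixing an arbitrary enumeration of each cycle yields coordinates $u_{a,i}$, and I define $(a,b)\in \rel{R_k}{\struc{A}}$ iff there exist $i,j$ with $G\models\delta^k_{\rightarrow}(u_{a,i},u_{b,j})$ or $a=b$ and $G\models\delta^k_{\circlearrowleft}(u_{a,i},u_{a,i})$. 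The degree bound $\deg_{\struc{A}}(a)\leq d$ holds since each element-vertex has exactly one gadget-neighbour among its three neighbours in $G$. By construction, the map sending $u_{a,i}$ and its attached gadget-vertices to the corresponding vertices of $f(\struc{A})$ is an isomorphism, completing the proof.
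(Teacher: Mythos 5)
Your proposal follows essentially the same route as the paper's proof: both directions rest on the semantic correspondence between element-vertices and $\univ{A}$ via Fact~\ref{rem:elementVerts}, the pairwise non-isomorphism of the arrow/loop/non-arrow gadgets (so that $\delta^k_\rightarrow$, $\delta^k_\circlearrowleft$ faithfully decode the relations), and the relativisation of quantifiers by $\alpha$ and $\beta$; and for the converse, the extra conjunct of $\graphFormula$ is used in both to force every vertex to be either an element-vertex or internal to a unique gadget, after which $\struc{A}_G$ is recovered exactly as in the paper by grouping element-vertices lying on a common $d$-cycle. Your write-up is somewhat more explicit than the paper's about why the $d$-cycles partition the element-vertices and why the degree bound on $\struc{A}_G$ holds (the paper treats both as clear from construction), but the decomposition, the use of the gadget-distinguishability observation, and the reverse construction are all the same; one small caveat is that the uniqueness-of-cycle argument you sketch (``two distinct $d$-cycles would force degree $\geq 4$'') needs to be propagated around the cycle rather than applied at a single vertex, and implicitly relies on $\gamma$-vertices not lying on $d$-cycles, but the paper is no more precise on this point.
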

\begin{proof}
	First assume that $\struc{A}\models \varphi_{\zigzag}$. First observe that by construction of $G_\struc{A}:=f(\struc{A})$ and $\psi$, we get that $\struc{A}\models \psi_{\zigzag}$ if and only if $G_\struc{A}\models \psi$.  Note that for this statement it is important that the set of $k$-arrows and $k$-loops for all $k\in  \{1,\dots,\ell\}$ is a set of pairwise non-isomorphic graphs.
	In the construction of $G_\struc{A}$, every vertex is either an element-vertex $u_{a,i}$ in which case it is adjacent to the relation-vertex $v_{a,i}^1$, or is  an internal vertex of some $k$-arrow, $k$-loop or non-arrow. Hence  we get that $G_\struc{A}\models \forall x \Big((\lnot \alpha(x)\rightarrow \gamma(x))\land (\alpha(x)\rightarrow \exists y \gamma(y)\land E(x,y))\Big)$, which completes the proof of the first statement.
	
	Towards proving the second statement of Lemma~\ref{lem:correspondenceOfModels}, let us assume that some graph $G\in \mathcal{C}_3$ is a model of $\graphFormula$. Then $G\models \forall x \Big((\lnot \alpha(x)\rightarrow \gamma(x))\land (\alpha(x)\rightarrow \exists y \gamma(y)\land E(x,y))\Big)$. Hence $G$ consists of a set of element-vertices that are connected according to $\psi$ with $k$-arrow, $k$-loops or non-arrows. Hence we can reverse the local reduction to obtain $\struc{A}_G$ which is the corresponding model of $\varphi_{\zigzag}$ for which $f(\struc{A}_G)\cong G$ by the following construction. For any maximal set of vertices $X\subseteq V(G)$ such that $\beta(u,v)$ holds for every pair $u,v\in X$, we introduce an element $a_X$. For $X,Y\subseteq V(G)$, we add a tuple $(a_X,a_Y)$ to the relation $\rel{R_k}{\struc{A}_G}$ if there are $u\in X$ and $v\in Y$ such that $u\xrightarrow{k}v$ in $G$. With a similar argument as above, we get that $\struc{A}_G$ is a model of $\varphi_{\zigzag}$ by the construction of $\psi$. Additionally we get for some  ordering of the neighbours of each element of $\struc{A}_G$ that $f(\struc{A}_G)\cong G$ (this ordering has to be consistent with the order of $k$-arrows along the cycle of element-vertices).
\end{proof}
\begin{proof}[Proof of Theorem~\ref{thm:simpleDelta2}]
As a consequence from Lemma~\ref{lem:correspondenceOfModels}, we get  that $\graphFormula$  defines the property $\graphProp$ on the class $\mathcal{C}_3$. Since we constructed the local reduction $f$ in such a way that $f(\mathcal{A})$ is $3$-regular for every $\mathcal{A}\in \classStruc{C}_{\sigma,d}$ by Lemma~\ref{lem:d-regHNF}, we get that $\graphProp$ can be defined by a sentence in $\Pi_2$ on the class $\mathcal{C}_3$. Combining this with Lemma~\ref{lem:localReduction} and Lemma~\ref{lem:local_reduction}, we obtain Theorem~\ref{thm:simpleDelta2}.
\end{proof}
We would like to point out here that while we obtain the non-testability of $\graphProp$ using the local reduction $f$,  we can not conclude that $\graphProp$ is a class of expanders. However, we will show that this is true in the following section.
\subsection{The property of graphs is a class of expanders}\label{sec:expansionOfGraphProperty}

In this subsection we show that $\graphProp $ is a family of expanders and hence prove the following theorem. 
\begin{theorem}\label{thm:expandingClassDefinableInFO}
	There exists a universal constant $\xi>0$ and an (infinite) class of $\xi$-expanders with maximum degree at most $3$ which is definable in FO on undirected graphs.
\end{theorem}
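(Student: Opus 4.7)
The plan is to show that the property $\graphProp$, which is defined on $\mathcal{C}_3$ by the first-order sentence $\graphFormula$ (Lemma~\ref{lem:correspondenceOfModels}), consists entirely of $\xi$-expanders for some absolute constant $\xi>0$. Fix $\struc{A}\in\classStruc{P}_{\zigzag}$ with $|\univ{A}|=n$ and set $G:=f(\struc{A})$, so $|V(G)|=Kn$ for the absolute constant $K:=d(1+6\ell+5)$. By the construction of $f$, every non-loop edge of $\underlyingGraph{\struc{A}}$ is realised in $G$ by a $k$-arrow gadget of constant size, and by Theorem~\ref{thm:expansionOfModels}, $\underlyingGraph{\struc{A}}$ has expansion at least $h:=D^2/12$.

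First I would partition $V(G)$ into equal-sized blocks $\{B_a\}_{a\in\univ{A}}$ of size $K$, placing in $B_a$ the element-vertices $u_{a,1},\dots,u_{a,d}$, all internal vertices of non-arrows and $k$-loops attached at some $u_{a,i}$, and the ``$a$-half'' of every $k$-arrow incident to a $u_{a,i}$. Each induced subgraph $G[B_a]$ is connected (it contains the cycle on $u_{a,1},\dots,u_{a,d}$, to which every block-internal gadget is attached), and only finitely many isomorphism types of blocks occur, so there is an absolute constant $c_B>0$ such that every $G[B_a]$ has edge expansion at least $c_B$. In this sense $G$ is a bounded-replacement blow-up of $\underlyingGraph{\struc{A}}$.

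Given $S\subseteq V(G)$ with $1\leq |S|\leq Kn/2$, set $\alpha_a:=|S\cap B_a|/K$ and $T:=\{a\in\univ{A}:\alpha_a>1/2\}$. Since $|\overline{S}|\geq |S|$, I may assume without loss of generality that $|T|\leq n/2$ (otherwise I replace $S$ by $\overline{S}$ and use that $|\langle S,\overline{S}\rangle_G|/|S|\geq |\langle S,\overline{S}\rangle_G|/|\overline{S}|$). Now split into two subcases according to where the mass of $S$ concentrates. If $\sum_{a\in T}\alpha_a K\geq |S|/2$, then $|T|\geq |S|/(2K)$, and Theorem~\ref{thm:expansionOfModels} yields $|\langle T,\overline{T}\rangle_{\underlyingGraph{\struc{A}}}|\geq h|T|$; each such boundary edge is realised by a $k$-arrow gadget joining a $B_a$ with $a\in T$ (which contains a vertex of $S$, since $\alpha_a>1/2$) to a $B_b$ with $b\notin T$ (which contains a vertex of $\overline{S}$), and hence contributes at least one edge of $G$ crossing $S$. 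Bounded degree of $G$ ensures that no edge of $G$ is charged to more than a constant number of arrows, yielding $|\langle S,\overline{S}\rangle_G|\geq \Omega(h\,|T|)\geq \Omega(h/K)\cdot |S|$. Otherwise $\sum_{a\in U}\alpha_a K\geq |S|/2$ for $U:=\{a:0<\alpha_a\leq 1/2\}$, and the block-expansion bound gives $|\langle S\cap B_a,\overline{S}\cap B_a\rangle_G|\geq c_B\,\alpha_a K$ for each $a\in U$, summing to $|\langle S,\overline{S}\rangle_G|\geq c_B|S|/2$.

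The main obstacle is the calibration between the ``macroscopic'' contribution from cross-$T$ arrows (via expansion of $\underlyingGraph{\struc{A}}$) and the ``microscopic'' contribution from internal block cuts; the dichotomy above, triggered by comparing $|S|/2$ to the mass of $S$ in majority blocks, handles this uniformly and rules out pathological distributions of $S$ that avoid both contributions. Combining the two bounds yields an absolute constant $\xi>0$ with $|\langle S,\overline{S}\rangle_G|\geq \xi|S|$, establishing that $\graphProp$ is an infinite class of $\xi$-expanders of degree at most $3$ that is definable in FO on undirected graphs.
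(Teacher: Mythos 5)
Your block decomposition $\{B_a\}_{a\in\univ{A}}$ with the majority set $T=\{a:\alpha_a>1/2\}$ and the ensuing mass dichotomy is a genuinely different route from the paper. The paper instead partitions $V(G)$ into \emph{original} vertices $u_{a,i}$ and \emph{auxiliary} vertices $v^k_{a,i}$, splits into cases according to the ratio $|S_{\operatorname{original}}|/|S|$, and in the middle case refines the elements into $S^{\operatorname{full}}$ (those with all cycle vertices $u_{a,1},\dots,u_{a,d}$ in $S$), $S^{\operatorname{part}}$ (some but not all), and the rest.

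There is, however, a genuine gap in your subcase (a). From $\alpha_a>1/2$ and $\alpha_b\leq 1/2$ you infer that the $k$-arrow gadget between $B_a$ and $B_b$ must contain an $S$-$\overline{S}$ crossing edge. This does not follow. The arrow occupies only the specific vertices $u_{a,i},v^1_{a,i},\dots,v^{6\ell+5}_{a,i}$ of $B_a$ (together with the analogous $b$-side), and knowing that $|S\cap B_a|>K/2$ says nothing about which vertices inside $B_a$ they are. The $S$-mass of $B_a$ could sit entirely on the other gadgets attached to $B_a$, leaving this arrow wholly in $\overline{S}$; symmetrically, if $u_{b,j}\in S$ and the $b$-half also lies in $S$, the whole arrow could sit in $S$. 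In either case it contributes nothing to $\langle S,\overline{S}\rangle_G$, so "each such boundary edge $\dots$ contributes at least one edge of $G$ crossing $S$" is unjustified. The paper avoids exactly this by choosing the sharper set $S^{\operatorname{full}}$: if $a\in S^{\operatorname{full}}$ and $a'\notin S^{\operatorname{full}}\cup S^{\operatorname{part}}$, then $u_{a,i}\in S$ and $u_{a',j}\in\overline{S}$ are genuinely forced, hence the arrow crosses, and the uncertainty at the remaining elements is paid for by subtracting $d\cdot|S^{\operatorname{part}}|$ from the expander bound and handling the $S^{\operatorname{part}}$-rich case separately via the element cycle edges. Your sketch could be repaired by a charging argument of a similar flavour: a $T$-$\overline{T}$ gadget that fails to cross forces at least $6\ell+5$ vertices of the "wrong" colour into one of its two blocks, that block is then mixed and by your own block-expansion bound carries $\Omega(1)$ internal crossing edges, and these can absorb the at most $d$ lost gadgets incident to that block. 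But this extra step is essential and is missing from your proof; as written, the key inequality $|\langle S,\overline{S}\rangle_G|\geq\Omega(h|T|)$ is not established.
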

Expansion of $\graphProp$ is not needed for the non-testability results in this paper. However, we think that Theorem~\ref{thm:expandingClassDefinableInFO} is of independent interest since it gives us new insights into the expressibility of first-order logic. Furthermore, for an expanding property of undirected graphs, its non-testability follows from the main result from \cite{fichtenberger2019every}. Details of this are given in Appendix~\ref{app:C}.

\begin{lemma}\label{lemma:undirected_expander}
	The models of $\graphFormula$ is a family of $\xi$-expanders, for some constant $\xi>0$.
\end{lemma}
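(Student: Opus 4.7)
The plan is to lift the expansion of the underlying graph $\underlyingGraph{\struc{A}}$ (Theorem~\ref{thm:expansionOfModels}) through the reduction $f$ to the simple graph $G := f(\struc{A})$ for every model $\struc{A}\models\varphi_{\zigzag}$. The first step is to record two local properties of the construction in Section~\ref{sec:localreduction}: for each $a\in\univ{A}$ the atom $V_a$ consisting of the cycle vertices $u_{a,i}$ together with all chain vertices $v_{a,i}^k$ has a fixed size $C := d(6\ell+6)$ and induces a connected subgraph of $G$; and for each edge $\{a,b\}$ of $\underlyingGraph{\struc{A}}$ the corresponding arrow/loop gadget contributes at least one edge of $G$ between $V_a$ and $V_b$. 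Both facts follow by inspection of the building blocks $H_1,\dots,H_4$ and their concatenation.

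The main argument is a case analysis on how a cut $S\subseteq V(G)$ with $|S|\leq |V(G)|/2$ projects to $\univ{A}$. Write $s_a := |S\cap V_a|$ and partition $\univ{A}$ into $F_1:=\{a:s_a=C\}$, $F_0:=\{a:s_a=0\}$ and $F_{\mathrm{mid}}:=\{a:0<s_a<C\}$. Since $C|F_1|\leq |S|\leq C|\univ{A}|/2$, we have $|F_1|\leq|\univ{A}|/2$, so Theorem~\ref{thm:expansionOfModels} gives $|\langle F_1,\overline{F_1}\rangle_{\underlyingGraph{\struc{A}}}|\geq (D^2/12)|F_1|$. Subtracting the at most $d|F_{\mathrm{mid}}|$ edges of $\underlyingGraph{\struc{A}}$ incident to $F_{\mathrm{mid}}$ leaves at least $(D^2/12)|F_1| - d|F_{\mathrm{mid}}|$ edges of $\underlyingGraph{\struc{A}}$ from $F_1$ to $F_0$. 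For each such edge, the associated arrow gadget has its $V_a$-half fully inside $S$ and its $V_b$-half fully inside $\overline{S}$, so each of its crossing edges contributes to $\langle S,\overline{S}\rangle_G$, yielding $|\langle S,\overline{S}\rangle_G|\geq (D^2/12)|F_1| - d|F_{\mathrm{mid}}|$. Independently, the connectedness of $G[V_a]$ together with $0<s_a<C$ forces at least one intra-atom boundary edge per element of $F_{\mathrm{mid}}$, giving the second bound $|\langle S,\overline{S}\rangle_G|\geq |F_{\mathrm{mid}}|$.

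To combine the two inequalities, I split on whether $d|F_{\mathrm{mid}}|\leq (D^2/24)|F_1|$. In the first regime the arrow bound alone gives $|\langle S,\overline{S}\rangle_G|\geq (D^2/24)|F_1|$; in the second the intra-atom bound gives $|\langle S,\overline{S}\rangle_G|\geq |F_{\mathrm{mid}}|\geq (D^2/(24d))|F_1|$. In either case $|\langle S,\overline{S}\rangle_G|\geq \beta\max(|F_1|,|F_{\mathrm{mid}}|)$ for $\beta:=\min(1,D^2/(24d))$. Combined with $|S|\leq C(|F_1|+|F_{\mathrm{mid}}|)\leq 2C\max(|F_1|,|F_{\mathrm{mid}}|)$, this yields $|\langle S,\overline{S}\rangle_G|\geq \xi|S|$ with $\xi:=\beta/(2C)>0$, establishing the lemma. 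The main technical obstacle I expect is the asymmetry of the two lower bounds---the intra-atom bound scales with $|F_{\mathrm{mid}}|$ while the arrow bound scales with $|F_1|$ but is weakened by leakage into $F_{\mathrm{mid}}$---which is exactly what the regime split resolves; a secondary point is the careful verification that $G[V_a]$ is connected and that each inter-atom gadget carries at least one crossing edge, which relies on the specific structure of the building blocks $H_1,\dots,H_4$.
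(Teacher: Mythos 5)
Your proof is correct, and it takes a genuinely different route from the paper's. The paper partitions $V(G_{\struc{A}})$ by vertex \emph{type} into ``original'' vertices $u_{a,i}$ and ``auxiliary'' vertices $v^k_{a,i}$, compares $|S_{\text{original}}|$ to $|S|$, and runs a three-case analysis (with a further split by $|S^{\text{part}}|$ in the middle case, where $S^{\text{full}}$ and $S^{\text{part}}$ are defined in terms of cycle vertices only). You instead partition the \emph{elements} of $\univ{A}$ by how $S$ intersects the full atom $V_a$ (cycle plus all chain vertices), into $F_1$, $F_0$, $F_{\mathrm{mid}}$. This coarser decomposition buys a cleaner two-regime analysis: regime one handles the ``leakage'' of expander edges from $F_1$ into $F_{\mathrm{mid}}$ by absorbing it into the $D^2/12$ expansion constant, and regime two falls back on the one-boundary-edge-per-partial-atom bound, with $\max(|F_1|,|F_{\mathrm{mid}}|)$ serving as the common yardstick. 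The trade-off is that your argument leans on $G[V_a]$ being connected for every $a$, which is a stronger structural fact than the paper uses (it only needs each length-$d$ cycle $(u_{a,1},\dots,u_{a,d})$ to be connected to supply the $S^{\text{part}}$ contribution). You flag this as a point requiring verification, and it does hold: for the loop and non-arrow cases the whole gadget sits in $V_a$ and is a connected chain of blocks, and for the arrow case the $a$-half $v_{a,i}^1,\dots,v_{a,i}^{6\ell+5}$ is a connected prefix of the block chain under the natural isomorphism, linked to $u_{a,i}$ via the edge $\{u_{a,i},v^1_{a,i}\}$; the cycle then stitches the $d$ chains together. One small bookkeeping note: your bound uses $d$ as an upper bound on the degree of an element in $\underlyingGraph{\struc{A}}$, which is safe since $\underlyingGraph{\struc{A}}$ is $(D^2+D^4+1)$-regular and $D^2+D^4+1<d$, though it is slightly loose. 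The resulting constant $\xi=D^2/(48dC)$ differs numerically from the paper's $\xi$, but both are positive universal constants, which is all the lemma requires.
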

For $\struc{A}\in \classStruc{C}_{\sigma,d}$ we call vertices of 	$G_A:=f(\struc{A})$  of the form $u_{a,i}$ where  $ 1\leq i\leq d$, $a\in \univ{A}$ \emph{original vertices} (because they correspond to an original element of $\struc{A}$) and vertices of the form $v^k_{a,i}$ where  $ 1\leq i\leq d$, $a\in \univ{A}$ and $1\leq k\leq 6\ell+5$ \emph{auxiliary vertices}, where $\ell$ is the number of relations (the number of edge colours) in $\sigma$. Here $f$ is the local reduction defined in the previous section. Note that $\ell$ is a function of the degree bound $d$.
Now consider that $\struc{A}\in \classStruc{P}_{\zigzag}$.
Our strategy to prove that $G_A$ is an expander is to consider different cases dependent on how the number of original vertices relates to the number of auxiliary vertices  contained in some set $S\subseteq V(G_A)$ of size at most $\frac{V(G_A)}{2}$. Since the connected components of $G_A$ after deleting all auxiliary vertices (or after deleting all original vertices) are of constant size,  we get well connectedness of $S$ if the number of auxiliary vertices in comparison to the number of original vertices contained in $S$ is small (or the number of original vertices in comparison to the number of auxiliary vertices contained in $S$ is small respectively).  On the other hand, in the case that the number of original and the number of auxiliary vertices are relatively close, we can use the expansion of $G_A$ to prove that $S$ is well connected to the rest of $G_A$. We give the detailed proof in the following.
\begin{proof}[Proof of Lemma~\ref{lemma:undirected_expander}]
	Let $\struc{A}\in \classStruc{P}_{\zigzag}$ and $G_A:=f(\struc{A})$. Let $S\subset V(G_A)$ such that $|S|\leq \frac{|V(G_A)|}{2}$. Let $V_{\operatorname{original}}\sqcup V_{\operatorname{auxiliary}}=V(G_A)$ be the partition of $V(G_A)$ into original and auxiliary vertices. Let $S_{\operatorname{original}}:=V_{\operatorname{original}}\cap S$ and $ S_{\operatorname{auxiliary}}:=V_{\operatorname{auxiliary}}\cap S$.
	
	First note that by the above definitions, every element in $\struc{A}$ corresponds to $d$   vertices in $V_{\operatorname{original}}$ and every directed coloured edge in $\struc{A}$ corresponds to a constant number $c:=d(6\ell+5)$ of vertices in $V_{\operatorname{auxiliary}}$ (note that $\ell$ depends on $d$ only). 
	
	Assume $|S_{\operatorname{original}}|> \frac{2}{c} \cdot|S|$. 
	Then there are  $|S|-|S_{\operatorname{original}}|<\frac{c-2}{2}\cdot|S_{\operatorname{original}}|$ vertices in $S_{\operatorname{auxiliary}}$. Hence at most $\frac{c-2}{2c}\cdot|S_{\operatorname{original}}|$ of  arrows consist entirely of vertices from $S$. Since  every arrow which is incident to a vertex in $S_{\operatorname{original}}$ and contains at least one vertex which is not in $S$ contributes at least $1$ distinct edge to $\langle S,V(G)\setminus S \rangle_G $ we get  
	\begin{align*}
	|\langle S,V(G)\setminus S \rangle_G| &\geq  \frac{1}{2}\cdot|S_{\operatorname{original}}|-\frac{c-2}{2c}\cdot|S_{\operatorname{original}}|\\&=\frac{1}{c}\cdot |S_{\operatorname{original}}|\geq \frac{2}{c^2}\cdot|S|.
	\end{align*}

	Assume $\frac{1}{2dc}\cdot|S|< |S_{\operatorname{original}}| \leq  \frac{2}{c}\cdot|S|$. Let $\epsilon$ be as defined in the proof of Theorem \ref{thm:expansionOfModels}. 
	We define two sets $S^{\operatorname{full}},S^{\operatorname{part}}\subseteq \univ{A}$ in the following way. 
	\begin{align*}
	S^{\operatorname{full}}&:=\big\{a\in \univ{A}:\{u_{a,i}:1\leq i\leq d\}\subseteq S\big\} \text{ and}\\
	S^{\operatorname{part}}&:=\big\{a\in \univ{A}:\{u_{a,i}:1\leq i\leq d\}\cap S\not= \emptyset \big\}\setminus S^{\operatorname{full}}
	\end{align*}  
	where $(u_{a,1},\dots,u_{a,d},u_{a,1})$ is the cycle representing $a$ in the construction of the local reduction $f$ from the previous section.
	
	First assume that $|S^{\operatorname{part}}|\leq \frac{\epsilon}{(2+c)d^2}\cdot|S_{\operatorname{original}}|$.
	Observe that if $a\in S^{\operatorname{full}}$ and $a'\notin S^{\operatorname{full}}\cup S^{\operatorname{part}}$, then there exists at least one unique edge in $G$ which contributes to $|\langle S,V(G)\setminus S \rangle_G|$. Since there are $d\cdot |S^{\operatorname{part}}|$ coloured edges (tuples) containing an element from $S^{\operatorname{part}}$, we get $$|\langle S,V(G)\setminus S \rangle_G|\geq |\langle S^{\operatorname{full}},\univ{A}\setminus S^{\operatorname{full}} \rangle_{\underlyingGraph{A}}|-d\cdot |S^{\operatorname{part}}|.$$  Since $\struc{A}$ is $d$-regular, every edge gets replaced by $c$  vertices and every element gets replaced by $d$ vertices, we know that $|V(G)|=(d+\frac{dc}{2})|A|$. Hence
	\begin{displaymath}
	|S_{\operatorname{original}}|\leq \frac{2}{c}\cdot |S|\leq \frac{1}{c}\cdot |V(G)|= \frac{(2+c)d}{2c}\cdot|A|.
	\end{displaymath} 
	Furthermore, by definition $|S^{\operatorname{full}}|\leq \frac{|S_{\operatorname{original}}|}{d}$ and hence we get $$|\univ{A}\setminus S^{\operatorname{full}}|\geq \bigg(\frac{2c}{(2+c)d}-\frac{1}{d}\bigg)\cdot|S_{\operatorname{original}}|=\frac{c-2}{(2+c)d}\cdot|S_{\operatorname{original}}|.$$
	Then from Theorem $\ref{thm:expansionOfModels}$ we directly get 
	\begin{align*}
	|\langle S,V(G)\setminus S \rangle_G|&\geq |\langle S^{\operatorname{full}},\univ{A}\setminus S^{\operatorname{full}} \rangle_{U(\mathcal{A})}|-d\cdot S^{\operatorname{part}}\\
	&\geq \epsilon \cdot \min \{|S_{\operatorname{original}}|,|A\setminus S_{\operatorname{original}}|\}-d\cdot S^{\operatorname{part}}\\
	&\geq \bigg( \frac{\epsilon(c-2)}{(2+c)d}-\frac{\epsilon}{(2+c)d}\bigg)\cdot|S_{\operatorname{original}}|\\
	&\geq  \frac{\epsilon(c-3)}{2(2+c)d^2c}\cdot|S|. 
	\end{align*}
	On the other hand if $|S^{\operatorname{part}}|\geq \frac{\epsilon}{(2+c)d^2}\cdot|S_{\operatorname{original}}|$ then $|\langle S,V(G)\setminus S \rangle_G|\geq \frac{\epsilon}{2(2+c)d^3c}\cdot|S|$. This is because every $a\in S^{\operatorname{part}}$ contributes at least one unique edge to $\langle S,V(G)\setminus S \rangle_G$, \ie one of the edges of the cycle $(u_{a,1},\dots,u_{a,k},u_{a,1})$. 
	
	Now assume $|S_{\operatorname{original}}|\leq  \frac{1}{2dc}\cdot |S|$. Therefore there are $|S|-|S_{\operatorname{original}}|\geq \frac{2dc-1}{2dc}\cdot |S|$ in $S_{\operatorname{auxiliary}}$. Of these at least $\frac{2dc-1}{2dc}\cdot |S|-c|S_{\operatorname{original}}|\geq \frac{2dc-1-c}{2dc}\cdot|S|$ vertices in $S_{\operatorname{auxiliary}}$ are not connected with any element from $S_{\operatorname{original}}$ in the graph $G[S]$. Since any connected component of $G[S]$ with no vertices in $S_{\operatorname{original}}$ contains at most $c$ vertices, we get that 
	\begin{align*}
	\langle S,V(G)\setminus S \rangle_G \geq \frac{2dc-c-1}{2dc^2}\cdot |S|.
	\end{align*}
	By setting $\xi=\min\{\frac{2}{c^2},\frac{\epsilon(c-3)}{2(2+c)d^2c},\frac{\epsilon}{2(2+c)d^3c},\frac{2dc-c-1}{2dc^2}\}>0$ 
	we proved the claimed.
\end{proof}

\section{On the testability of all $\Sigma_2$-properties}\label{sec:testableSigma2}
Let $\sigma=\{R_1,\dots,R_m\}$ be any relational signature and $\classStruc{C}_{\sigma,d}$ the set of $\sigma$-structures of bounded degree $d$. We prove the following.
\begin{theorem}\label{thm:sigma2}
Every first-order property defined by a $\sigma$-sentence in $\Sigma_2$ is testable in the bounded-degree model.
\end{theorem}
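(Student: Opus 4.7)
\medskip

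The plan is to follow the high-level strategy outlined in the techniques section: reduce every $\Sigma_2$-property to a finite union of properties that are each \emph{indistinguishable} (in the sense that every structure in one is $o(1)$-close to a structure in the other, and vice-versa) from a $\Pi_1$-property, and then invoke Remark~\ref{rem:Sigma1Pi1} together with closure of testability under finite unions and under indistinguishability.

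Let the sentence in question be $\varphi \equiv \exists x_1 \dots \exists x_k \forall y_1 \dots \forall y_\ell \, \psi(\bar x, \bar y)$ with $\psi$ quantifier-free. First I would observe that, because $\sigma$ is finite and the degree is bounded by $d$, there is a constant $r = r(\psi, \sigma, d)$ such that whenever an atom of $\psi$ mixing some $x_i$ and some $y_j$ is satisfied in a $\sigma$-structure $\struc{A} \in \classStruc{C}_{\sigma,d}$, the corresponding elements lie at Gaifman distance at most $r$. This lets me rewrite the quantifier-free matrix, on $\classStruc{C}_{\sigma,d}$, as a Boolean combination of ``purely-$\bar y$'' atoms and ``$\bar x$-$\bar y$-atoms'' that can only be witnessed when all $y_j$'s involved lie in the $r$-neighbourhood $N_r^{\struc{A}}(\bar a)$ of some chosen witnesses $\bar a$ for the $\exists$-variables. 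Concretely, $\forall \bar y\, \psi(\bar a, \bar y)$ is $d$-equivalent, for each choice of witnesses $\bar a$, to the conjunction of (i) a universal sentence $\chi_{\mathrm{far}}$ about $\struc{A}$ alone that does not mention $\bar a$ (the instances of $\psi$ in which every $y_j$ is ``far'' from $\bar a$), and (ii) a constant-size condition $\chi_{\mathrm{near}}(\bar a)$ that depends only on the isomorphism type of the $r$-neighbourhood $\mathcal{N}_r^{\struc{A}}(\bar a)$ together with the tuple $\bar a$.

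Next I would enumerate the finite set $\mathfrak{T}$ of isomorphism types $t$ of pairs $(\struc{W}, \bar w)$, where $\struc{W} \in \classStruc{C}_{\sigma,d}$ is a $\sigma$-structure of radius $\le r$ around $\bar w \in \univ{W}^k$ such that $\chi_{\mathrm{near}}(\bar w)$ holds in $\struc{W}$. For each such $t$, define
\[
  \classStruc{P}_t \;:=\; \{\struc{A} \in \classStruc{C}_{\sigma,d} : \struc{A} \models \chi_{\mathrm{far}} \text{ and } \struc{A} \text{ contains a copy of } (\struc{W}, \bar w) \text{ of type } t \text{ as an isolated-in-radius-$r$ substructure}\}
\]
and the companion $\Pi_1$-property
\[
  \classStruc{Q}_t \;:=\; \{\struc{A} \in \classStruc{C}_{\sigma,d} : \struc{A} \models \chi_{\mathrm{far}}\}.
\]
Both $\chi_{\mathrm{far}}$ and the ``contains a tuple of type $t$'' condition are definable on $\classStruc{C}_{\sigma,d}$, the latter being in $\Sigma_1$; since we are happy to overshoot into indistinguishability, we drop the $\Sigma_1$ part and keep only $\chi_{\mathrm{far}} \in \Pi_1$ in $\classStruc{Q}_t$. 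The first step is to show $\classStruc{P}_\varphi = \bigcup_{t \in \mathfrak{T}} \classStruc{P}_t$, which is immediate from the decomposition of $\forall \bar y\,\psi$ above. The second step, which is the crux, is to show $\classStruc{P}_t$ and $\classStruc{Q}_t$ are indistinguishable.

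The inclusion $\classStruc{P}_t \subseteq \classStruc{Q}_t$ is trivial. For the other direction I would show that any sufficiently large $\struc{A} \models \chi_{\mathrm{far}}$ can be turned into a structure in $\classStruc{P}_t$ by modifying only $O_{\sigma,d,\varphi,t}(1)$ tuples, hence with relative distance $o(1)$ as $n \to \infty$. The construction is to pick a low-degree element $a_0 \in \univ{A}$ whose $(2r)$-neighbourhood has small constant size, delete all tuples inside $N_{2r}^{\struc{A}}(a_0)$ (a bounded number), and paste in a fresh copy of the witness type $t$ in its place; the small number of ``crossing'' tuples at the boundary can be repaired with a further bounded number of modifications. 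The key technical point, and the main obstacle, is ensuring that this local surgery neither destroys $\chi_{\mathrm{far}}$ globally nor creates new violations of it: because $\chi_{\mathrm{far}}$ is quantifier-free under a universal prefix and only involves purely-$\bar y$ atoms, any newly introduced violation would have all its $y_j$'s within the tampered neighbourhood, so by choosing the planted type $t$ so that its own $\chi_{\mathrm{near}}$ and the absence of local $\chi_{\mathrm{far}}$-violations both hold (which is guaranteed by $t \in \mathfrak{T}$ together with pre-filtering $\mathfrak{T}$ to types that are locally $\chi_{\mathrm{far}}$-compliant), the resulting structure lies in $\classStruc{P}_t$.

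Finally, I would conclude as follows. Each $\classStruc{Q}_t$, being a $\Pi_1$-property, is testable by Remark~\ref{rem:Sigma1Pi1}. By the indistinguishability just established, each $\classStruc{P}_t$ is testable (a tester for $\classStruc{Q}_t$ with distance parameter $\epsilon/2$, say, suffices on all sufficiently large inputs, and the finitely many small inputs can be handled exactly by reading the whole structure in constant time). Since $\mathfrak{T}$ is finite and testable properties are closed under finite unions \cite{goldreich2017introduction}, $\classStruc{P}_\varphi = \bigcup_{t \in \mathfrak{T}} \classStruc{P}_t$ is testable. The principal difficulty, as noted, is the interaction between the $\exists$-witnesses and the $\forall$-variables captured by $\chi_{\mathrm{near}}$, which we handle by folding this interaction into the finitely many choices of local witness type $t$, so that after fixing $t$ the remaining global condition is genuinely $\Pi_1$.
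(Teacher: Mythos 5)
Your high-level strategy matches the paper's: decompose $\classStruc{P}_\varphi$ into finitely many pieces, show each piece is indistinguishable from a $\Pi_1$-property by a planting/deleting argument, then invoke testability of $\Pi_1$ and closure of testability under finite unions. But the proposed near/far decomposition of $\forall\bar y\,\psi(\bar a,\bar y)$ has a genuine flaw that breaks the very first step of your plan.

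You claim that $\forall\bar y\,\psi(\bar a,\bar y)$ is $d$-equivalent to $\chi_{\mathrm{far}}\wedge\chi_{\mathrm{near}}(\bar a)$, where $\chi_{\mathrm{far}}$ is a universal sentence about $\struc A$ alone, and that consequently $\classStruc{P}_\varphi=\bigcup_t\classStruc{P}_t$ with each $\classStruc{P}_t$ requiring $\struc A\models\chi_{\mathrm{far}}$. This cannot be right, because a model of $\varphi$ need not satisfy such a $\chi_{\mathrm{far}}$. Take $\varphi=\exists x\,\forall y_1\forall y_2\,(E(x,y_1)\vee\neg E(y_1,y_2))$ over graphs: a model consists of a centre $a$, any number of neighbours of $a$, and everything else isolated. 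Restricting $\psi$ to clauses without positive mixed atoms gives $\chi_{\mathrm{far}}=\forall y_1 y_2\,\neg E(y_1,y_2)$, which a non-trivial model of $\varphi$ violates. The source of the failure is the ``mixed'' tuples $\bar b$ with some coordinates near $\bar a$ and some far: there are infinitely many of them, their truth value in $\psi(\bar a,\bar b)$ genuinely depends on both regions, and the boundary of $N_r^{\struc A}(\bar a)$ is not a relational cut (elements just across the boundary can still share tuples). So $\forall\bar y\,\psi(\bar a,\bar y)$ does not split into a global $\Pi_1$ part plus a condition on a constant-size neighbourhood of $\bar a$, and the identity $\classStruc{P}_\varphi=\bigcup_t\classStruc{P}_t$ is simply false.

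The idea you are missing — and the one the paper uses precisely to avoid this — is quantitative, not syntactic: in any bounded-degree model $\struc A\models\forall\bar y\,\psi(\bar a,\bar y)$, the number of $\ell$-tuples $\bar b$ that can only be accounted for by a clause with a non-trivial positive mixed atom is at most $k\cdot d$, because each such $\bar b$ forces a relational tuple incident to $\bar a$ and $\sum_i\deg_{\struc A}(a_i)\le kd$ (this is the paper's Claim~\ref{claim:notManyTuples}). You never establish this bound. Once you have it, you do \emph{not} get that $\struc A$ itself satisfies the candidate $\Pi_1$ sentence $\psi:=\forall\bar y\bigvee_{(j,\struc H)\in J'}\iota^{\struc H}(\bar y)$; you get that $\struc A$ becomes a model of $\psi$ after isolating a constant number of elements (those near $\bar a$ and those in the $\le kd$ offending tuples). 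That is what indistinguishability asks for, and it is where the content of the argument lives. As a consequence, the companion $\Pi_1$-property $\classStruc{Q}_t$ should be chosen so that this approximate inclusion holds, rather than an exact one, and your decomposition of $\classStruc{P}_\varphi$ should not hardwire $\chi_{\mathrm{far}}$ into each $\classStruc{P}_t$. The planting direction you sketch (delete tuples in a small ball, paste in a witness, and check no new violations arise) is essentially the right converse and mirrors the paper's argument; the missing piece is the forward direction and the degree-counting lemma that supports it.
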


We adapt the notion of indistinguishability of~\cite{alon2000efficient} from the dense model to the bounded-degree model.

\begin{definition}\label{def:indistinguishability}
	Two properties $\classStruc{P},\classStruc{Q}\subseteq \classStruc{C}_{\sigma,d}$ are called \emph{indistinguishable} if for every $\epsilon \in (0,1)$ there exists $N=N(\epsilon)$ such that for every structure $\struc{A}\in \classStruc{P}$ with $|\univ{A}|>N$ there is a structure $\tilde{\struc{A}}\in \classStruc{Q}$ with the same universe, that is $\epsilon$-close to $\struc{A}$; and for every $\struc{B}\in \classStruc{Q}$ with $|\univ{B}|>N$ there is a structure $\tilde{\struc{B}}\in \classStruc{P}$ with the same universe, that is $\epsilon$-close to $\struc{B}$.
\end{definition}
The following lemma follows from the definitions, and is similar to~\cite{alon2000efficient}, though we make use of the canonical testers for bounded-degree graphs (\cite{CzumajPS16,goldreich2011proximity}). 
 
\begin{lemma}
	If $\classStruc{P},\classStruc{Q}\subseteq \classStruc{C}_{\sigma,d}$ are indistinguishable properties, then $\classStruc{P}$ is testable on $\classStruc{C}_{\sigma,d}$ if and only if $\classStruc{Q}$ is testable on $\classStruc{C}_{\sigma,d}$.
\end{lemma}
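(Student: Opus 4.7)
The plan is to reduce the statement to the repairability characterization of testability (Theorem~\ref{thm:Locality}). By symmetry it suffices to show that testability of $\classStruc{Q}$ implies testability of $\classStruc{P}$, and by Theorem~\ref{thm:Locality} this amounts to showing that repairability of $\classStruc{Q}$ transfers to $\classStruc{P}$. So fix $\epsilon\in(0,1]$; let $r,\lambda_{\classStruc{Q}},n_0^{\classStruc{Q}}$ be the parameters witnessing $(\epsilon/3)$-repairability of $\classStruc{Q}$, and let $N:=N(\epsilon/3)$ be the indistinguishability threshold from Definition~\ref{def:indistinguishability}. I will show that $\classStruc{P}$ is $\epsilon$-repairable with radius $r$ and some $\lambda_{\classStruc{P}}$ and $n_0^{\classStruc{P}}$ to be chosen.

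The key quantitative observation is that small symmetric-difference perturbations cause small shifts in the $r$-type distribution: if two structures on the same universe of size $n$ differ in at most $\alpha dn$ tuples, then each modified tuple can alter the $r$-type only of vertices in its $r$-neighbourhood, and there are at most some constant $c_{d,r}$ many such vertices per modified tuple. Hence the $\ell_1$-distance between the $r$-type distributions is bounded by $2c_{d,r}\alpha d$. I will fix $\alpha:=\epsilon/3$ and set $\lambda_{\classStruc{P}}:=\lambda_{\classStruc{Q}}-2c_{d,r}\alpha d$, shrinking $\epsilon$ at the outset if necessary to ensure $\lambda_{\classStruc{P}}>0$; and I will set $n_0^{\classStruc{P}}:=\max(n_0^{\classStruc{Q}},N)$.

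Now suppose $\struc{A}\in\classStruc{P}$ and $\struc{B}\in\classStruc{C}_{\sigma,d}$ both have $n\geq n_0^{\classStruc{P}}$ elements and satisfy $\sum_i|\rho_{\struc{A},r}(\{\tau_i\})-\rho_{\struc{B},r}(\{\tau_i\})|<\lambda_{\classStruc{P}}$. By indistinguishability, there is $\tilde{\struc{A}}\in\classStruc{Q}$ on the universe of $\struc{A}$ that is $(\epsilon/3)$-close to $\struc{A}$; by the above observation, $\rho_{\tilde{\struc{A}},r}$ differs from $\rho_{\struc{A},r}$ by at most $2c_{d,r}(\epsilon/3)d$ in $\ell_1$. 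Triangle inequality then yields $\sum_i|\rho_{\tilde{\struc{A}},r}(\{\tau_i\})-\rho_{\struc{B},r}(\{\tau_i\})|<\lambda_{\classStruc{Q}}$, and repairability of $\classStruc{Q}$ applied to $\tilde{\struc{A}}\in\classStruc{Q}$ gives some $\tilde{\struc{B}}\in\classStruc{Q}$ on the universe of $\struc{B}$ that is $(\epsilon/3)$-close to $\struc{B}$. Indistinguishability (applied in the other direction, since $n\geq N$) produces $\struc{B}'\in\classStruc{P}$ on the same universe that is $(\epsilon/3)$-close to $\tilde{\struc{B}}$, and two applications of the triangle inequality on symmetric differences show $\struc{B}$ is $\epsilon$-close to $\struc{B}'\in\classStruc{P}$. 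This completes the verification of repairability, hence of testability of $\classStruc{P}$.

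The main subtlety, and the only place the argument is more than bookkeeping, is the quantitative passage from edge-distance to $r$-type distance: we need the constant $c_{d,r}$ to depend only on $d$ and $r$, not on $n$, so that the budget $\lambda_{\classStruc{P}}$ can be chosen as a positive constant independent of the input size. This is where the bounded-degree assumption is essential and where the indistinguishability bound must be tightened (using $\epsilon/3$ rather than $\epsilon$) to make room for the two triangle-inequality hops. The remark in the excerpt about invoking the canonical testers of~\cite{CzumajPS16,goldreich2011proximity} is an alternative route to the same conclusion, in which one observes that the canonical tester for $\classStruc{Q}$ depends only on the empirical $r$-type distribution and so tolerates $(\epsilon/3)$-perturbations of the input; Theorem~\ref{thm:Locality} packages exactly this content, which is why the proof above is essentially automatic.
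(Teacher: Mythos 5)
Your route via Theorem~\ref{thm:Locality} (the repairability characterization) is genuinely different from the paper's proof, which works directly with a canonical tester for $\classStruc{P}$, boosts its success probability, and bounds the probability that the sampled $r$-balls land on any of the few tuples where $\struc{B}\in\classStruc{Q}$ and a nearby $\tilde{\struc{B}}\in\classStruc{P}$ disagree. Your key observation --- that modifying $\alpha dn$ tuples moves the $r$-type distribution by at most $2c_{d,r}\alpha d$ in $\ell_1$, with $c_{d,r}$ a constant depending only on $d$, $r$, and $\sigma$ --- is correct and is exactly the right quantitative bridge. This route is a legitimate alternative and arguably cleaner, since it avoids boosting and query-counting; the paper's route has the advantage of not presupposing Theorem~\ref{thm:Locality}, which is itself a nontrivial result.

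There is, however, a real gap in the parameter allocation. You set the budget $\alpha$ for the hop $\struc{A}\mapsto\tilde{\struc{A}}$ to $\epsilon/3$, the same scale as the repairability parameter, and then need $\lambda_{\classStruc{P}}=\lambda_{\classStruc{Q}}-2c_{d,r}(\epsilon/3)d>0$. Shrinking $\epsilon$ does not rescue this: $\lambda_{\classStruc{Q}}=\lambda_{\classStruc{Q}}(\epsilon/3)$ shrinks with $\epsilon$ as well, and the repairability definition puts no lower bound on the rate. If, say, $\lambda_{\classStruc{Q}}(\delta)$ decays quadratically in $\delta$ while $2c_{d,r}\delta d$ is linear, the inequality fails for every $\delta\in(0,1]$ and no choice of $\epsilon$ works; moreover $r=r(\epsilon/3)$ and hence $c_{d,r}$ may also grow as $\epsilon$ shrinks. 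The fix is to decouple: the closeness of $\tilde{\struc{A}}$ to $\struc{A}$ never enters the final distance bound for $\struc{B}$ --- it controls only the $\ell_1$-shift --- so this budget can be chosen \emph{after} $r$ and $\lambda_{\classStruc{Q}}$ are fixed. Take $\eta>0$ with $2c_{d,r}\eta d<\lambda_{\classStruc{Q}}/2$, set $\lambda_{\classStruc{P}}:=\lambda_{\classStruc{Q}}/2$ and $n_0^{\classStruc{P}}:=\max\bigl(n_0^{\classStruc{Q}},N(\eta),N(\epsilon/2)\bigr)$, and use indistinguishability at level $\eta$ for the first hop and at level $\epsilon/2$ for the second (with $\classStruc{Q}$ made $(\epsilon/2)$-repairable rather than $(\epsilon/3)$). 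Then the triangle inequality on $r$-type distributions gives a $\lambda_{\classStruc{Q}}$-close pair, repairability yields $\tilde{\struc{B}}\in\classStruc{Q}$ with $\dist(\struc{B},\tilde{\struc{B}})\le\epsilon/2$, and indistinguishability yields $\struc{B}'\in\classStruc{P}$ with $\dist(\tilde{\struc{B}},\struc{B}')\le\epsilon/2$, so $\struc{B}$ is $\epsilon$-close to $\classStruc{P}$ as required.
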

\begin{proof}
We show that if $\classStruc{P}$ is testable, then $\classStruc{Q}$ is also testable. The other direction follows by the same argument. Let $\epsilon>0$. Since $\classStruc{P}$ is testable, there exists an $\frac{\epsilon}{2}$-tester for $\classStruc{P}$ with success probability at least $\frac23$. Furthermore, we can assume that the tester (called canonical tester) behaves as follows (see \cite{CzumajPS16,goldreich2011proximity}): it first uniformly samples a constant number of elements, then explores the union of $r$-balls around all sampled elements for some constant $r>0$, and makes a deterministic decision whether to accept, based on an isomorphic copy of the explored substructure. Let  $C=C(\frac{\epsilon}{2},d)$ denote the number of queries the tester made on the input structure. By repeating this tester and taking the majority, we can have a tester $T$ with $c_1\cdot C$ queries and success probability at least $\frac{5}{6}$ for some integer $c_1>0$.

	Let $N$ be a number such that if a structure $\struc{B}$ with $n>N$ elements satisfies $\classStruc{Q}$, then there exists a  $\tilde{\struc{B}}\in \classStruc{P}$ with the same universe such that $\dist(\struc{B},\tilde{\struc{B}})\leq \min\{\frac{\epsilon}{2},\frac{1}{c_2 C\cdot d^{C+2}}\}dn$ for some large constant $c_2>0$. Now we give an $\epsilon$-tester for $\classStruc{Q}$. If the input structure $\struc{B}$ has size at most $N$, we can query the whole input to decide if it satisfies $\classStruc{Q}$ or not. If its size is larger than $N$, then we use the aforementioned $\frac{\epsilon}{2}$-tester for $\classStruc{P}$ with success probability at least $\frac{5}{6}$. If $\struc{B}$ satisfies $\classStruc{Q}$, then there exists $\tilde{\struc{B}}\in \classStruc{P}$ that differs from $\struc{B}$ in no more than ${1}/(c_2 C\cdot d^{C+2}) dn$ places. Since the algorithm samples at most $c_1\cdot C$ elements and queries the $r$-balls around all these sampled elements, for $r< C$, we have that with probability at least $1-\frac{1}{6}$, the algorithm does not query any part where $\struc{B}$ and $\tilde{\struc{B}}$ differ, and thus its output is correct with probability at least $\frac{5}{6}-\frac{1}{6}=\frac23$. If $\struc{B}$ is $\epsilon$-far from satisfying $\classStruc{Q}$ then it is $\frac{\epsilon}{2}$-far from satisfying $\classStruc{P}$ and with probability at least $\frac{5}{6}>\frac23$, the algorithm will reject $\struc{B}$. Thus $\classStruc{Q}$ is also testable.
\end{proof}

\paragraph{High-level idea of proof of Theorem~\ref{thm:sigma2}.} 
Let $\varphi\in \Sigma_2$. We prove that the property defined by $\varphi$ can be written as the union of properties, each of which is defined by another formula $\varphi'$ in $\Sigma_2$ where the structure induced by the existentially quantified variables is a fixed structure $\struc{M}$ (see Claim \ref{claim:JM}). With some further simplification of $\varphi'$, we obtain a formula $\varphi''$ in $\Sigma_2$ which expresses that the structure has to have $\struc{M}$ as an induced substructure  and every set of elements of fixed size $\ell$ has to induce some structure from a set of structures $\mathfrak{H}$, and -- depending on the structure from $\mathfrak{H}$ -- %a set of $\ell$ elements induces 
there might be some connections to the elements of $\struc{M}$ (see Claim \ref{claim:non-iso}). We then define a formula $\psi$ in $\Pi_1$ such that the property defined by $\psi$ is indistinguishable from the property defined by $\varphi''$ in the sense that we can transform any structure satisfying  $\psi$, into a structure satisfying $\varphi''$ by modifying no more than a small fraction of the tuples and vice versa (see Claim  \ref{claim:indistinguishable}). The intuition behind this is that every structure satisfying $\varphi''$ can be made to satisfy $\psi$ by removing the structure $\struc{M}$ while on the other hand for every structure which satisfies $\psi$ we can plant the structure $\struc{M}$ to make it satisfy $\varphi''$. Since it is a priori unclear how the existentially and universally quantified variables interact, we have to define $\psi$ very carefully. Here it is important to note that the number  of occurrences of structures in $\mathfrak{H}$ forcing an interaction with $\struc{M}$ is limited because of the degree bound (see Claim \ref{claim:notManyTuples}). Thus such structures can not be allowed to occur for models of $\psi$, as here the number of occurrences can not be limited in any way. Since properties defined by a formula in $\Pi_1$ are testable, this implies with the indistinguishability of $\psi$ and $\varphi''$ that the property defined by $\varphi''$ is testable. Furthermore  by the fact that testable properties are closed under union \cite{goldreich2017introduction}, 
we reach the conclusion that any property defined by a formula in $\Sigma_2$ is testable. 
 
We will not directly give a tester for the property $\classStruc{P}_\varphi$  but decompose $\varphi$ into simpler cases. However, every simplification of $\varphi$ used is computable, and the proof below yields a construction of an $\epsilon$-tester for $\classStruc{P}_\varphi$ for every $\epsilon\in (0,1)$ and every $\varphi\in \Sigma_2$.\\

For the full proof of Theorem~\ref{thm:sigma2}, we use the following definition.
\begin{definition}
Let $\struc{A}$ be a $\sigma$-structure with $\univ{A}=\{a_1,\dots,a_t\}$. Let $\overline{z}=(z_1,\dots,z_t)$ be a tuple of variables.  
 	Then we define $\iota^\struc{A}(\overline{z})$ as follows.
\begin{align*}
	\iota^{\struc{A}}(\overline{z}):=
	&\bigwedge_{R\in\sigma}\Bigg(\bigwedge_{\big(a_{i_1},\dots,a_{i_{\ar(R)}}\big)\in \rel{R}{\struc{A}}}R\big(z_{i_1},\dots,z_{i_{\ar(R)}}\big)
	\land \\
	&
	\bigwedge_{\big(a_{i_1},\dots,a_{i_{\ar(R)}}\big)\in \univ{A}^{\ar(R)}\setminus \rel{R}{\struc{A}}}\neg R\big(z_{i_1},\dots,z_{i_{\ar(R)}}\big)\Bigg)	\land
	\bigwedge_{\substack{i,j\in[t]\\i\neq j}}(\neg z_i=z_j).
\end{align*}
	
\end{definition}
Note that 
	for every $\sigma$-structure $\struc{A'}$ and $\overline{a}'=(a_1',\dots,a_t')\in \univ{A'}^t$ we have that $\struc{A'}\models \iota^\struc{A}(\overline{a}')$ if and only if $a_i\mapsto a_i'$, $i\in \{1,\dots,t\}$ is an isomorphism from $\struc{A}$ to $\struc{A'}[\{a_1',\dots,a_t'\}]$. In particular, if $\struc{A'}\models \iota^\struc{A}(\overline{a}')$, then $\{a_1',\dots,a_t'\}$ induces a substructure isomorphic to $\struc{A}$ in $\struc{A'}$.

\begin{proof}[Proof of Theorem~\ref{thm:sigma2}]
Let $\varphi$ be any sentence in  $\Sigma_2$. Therefore we can assume that $\varphi$ is of the form $\varphi=\exists \overline{x} \,\forall\overline{y} \,\chi(\overline{x},\overline{y})$ where $\overline{x}=(x_1,\dots,x_k)$ is a tuple of $k\in \mathbb{N}$ variables, $\overline{y}=(y_1,\dots,y_\ell)$ is a tuple of $\ell\in \mathbb{N}$ variables and $\chi(\overline{x},\overline{y})$ is a quantifier-free formula. We can further assume 
that $\chi(\overline{x},\overline{y})$ is in disjunctive normal form, and that \begin{eqnarray}
\varphi=\exists \overline{x}\,\forall \overline{y}\bigvee_{i\in I}\Big(\alpha^i(\overline{x})\land \beta^i(\overline{y})\land \operatorname{pos}^i(\overline{x},\overline{y})\land \operatorname{neg}^i(\overline{x},\overline{y})\Big),\label{eqn:phi_1}
\end{eqnarray}
 where $\alpha^i(\overline{x})$ is a conjunction of literals only containing variables from $\overline{x}$, $\beta^i(\overline{y})$ is a conjunction of literals only containing  variables in $\overline{y}$, $\operatorname{neg}^i(\overline{x},\overline{y})$ is a conjunction of negated atomic formulas containing both variables from $\overline{x}$ and $\overline{y}$ and $\operatorname{pos}^i(\overline{x},\overline{y})$ is a conjunction of atomic formulas containing both variables from $\overline{x}$ and $\overline{y}$. 
	Now note that if an expression `$x_j=y_{j'}$' appears in a conjunctive clause, then we can replace every occurrence of $y_{j'}$ by $x_j$ in that clause, which will result in an equivalent formula.
	
	We now write the formula $\varphi$ given in (\ref{eqn:phi_1}) as a disjunction over all possible structures in $\classStruc{C}_{\sigma,d}$ the existentially quantified variables could enforce. Since the elements realising the existentially quantified variables will have a certain structure, it is natural to decompose the formula in this way.

	Let $\mathfrak{M}\subseteq \classStruc{C}_{\sigma,d}$ be a set of models of $\varphi$, such that every model $\struc{A}\in \classStruc{C}_{\sigma,d}$ of $\varphi$ contains an isomorphic copy of some $\struc{M}\in \mathfrak{M}$ as an induced substructure, and $\mathfrak{M}$ is minimal with this property.

	\begin{claim}\label{claim:aboutM}
		Every $\struc{M}\in \mathfrak{M}$ has at most $k$ elements.
	\end{claim}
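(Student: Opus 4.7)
The plan is to argue by contradiction using the minimality of $\mathfrak{M}$. Suppose some $\struc{M}\in \mathfrak{M}$ has $|\univ{M}|>k$. Since $\struc{M}\models \varphi=\exists \overline{x}\,\forall \overline{y}\,\chi(\overline{x},\overline{y})$, there exist witnesses $a_1,\dots,a_k\in \univ{M}$ such that $\struc{M}\models \forall \overline{y}\,\chi(a_1,\dots,a_k,\overline{y})$. Let $S:=\{a_1,\dots,a_k\}$ and set $\struc{M}':=\struc{M}[S]$. Then $|\univ{M'}|\leq k<|\univ{M}|$, so $\struc{M}'$ is a proper induced substructure of $\struc{M}$.

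The key step is to verify that $\struc{M}'\models \varphi$, using the \emph{same} witnesses $a_1,\dots,a_k$. Because $\chi$ is quantifier-free, its truth value on any assignment depends only on which tuples of the assigned elements are in each relation, and passing to an induced substructure preserves exactly these atomic facts. Hence for every $\overline{b}\in S^\ell$ we have $\struc{M}'\models \chi(a_1,\dots,a_k,\overline{b})$ iff $\struc{M}\models \chi(a_1,\dots,a_k,\overline{b})$, and the latter holds by choice of the witnesses. Therefore $\struc{M}'\models \forall \overline{y}\,\chi(a_1,\dots,a_k,\overline{y})$, so $\struc{M}'\models \varphi$.

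Now $\struc{M}'$ is itself a model of $\varphi$ in $\classStruc{C}_{\sigma,d}$, so by the covering property of $\mathfrak{M}$ there is some $\struc{N}\in \mathfrak{M}$ occurring as an induced substructure of $\struc{M}'$. By transitivity of ``induced substructure'', $\struc{N}$ also occurs as an induced substructure of $\struc{M}$, and since $|\univ{N}|\leq |\univ{M'}|\leq k<|\univ{M}|$, we have $\struc{N}\not\cong \struc{M}$. Consider now $\mathfrak{M}\setminus\{\struc{M}\}$: any model $\struc{A}\models \varphi$ contains an induced copy of some $\struc{M}''\in \mathfrak{M}$; if $\struc{M}''\not\cong \struc{M}$ we are done, while if $\struc{M}''\cong \struc{M}$ then $\struc{A}$ contains $\struc{N}\in \mathfrak{M}\setminus\{\struc{M}\}$ as an induced substructure. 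Hence $\mathfrak{M}\setminus\{\struc{M}\}$ still has the covering property, contradicting the minimality of $\mathfrak{M}$.

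The main subtlety to watch out for is a careful reading of ``minimal''. Under the natural interpretation (minimal with respect to set inclusion, with $\mathfrak{M}$ viewed up to isomorphism as is implicit in the ``isomorphic copy'' formulation), the argument above works cleanly. No genuine combinatorial obstacle is expected: the entire content is that quantifier-free formulas are preserved under taking induced substructures containing all the relevant elements, and that $k$ witnesses span a structure of size at most $k$.
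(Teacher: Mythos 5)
Your proof is correct and follows essentially the same route as the paper: use the $k$ witnesses of the existential quantifiers, observe that the induced substructure on those witnesses is itself a model of $\varphi$ because quantifier-free formulas are preserved under restriction to induced substructures containing all relevant elements, then invoke the covering property of $\mathfrak{M}$ to extract a strictly smaller member and contradict minimality. Your explicit remark that the resulting $\struc{N}$ cannot be isomorphic to $\struc{M}$ (by a size count) is a detail the paper leaves implicit but is the same argument.
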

	\begin{proof}
		Assume there is $\struc{M}\in \mathfrak{M}$ with $|M|>k$. Since every structure in $\mathfrak{M}$ is a model of $\varphi$ there must be a tuple $\overline{a}=(a_1,\dots,a_k)\in \univ{M}^k$ such that $\struc{M}\models \forall \overline{y}\bigvee_{i\in I}\Big(\alpha^i(\overline{a})\land \beta^i(\overline{y})\land \operatorname{pos}^i(\overline{a},\overline{y})\land \operatorname{neg}^i(\overline{a},\overline{y})\Big)$. This implies that for every tuple $\overline{b}\in \univ{M}^\ell$ we have $\struc{M}\models \bigvee_{i\in I}\Big(\alpha^i(\overline{a})\land \beta^i(\overline{b})\land \operatorname{pos}^i(\overline{a},\overline{b})\land \operatorname{neg}^i(\overline{a},\overline{b})\Big)$. Furthermore, since $\{a_1,\dots,a_k\}^\ell\subseteq \univ{M}^\ell$ we have that $\struc{M}[\{a_1,\dots,a_k\}]\models \forall \overline{y}\bigvee_{i\in I}\Big(\alpha^i(\overline{a})\land \beta^i(\overline{y})\land \operatorname{pos}^i(\overline{a},\overline{y})\land \operatorname{neg}^i(\overline{a},\overline{y})\Big)$. This means that $\struc{M}[\{a_1,\dots,a_k\}]\models \varphi$. 
		Hence  $\mathfrak{M}$ contains an induced substructure
		$\struc{M'}$ of $\struc{M}[\{a_1,\dots,a_k\}]$. 
		Since every model of $\varphi$ containing $\struc{M}$ as an induced substructure must also contain $\struc{M}'$ 
		as an induced substructure $\mathfrak{M}\setminus \{\struc{M}\}$ is a strictly smaller set than $\mathfrak{M}$ with all desired properties. This contradicts the minimality
		$\mathfrak{M}$. 
	\end{proof}
	Therefore $\mathfrak{M}$ is finite.		
		For $\struc{M}\in\mathfrak{M}$ let $J_\struc{M}:=\{j\in I\mid \struc{M}\models \alpha^j(\overline{m})\text{ for some }\overline{m}\in \univ{M}^{\ell}\}\subseteq I$.
	\begin{claim}\label{claim:JM}
We have		$\varphi \equiv_d \bigvee_{\struc{M}\in \mathfrak{M}}\Big(\exists \overline{x}\forall\overline{y} \Big[\iota^\struc{M}(\overline{x})\land \bigvee_{j\in J_\struc{M}}\Big( \beta^j(\overline{y})\land \operatorname{pos}^j(\overline{x},\overline{y})\land \operatorname{neg}^j(\overline{x},\overline{y})\Big) \Big] \Big)$.
	\end{claim}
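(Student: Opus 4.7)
The plan is to establish the two directions of the $d$-equivalence pointwise: for each $\struc{A}\in\classStruc{C}_{\sigma,d}$, we show $\struc{A}\models\varphi$ iff $\struc{A}$ satisfies the disjunction on the right. The key feature that makes both directions work is that $\alpha^i(\overline{x})$ is quantifier-free and mentions only variables from $\overline{x}$, so whether $\struc{A}\models \alpha^i(\overline{a})$ is fully determined by the isomorphism type of the pair $\bigl(\struc{A}[\{a_1,\dots,a_k\}],\overline{a}\bigr)$, which is exactly what the formulas $\iota^{\struc{M}}$ pin down.

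For the direction $\text{RHS}\Rightarrow\varphi$, fix witnesses $\struc{M}\in\mathfrak{M}$ and $\overline{a}\in\univ{A}^k$ with $\struc{A}\models\iota^{\struc{M}}(\overline{a})$. For every $\overline{b}\in\univ{A}^\ell$ obtain an index $j\in J_{\struc{M}}$ with $\struc{A}\models \beta^j(\overline{b})\wedge\operatorname{pos}^j(\overline{a},\overline{b})\wedge\operatorname{neg}^j(\overline{a},\overline{b})$. By definition of $J_{\struc{M}}$ there is a tuple $\overline{m}\in\univ{M}^k$ so that $\struc{M}\models\alpha^j(\overline{m})$, and by the canonical identification of $\overline{a}$ with the universe of $\struc{M}$ encoded in $\iota^{\struc{M}}(\overline{a})$, this transfers to $\struc{A}\models\alpha^j(\overline{a})$ (here we will, if necessary, first split $\varphi$ into sub-cases according to the equality type of $\overline{x}$ in each disjunct, so that within each sub-case $|\univ{M}|=k$ and the match between positions of $\overline{x}$ and elements of $\struc{M}$ is a genuine bijection). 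Thus the whole disjunct $\alpha^j\wedge\beta^j\wedge\operatorname{pos}^j\wedge\operatorname{neg}^j$ is satisfied at $(\overline{a},\overline{b})$, and $\struc{A}\models\varphi$.

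For the direction $\varphi\Rightarrow\text{RHS}$, let $\overline{a}\in\univ{A}^k$ witness $\exists\overline{x}$, and let $\struc{N}:=\struc{A}[\{a_1,\dots,a_k\}]$. As in the proof of Claim~\ref{claim:aboutM}, since $\struc{A}\models\forall\overline{y}\,\chi(\overline{a},\overline{y})$ it holds in particular for $\overline{b}\in\{a_1,\dots,a_k\}^\ell$, so $\struc{N}\models\varphi$. By the minimality of $\mathfrak{M}$, some $\struc{M}\in\mathfrak{M}$ embeds as an induced substructure of $\struc{N}$ via a tuple obtained from a sub-tuple of $\overline{a}$; call the resulting $k$-tuple $\overline{a}'$, so that $\struc{A}\models\iota^{\struc{M}}(\overline{a}')$. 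For any $\overline{b}\in\univ{A}^\ell$, pick the index $i$ of the disjunct that witnesses $\varphi$ at $(\overline{a}',\overline{b})$. Then $\struc{A}\models\alpha^i(\overline{a}')$, and since $\overline{a}'$ realises $\struc{M}$, this exactly says that $\alpha^i$ is satisfied by some tuple in $\univ{M}^k$, i.e.\ $i\in J_{\struc{M}}$. Hence the inner disjunction on the right is satisfied, and $\struc{A}$ satisfies the RHS.

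The main technical obstacle is the bookkeeping around the equality type of $\overline{x}$ when $|\univ{M}|<k$: the atomic formula $\iota^{\struc{M}}$ is, strictly speaking, written in $|\univ{M}|$ variables, but we need to apply it to a $k$-tuple and still be able to transfer satisfaction of $\alpha^j$ back and forth between $\struc{M}$ and $\struc{A}$. I would handle this upfront by distributing $\exists\overline{x}$ over the disjuncts and, for each disjunct, fixing a surjection $\pi:\{1,\dots,k\}\twoheadrightarrow\{1,\dots,|\univ{M}|\}$ compatible with the equalities forced by $\alpha^i$, and then using the corresponding $\pi$-lifted version of $\iota^{\struc{M}}$; after this reduction everything becomes a direct syntactic check, and the argument above goes through.
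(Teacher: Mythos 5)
Your high-level strategy matches the paper's: argue both directions of the $d$-equivalence pointwise, using that $\iota^{\struc{M}}$ pins down the isomorphism type of the witness tuple so that satisfaction of the quantifier-free $\alpha^j$ can be moved between $\struc{M}$ and $\struc{A}$. The genuine gap is in your left-to-right direction. You take an arbitrary witness $\overline{a}$ for $\exists\overline{x}$, find some $\struc{M}\in\mathfrak{M}$ embedding into $\struc{N}=\struc{A}[\{a_1,\dots,a_k\}]$ via a sub-tuple $\overline{a}'$ of $\overline{a}$, and then write ``pick the index $i$ of the disjunct that witnesses $\varphi$ at $(\overline{a}',\overline{b})$''. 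This tacitly assumes $\struc{A}\models\forall\overline{y}\,\chi(\overline{a}',\overline{y})$, which does not follow: the universal statement is known only for the tuple $\overline{a}$, and $\overline{a}'$ may omit some of the elements in $\{a_1,\dots,a_k\}$. For a $\overline{b}$ whose required interaction (through $\operatorname{pos}^i$ or $\operatorname{neg}^i$) with $\overline{a}$ passes through an element of $\overline{a}$ that is not present in $\overline{a}'$, the clause $\chi(\overline{a}',\overline{b})$ may simply fail, so there need not be any witnessing index $i$ at all.

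The paper avoids exactly this problem by first replacing $\overline{a}$ with a \emph{minimal} witness, namely one such that no tuple whose underlying element set is a proper subset of $\{a_1,\dots,a_k\}$ satisfies $\forall\overline{y}\,\chi$ in $\struc{A}$. With a minimal $\overline{a}$ the paper argues that $\struc{A}[\{a_1,\dots,a_k\}]$ has no proper induced substructure in $\mathfrak{M}$, hence is itself isomorphic to some $\struc{M}\in\mathfrak{M}$; the tuple then used to witness the right-hand side is $\overline{a}$ itself, not a proper sub-tuple, so the universal statement carries over. Your preliminary reduction of splitting $\varphi$ by the equality type of $\overline{x}$ addresses the arity bookkeeping (how to read $\iota^{\struc{M}}(\overline{x})$ when $|\univ{M}|<k$), but it does not supply the minimality of the witness, which is the step that actually closes this direction. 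Inserting the ``shrink $\overline{a}$ to a minimal witness'' step before passing to $\struc{N}$, so that no element of $\overline{a}$ has to be dropped, is the missing ingredient.
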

	\begin{proof}
Let $\struc{A}\in \classStruc{C}_{\sigma,d}$ be a model of $\varphi$. Then there is a tuple $\overline{a}=(a_1,\dots,a_k)\in \univ{A}^k$ such that $\struc{A}\models \forall y \chi (\overline{a},\overline{y})$. Since $\{a_1,\dots,a_k\}^\ell\subseteq \univ{A}^\ell$ this implies that $\struc{A}[\{a_1,\dots,a_k\}]\models   \forall y \chi (\overline{a},\overline{y})$ and hence $\struc{A}[\{a_1,\dots,a_k\}]\models \varphi$.
		In addition, we may assume that we picked $\overline{a}$ in such a way that for any tuple $\overline{a}'=(a_1',\dots,a_k')\in \{a_1,\dots,a_k\}^k$  
	 with $\{a_1',\dots,a_k'\}\subsetneq \{a_1,\dots,a_k\}$ we have that $\struc{A}\not\models \forall \overline{y} \chi(\overline{a}',\overline{y})$. (The reason is that if for some tuple $\overline{a}'$ this is not the case then we just replace $\overline{a}$ by $\overline{a}'$ and so on until this property holds). Hence $\struc{A}[\{a_1,\dots,a_k\}]$ cannot have a proper induced substructure in $\mathfrak{M}$, and it follows that there is $\struc{M}\in \mathfrak{M}$ such that $\struc{M}\cong \struc{A}[\{a_1,\dots,a_k\}]$. By choice of $J_{\struc{M}}$ we get $\struc{A}\models \forall\overline{y} \Big[\iota^\struc{M}(\overline{a})\land \bigvee_{j\in J_\struc{M}}\Big( \beta^j(\overline{y})\land \operatorname{pos}^j(\overline{a},\overline{y})\land \operatorname{neg}^j(\overline{a},\overline{y})\Big) \Big]$ and hence \[\struc{A}\models \bigvee_{\struc{M}\in \mathfrak{M}}\Big(\exists \overline{x}\forall\overline{y} \Big[\iota^\struc{M}(\overline{x})\land \bigvee_{j\in J_\struc{M}}\Big( \beta^j(\overline{y})\land \operatorname{pos}^j(\overline{x},\overline{y})\land \operatorname{neg}^j(\overline{x},\overline{y})\Big) \Big] \Big).\]
		
		To prove  the other direction, we now let the structure $\struc{A}\in \classStruc{C}_{\sigma,d}$ be a model of the formula \newline$\bigvee_{\struc{M}\in \mathfrak{M}}\Big(\exists \overline{x}\forall\overline{y} \Big[\iota^\struc{M}(\overline{x})\land \bigvee_{j\in J_\struc{M}}\Big( \beta^j(\overline{y})\land \operatorname{pos}^j(\overline{x},\overline{y})\land \operatorname{neg}^j(\overline{x},\overline{y})\Big) \Big] \Big)$. Consequently there is $\struc{M}\in \mathfrak{M}$ and $\overline{a}\in \univ{A}^k$ such that $\struc{A}\models \forall\overline{y} \Big[\iota^\struc{M}(\overline{a})\land \bigvee_{j\in J_\struc{M}}\Big( \beta^j(\overline{y})\land \operatorname{pos}^j(\overline{a},\overline{y})\land \operatorname{neg}^j(\overline{a},\overline{y})\Big) \Big]$. By choice of $J_\struc{M}$ this implies $\struc{A} \models \forall \overline{y}\bigvee_{j\in J_\struc{M}}\Big(\alpha^j(\overline{a})\land \beta^j(\overline{y})\land \operatorname{pos}^j(\overline{a},\overline{y})\land \operatorname{neg}^j(\overline{a},\overline{y})\Big)$ and hence $\struc{A}\models \varphi$.		
	\end{proof}
	
	Since the union of finitely many testable properties is testable (see e.g.~\cite{goldreich2017introduction}),  it is sufficient to show that the property $\classStruc{P}_\varphi$ is testable  where $\varphi$ is of the form 	
	\begin{eqnarray}
	\varphi &=&\exists \overline{x}\forall \overline{y}\chi(\overline{x},\overline{y}), \label{eqn:phi_2}\\
	& \text{ where } & \chi(\overline{x},\overline{y})=\Big[\iota^\struc{M}(\overline{x})\land \bigvee_{j\in J_\struc{M}}\Big(\beta^j(\overline{y})\land \operatorname{pos}^j(\overline{x},\overline{y})\land \operatorname{neg}^j(\overline{x},\overline{y})\Big)\Big],\nonumber
\end{eqnarray} 
for some $\struc{M}\in \mathfrak{M}$.
	In the following, we will enforce that for every conjunctive clause of the big disjunction of $\chi$, the universally quantified variables induce a specific substructure.

For $j\in J_\struc{M}$ let $\mathfrak{H}_j\subseteq \classStruc{C}_{\sigma,d}$ be a maximal set of pairwise non-isomorphic structures $\struc{H}$ such that   $\struc{H}\models \beta^j(\overline{b})$ for some $\overline{b}=(b_1,\dots,b_\ell)\in \univ{H}^\ell$ with $\{b_1,\dots,b_\ell\}=\univ{H}$. 	
	
	\begin{claim}\label{claim:non-iso}
We have 
\[\varphi \equiv_d \exists \overline{x}\forall\overline{y} \Big[\iota^\struc{M}(\overline{x})\land \bigvee_{{\struc{H}\in \mathfrak{H}_j,}\atop{j\in J_\struc{M}}} \Big(\iota^\struc{H}(\overline{y})\land \operatorname{pos}^j(\overline{x},\overline{y})\land \operatorname{neg}^j(\overline{x},\overline{y})\Big)\Big].
\]
	\end{claim}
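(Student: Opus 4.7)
The plan is to prove the $d$-equivalence by establishing both implications for an arbitrary $\struc{A}\in \classStruc{C}_{\sigma,d}$. The key observation is that $\beta^j(\overline{y})$ is a conjunction of literals mentioning only the $\overline{y}$-variables, so whether it is satisfied by a tuple $\overline{b}\in \univ{A}^\ell$ is completely determined by the induced substructure $\struc{A}[\{b_1,\dots,b_\ell\}]$ together with the labelling of its elements by $\overline{b}$. The set $\mathfrak{H}_j$ is defined precisely to enumerate, up to isomorphism, all such labelled substructures for which $\beta^j$ holds, and $\iota^{\struc{H}}(\overline{y})$ encodes exactly that $\overline{y}$ realises $\struc{H}$ as induced substructure.

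For the forward implication, assume $\struc{A}\models \varphi$ as in (\ref{eqn:phi_2}), and let $\overline{a}\in \univ{A}^k$ be the witness for $\overline{x}$, so that $\struc{A}\models \iota^{\struc{M}}(\overline{a})$ and $\struc{A}\models \forall \overline{y}\,\chi(\overline{a},\overline{y})$. Fix any $\overline{b}\in \univ{A}^\ell$; then there is some $j\in J_{\struc{M}}$ with $\struc{A}\models \beta^j(\overline{b})\wedge \operatorname{pos}^j(\overline{a},\overline{b})\wedge \operatorname{neg}^j(\overline{a},\overline{b})$. Consider the substructure $\struc{H}_{\overline{b}}:=\struc{A}[\{b_1,\dots,b_\ell\}]$. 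Since $\beta^j$ uses only the $\overline{y}$-variables, $\struc{H}_{\overline{b}}\models \beta^j(\overline{b})$, and $\overline{b}$ spans $\univ{H_{\overline{b}}}$. By the maximality of $\mathfrak{H}_j$, there is $\struc{H}\in \mathfrak{H}_j$ with an isomorphism $f\colon \struc{H}\to \struc{H}_{\overline{b}}$ sending a witnessing tuple of $\struc{H}$ to $\overline{b}$; therefore $\struc{A}\models \iota^{\struc{H}}(\overline{b})$, which establishes the required disjunct.

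For the reverse implication, assume $\struc{A}$ satisfies the right-hand formula with witness $\overline{a}$ (so $\iota^{\struc{M}}(\overline{a})$ holds). Fix $\overline{b}\in \univ{A}^\ell$; then some $j\in J_{\struc{M}}$ and $\struc{H}\in \mathfrak{H}_j$ satisfy $\struc{A}\models \iota^{\struc{H}}(\overline{b})\wedge \operatorname{pos}^j(\overline{a},\overline{b})\wedge \operatorname{neg}^j(\overline{a},\overline{b})$. By construction of $\mathfrak{H}_j$, there is an ordering $\overline{c}$ of $\univ{H}$ with $\struc{H}\models \beta^j(\overline{c})$, and the definition of $\iota^{\struc{H}}(\overline{b})$ gives an isomorphism $\struc{H}\to \struc{A}[\{b_1,\dots,b_\ell\}]$ sending $\overline{c}$ to $\overline{b}$. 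Since $\beta^j$ is a quantifier-free conjunction of literals in only $\overline{y}$, its truth is preserved under this isomorphism, hence $\struc{A}\models \beta^j(\overline{b})$. Combining this with the pos/neg conjuncts yields the corresponding disjunct of $\chi(\overline{a},\overline{b})$, so $\struc{A}\models \varphi$.

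The main subtlety we expect is the handling of tuples $\overline{b}$ with repeated entries, because $\iota^{\struc{H}}$ as defined insists that distinct indices are realised by distinct elements, while the definition of $\mathfrak{H}_j$ allows $\overline{b}\in\univ{H}^\ell$ with $|\univ{H}|<\ell$. The clean way to handle this is to refine $\mathfrak{H}_j$ by equality patterns: for each equivalence relation $\pi$ on $\{1,\dots,\ell\}$, consider those pairs $(\struc{H},\overline{b})\in \mathfrak{H}_j$ whose ordering realises exactly $\pi$, and extend $\iota^{\struc{H}}(\overline{y})$ to enforce $y_i=y_j$ iff $i\sim_\pi j$ (replacing the blanket distinctness clause). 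With this refinement, the two arguments above go through unchanged, and the degree bound $d$ plays no role in the claim itself; it enters only in subsequent steps where indistinguishability with a $\Pi_1$-formula is established.
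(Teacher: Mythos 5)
Your proof reproduces the paper's own argument: the forward direction uses the fact that $\beta^j$ mentions only $\overline{y}$ to conclude $\struc{A}[\{b_1,\dots,b_\ell\}]$ realises some member of $\mathfrak{H}_j$, and the reverse direction transports $\beta^j$ back along the isomorphism that $\iota^{\struc{H}}(\overline{b})$ supplies. This is exactly the paper's route, so the two proofs coincide in substance.

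Your closing remark about tuples with repeated entries is a genuine refinement the paper glosses over. As $\iota^{\struc{H}}$ is defined (with the clause $\bigwedge_{i\neq j}\neg z_i=z_j$), a tuple $\overline{b}$ with $b_i=b_{i'}$ for $i\neq i'$ falsifies every $\iota^{\struc{H}}(\overline{b})$, so the right-hand side of the claim would be false at such a $\overline{b}$ even when $\chi(\overline{a},\overline{b})$ holds; the same issue arises because $\mathfrak{H}_j$ is presented as a set of structures rather than of pairs $(\struc{H},\overline{c})$, so the enumeration that $\iota^{\struc{H}}$ depends on is not pinned down. Your proposed repair --- partitioning by equality pattern, tracking the witnessing tuple together with $\struc{H}$, and relaxing the distinctness clause accordingly --- is the right way to make the claim precise, and the two directions of the proof go through unchanged with it in place. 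The observation that the degree bound $d$ plays no role here (the equivalence is in fact plain logical equivalence) is also correct.
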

	\begin{proof}
		Let $\struc{A}\in \classStruc{C}_{\sigma,d}$ and $\overline{a}=(a_1,\dots,a_k)\in \univ{A}^k$. First assume that $\struc{A}\models \forall \overline{y}\chi(\overline{a},\overline{y})$. Hence for any tuple $\overline{b}\in \univ{A}^\ell$ there is an index $j\in J_\struc{M}$ such that $\struc{A}\models \beta^j(\overline{b})\land \operatorname{pos}^j(\overline{a},\overline{b})\land \operatorname{neg}^j(\overline{a},\overline{b})$. 
		Then $\struc{A}\models \beta^j(\overline{b})$ implies that $\struc{A}[\{b_1,\dots,b_\ell\}]\cong \struc{H}$ for some $\struc{H}\in \mathfrak{H}_j$. Hence $\struc{A}\models \iota^\struc{H}(\overline{b})$ and $\struc{A}\models  \Big[\iota^\struc{M}(\overline{a})\land \bigvee_{{\struc{H}\in \mathfrak{H}_j,}\atop{j\in J_\struc{M}}} \Big(\iota^\struc{H}(\overline{b})\land \operatorname{pos}^j(\overline{a},\overline{b})\land \operatorname{neg}^j(\overline{a},\overline{b})\Big)\Big]$.
		
		For the other direction, we let $\struc{A}\models \forall\overline{y} \Big[\iota^\struc{M}(\overline{a})\land \bigvee_{{\struc{H}\in \mathfrak{H}_j,}\atop{j\in J_\struc{M}}} \Big(\iota^\struc{H}(\overline{y})\land \operatorname{pos}^j(\overline{a},\overline{y})\land \operatorname{neg}^j(\overline{a},\overline{y})\Big)\Big]$. Then for every tuple $\overline{b}\in \univ{A}^\ell$ there is an index $j\in J_\struc{M}$ and $\struc{H}\in \mathfrak{H}_j$ such that $\struc{H}\models \iota^\struc{H}(\overline{b})\land \operatorname{pos}^j(\overline{a},\overline{b})\land \operatorname{neg}^j(\overline{a},\overline{b})$. Therefore $\struc{A}[\{b_1,\dots,b_\ell\}]\cong \struc{H}$ and we know that $\struc{A}\models \beta^j(\overline{b})$. Therefore $\struc{A}\models \beta^j(\overline{b})\land \operatorname{pos}^j(\overline{a},\overline{b})\land \operatorname{neg}^j(\overline{a},\overline{b})$ and since this is true for any $\overline{b}\in \univ{A}^\ell$ we get $\struc{A}\models \varphi$.
	\end{proof}

Thus, it suffices to assume that 
\begin{eqnarray}
\varphi &=&\exists \overline{x}\forall\overline{y} \chi(\overline{x},\overline{y}),\label{eqn:phi_3}\\
& \text{ where }& \chi(\overline{x},\overline{y}):=\Big[\iota^\struc{M}(\overline{x})\land \bigvee_{{\struc{H}\in \mathfrak{H}_j,}\atop{j\in J_\struc{M}}} \Big(\iota^\struc{H}(\overline{y})\land \operatorname{pos}^j(\overline{x},\overline{y})\land \operatorname{neg}^j(\overline{x},\overline{y})\Big)\Big]\nonumber
\end{eqnarray}
for some $\struc{M}\in \mathfrak{M}$.  
	
Next we will define a universally quantified formula $\psi$  and show that $\classStruc{P}_\varphi$ is indistinguishable from the property $\classStruc{P}_\psi$. To do so we will need the two claims below.
Intuitively, Claim~\ref{claim:notManyTuples} says that models of $\varphi$ of bounded degree
do not have many `interactions'
between existential and universal variables -- only a constant number of tuples in relations
combine both types of variables.
Note that for a structure $\struc{A}$ and tuples $\overline{a}\in \univ{A}^k$, $\overline{b}=(b_1,\dots,b_\ell)\in \univ{A}^\ell$ the condition $\struc{A}\models \iota^\struc{H}(\overline{b})\land \operatorname{pos}^j(\overline{a},\overline{b})\land \operatorname{neg}^j(\overline{a},\overline{b})$ can force an element of $\overline{b}$ to be in a tuple (of a relation of $\struc{A}$) with an element of $\overline{a}$, even if $\operatorname{pos}^j(\overline{x},\overline{y})$ only contains literals of the form $x_i=y_{i'}$. (For example,  
it may be the case that for some tuple $\overline{b}'\in \{b_1,\dots,b_\ell\}^\ell$, every clause $\iota^{\struc{H'}}(\overline{y})\land \operatorname{pos}^{j'}(\overline{x},\overline{y})\land \operatorname{neg}^{j'}(\overline{x},\overline{y})$ for which $\struc{A}\models \iota^{\struc{H'}}(\overline{b}')\land \operatorname{pos}^{j'}(\overline{a},\overline{b}')\land \operatorname{neg}^{j'}(\overline{a},\overline{b}')$ forces a tuple to contain some element of $\overline{b}'$ and some element of $\overline{a}$.) We will now define a set $J$ to pick out the clauses that do not force a tuple to contain both an element from $\overline{a}$ and $\overline{b}$. Note that we still allow elements from $\overline{b}$ to be amongst the elements in $\overline{a}$. In Claim~\ref{claim:notManyTuples} we show that for every $\struc{A}\in \classStruc{C}_{\sigma,d}$, $\overline{a}\in \univ{A}^k$ for which $\struc{A}\models \forall \overline{y} \chi(\overline{a},\overline{y})$ there are a constant number of tuples $\overline{b}\in \univ{A}^\ell$ that only satisfy clauses which force a tuple to contain both an element from $\overline{a}$ and from $\overline{b}$.

	Let $j\in J_\struc{M}$, $\struc{H}\in \mathfrak{H}_j$ and $\overline{h}=(h_1,\dots,h_\ell)\in \univ{H}^\ell$ such that $\struc{H}\models \iota^\struc{H}(\overline{h})$. We define the set $P_{j,\struc{H}}:=\{h_i\mid i\in \{1,\dots,\ell\}, \operatorname{pos}^j(\overline{x},\overline{y})\text{ does not contain }y_i=x_{i'} \text{ for any }i'\in \{1,\dots,k\}\}.$
	Now we let $J\subseteq J_\struc{M}\times \classStruc{C}_{\sigma,d}$ be the set of pairs $(j,\struc{H})$, with $\struc{H}\in \mathfrak{H}_j$ such that  
	 the disjoint union $\struc{M}\sqcup \struc{H}[P_{j,\struc{H}}]\models \varphi$. Now $J$ precisely specifies the clauses that can be satisfied by a structure $\struc{A}$ and tuple $\overline{a}\in \univ{A}^k$ and $\overline{b}\in \univ{A}^\ell$ where $\struc{A}$ does not contain any tuples both containing elements from $\overline{a}$ and $\overline{b}$.

	\begin{claim}\label{claim:notManyTuples}
		Let $\struc{A}\in \classStruc{C}_{\sigma,d}$ and $\overline{a}=(a_1,\dots,a_k)\in \univ{A}^k$. If  $\struc{A}\models \forall \overline{y}\,\chi(\overline{a},\overline{y})$ then there are at most $k \cdot d$ tuples $\overline{b}\in \univ{A}^\ell $ such that $\struc{A}\not\models \bigvee_{(j,\struc{H})\in J}(\iota^\struc{H}(\overline{b})\land \operatorname{pos}^j(\overline{a},\overline{b})\land \operatorname{neg}^j(\overline{a},\overline{b}))$.
	\end{claim}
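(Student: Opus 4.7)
The plan is to show that every ``bad'' tuple $\overline{b}$ (one failing the $J$-indexed disjunction) must interact with $\overline{a}$ via an ``extra'' relational tuple in $\struc{A}$, and then apply the degree bound to count.

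Fix any bad $\overline{b}\in\univ{A}^\ell$. Since $\struc{A}\models \forall\overline{y}\,\chi(\overline{a},\overline{y})$, some disjunct $(j,\struc{H})$ with $j\in J_{\struc{M}}$ and $\struc{H}\in\mathfrak{H}_j$ must satisfy $\iota^{\struc{H}}(\overline{b})\land \operatorname{pos}^j(\overline{a},\overline{b})\land\operatorname{neg}^j(\overline{a},\overline{b})$ in $\struc{A}$. By hypothesis $(j,\struc{H})\notin J$, so $\struc{M}\sqcup\struc{H}[P_{j,\struc{H}}]\not\models\varphi$. The key claim I would then establish is that there must exist a relational tuple of $\struc{A}$ which (i) contains some element of $\overline{a}$ and some $b_i$ with $h_i\in P_{j,\struc{H}}$, and (ii) is not among the tuples forced by the atomic conjuncts of $\operatorname{pos}^j(\overline{a},\overline{b})$.

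I would prove this claim by contrapositive. Suppose no such extra tuple exists. Then the induced substructure $\struc{A}[\overline{a}\cup\{b_i:h_i\in P_{j,\struc{H}}\}]$ decomposes, up to the tuples forced by $\operatorname{pos}^j$, as the disjoint union of $\struc{A}[\overline{a}]\cong\struc{M}$ and $\struc{A}[\{b_i:h_i\in P_{j,\struc{H}}\}]\cong \struc{H}[P_{j,\struc{H}}]$. Using $\overline{a}$ as the existential witness and reconstructing the universal witness $\overline{b}'$ on the disjoint union by filling the $P_{j,\struc{H}}$-positions from the $\struc{H}[P_{j,\struc{H}}]$ side and the remaining positions via the identifications $y_i=x_{i'}$ from $\operatorname{pos}^j$, one obtains a witness showing $\struc{M}\sqcup\struc{H}[P_{j,\struc{H}}]\models\varphi$, contradicting $(j,\struc{H})\notin J$.

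Once the key claim holds, the counting is straightforward: since $\struc{A}$ has degree bounded by $d$, each of the $k$ elements of $\overline{a}$ occurs in at most $d$ relational tuples of $\struc{A}$, so the total number of tuples in $\struc{A}$ involving at least one element of $\overline{a}$ is at most $k\cdot d$. Each bad $\overline{b}$ is charged to such a tuple (the extra one produced by the key claim), yielding the desired bound.

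The main obstacle will be the careful induced-substructure argument in the second paragraph. In particular, the $\iota^{\struc{H}}(\overline{b})$ conjunct may involve atomic statements relating positions in $P_{j,\struc{H}}$ to positions outside $P_{j,\struc{H}}$ inside $\struc{H}$, and these must be correctly reinterpreted in the disjoint union via the $\operatorname{pos}^j$-substitutions; verifying that each conjunct of $\chi$ is preserved under this transfer, without the assumed ``extra'' tuple, is the delicate point.
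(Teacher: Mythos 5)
Your proposal follows essentially the same two-step outline as the paper's own (very brief) proof of this claim: (1) for each bad $\overline{b}$ there is a relational tuple of $\struc{A}$ that contains some $a_i$ and some element of $\overline{b}$ outside $\{a_1,\dots,a_k\}$, and (2) the degree bound $\sum_{i=1}^{k}\deg_{\struc{A}}(a_i)\leq k\cdot d$ is then invoked to bound $|B|$. You devote most of your effort to step (1), which the paper dispatches in one sentence (``each $\overline{b}\in B$ adds at least one to $\sum_{i=1}^{k}\deg_{\struc{A}}(a_i)$''); your contrapositive route through the definition of $J$ and $P_{j,\struc{H}}$ is the right instinct and matches the informal discussion preceding the claim in the paper.

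The point you should be aware of is that the counting in step (2) is where the real difficulty lies, and your write-up (like the paper's) leaves it implicit. As phrased, you charge each bad $\ell$-tuple to some relational tuple of $\struc{A}$ that meets $\overline{a}$, but this charging is not visibly injective. Two distinct bad tuples $\overline{b}\neq\overline{b}'$ can share the same ``new'' element $b\notin\{a_1,\dots,a_k\}$ that witnesses adjacency to some $a_i$ (say via $(a_i,b)\in\rel{R}{\struc{A}}$), while differing in all their other coordinates; both would then be charged to the same tuple, and $|B|\leq\sum_i\deg_{\struc{A}}(a_i)$ would not follow. Your final paragraph singles out the induced-substructure transfer in step (1) as ``the delicate point'', but if you try to complete the argument you will find the bottleneck is really to establish this injectivity, or else to bound the set of \emph{elements} occurring in bad tuples and re-derive a bound on $|B|$ from that. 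Neither your proposal nor the paper's one-line count closes this step explicitly.
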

	\begin{proof}
		Since	$\struc{A}\models \forall \overline{y}\,\chi(\overline{a},\overline{y})$, it holds that $\struc{A}\models \forall \overline{y} \bigvee_{{\struc{H}\in \mathfrak{H}_j,}\atop{j\in J_\struc{M}}} \Big(\iota^\struc{H}(\overline{y})\land \operatorname{pos}^j(\overline{a},\overline{y})\land \operatorname{neg}^j(\overline{a},\overline{y})\Big)$ by Equation(\ref{eqn:phi_3}). Now let 
		$B:=\{\overline{b}\in \univ{A}^\ell\mid \struc{A}\not\models \bigvee_{(j,\struc{H})\in J}(\iota^\struc{H}(\overline{b})\land \operatorname{pos}^j(\overline{a},\overline{b})\land \operatorname{neg}^j(\overline{a},\overline{b}))\}\subseteq \univ{A}^\ell$. Then each $\overline{b}\in B$ adds at least one to $\sum_{i=1}^{k}\deg_\struc{A}(a_i)$. 
		Since $\struc{A}\in \classStruc{C}_{\sigma,d}$ implies that $\sum_{i=1}^{k}\deg_\struc{A}(a_i)\leq k\cdot d$ we get that $|B|\leq k \cdot d$.
	\end{proof}	
	\begin{claim}\label{claim:propertiesOfUniversalProperties}
		Let $\psi$ be a formula of the form $\psi =  \forall \overline{z} \chi(\overline{z})$ where $\overline{z}=(z_1,\dots,z_t)$ is a tuple of variables and $\chi(\overline{z})$ is a quantifier-free formula. Let $\struc{A}\in \classStruc{C}_{\sigma,d} $ with $|\univ{A}|> d\cdot \ar(\sigma)\cdot t$ 
		and let $b\in A$ be an arbitrary element. Let $\struc{A}\models \psi$ and let $\struc{A'}$ be obtained from $\struc{A}$ by `isolating' $b$, i.\,e.\ by deleting all tuples containing $b$ from $\rel{R}{\struc{A}}$ for every $R\in \sigma$. Then  $\struc{A'}\models\psi$. 
	\end{claim}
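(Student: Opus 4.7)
The plan is to show directly that $\struc{A'} \models \psi$ by verifying that $\chi$ holds on every tuple. So fix an arbitrary tuple $\overline{a} = (a_1, \dots, a_t) \in \univ{A'}^t = \univ{A}^t$; the goal is to derive $\struc{A'} \models \chi(\overline{a})$ from $\struc{A} \models \psi$. I will split on whether $b$ appears in $\overline{a}$.

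In the easy case where $b \notin \{a_1,\dots,a_t\}$, observe that $\struc{A'}$ differs from $\struc{A}$ only by the removal of tuples containing $b$, and none of those tuples consists entirely of elements from $\{a_1,\dots,a_t\}$. Hence the atomic type of $\overline{a}$ is identical in $\struc{A}$ and $\struc{A'}$, and since $\chi$ is quantifier-free, $\struc{A'} \models \chi(\overline{a}) \iff \struc{A} \models \chi(\overline{a})$; the latter holds because $\struc{A}\models\psi$.

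The substantive case is $b \in \{a_1,\dots,a_t\}$. My strategy is to find an element $b' \in \univ{A}$ and let $\overline{a}'$ be the tuple obtained from $\overline{a}$ by replacing every occurrence of $b$ with $b'$, chosen so that the atomic type of $\overline{a}'$ in $\struc{A}$ coincides with the atomic type of $\overline{a}$ in $\struc{A'}$. Granted this, $\struc{A} \models \chi(\overline{a}')$ (which holds by $\struc{A}\models\psi$) would translate directly into $\struc{A'} \models \chi(\overline{a})$, as desired. To match atomic types, $b'$ must satisfy: (a) $b' \neq a_j$ for every $a_j \neq b$, so that equality literals evaluate correctly; and (b) whenever a relation literal $R(z_{i_1},\dots,z_{i_{\ar(R)}})$ of $\chi$ is instantiated on $\overline{a}'$ to a tuple involving $b'$, that tuple lies outside $\rel{R}{\struc{A}}$ — mirroring the fact that in $\struc{A'}$ the isolated element $b$ satisfies no such literal.

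The existence of $b'$ follows from the size hypothesis by counting forbidden elements. Condition (a) rules out at most $t-1$ elements. For condition (b), it suffices that $b'$ shares no tuple (of any relation of $\struc{A}$) with any $a_j\neq b$: each such $a_j$ (at most $t-1$ distinct ones) lies in at most $d$ tuples, each of which has at most $\ar(\sigma)$ positions, so the number of elements forbidden by (b) is at most $(t-1)\cdot d\cdot \ar(\sigma)$. Adding $b$ itself gives a total of at most $d\cdot\ar(\sigma)\cdot t$ forbidden elements, which is strictly less than $|\univ{A}|$ by assumption, so a valid $b'$ exists. The main obstacle I expect is the careful verification of condition (b) for literals whose arguments are drawn entirely from positions of $\overline{a}$ originally occupied by $b$: such literals become $R(b',\dots,b')$-style constraints that must be false in $\struc{A}$, and ruling these out must be done by leveraging the same degree–arity bound together with the chosen "non-sharing" property of $b'$. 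Once this atomic-type matching is established, the claim follows.
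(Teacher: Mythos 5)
Your proposal follows exactly the paper's route: the same case split on whether $b$ occurs in $\overline{a}$, and the same substitution of a ``far-away'' element $b'$ for $b$ in the hard case. The paper chooses $b'$ with $\dist_{\struc{A}}(a_j,b')>1$ for all $j$ and asserts that $a_j\mapsto a_j'$ is an isomorphism between the induced substructures $\struc{A}'[\{a_1,\dots,a_t\}]$ and $\struc{A}[\{a'_1,\dots,a'_t\}]$; your conditions (a) and (b) are the unpacking of that distance requirement, and your counting matches theirs.

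The obstacle you flag, however, is not something that can be closed by ``leveraging the same degree--arity bound together with the non-sharing property.'' Those give no control over whether $b'$ lies in a tuple all of whose entries equal $b'$, and there may simply be no admissible $b'$ avoiding this. Concretely, take $\sigma=\{R\}$ with $R$ binary, $\chi(z)=R(z,z)$ (so $t=1$), and let $\struc{A}$ be any structure, however large, in which every element $a$ carries the self-loop $(a,a)\in\rel{R}{\struc{A}}$ (this has degree $1$). Then $\struc{A}\models\psi$, but $\struc{A}'\not\models\psi$ after isolating any $b$: so the claim as stated fails, and it fails exactly at the step you single out, since every candidate $b'$ has a self-loop and hence the atomic type of $\overline{a}'=(b')$ in $\struc{A}$ cannot match that of $\overline{a}=(b)$ in $\struc{A}'$. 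The paper's own proof has the same soft spot --- it asserts that ``$b'$ is an isolated element in $\struc{A}[\{a'_1,\dots,a'_t\}]$,'' which is false whenever $(b',\dots,b')\in\rel{R}{\struc{A}}$ for some $R$. Repairing this requires an additional hypothesis (for instance that $\chi$ has no positive atom $R(z_i,\dots,z_i)$ whose arguments are all the same variable, or that $\struc{A}$ has an element lying in no such constant tuple), not a sharper count, and any such hypothesis must then be checked at the point where the claim is invoked.
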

	\begin{proof}
		First note that $\struc{A'}\models\chi(\overline{a})$ for any tuple $\overline{a}=(a_1,\dots,a_t)\in (\struc{A}\setminus \{b\})^t$ as no tuple over the set of elements $\{a_1,\dots,a_t\}$ has been deleted.  
		Let $\overline{a}=(a_1,\dots,a_t)\in \univ{A}^t$ be a tuple containing $b$. Pick $b'\in \univ{A}$ such that $\operatorname{dist}_\struc{A}(a_j,b')>1$ for every $j\in \{1,\dots,t\}$. Such an element exists as $|\univ{A}|> d\cdot \ar(R)\cdot t$. Let $\overline{a}'=(a_1',\dots,a_t')$ be the tuple obtained from $\overline{a}$ by replacing any occurrence of $b$ by $b'$. Hence $a_j\mapsto a_j'$ defines an isomorphism from $\struc{A'}[\{a_1,\dots,a_t\}]$ to $\struc{A}[\{a_1',\dots,a_t'\}]$ since $b$ is an isolated element in $\struc{A'}[\{a_1,\dots,a_t\}]$ and $b'$ is an isolated element in $\struc{A}[\{a_1',\dots,a_t'\}]$. Since $\struc{A}\models \chi(\overline{a}')$, it follows that $\struc{A'}\models \chi(\overline{a})$.
	\end{proof}
Let $J'\subseteq J$ be the set of all pairs $(j,\struc{H})$ for which $\operatorname{pos}^j(\overline{x},\overline{y})$ is the empty conjunction. $J'$ contains $(j,\struc{H})$ for which we want to use $\iota^\struc{H}(\overline{y})$ to define the formula $\psi$.
	\begin{claim}\label{claim:indistinguishable}
		The property $\classStruc{P}_\varphi$ with $\varphi$ as in (\ref{eqn:phi_3}) is indistinguishable from the property $\classStruc{P}_\psi$ where $\psi:=\forall \overline{y} \bigvee_{(j,\struc{H})\in J'}\iota^\struc{H}(\overline{y})$.  
	\end{claim}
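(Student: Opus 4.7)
The plan is to establish both directions of indistinguishability by only $O(1)$-many tuple modifications, so that once $N$ is chosen as a function of $\epsilon,d,k,\ell$ and the constants coming from $\varphi$, the modifications are $\epsilon$-small whenever the input size exceeds $N$.

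For the forward direction ($\classStruc{P}_\varphi \to \classStruc{P}_\psi$), given $\struc{A}\models\varphi$ with existential witness $\overline{a}=(a_1,\dots,a_k)$ inducing $\struc{M}$, I will construct $\tilde{\struc{A}}$ by isolating $\overline{a}$, i.e.\ deleting every tuple that contains some $a_i$. The degree bound gives at most $k\cdot d$ deletions. By iterating Claim~\ref{claim:propertiesOfUniversalProperties}, $\tilde{\struc{A}}$ still satisfies $\forall\overline{y}\,\chi(\overline{a},\overline{y})$. To verify $\tilde{\struc{A}}\models\psi$, I take any $\overline{b}$ and show that some $(j',\struc{H})\in J'$ satisfies $\iota^{\struc{H}}(\overline{b})$ in $\tilde{\struc{A}}$. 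The principal case is $\overline{b}$ disjoint from $\overline{a}$ with no $b_i$ in a tuple with any $a_{i'}$ in $\struc{A}$: any clause of $\chi(\overline{a},\overline{y})$ satisfied by such $\overline{b}$ must have empty $\operatorname{pos}^j$, since an equation $y_i=x_{i'}$ would force $b_i=a_{i'}$ (contradicting disjointness) and a relational atom mixing $\overline{x}$ and $\overline{y}$ would require a tuple between $\overline{a}$ and $\overline{b}$ (contradicting non-adjacency); hence $(j,\struc{H})\in J'$. For the $\overline{b}$ meeting or adjacent to $\overline{a}$, I will argue that the induced substructure in $\tilde{\struc{A}}[\{b_1,\dots,b_\ell\}]$, in which the $a_i$ appear as isolated elements, coincides with some $\struc{H}'$ for which $(j',\struc{H}')\in J'$, leveraging the disjoint-union condition defining $J$.

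For the backward direction ($\classStruc{P}_\psi \to \classStruc{P}_\varphi$), given $\struc{B}\models\psi$ with $|\univ{B}|$ large, I will greedily pick $k$ elements $a_1,\dots,a_k$ at pairwise distance at least $2$ (possible by the degree bound once $|\univ{B}|$ is large enough), isolate those elements, and insert the tuples of $\struc{M}$ internally among them to obtain $\tilde{\struc{B}}$. Only $O(1)$ tuples are modified. The planted tuple $\overline{a}$ witnesses $\iota^{\struc{M}}(\overline{x})$. For any $\overline{b}$ disjoint from $\overline{a}$ the induced substructure on $\{b_1,\dots,b_\ell\}$ is untouched, so by $\struc{B}\models\psi$ some $(j,\struc{H})\in J'$ yields $\iota^{\struc{H}}(\overline{b})$; since $\operatorname{pos}^j$ is empty (by $(j,\struc{H})\in J'$) it is vacuously satisfied, and $\operatorname{neg}^j(\overline{a},\overline{b})$ holds because $\overline{a}$ is isolated from $\overline{b}$. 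For $\overline{b}$ meeting $\overline{a}$, I will use the definition $\struc{M}\sqcup \struc{H}[P_{j,\struc{H}}]\models \varphi$ to identify a matching clause of $\chi$.

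The main obstacle is handling those $\overline{b}$ whose elements coincide with or are adjacent to the existential witness $\overline{a}$. Although these form a vanishing fraction of $\univ{\cdot}^\ell$, the formulas $\varphi$ and $\psi$ impose a pointwise condition on them, so each such $\overline{b}$ must still be matched to a clause in the restricted set $J'$ rather than merely $J$. The interplay between the identification atoms $y_i=x_{i'}$ (which shape the set $P_{j,\struc{H}}$) and the relational atoms appearing in $\operatorname{pos}^j$ is delicate, and is precisely what the disjoint-union condition $\struc{M}\sqcup\struc{H}[P_{j,\struc{H}}]\models\varphi$ in the definition of $J$ is designed to control.
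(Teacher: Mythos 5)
Your forward direction has a genuine gap: isolating only $\overline{a}$ is not enough to produce a model of $\psi$. The problematic tuples are exactly the ones your proposal calls the ``delicate'' case. Take $\overline{b}$ with $\{b_1,\dots,b_\ell\}\cap\{a_1,\dots,a_k\}=\emptyset$ but such that some tuple of $\struc{A}$ joins an $a_{i'}$ to a $b_i$, and suppose (as can happen) that the disjoint union $\struc{M}\sqcup\struc{A}[\{b_1,\dots,b_\ell\}]\not\models\varphi$. Such an $\overline{b}$ is exactly a bad tuple $\overline{b}\in B$ in the sense of Claim~\ref{claim:notManyTuples}: the clause it satisfies in $\struc{A}$ has $\operatorname{pos}^j$ consisting of relational atoms witnessing the $\overline{a}$--$\overline{b}$ adjacency, and $(j,\struc{H})\notin J$. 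After you isolate only $\overline{a}$, the induced structure on $\{b_1,\dots,b_\ell\}$ is \emph{unchanged} (since $\overline{b}$ is disjoint from $\overline{a}$), so $\tilde{\struc{A}}[\{b_1,\dots,b_\ell\}]\cong\struc{H}:=\struc{A}[\{b_1,\dots,b_\ell\}]$; but any $(j',\struc{H})\in J'$ requires $\struc{M}\sqcup\struc{H}\models\varphi$, which fails. Hence $\tilde{\struc{A}}\not\models\psi$. Your ``meeting or adjacent'' case sketch speaks of ``$a_i$ appearing as isolated elements'' in $\tilde{\struc{A}}[\{b_1,\dots,b_\ell\}]$, but in the disjoint-and-adjacent case no $a_i$ occurs there at all, so that sketch does not apply. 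For a concrete instance take $\sigma=\{E_1,E_2\}$, $\struc{M}$ a single element with an $E_1$-self-loop, and $\varphi$ asserting (besides $\iota^{\struc{M}}(x_1)$) that each $y_1\neq x_1$ is either completely isolated and non-adjacent to $x_1$, or has an $E_2$-self-loop \emph{and} an $E_2$-edge to $x_1$. Large models exist; but isolating the witness removes the $E_2$-edge while leaving the $E_2$-self-loop, producing an element whose $1$-element structure has no $J'$-clause.

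The paper's proof avoids this precisely by isolating the larger set $C:=\{a_1,\dots,a_k\}\cup\{b\mid b\text{ occurs in some }\overline{b}\in B\}$, which by Claim~\ref{claim:notManyTuples} still has constant size (at most $k+kd\ell$). It then handles tuples that intersect $C$ via a substitution by elements $c'_i$ chosen at distance $>1$ from $\overline{a}$, from the surviving $b_i$, and from each other, so that the substituted tuple is disjoint and non-adjacent, reducing to the clean case. You also cannot invoke Claim~\ref{claim:propertiesOfUniversalProperties} to conclude $\tilde{\struc{A}}\models\forall\overline{y}\chi(\overline{a},\overline{y})$: that claim concerns closed $\Pi_1$ sentences, and after isolating $\overline{a}$ the conjunct $\iota^{\struc{M}}(\overline{a})$ is destroyed (unless $\struc{M}$ itself is edgeless). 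The paper instead applies Claim~\ref{claim:propertiesOfUniversalProperties} to the constant-free sentence $\forall\overline{y}\bigvee_{j,\struc{H}}\iota^{\struc{H}}(\overline{y})$. Your backward direction is essentially the paper's and is fine.
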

	\begin{proof} Let $\epsilon>0$ and  $N(\epsilon)=N:= \frac{k\cdot \ell^2\cdot d\cdot \ar(R)}{\epsilon}$ and $\struc{A}\in \classStruc{C}_{\sigma,d}$ be  any structure with $|\univ{A}|>N$. 
		
		First assume that $\struc{A}\models \varphi$. The strategy is to isolate any element $b$  which is contained in a tuple $\overline{b}\in \univ{A}^\ell$ such that  $\struc{A}\not \models \bigvee_{(j,\struc{H})\in J'}\iota^\struc{H}(\overline{b})$ by deleting all tuples containing $b$. This will result in a structure which is  $\epsilon$-close to $\struc{A}$ and a model of $\psi$. 
		
		Let $\overline{a}\in \univ{A}^k$ be a tuple such that $\struc{A}\models \forall \overline{y}\chi(\overline{a},\overline{y})$. Let $B\subseteq \univ{A}^\ell$ be the set of tuples $\overline{b}\in \univ{A}^\ell $ such that $\struc{A}\not\models \bigvee_{(j,\struc{H})\in J}(\iota^\struc{H}(\overline{b})\land \operatorname{pos}^j(\overline{a},\overline{b})\land \operatorname{neg}^j(\overline{a},\overline{b}))$.  Then $|B|\leq k\cdot d$ by Claim \ref{claim:notManyTuples}.  Hence the structure $\struc{A'}$ obtained from $\struc{A}$ by deleting all tuples containing an element of  $C:=\{a_1,\dots,a_k\}\cup \big\{b\in A\mid\text{ there is }(b_1,\dots,b_\ell)\in B\text{ such that }b\in \{b_1,\dots,b_\ell\}\big\}$ is $\epsilon$-close to $\struc{A}$. 
		Since $\struc{A}\models \forall \overline{y}\chi(\overline{a},\overline{y})$ implies $\struc{A}\models \forall \overline{y}\bigvee_{{\struc{H}\in \mathfrak{H}_j,}\atop{j\in J_\struc{M}}} \iota^\struc{H}(\overline{y})$, by Claim \ref{claim:propertiesOfUniversalProperties} we know that $\struc{A'}\models  \forall \overline{y}\bigvee_{{\struc{H}\in \mathfrak{H}_j,}\atop{j\in J_\struc{M}}} \iota^\struc{H}(\overline{y})$. 
		For any tuple $\overline{b}=(b_1,\dots,b_\ell)\in (\univ{A}\setminus C)^\ell$ we have by definition of $J'$ that $\struc{A}\models \iota^\struc{H}(\overline{b})$ for some $(j,\struc{H})\in J'$. Furthermore $\struc{A}[\{b_1,\dots,b_\ell\}]=\struc{A'}[\{b_1,\dots,b_\ell\}]$ and hence $\struc{A'}\models \bigvee_{(j,\struc{H})\in J'}\iota^\struc{H}(\overline{b})$.
		Let $\overline{b}=(b_1,\dots,b_\ell)\in \univ{A}^\ell$ be any tuple containing elements from $C$ and let $c_1,\dots,c_t\in C$ be those elements.  
		Pick $t$ elements $c_1',\dots,c_{t}'\in \univ{A}\setminus C$ such that $\operatorname{dist}_\struc{A}(a_i,c_{i'}')>1$, $\operatorname{dist}_\struc{A}(c_{i'}',b_{i})>1$ and $\operatorname{dist}_\struc{A}(c'_i,c_{i'}')>1$ for suitable $i,i'$. This is possible as $|\univ{A}|> (k+2\ell)\cdot d\cdot \ar(R)$ which guarantees the existence of $k+2\ell$ elements of pairwise distance $1$.
		Let $\overline{b}'=(b_1',\dots,b_\ell')$ be the vector obtained from $\overline{b}$ by replacing $c_{i}$  with $c'_i$. Since $\overline{b}'\in \univ{A}^\ell$ there must be $j'$, $\struc{H'}\in \mathfrak{H}_j$ such that  $\struc{A}\models \iota^{\struc{H'}}(\overline{b}')\land \operatorname{pos}^{j'}(\overline{a},\overline{b}')\land \operatorname{neg}^{j'}(\overline{a},\overline{b}')$. By choice of $c_1',\dots,c_{t}'$ we have that $\operatorname{pos}_{j'}(\overline{x},\overline{y})$ must be the empty conjunction and hence $(j',\struc{H'})\in J'$. 
		Since additionally $b_{i}\mapsto b_{i}'$ defines an isomorphism of $\struc{A}[\{b_1',\dots,b_\ell'\}]$ and $ \struc{A'}[\{b_1,\dots,b_\ell\}]$ this implies that $\struc{A'}\models  \bigvee_{(j,\struc{H})\in J'}\iota^\struc{H}(\overline{b})$ for all $\overline{b}\in \univ{A}^\ell$ and hence $\struc{A'}\models \psi$.\\

		Now we prove the other direction. Let $\struc{A}\models \psi$ with $|\univ{A}|>N$. The idea here is to plant the structure $\struc{M}$ somewhere in $\struc{A}$. While this takes less then an $\epsilon$-fraction of edge modifications the resulting structure will be a model of $\varphi$.
		
		Take any set $B\subseteq \struc{A}$ of $|\univ{M}|$ elements. Let $\struc{A'}$ be the structure obtained from $\struc{A}$ by deleting all edges incident to any element contained in $B$. Let $\struc{A''}$ be the structure obtained from $\struc{A'}$ by adding all tuples such that the structure induced by $B$ is isomorphic to $\struc{M}$. This takes no more than $2\ell\cdot d\cdot \ar(R)<\epsilon\cdot d \cdot |\univ{A}|$ edge modifications. Let $\overline{a}\in B^k$ be such that $\struc{A}\models \iota^\struc{M}(\overline{a})$.  By Claim \ref{claim:propertiesOfUniversalProperties} we get $\struc{A'}\models \psi$. Therefore pick any tuple $\overline{b}=(b_1,\dots,b_\ell)\in (\univ{A}\setminus B)^\ell$. Since by construction we have that all $b_i$'s are of distance at least two from $\overline{a}$ we have that  $\struc{A''}\models \bigvee_{(j,\struc{H})\in J'}(\iota^\struc{H}(\overline{b})\land \operatorname{neg}^j(\overline{a},\overline{b}))$. By choice of $\struc{M}$ we also know that $\struc{A''}\models  \bigvee_{{\struc{H}\in \mathfrak{H}_j,}\atop{j\in J_\struc{M}}}\Big(\iota^\struc{H}(\overline{b})\land \operatorname{pos}^j(\overline{a},\overline{b})\land \operatorname{neg}^j(\overline{a},\overline{b})\Big)$ for all $\overline{b}\in B^\ell$. 
		Therefore pick $\overline{b}=(b_1,\dots,b_\ell)$ containing both elements from $B$ and from $\univ{A}\setminus B$. Now pick a tuple $\overline{b}'=(b_1',\dots,b_\ell')\in (\univ{A}\setminus B)^\ell$ that equals $\overline{b}$ in all positions containing an element from $\univ{A}\setminus B$. As noted before there is $(j,\struc{H})\in J'$ such that $\struc{A''}\models (\iota^\struc{H}(\overline{b}')\land \operatorname{neg}^j(\overline{a},\overline{b}'))$. Hence	$A''[\{b_1',\dots,b_\ell'\}]$ is isomorphic to $H$ and further because $(j,H)\in J'$ the set $P_{j,H}$ (used in the definition of J) is the entire universe of $H$. Since $J'\subseteq J$ this means  that by the definition of $J$ we get $ \struc{A''}[\{a_1,\dots, a_k,b'_1\dots b_\ell'\}]\cong \struc{A''}[\{a_1,\dots, a_k\}]\sqcup A''[\{b'_1\dots b_\ell'\}]\cong M\sqcup H[P_{j,H}] \models\varphi$. %\pan{requires more explanation}\noleen{I added some more explanation to make this more clear....} 
		Since $\overline{b}\in \{a_1,\dots, a_k,b'_1\dots b_\ell'\}^\ell$  this implies $$\struc{A''}[\{a_1,\dots, a_k,b'_1\dots b_\ell'\}]\models \bigvee_{{\struc{H}\in \mathfrak{H}_j,}\atop{j\in J_\struc{M}}}\Big(\iota^\struc{H}(\overline{b})\land \operatorname{pos}^j(\overline{a},\overline{b})\land \operatorname{neg}^j(\overline{a},\overline{b})\Big)$$. Then $\struc{A''}\models \bigvee_{{\struc{H}\in \mathfrak{H}_j,}\atop{j\in J_\struc{M}}}\Big(\iota^\struc{H}(\overline{b})\land \operatorname{pos}^j(\overline{a},\overline{b})\land \operatorname{neg}^j(\overline{a},\overline{b})\Big)$ and hence $\struc{A''}\models \varphi$.
	\end{proof}
	Since $\psi \in \Pi_1$ we have that $\classStruc{P}_\psi$ is testable, and hence $\classStruc{P}_\varphi$ is testable by Claim \ref{claim:indistinguishable}.
\end{proof}
\section{GSF-locality is not sufficient for proximity oblivious testing}\label{sec:GSFlocality}
In this section we show that the property $\graphProp$ can be defined by a generalised 
notion of forbidden subgraph introduced in \cite{goldreich2011proximity} (Lemma~\ref{lemma:graphproperty_gsf_local}). 
Here a subgraph is only forbidden if it is connected to the rest of the graph in a predefined way, \ie 
for a vertex in a forbidden subgraph we can specify that it can not have neighbours which are not 
contained in the subgraph itself.  Combining our results we show that not every property definable 
by generalised forbidden subgraphs are testable in the bounded-degree model 
(Theorem~\ref{thm:existenceLocalNonTestableProperty}). This implies a negative 
answer to a question posed by Goldreich and Ron in \cite{goldreich2011proximity} 
(Question~\ref{que:areAllGSFLocalPropertiesPropagating}) which asks whether a small 
number of appearances of generalised forbidden subgraphs can be fixed with a small 
number of edge modification or whether any way of fixing the appearances invokes a 
chain reaction of necessary edge modifications. In the following we introduce the 
notions and results needed from \cite{goldreich2011proximity}.

\subsection{Generalised subgraph freeness}\label{sec:gsf_preliminaries}
In the following, we present the formal definitions of generalised subgraph freeness, GSF-local properties and the notion of non-propagation, which were introduced in \cite{goldreich2011proximity}. 
\begin{definition}[Generalized subgraph freeness (GSF)
	]\label{def:gsf}
	A \emph{marked} graph is a graph with each vertex marked as either \emph{`full'} or \emph{`semifull'} or \emph{`partial'}. An \emph{embedding} of a marked graph $F$ into a graph $G$ is an injective map $f:V(F)\rightarrow V(G)$ such that for every $v\in V(F)$ the following three conditions hold.
	\begin{enumerate}
		\item If $v$ is marked `full', then 
		$N_1^G(f(v))=f(N_1^F(v))$. 
		\item If $v$ is marked `semifull', then  
		$N_1^G(f(v))\cap f(V(F))=f(N_1^F(v))$.
		\item If $v$ is marked `partial', then  
		$N_1^G(f(v))\supseteq f(N_1^F(v))$.
	\end{enumerate}
	The graph $G$ is called $F$-free if  there is no embedding of $F$ into $G$. For a set of marked graphs $\mathcal{F}$, a graph $G$ is called $\mathcal{F}$-free if it is $F$-free for every $F\in \mathcal{F}$. 
\end{definition}
Based on the above definition of generalised subgraph freeness, we can define GSF-local properties. 
\begin{definition}[GSF-local properties	]\label{def:localProp}
	Let $\mathcal{P}=\bigcup_{n\in \mathbb{N}}\mathcal{P}_n$ be a graph property where $\mathcal{P}_n=\{G\in \mathcal{P}\mid |V(G)|=n\}$ and  $\overline{\mathcal{F}} = (\mathcal{F}_n)_{n \in \mathbb{N}}$ a sequence of sets of
	marked graphs. $\mathcal{P}$ is called \emph{$\overline{\mathcal{F}}$-local} if there exists an integer $s$ such that for every $n$ the following conditions hold.
	\begin{enumerate}
		\item $\mathcal{F}_n$ is a set of marked graphs, each of size at most $s$.
		\item $\mathcal{P}_n$ equals the set of $n$-vertex graphs that are $\mathcal{F}_n$-free. 
	\end{enumerate}
	$\mathcal{P}$ is called \emph{GSF-local} if there is a sequence $\overline{\mathcal{F}} = (\mathcal{F}_n)_{n \in \mathbb{N}}$  of sets of
	marked graphs such that $\mathcal{P}$ is $\overline{\mathcal{F}}$-local.
\end{definition}
The following notion of non-propagating condition of a sequence of sets of marked graphs was introduced to study constant-query POTs.
\begin{definition}[Non-propagating	]
	Let $\overline{\mathcal{F}} = (\mathcal{F}_n)_{n \in \mathbb{N}}$ be a sequence of sets of
	marked graphs.
	\begin{itemize}
		\item For a graph $G $, a subset $B \subset V(G)$ \emph{covers} $\mathcal{F}_n$ in $G$ if for every marked
		graph $F \in  \mathcal{F}_n$ and every embedding of $F$ in $G$, at least one vertex of $F$ is mapped to a vertex
		in $B$.
		\item The sequence $\overline{\mathcal{F}}$ is \emph{non-propagating} if there exists a (monotonically non-decreasing) function
		$\tau: (0, 1] \rightarrow (0, 1]$ such that the following two conditions hold.
		\begin{enumerate}
			\item For every $\epsilon > 0$ there exists $\beta > 0$ such that $\tau(\beta) < \epsilon$.
			\item For every graph $G$ and every $B \subset V(G)$ such that $B$ covers $\mathcal{F}_n$ in $G$, either $G$
			is $\tau(|B|/n )$-close to being $\mathcal{F}_n$-free or there are no $n$-vertex graphs that are $\mathcal{F}_n$-free. 
		\end{enumerate}
		A GSF-local property $\mathcal{P}$ is \emph{non-propagating} if there exists a non-propagating sequence $\overline{\mathcal{F}}$ such
		that $\mathcal{P}$ is $\overline{\mathcal{F}}$-local.
	\end{itemize}
\end{definition}
In the above definition, the set $B$ can be viewed as the set involving necessary modifications for repairing a graph $G$ that does not satisfy the property $\mathcal{P}$ that is $\overline{\mathcal{F}}$-local, and the second condition says we do not need to modify $G$ ``much beyond'' $B$. In particular, it implies we can repair %\noleen{repair?} 
$G$ without triggering a global ``chain reaction''. 
Goldreich and Ron gave the following characterization for the proximity-oblivious testable properties in the bounded-degree graph model. 
\begin{theorem}[Theorem 5.5 in \cite{goldreich2011proximity}]\label{thm:charOfPOT}
	A graph property $\mathcal{P}$ has a
	constant-query proximity-oblivious tester if and only if $\mathcal{P}$ is GSF-local and non-propagating.
\end{theorem}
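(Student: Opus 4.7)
The plan is to prove both directions separately, relying on a canonical form for POTs in the bounded-degree model.

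First I would reduce any constant-query POT to a \emph{canonical} form: on input of size $n$, pick a vertex $v\in V(G)$ uniformly at random, perform a BFS from $v$ of some constant radius $r$ (where $r$ depends only on the query complexity $q$), and base the accept/reject decision solely on the isomorphism type of the explored rooted ball together with the `boundary information' about which vertices were reached by the search but whose full neighborhoods were not explored. This canonicalization is analogous to the canonical tester statements used in Section~\ref{sec:FOnontestability} (see the discussion after Theorem~\ref{thm:Locality}) and follows because the tester is one-sided and its basic test is independent of the proximity parameter, so any rejection must be blamed on a fixed local configuration.

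For the direction GSF-local $+$ non-propagating $\Rightarrow$ POT, assume $\mathcal{P}$ is $\overline{\mathcal{F}}$-local for a non-propagating sequence $\overline{\mathcal{F}}$ with size bound $s$, witnessed by $\tau$. The POT picks a uniformly random vertex $v$, explores its $s$-ball, and rejects iff some marked graph $F\in \mathcal{F}_n$ has an embedding into $G$ that maps at least one vertex of $F$ to $v$. Completeness is immediate from $\mathcal{P}_n$ being exactly the $\mathcal{F}_n$-free graphs. For detection probability, fix $G$ that is $\epsilon$-far from $\mathcal{P}_n$, and let $B$ be a minimum subset of $V(G)$ covering $\mathcal{F}_n$ in $G$. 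Non-propagation forces $|B|/n \geq \beta(\epsilon)$, where $\beta$ is chosen so that $\tau(\beta) < \epsilon$; otherwise $G$ would be $\epsilon$-close to being $\mathcal{F}_n$-free and hence to $\mathcal{P}$. Now for each $b\in B$, there is at least one embedding whose image contains $b$; picking $v\in B$ during sampling suffices to expose such an embedding in the $s$-ball around $v$. Hence the rejection probability is at least $\beta(\epsilon)$, which yields the required detection function $\eta$.

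For the converse direction POT $\Rightarrow$ GSF-local $+$ non-propagating, take the canonical POT with radius $r$ and define $\mathcal{F}_n$ as follows. For each rooted isomorphism type $\tau$ of an $r$-ball on $n$-vertex graphs that causes the canonical tester to reject with positive probability, build a marked graph $F_\tau$ whose vertex set is the ball: vertices at radius $<r$ from the root are marked \emph{full} (their neighborhoods were completely explored), vertices at radius exactly $r$ are marked \emph{semifull} (only the edges to already-seen vertices are known), and the root carries the role of the sampled vertex. A straightforward bijection between embeddings of $F_\tau$ into $G$ and occurrences of the local view $\tau$ in $G$ shows that $G\in \mathcal{P}_n$ iff $G$ is $\mathcal{F}_n$-free, using one-sidedness to rule out any embedding when $G\in \mathcal{P}_n$ and the definition of $\mathcal{F}_n$ in the other direction. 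To verify non-propagation, let $B$ cover $\mathcal{F}_n$ in $G$. Every sampled vertex outside a constant-radius neighborhood of $B$ yields an $r$-ball whose type is \emph{not} in $\mathcal{F}_n$; hence the canonical tester rejects only when the sampled vertex lies in the $r$-neighborhood of $B$, an event of probability at most $d^{O(r)}\cdot|B|/n$. If $G$ were $\epsilon$-far from $\mathcal{P}_n$, the detection probability $\eta(\epsilon)$ lower-bounds this event, yielding $|B|/n\geq \eta(\epsilon)/d^{O(r)}$. Contrapositively, setting $\tau(\beta)$ to be the supremum of $\epsilon$ compatible with $\beta\geq \eta(\epsilon)/d^{O(r)}$ gives the required non-propagation function.

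The main obstacle is the forward direction: ensuring that the probabilistic acceptance of the POT can be faithfully converted into a purely combinatorial family of forbidden marked subgraphs. One must handle types on which the tester rejects with intermediate probability and ensure that the markings (full vs.\ semifull vs.\ partial) correctly capture what the tester actually sees when its BFS is truncated by the degree bound. The non-propagation analysis then hinges on the quantitative trade-off between detection probability and the density of forbidden embeddings, which is precisely what the covering set $B$ quantifies.
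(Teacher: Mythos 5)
The paper does not prove this statement: it is quoted verbatim as Theorem 5.5 of Goldreich and Ron \cite{goldreich2011proximity}, so there is no in-paper proof against which to compare your sketch. Evaluating the sketch on its own terms, the backward direction (POT $\Rightarrow$ GSF-local and non-propagating) is plausible in outline, but the forward direction (GSF-local and non-propagating $\Rightarrow$ POT) contains a concrete gap.

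Your basic test picks a single uniform vertex $v$, explores the $s$-ball around it, and rejects iff some $F\in\mathcal{F}_n$ admits an embedding into $G$ with at least one vertex mapped to $v$. This rejection predicate is not computable from the $s$-ball around $v$ when $F$ is disconnected: if $F=F_1\sqcup F_2$, an embedding can place $v$ in the image of $F_1$ while the image of $F_2$ lies arbitrarily far from $v$, so the explored neighbourhood carries no information about whether such an embedding exists. Definition~\ref{def:gsf} explicitly permits disconnected marked graphs, and Section~\ref{subsec:degree1-2} of the paper stresses that the jump from connected to disconnected forbidden generalised subgraphs is non-trivial; your detection-probability step (``picking $v\in B$ suffices to expose such an embedding in the $s$-ball'') implicitly assumes connectivity. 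The fix is to have the basic test sample as many uniform vertices as the maximum number of connected components over all $F\in\mathcal{F}_n$, explore an $s$-ball around each, and look for an embedding whose distinct components simultaneously hit the respective sampled vertices; the detection probability then drops to a power of $|B|/n$ but remains a monotone positive function, which is all the POT definition needs. The same oversight affects your canonicalisation in the other direction: a constant-query POT in the bounded-degree model canonicalises to sampling several vertices and BFS-exploring around each, not a single vertex, and it is exactly those multi-vertex joint views that give rise to disconnected marked graphs in the induced family $\mathcal{F}_n$. Once the sampling in both directions is generalised accordingly, the translation between the detection function $\eta$ and the non-propagation function $\tau$ is essentially as you describe and matches the Goldreich--Ron argument.
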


The following open question was raised in \cite{goldreich2011proximity}. 
\begin{question}[Are all GSF-local properties non-propagating?]\label{que:areAllGSFLocalPropertiesPropagating}
	Is it the case that for every GSF-local property $\mathcal{P}=\bigcup_{n\in \mathbb{N}}\mathcal{P}_n$,  there is a sequence $\overline{\mathcal{F}} = (\mathcal{F}_n)_{n \in \mathbb{N}}$ that is non-propagating and $\mathcal{P}$ is $\overline{\mathcal{F}}$-local?
\end{question}
We are now able to state our theorem answering Question~\ref{que:areAllGSFLocalPropertiesPropagating}. The rest of this section is dedicated to the proof of Theorem~\ref{thm:existenceLocalNonTestableProperty}.
\begin{theorem}\label{thm:existenceLocalNonTestableProperty}
	There exists a GSF-local property of graphs of bounded degree $3$ that is not testable in the bounded-degree graph model. Thus, not all GSF-local properties are non-propagating.  
\end{theorem}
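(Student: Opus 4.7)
The plan is to reduce the theorem to showing that the property $\graphProp$ constructed in Section~\ref{sec:reduction_graphs} is GSF-local. Since Theorem~\ref{thm:simpleDelta2} already establishes that $\graphProp$ is not testable in the bounded-degree model, and since Theorem~\ref{thm:charOfPOT} characterises the constant-query proximity-oblivious testable properties as precisely the GSF-local and non-propagating ones (and any such property is in particular testable), a GSF-local but non-testable property must be propagating. So it suffices to certify $\graphProp$ as GSF-local.

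To do so, I would first set up the bridge between FO logic and generalised subgraph freeness via \emph{neighbourhood profiles}. Using Hanf's Theorem~\ref{thm:Hanf}, any $\Pi_2$-definable property on $\classStruc{C}_{\sigma,d}$ is $d$-equivalent to a Boolean combination of Hanf sentences, which in turn can be encoded as a finite union of constraints prescribing upper and lower bounds on the number of elements of each $r$-type (Lemma~\ref{lemma:FO-neighbourhood}). I would then single out the case of \emph{$0$-profiles}, i.e.\ profiles whose lower bounds are all $0$, and prove that any finite union of $0$-profile properties is GSF-local (Theorem~\ref{thm:subsetOfFOIsLocal}): a forbidden $r$-type $\tau$ translates directly into a marked graph obtained from a representative $r$-ball by marking all interior vertices \emph{full} and all boundary vertices \emph{partial}, so that an embedding into $G$ exists iff $G$ contains an element of type $\tau$.

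Next, I would apply this framework to $\classStruc{P}_{\zigzag}$. The key observation is that by Lemma~\ref{lem:exactFormOfModels} every model of $\varphi_{\zigzag}$ is $d$-regular with a very rigid local structure (a $D^4$-ary tree with level-wise zig-zag expanders), so the positive Hanf sentences arising from the HNF conversion in Lemma~\ref{lem:d-regHNF} become implied by degree-regularity plus absence of the wrong local configurations. This lets me show that $\classStruc{P}_{\zigzag}$ is definable by a finite union of $0$-profiles (Lemma~\ref{lem:neighbouhoodProfilOfPZigZag}) — choosing $r$ large enough so that an $r$-ball witnesses the local consistency of the tree-plus-zig-zag structure — and hence that $\classStruc{P}_{\zigzag}$ is GSF-local as a property of $\sigma$-structures.

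Finally, I would transport GSF-locality across the local reduction $f:\classStruc{C}_{\sigma,d}\to\mathcal{C}_3$ of Section~\ref{sec:localreduction}. Because $f$ replaces each element by a fixed cycle of length $d$ and each coloured tuple by one of finitely many constant-size arrow, loop, or non-arrow gadgets, every $r'$-ball in $f(\struc{A})$ (for a suitable $r'$ depending on $r$ and on the gadget size) is determined by, and determines, an $r$-ball in $\struc{A}$. I would translate each forbidden $r$-ball $\tau$ for $\classStruc{P}_{\zigzag}$ into a marked graph $F_\tau$ whose interior gadget-vertices are marked \emph{full} (they must realise their gadget exactly) and whose element-vertices at the boundary are marked \emph{semifull} (so attachments outside the ball via further arrows are allowed), using Fact~\ref{rem:elementVerts} to identify element-vertices by length-$d$ cycles. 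In addition, I include finitely many marked graphs forbidding any malformed gadget, so that $\mathcal{F}_n$-freeness in $\mathcal{C}_3$ forces the graph to lie in the image of $f$. This yields Lemma~\ref{lemma:graphproperty_gsf_local}: $\graphProp$ is GSF-local, completing the proof.

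The main obstacle will be the last step: precisely specifying a set of marked graphs whose avoidance in a $3$-regular graph $G$ simultaneously forces $G$ to be the $f$-image of some $\sigma$-structure and forces the underlying structure to avoid the forbidden $r$-types. Because \emph{full} marks demand exact neighbourhood matches while \emph{semifull} and \emph{partial} marks allow more, one must carefully balance these marks so that both the gadget skeleton is locally enforced and the freedom to extend along element-cycles is preserved; this is where the case analysis distinguishing element-vertices, relation-vertices, and gadget-boundary vertices becomes delicate.
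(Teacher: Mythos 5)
Your overall strategy is the one the paper uses: reduce the theorem to showing $\graphProp$ is GSF-local via the chain ``finite unions of $0$-profile properties are GSF-local'' (Theorem~\ref{thm:subsetOfFOIsLocal}), ``$\classStruc{P}_{\zigzag}$ is a finite union of $0$-profile properties'' (Lemma~\ref{lem:neighbouhoodProfilOfPZigZag}), and transfer to $\graphProp$ through the local reduction $f$ (Lemma~\ref{lemma:graphproperty_gsf_local}). Where you diverge is the final transfer, and there is a gap in the route you describe. The paper does not translate marked graphs across $f$; instead it shows that $\graphProp$ is \emph{itself} a finite union of $0$-profile properties on the graph side (Claim~\ref{claim:profilOfGraphProperty}, with radius $\ell'=24\ell+18+d$ chosen so that an $\ell'$-ball in $f(\struc{A})$ determines a $2$-ball in $\struc{A}$), and then invokes Theorem~\ref{thm:subsetOfFOIsLocal} a second time. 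Your plan of ``translate each forbidden $r$-ball of $\classStruc{P}_{\zigzag}$ into a marked graph $F_\tau$'' does not recover this: the $0$-profile $\rho_k$ for $\classStruc{P}_{\zigzag}$ has interval $[0,1]$ for the root type, not $[0,0]$. A disjoint union of two graphs from $\graphProp$ has no vertex of any \emph{forbidden} type and no malformed gadget, yet is not in $\graphProp$ (two roots), so it would be $\mathcal{F}_n$-free under your proposed $\mathcal{F}_n$. You also need a marked graph forbidding \emph{two} realisations of the root type; the $S^{k+1}(\tau)$ construction in Claim~\ref{claim:GSFProfileIsGSFLocal} produces it automatically once you have $0$-profiles for $\graphProp$, but your direct construction omits it.

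A second, smaller issue: in your paraphrase of Theorem~\ref{thm:subsetOfFOIsLocal} you mark the boundary vertices of the $r$-ball \emph{partial}, but they must be marked \emph{semifull}. Under \emph{partial} marks the embedding condition is only $N_1^G(f(v))\supseteq f(N_1^F(v))$, so extra edges among the images of boundary vertices are permitted; the embedded centre would then carry a different $r$-type, and the ``embedding exists iff $G$ contains an element of type $\tau$'' equivalence fails in the reverse direction. You do use \emph{semifull} correctly in your final paragraph, so this is likely a slip, but it is load-bearing: the whole translation from $0$-profiles to forbidden generalised subgraphs hinges on distinguishing the exact-inside/free-outside condition encoded by \emph{semifull} from the weaker \emph{partial} condition.
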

\subsection{Relating different notions of locality}\label{sec:relatingNotionsOfLocality}
In this section we define properties by prescribing upper and lower bounds on the number of occurrences of  neighbourhood types. These bounds are given by \emph{neighbourhood profiles} which we will define formally below. 
We use these properties to give a natural characterization of FO properties of bounded-degree structures in Lemma~\ref{lemma:FO-neighbourhood}, which is a straightforward consequence of Hanf's Theorem (Theorem~\ref{thm:Hanf}). We use this characterization to establish links between FO definability and GSF-locality. This connection is  the key ingredient in the proof of our main theorem.\\

Observe that for fixed $r,d\in \mathbb N$ and $\sigma$, there are only finitely many $r$-types in structures in $\classStruc{C}_{\sigma,d}$.
For any signature $\sigma$ and  $d,r\in \mathbb{N}$ we let $n_{d,r,\sigma}\in \mathbb{N}$ be the number of different $r$-types of $\sigma$-structures of degree at most $d$. Assuming that for all $d,r\in \mathbb{N}$ the $r$-neighbourhood-types of $\sigma$-structures of degree at most $d$ are ordered, we let $\tau_{d,r,\sigma}^i$ denote the $i$-th 
such neighbourhood type, for $i\in \{1,\dots,n_{d,r,\sigma}\}$. 
With each $\sigma$-structure $\struc{A}\in \classStruc{C}_{\sigma,d}$ we associate its 
\emph{$r$-histogram vector} $\vet{v}_{d,r,\sigma}(\struc{A})$, given by 
\begin{displaymath}
(\vet{v}_{d,r,\sigma}(\struc{A}))_i:=|\{a\in \univ{A}\mid \mathcal{N}_{r}^{\struc{A}}(a)\in \tau_{d,r,\sigma}^i\}|.
\end{displaymath} 
We let 
\begin{displaymath}
\mathfrak{I}:=\{[k,l]\mid k\leq l\in \mathbb{N}\}\cup \{[k,\infty)\mid k\in \mathbb{N}\}
\end{displaymath}
be the set of all closed or half-closed, infinite intervals with natural lower/upper bounds.
\begin{definition}
	Let $\sigma$ be a signature and $d,r\in \mathbb{N}$.
	\begin{enumerate}
		\item An \emph{$r$-neighbourhood profile} 
		of degree $d$ is a function $\rho:\{1,\dots,n_{d,r,\sigma}\}\rightarrow \mathfrak{I}$. 
		
		\item For a structure $\struc{A}\in \classStruc{C}_{\sigma,d}$, we say that $\struc{A}$ obeys $\rho$, denoted by $\struc{A}\sim \rho$, if
		\[
		(\vet{v}_{d,r,\sigma}(\struc{A}))_i\in \rho(i) \text{ for all }i\in \{1,\dots,n_{d,r,\sigma}\}.
		\]
		Let $\classStruc{P}_\rho$ be the set of structures $\struc{A}$ that obey $\rho$, i.e., $\classStruc{P}_\rho=\{\struc{A}\in \classStruc{C}_{\sigma,d}\mid \struc{A}\sim \rho\}$.
		\item We say that a property $\classStruc{P}$ is \emph{defined by a finite union of neighbourhood profiles} if there is $k\in \mathbb{N}$ such that  $\classStruc{P}=\bigcup_{1\leq i \leq k}\classStruc{P}_{\rho_i}$ where $\rho_i$ is an $r_i$-neighbourhood profile and $r_i\in \mathbb{N}$ for every $i\in \{1,\dots,k\}$. 
	\end{enumerate}
\end{definition}

We let $n_{d,r}:=n_{d,r,\sigma_{\operatorname{graph}}}$ denote the total number of $r$-types of  
directed graphs of degree at most~$d$. We fix an odering of the types and let $\tau_{d,r}^i:=\tau_{d,r,\sigma_{\operatorname{graph}}}^i$ be the $i$-th $r$-type of bounded degree~$d$, for any $i\in \{1,\dots,n_{d,r}\}$. Further, for a graph $G$ let $\vet{v}_{d,r}(G)$ denote the $r$-histogram vector of $G$. Note if $G$ is undirected, for any type $\tau_{d,r}^i$ where the edge relation is not 
symmetric we have that $(\vet{v}_{d,r}(G))_i=0$ and therefore 
in any $r$-neighbourhood profile $\rho$ for graphs we have $\rho(i)=[0,0]$ for any type $\tau_{d,r}^i$ which is not symmetric. For convenience, for undirected graphs we will ignore the non-symmetric types. 

Let us consider the following example in which we find a representation by neighbourhood profiles for an FO-property. 
\begin{example}Consider the following FO-sentence.
	\begin{align*}\varphi:=\forall x \forall y \lnot E(x,y) \lor \forall x \exists y_1 \exists y_2 \Big(y_1\not=y_2\land E(x,y_1)\land E(x,y_2)&\\\land \forall z (z\not=y_1\land z\not=y_2)\rightarrow \lnot E(x,z)\Big).&
	\end{align*} 
	The property $P_\varphi$ defined by the sentence $\varphi$ is the property containing all edgeless graphs and all graphs that are disjoint unions of cycles.
	
	For degree bound $2$ all $1$-types are listed in Figure~\ref{fig:oneTypes}.
	\begin{figure*}
		\centering
		\begin{tikzpicture}[scale = 1.7]
		\definecolor{C1}{RGB}{1,1,1}
		\definecolor{C2}{RGB}{0,0,170}
		\definecolor{C3}{RGB}{251,86,4}
		\definecolor{C4}{RGB}{50,180,110}
		\tikzstyle{ns1}=[line width=0.7]
		\tikzstyle{ns2}=[line width=1.2]
		\node[draw,C4,circle,fill=C4,inner sep=0pt, minimum width=5pt] (0) at (-2,0) {};	
		\node[draw,C4,circle,fill=C4,inner sep=0pt, minimum width=5pt] (1) at (0,0) {};
		\node[draw,circle,fill=black,inner sep=0pt, minimum width=5pt] (2) at (0,0.6) {};
		\node[draw,C4,circle,fill=C4,inner sep=0pt, minimum width=5pt] (3) at (2,0) {};
		\node[draw,circle,fill=black,inner sep=0pt, minimum width=5pt] (4) at (1.7,0.6) {};
		\node[draw,circle,fill=black,inner sep=0pt, minimum width=5pt] (5) at (2.3,0.6) {};
		\node[draw,C4,circle,fill=C4,inner sep=0pt, minimum width=5pt] (6) at (4,0) {};
		\node[draw,circle,fill=black,inner sep=0pt, minimum width=5pt] (7) at (3.7,0.6) {};
		\node[draw,circle,fill=black,inner sep=0pt, minimum width=5pt] (8) at (4.3,0.6) {};
		\path[ns1]          (1)  edge   (2);
		\path[ns1]          (3)  edge   (4);
		\path[ns1]          (3)  edge   (5);
		\path[ns1]          (6)  edge   (7);
		\path[ns1]          (6)  edge   (8);
		\path[ns1]          (7)  edge   (8);
		\node[minimum height=10pt,inner sep=0] at (-2,1) {$\tau_1$}; 
		\node[minimum height=10pt,inner sep=0] at (0,1) {$\tau_2$}; 
		\node[minimum height=10pt,inner sep=0] at (2,1) {$\tau_3$}; 
		\node[minimum height=10pt,inner sep=0] at (4,1) {$\tau_4$};

		\end{tikzpicture}
		\caption[One types of bounded degree $2$.]{One types of bounded degree $2$, where the centres are marked in green.}\label{fig:oneTypes}
	\end{figure*}
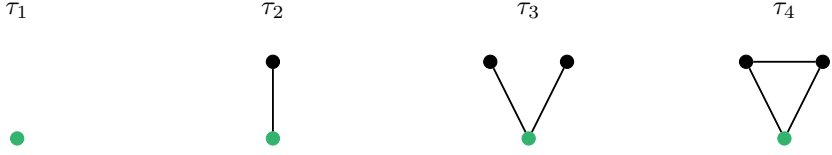 
	Let $\rho_1:\{1,\dots,4\}\rightarrow \mathfrak{I}$ be the neighbourhood profile defined by $\rho_1(1)=[0,\infty)$ and $\rho_1(i)=[0,0]$ for $i\in \{2,3,4\}$. Furthermore, let $\rho_2:\{1,\dots,4\}\rightarrow \mathfrak{I}$ be the neighbourhood profile defined by $\rho_2(i)=[0,\infty)$ for $i\in \{3,4\}$ and $\rho_2(j)=[0,0]$ for $j\in \{1,2\}$. It is easy to observe that the properties $P_\varphi$ and $P_{\rho_1}\cup P_{\rho_2}$ are equal.   
	
\end{example}
Indeed representing FO-properties by neighbourhood profiles works in general.
We now give a lemma showing that bounded-degree FO properties can be equivalently defined as finite unions of properties defined by neighbourhood profiles. Here the technicalities  that arise are due to Hanf normal form not requiring the locality-radius  of all Hanf-sentences to be the same. 

\begin{lemma}\label{lemma:FO-neighbourhood}
	For every non-empty property $\classStruc{P}\subseteq \classStruc{C}_{\sigma,d}$, $\classStruc{P}$ is FO definable on $\classStruc{C}_{\sigma,d}$ if and only if
	$\classStruc{P}$ can be obtained as a finite union of properties defined by neighbourhood profiles. 
\end{lemma}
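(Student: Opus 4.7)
The plan is to prove both directions separately, using Hanf's theorem (Theorem~\ref{thm:Hanf}) as the main tool for the forward direction and a direct translation for the backward direction.

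For the backward direction, I would start with a property $\classStruc{P}=\bigcup_{i=1}^{k}\classStruc{P}_{\rho_i}$ given as a finite union of properties defined by neighbourhood profiles, where each $\rho_i$ is an $r_i$-neighbourhood profile. For each $\rho_i$, I would construct an FO sentence $\psi_i$ that defines $\classStruc{P}_{\rho_i}$ as follows: for every $j\in\{1,\dots,n_{d,r_i,\sigma}\}$, I translate the interval constraint $(\vet{v}_{d,r_i,\sigma}(\struc{A}))_j\in\rho_i(j)$ into a Boolean combination of Hanf sentences using $\phi_{\tau_{d,r_i,\sigma}^j}$. More precisely, if $\rho_i(j)=[k,\ell]$, the constraint becomes $\exists^{\geq k}x\,\phi_{\tau_{d,r_i,\sigma}^j}(x)\land\lnot\exists^{\geq \ell+1}x\,\phi_{\tau_{d,r_i,\sigma}^j}(x)$, and if $\rho_i(j)=[k,\infty)$ it becomes $\exists^{\geq k}x\,\phi_{\tau_{d,r_i,\sigma}^j}(x)$. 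Taking the conjunction of these over all $j$ gives $\psi_i$, and the disjunction $\psi:=\bigvee_{i=1}^{k}\psi_i$ defines $\classStruc{P}$ on $\classStruc{C}_{\sigma,d}$.

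For the forward direction, let $\classStruc{P}=\classStruc{P}_\phi$ for some FO sentence $\phi$. By Theorem~\ref{thm:Hanf}, $\phi$ is $d$-equivalent to a sentence $\psi$ in Hanf normal form. The first step is to unify the radii: if the Hanf sentences occurring in $\psi$ use $r$-types of different radii, take $r:=\max$ of all these radii, and rewrite every Hanf sentence $\exists^{\geq m}x\,\phi_\tau(x)$ involving an $r'$-type $\tau$ with $r'<r$ as a Boolean combination of Hanf sentences at radius $r$. This is possible because every $r$-type uniquely determines the $r'$-type of its centre, so the set of elements of $r'$-type $\tau$ is the disjoint union (over $r$-types $\tau'$ extending $\tau$) of the sets of elements of $r$-type $\tau'$. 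Hence $\exists^{\geq m}x\,\phi_\tau(x)$ is equivalent (on $\classStruc{C}_{\sigma,d}$) to a Boolean combination over all tuples $(m_{\tau'})_{\tau'}$ of non-negative integers with $\sum m_{\tau'}\geq m$ of conjunctions of sentences $\exists^{\geq m_{\tau'}}x\,\phi_{\tau'}(x)$; one only needs finitely many such tuples because the degree bound $d$ caps the possible counts (or one can express this more directly by a finite case distinction on which extensions contribute). After this step, $\psi$ is a Boolean combination of Hanf sentences all using the same radius $r$.

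Now put $\psi$ into disjunctive normal form, so that $\psi\equiv_d\bigvee_{c\in C}\chi_c$, where each $\chi_c$ is a conjunction of literals, each literal being either a Hanf sentence $\exists^{\geq m}x\,\phi_{\tau_{d,r,\sigma}^i}(x)$ or its negation $\lnot\exists^{\geq m}x\,\phi_{\tau_{d,r,\sigma}^i}(x)$ (equivalently, $\exists^{\leq m-1}x\,\phi_{\tau_{d,r,\sigma}^i}(x)$). For each $c\in C$ and each $i\in\{1,\dots,n_{d,r,\sigma}\}$, the literals in $\chi_c$ involving $\tau_{d,r,\sigma}^i$ impose a finite intersection of lower bounds $[k,\infty)$ and upper bounds $[0,\ell]$ on $(\vet{v}_{d,r,\sigma}(\struc{A}))_i$. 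This intersection is either empty (in which case $\chi_c$ defines the empty property and can be discarded) or lies in $\mathfrak{I}$. Collecting these intervals across $i$ yields an $r$-neighbourhood profile $\rho_c$ with $\classStruc{P}_{\chi_c}=\classStruc{P}_{\rho_c}$. Hence $\classStruc{P}=\bigcup_{c\in C'}\classStruc{P}_{\rho_c}$ where $C'\subseteq C$ consists of the non-empty disjuncts, giving the desired representation.

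The main obstacle is the radius-unification step: one has to check carefully that replacing lower-radius Hanf sentences by equivalent Boolean combinations of higher-radius Hanf sentences can be done using only \emph{finitely} many cases on a bounded-degree class. This is fine because the degree bound $d$ bounds the number of $r'$-types $\tau$ and $r$-types $\tau'$, and on $\classStruc{C}_{\sigma,d}$ only finitely many threshold values need to be considered (as counts appearing in Hanf sentences from $\psi$ are fixed constants). All other steps are routine syntactic manipulation together with the easy direction of Hanf normal form.
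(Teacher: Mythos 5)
Your proposal is correct and takes essentially the same route as the paper: backward direction by translating each interval $\rho_i(j)$ into a conjunction of a Hanf sentence and a negated Hanf sentence, forward direction via Hanf normal form, radius unification, DNF, and reading off an interval per type from each disjunct. One small inaccuracy: in the radius-unification step it is not the degree bound that "caps the possible counts" (the number of elements of a given type is unbounded); rather, as the paper does, one restricts the disjunction to tuples $(m_{\tau'})$ with $\sum_{\tau'} m_{\tau'} = m$ (partitions of $m$ over the finitely many $r$-types extending $\tau$), which already captures $\exists^{\geq m}x\,\phi_\tau(x)$ exactly and is manifestly finite.
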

\begin{proof}
	For the first direction assume $\varphi$ is an FO-sentence. Then by Hanf's Theorem (Theorem~\ref{thm:Hanf}) there is a sentence $\psi$ in Hanf normal form such that $\classStruc{P}_\varphi=\classStruc{P}_\psi$. 
	
	We will first convert $\psi$ into a sentence in Hanf normal form where every Hanf sentence appearing has the same locality radius. Let $r\in \mathbb{N}$ be the maximum locality radius appearing in $\psi$, and let $\varphi^{\geq m}_\tau:=\exists ^{\geq m} x \phi_{\tau}(x)$ be a Hanf sentence, where $\tau$ is an $r'$-type for some $r'\leq r$. Let $\tau_1,\dots,\tau_k$ be a list of all  $r$-types of bounded degree $d$ for which $(\mathcal{N}_{r'}^{\struc{B}}(b),b)\in \tau$ for  $(\struc{B},b)\in \tau_i$, for every $1\leq i\leq k$. Let $\Pi$ be the set of all partitions of $m$ into $k$ parts. Let 
	\begin{displaymath}
	\tilde{\varphi}^{\geq m}_\tau :=\bigvee_{(m_1,\dots,m_k)\in \Pi}\phantom{ii}\bigwedge_{i=1}^k \exists ^{\geq m_i} x \phi_{\tau_i}(x).
	\end{displaymath} 
	\begin{claim}\label{claim:increasingRadius}
		$\varphi^{\geq m}_\tau$ is $d$-equivalent to $\tilde{\varphi}^{\geq m}_\tau$.
	\end{claim}
	\begin{proof}
		Assume that $\struc{A}\in \classStruc{C}_{d}$ satisfies $\varphi^{\geq m}_\tau$, and assume that $a_1,\dots,a_m$ are $m$ distinct elements with $(\mathcal{N}_{r'}^{\struc{A}}(a_j),a_j)\in \tau$, for every $1\leq j\leq m$. Let $\tilde{\tau}_j$ be the $r$-type for which $(\mathcal{N}_r^{\struc{A}}(a_j),a_j)\in \tilde{\tau}_j$. By choice of $\tau_1,\dots,\tau_k$, we get that there are indices $i_1,\dots,i_m$ such that $\tilde{\tau}_j=\tau_{i_j}$. For $i\in \{1,\dots,k\}$ let $m_i=|\{j\in \{1,\dots,m\}\mid i_j=i \}|$. Hence $\struc{A}\models \bigwedge_{i=1}^k \exists ^{\geq m_i} x \phi_{\tau_i}(x)$ and since additionally $(m_1,\dots,m_k)\in\Pi$ this implies $\struc{A}\models \tilde{\varphi}^{\geq m}_\tau$.
		
		On the other hand, let $\struc{A}\in \classStruc{C}_{d}$ satisfy $\tilde{\varphi}^{\geq m}_\tau$, and let $(m_1,\dots,m_k)\in \Pi$ be a partition of $m$ such that $\struc{A}\models \bigwedge_{i=1}^k \exists ^{\geq m_i} x \phi_{\tau_i}(x)$. For every $1\leq i\leq k$, let  $a_{1}^i,\dots, a_{m_i}^i$  be $m_i$ distinct elements such that $(\mathcal{N}_r^{\struc{A}}(a_{j}^i),a_{j}^i)\in \tau_i$, for every $1\leq j\leq m_i$. By choice of $\tau_1,\dots,\tau_k$, we get that $(\mathcal{N}_{r'}^{\struc{A}}(a_{j}^i),a_{j}^i)\in \tau$, for every pair $1\leq i\leq k$, $1\leq j\leq m_i$. But since $m_1+\dots+m_k=m$ this implies that $\struc{A}\models\varphi^{\geq m}_\tau$. This proves that $\varphi^{\geq m}_\tau$ and $\tilde{\varphi}^{\geq m}_\tau$ are $d$-equivalent. 
	\end{proof}
	Let $\psi'$ be the formula in which every Hanf-sentence $\varphi^{\geq m}_\tau$ for which $\tau$ is an $r'$-type for some $r'<r$ gets replaced by $\tilde{\varphi}^{\geq m}_\tau$. By a simple inductive argument using Claim~\ref{claim:increasingRadius}, we get that $\psi$ is $d$-equivalent to $\psi'$, and hence $\classStruc{P}_\varphi=\classStruc{P}_{\psi}=\classStruc{P}_{\psi'}$. Furthermore since $\tilde{\varphi}^{\geq m}_\tau$ is a Boolean combination of Hanf-sentences for every $\varphi^{\geq m}_\tau$, and any Boolean combination of Boolean combinations is a Boolean combination itself, $\psi'$ is in Hanf normal form. 
	Furthermore, every Hanf-sentence appearing in $\psi'$ has locality radius $r$ by construction.
	
	Since any Boolean combination can be converted into disjunctive normal form,
	we can assume that $\psi'$ is a disjunction of sentences $\xi$ of the form
	\begin{displaymath}
	\xi=\bigwedge_{j=1}^k \exists ^{\geq m_j} x \phi_{\tau_j}(x)\land \bigwedge_{j=k+1}^\ell \lnot \exists ^{\geq m_j+1} x \phi_{\tau_j}(x),
	\end{displaymath}
	where $\ell\in \mathbb{N}_{\geq 1}$, $1\leq k \leq \ell$, $m_i\in \mathbb{N}_{\geq 1}$ and $\tau_i$ is an $r$-type for every $1\leq i\leq \ell$. We can further assume that every sentence in the disjunction $\psi'$ is satisfiable by some $\struc{A}\in \classStruc{C}_{d}$, as any sentence with no bounded degree $d$ model can be removed from $\psi'$.
	
	Let $\tilde{\tau}_1,\dots,\tilde{\tau}_t$ be a list of all $r$-types of bounded degree $d$ in the order we fixed.
	Let $k_i:=\max(\{m_j \mid 1\leq j\leq k, \tau_j=\tilde{\tau}_i \}\cup\{0\})$ and $\ell_i:=\min(\{m_j \mid k+1\leq j\leq \ell, \tau_j=\tilde{\tau}_i \}\cup\{\infty\})$ for every $i\in \{1,\dots,t\}$. Since $\xi$ has at least one bounded-degree model,  $k_i\leq \ell_i$ for every $i\in \{1,\dots,t\}$.
	Let $\rho: \{1,\dots,t\}\rightarrow \mathfrak{I}$ be the neighbourhood profile defined by
	$\rho(i):=[k_i,\ell_i]$ if $\ell_i<\infty$ and $\rho(i):=[k_i,\ell_i)$ otherwise. Then by construction, we get that $\classStruc{P}_\rho=\classStruc{P}_\xi$. Since $\psi'$ is a disjunction of formulas, each of which defines a property which can be defined by some neighbourhood profile, we get that $\classStruc{P}_{\psi'}$ must be a finite  union of properties defined by some neighbourhood profile. \\
	
	On the other hand, for every $r$-neighbourhood profile $\rho$ of degree $d$, $\tau_1,\dots,\tau_t$ a list of all $r$-types of bounded degree $d$ in the order fixed  and the formula
	\begin{displaymath}
	\varphi_\rho:=\bigwedge_{i\in \{1,\dots,t\},\atop{\rho(i)=[k_i,\ell_i]}}\Big(\exists ^{\geq k_i} x \phi_{\tau_i}(x)\land \lnot \exists ^{\geq \ell_i+1} x \phi_{\tau_i}(x)\Big)\land \bigwedge_{i\in \{1,\dots,t\},\atop{\rho(i)=[k_i,\infty)}}\exists ^{\geq k_i} x \phi_{\tau_i}(x)
	\end{displaymath}
	it clearly holds that $\classStruc{P}_\rho=\classStruc{P}_{\varphi_\rho}$. Hence every finite union of properties defined by neighbourhood profiles can be defined by the disjunction of the formulas $\varphi_\rho$ of all $\rho$ in the finite union.
\end{proof}

\subsubsection{Relating FO properties to GSF-local properties}  
We now prove that FO properties which arise as unions of neighbourhood profiles of a particularly simple form are GSF-local. 
\begin{displaymath}
	\mathfrak{I}_{0}:=\{[0,k]\mid k\in \mathbb{N} \}\cup\{[0,\infty)\}\subset \mathfrak{I}. 
\end{displaymath}
We call any neighbourhood profile $\rho$ with codomain $\mathfrak{I}_{0}$ 
a \emph{$0$-profile}, as all lower bounds for the occurrence of types are $0$.
\begin{observation}\label{obs:expressingExOfType}
	Let $\rho$ be a\, $0$-profile. If two structures $\struc{A},\struc{A}'\in \classStruc{C}_{\sigma,d}$ satisfy $(\vet{v}_{d,r,\sigma}(\struc{A}))_i\leq (\vet{v}_{d,r,\sigma}(\struc{A}'))_i$ for every $i\in\{1,\dots,n_{d,r,\sigma}\}$ and $\struc{A}'\sim\rho$, then $\struc{A}\sim \rho$. 
	
	In particular, the existence of an $r$-type cannot be expressed by a\, $0$-profile. 
\end{observation}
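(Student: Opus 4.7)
The plan is to verify both claims directly from the definition of a $0$-profile, exploiting the key feature that every interval in $\mathfrak{I}_0$ is downward closed, i.e.\ it contains $0$ as its left endpoint.

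For the first claim, assume $\struc{A}' \sim \rho$ and $(\vet{v}_{d,r,\sigma}(\struc{A}))_i \leq (\vet{v}_{d,r,\sigma}(\struc{A}'))_i$ for all $i\in\{1,\dots,n_{d,r,\sigma}\}$. Fix any such $i$. By definition of a $0$-profile, $\rho(i)$ is either $[0,k]$ for some $k\in \mathbb{N}$ or $[0,\infty)$. Since $\struc{A}'\sim\rho$, we have $(\vet{v}_{d,r,\sigma}(\struc{A}'))_i \in \rho(i)$, so $(\vet{v}_{d,r,\sigma}(\struc{A}'))_i$ is bounded above by the upper endpoint of $\rho(i)$ (which is $k$ or $\infty$). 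Because $(\vet{v}_{d,r,\sigma}(\struc{A}))_i$ is a non-negative integer bounded above by $(\vet{v}_{d,r,\sigma}(\struc{A}'))_i$, it lies in $\rho(i)$ as well. This holds coordinate-wise, so $\struc{A}\sim\rho$.

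For the second (``in particular'') claim, suppose towards contradiction that some $0$-profile $\rho$ defines the property ``$\struc{A}$ contains an element of $r$-type $\tau_{d,r,\sigma}^j$'' for some fixed $j$. Since this property is non-empty, there exists $\struc{A}'\in \classStruc{P}_\rho$ with $(\vet{v}_{d,r,\sigma}(\struc{A}'))_j \geq 1$. Consider the empty $\sigma$-structure $\struc{A}$ (or any $\sigma$-structure with no element of $r$-type $\tau_{d,r,\sigma}^j$); then $(\vet{v}_{d,r,\sigma}(\struc{A}))_i \leq (\vet{v}_{d,r,\sigma}(\struc{A}'))_i$ for all $i$. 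By the first part, $\struc{A}\sim\rho$, contradicting that $\struc{A}$ does not contain an element of $r$-type $\tau_{d,r,\sigma}^j$.

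Neither step should pose any difficulty since both follow essentially immediately from unpacking definitions; the only subtle point worth stressing is that downward closedness of the intervals of a $0$-profile is precisely what makes the monotonicity argument go through, and this is exactly what fails for a general profile whose intervals can have strictly positive lower endpoints.
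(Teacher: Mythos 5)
Your proof is correct, and since the paper states this as an Observation without supplying a proof, the direct argument you give is exactly what is implicitly intended: downward closedness of the intervals in $\mathfrak{I}_0$ immediately gives the monotonicity claim, and applying it with the empty structure gives the ``in particular'' statement. One small imprecision: the parenthetical alternative ``or any $\sigma$-structure with no element of $r$-type $\tau_{d,r,\sigma}^j$'' does not in general satisfy the coordinate-wise bound $(\vet{v}_{d,r,\sigma}(\struc{A}))_i \leq (\vet{v}_{d,r,\sigma}(\struc{A}'))_i$ for all $i$ (such a structure could have many elements of some other type); only the empty structure is guaranteed to work, and that is the one your argument actually uses, so the conclusion stands.
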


\begin{theorem}\label{thm:subsetOfFOIsLocal}
	Every finite union of  properties of undirected graphs defined by $0$-profiles is GSF-local.
\end{theorem}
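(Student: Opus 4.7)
The plan is to exhibit, for each $n$, a finite set $\mathcal{F}_n$ of marked graphs of bounded size such that $\mathcal{P}_n$ is exactly the set of $n$-vertex graphs that are $\mathcal{F}_n$-free. Write $\mathcal{P}=\bigcup_{j=1}^{m}\mathcal{P}_{\rho_j}$, where $\rho_j$ is an $r_j$-neighbourhood profile of degree $d$ with $\rho_j(i)=[0,k_i^{(j)}]$ and $k_i^{(j)}\in\mathbb{N}\cup\{\infty\}$; put $I_j:=\{i:k_i^{(j)}<\infty\}$. Since a $0$-profile only encodes upper bounds, $G\notin\mathcal{P}_{\rho_j}$ iff there is some $i\in I_j$ with strictly more than $k_i^{(j)}$ vertices of $r_j$-type $\tau_{d,r_j}^{i}$. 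If some $I_j$ is empty then $\mathcal{P}_{\rho_j}=\mathcal{C}_d$, so $\mathcal{P}=\mathcal{C}_d$ is trivially GSF-local via $\mathcal{F}_n=\emptyset$. Otherwise, by De Morgan and distributivity, $G\notin\mathcal{P}$ iff there is a tuple $\vec{i}=(i_1,\dots,i_m)\in I_1\times\cdots\times I_m$ such that for \emph{every} $j$, $G$ contains at least $k_{i_j}^{(j)}+1$ vertices of $r_j$-type $\tau_{d,r_j}^{i_j}$.

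First I would handle the single-profile case as a warm-up. For a fixed $i\in I_j$, I enumerate up to isomorphism all connected or disconnected abstract graphs $B$ that arise as the union of $k_i^{(j)}+1$ not-necessarily-disjoint $r_j$-balls whose centres are pairwise distinct, with the property that the $r_j$-ball of each designated centre in $B$ is isomorphic as a rooted graph to $\tau_{d,r_j}^{i}$; because every such $B$ has at most $(k_i^{(j)}+1)\cdot|\tau_{d,r_j}^{i}|$ vertices and bounded degree, there are only finitely many such $B$. Each $B$ is turned into a marked graph $F$ by marking every vertex that lies at distance strictly less than $r_j$ from some designated centre as \emph{full}, and every remaining vertex (necessarily at distance exactly $r_j$ from every nearest centre) as \emph{semifull}. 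The ``full'' marks inductively force, for any embedding $f\colon F\to G$, the entire $r_j$-ball of each $f(\text{centre})$ in $G$ to coincide with the image of its $r_j$-ball in $B$, so each centre maps to a vertex of $G$ of $r_j$-type $\tau_{d,r_j}^{i}$; injectivity of $f$ then produces $k_i^{(j)}+1$ distinct such vertices. Conversely, any collection of $k_i^{(j)}+1$ vertices of type $\tau_{d,r_j}^{i}$ in $G$ realises one of the enumerated overlap patterns, giving an embedding into $G$ of the corresponding $F$.

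Next I would lift the construction to the finite union. For each tuple $\vec{i}$ as above, and each abstract graph $B$ arising as the union of $r_j$-balls around $k_{i_j}^{(j)}+1$ designated $j$-centres for every $j\in\{1,\dots,m\}$ (again only finitely many overlap patterns), I build a composite marked graph $F_{\vec{i},B}$ by marking $u\in V(B)$ as \emph{full} if $u$ is at distance $<r_j$ from some $j$-centre for at least one $j$, and as \emph{semifull} otherwise; since ``full'' is strictly stronger than ``semifull'' in the sense that its constraint implies the semifull constraint, this composite marking simultaneously enforces the $r_j$-type condition at each $j$-centre. Taking $\mathcal{F}_n:=\bigcup_{\vec{i}}\bigcup_{B}\{F_{\vec{i},B}\}$ (or $\emptyset$ in the trivial case) yields a finite family of marked graphs of size at most $\sum_j(k_{i_j}^{(j)}+1)\cdot\max_{i}|\tau_{d,r_j}^{i}|$, which is a constant independent of $n$. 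The two directions of the equivalence $\mathcal{P}_n=\{G:G\text{ is }\mathcal{F}_n\text{-free}\}$ then follow exactly as in the single-profile case, applied coordinate-wise in $\vec{i}$. The main technical obstacle is precisely verifying the composite marking: one must check by induction on distance from the centres that the ``full'' mark at every internal vertex forces the $r_j$-ball in $G$ to coincide with the image, even when a vertex is internal with respect to one witness and only boundary with respect to another; this is straightforward because the composite marking always defaults to the stronger ``full'' requirement whenever any witness demands it.
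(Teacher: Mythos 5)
Your proof is correct and takes essentially the same approach as the paper, which first shows a single $0$-profile yields a GSF-local property via ``$k$-realisations'' (marking vertices at distance $<r$ from a witness centre `full' and the rest `semifull'), and then handles the union via a separate closure lemma built on ``not necessarily disjoint unions'' of marked graphs. Your inline composite construction $F_{\vec{i},B}$ simply folds the two steps into one, and your observation that `full' dominates `semifull' whenever any witness demands it is exactly the reconciliation rule in the paper's union-closure claim (the `partial' mark that the paper allows for in its more general closure lemma never arises here, since the single-profile marked graphs carry no `partial' vertices).
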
 
\begin{proof}
	We prove this in two parts (Claim~\ref{claim:GSFProfileIsGSFLocal} and Claim~\ref{claim:GSFLocalClosedUnderUnion}). We first argue that every property $\classStruc{P}_{\rho}$ 
	defined by some $0$-profile $\rho:\{1,\dots,n_{d,r}\}\rightarrow\mathfrak{I}_{0}$ 
	is GSF-local. For this it is important to note that we can express a forbidden $r$-type $\tau$ by a forbidden generalised subgraph. 
	For $(B,b)\in \tau$, the set of all graphs with no vertex of neighbourhood type $\tau$ is the set of all $B$-free graphs where every vertex in $V(B)$ of distance less than $r$ to $b$ is marked `full' and every vertex in $V(B)$ of distance $r$ to $b$ is marked `semifull'.
	Since a profile of the form $\rho:\{1,\dots,n_{d,r,\sigma}\}\rightarrow\mathfrak{I}_{0}$ can express that some neighbourhood type $\tau$ can appear at most $k$ times for some fixed $k\in \mathbb{N}$, we need to forbid all marked graphs in which type $\tau$ appears $k+1$ times. We will formalise this in the following claim. 
	\begin{claim}\label{claim:GSFProfileIsGSFLocal}
		For every $r$-neighbourhood profile $\rho: \{1,\dots,n_{d,r}\}\rightarrow \mathfrak{I}_{0}$, there is a finite set $\mathcal{F}$ of marked graphs such that $\classStruc{P}_\rho$ is exactly the property of $\mathcal{F}$-free graphs. 
	\end{claim}
	\begin{proof}
		Assume $\tau$ is an $r$-type and $k\in \mathbb{N}_{>0}$. Then we say that a marked graph $F$ is a \emph{$k$-realisation} of $\tau$ if $F$ has the following properties.
		\begin{enumerate}
			\item There are $k$ distinct vertices $v_1,\dots,v_k$ in $F$ such that $(\mathcal{N}_r^F(v_i),v_i)\in \tau$ for every $i=1,\dots,k$. 
			\item Every vertex $v$ in $F$ has distance less  or equal to $r$ to at least one vertex $v_i$.
			\item Every vertex $v$ in $F$ of distance less than $r$ to at least one $v_i$ is marked as `full'.
			\item Every vertex $v$ in $F$ of distance greater or equal to $r$ to every $v_i$ is marked as `semifull'.
		\end{enumerate}  
		We denote by $S^k(\tau)$ the set of all $k$-realisations of $\tau$.
		
		Now we can define the set $\mathcal{F}$  of forbidden subgraphs to be
		\[
		{\mathcal F}:=\bigcup_{k\in \mathbb{N}, 1\leq i\leq n_{d,r,\sigma}: \rho(i)=[0,k]} 
		S^{k+1}(\tau_{d,r}^i).
		\]
		
		Let $\mathcal{P}$ be the property of all $\mathcal{F}$-free graphs. We first prove that the property $\mathcal{P}$ is contained in  $\classStruc{P}_\rho$. Towards a contradiction  assume that $G\in \mathcal{C}_d$ is ${\mathcal F}$-free  but not contained in $\classStruc{P}_{\rho}$. As $G$ is not contained in $\classStruc{P}_{\rho}$ there must be an index $i\in \{1,\dots,n_{d,r}\}$ such that $(\vet{v}_{d,r}(G))_i\notin \rho(i)$. Since $\rho(i) \in \mathfrak{I}_{0}$ there is $k\in \mathbb{N}$ such  that $\rho(i)=[0,k]$ and hence $(\vet{v}_{d,r}(G))_i> k$. Hence there must be $k+1$ vertices $v_1,\dots,v_{k+1}$ in $G$ such that $(\mathcal{N}_r^G(v_i),v_i)\in \tau_{d,r}^i$. 
		We define the  marked graph  $F$ to be the subgraph of $G$ induced by the $r$-neighbourhoods 
		of $v_1,\dots,v_{k+1}$, \ie $G[\cup_{1\leq i\leq k+1}N_r^G(v_i)]$, in which every vertex of distance less than $k$ to at least one of the $v_i$ is marked as `full' and every other vertex is marked as `semifull'.  
		Then $F$ is by definition a $(k+1)$-realisation of $\tau_{d,r}^i$ and hence $F\in {\mathcal F}$.
		We now argue that $F$ can be embedded into $G$. Since $F$ is an induced subgraph of $G$ the identity map gives us a natural embedding $f:F\rightarrow G$. Let $v$ be any vertex marked `full' in $F$. By construction of $F$, there is $i\in \{1,\dots,k+1\}$ such that $f(v)$ is of distance less than $r$ to $v_i$ in $G$. But then $N_1^G(f(v))$ is a subset of %\isolde{I think this should be $\subseteq$} 
		$N_r^G(v_i)$. As $F$ without the marking is the subgraph of $G$ %\isolde{formally, it is a marked graph} 
		induced by $\cup_{1\leq i\leq k+1}N_r^G(v_i)$ this implies that $f(N_1^F(v))=N_1^G(f(v))$.  Furthermore, assume $v$ is a vertex marked `semifull' in $F$. Then $f(N_1^F(v))= N_1^G(f(v))\cap f(V(F))$ holds as $F$ without the markings is an induced subgraph of $G$. 
		This proves that $G$ is not $F$-free by Definition~\ref{def:gsf}. This is a contradiction to our assumption that $G$ is $\mathcal{F}$-free and $F\in \mathcal{F}$.

		Similarly, we can show that $\classStruc{P}_\rho\subseteq \mathcal{P}$ by assuming $G\in \mathcal{C}_d$ is in $\classStruc{P}_\rho$  but not ${\mathcal F}$-free, and showing that the embedding of any graph of $\mathcal{F}$ into $G$ yields an amount of vertices of a certain type contradicting containment in $\classStruc{P}_\rho$. 
	\end{proof}
	Next we prove that 
	classes defined by excluding finitely many marked graphs are closed under finite unions.	
	\begin{claim}\label{claim:GSFLocalClosedUnderUnion}
		Let $\mathcal{F}_1, \mathcal{F}_2$ be two finite sets of marked graphs. 
		For $i\in\{1,2\}$, let $\mathcal{P}_i$ be the property of $\mathcal{F}_i$-free graphs. 
		Then there is a set $\mathcal{F}$ of generalised subgraphs such that $\mathcal{P}_1\cup\mathcal{P}_2$ is the property of $\mathcal{F}$-free graphs. 
	\end{claim}
	\begin{proof}
		We say that a marked graph $F$ is a (not necessarily disjoint) union of  marked graphs $F_1, F_2$ if 
		\begin{enumerate}
			\item  there is an embedding $f_i$ of $F_i$ into the graph $F$ without its markings as in Definition~\ref{def:gsf} for every $i\in \{1,2\}$. 
			\item for every vertex $v$ in $F$ there is $i\in\{1,2\}$ and a vertex $w$ in $F_i$ such that $f_i(w)=v$.
			\item every vertex $v$ in $F$ is marked `full', if there is  $i\in\{1,2\}$ and a `full' vertex $w$ in  $F_i$ such that $f_i(w)=v$.
			\item every vertex $v$ in $F$ is marked `semifull', if there is  $i\in \{1,2\}$ and a `semifull' vertex $w$ in  $F_i$ such that $f_i(w)=v$ and $f_i(u)\not=v$ for every $i\in\{1,2\}$ and every `full' vertex $u$.
			\item every vertex $v$ in $F$ is marked `partial' if $f_i(u)\not=v$ for every $i\in\{1,2\}$ and every `full' or `semifull' vertex $u$.
		\end{enumerate}  
		We define $S(F_1,F_2)$ to be the set of all possible (not necessarily disjoint) unions of $F_1,F_2$.  
		We can now define the set $\mathcal{F}$ to be 
		\[\mathcal{F}:=\bigcup_{F_1\in \mathcal{F}_1,F_2\in \mathcal{F}_2} S(F_1,F_2).\] 
		
		Let $\mathcal{P}$ be the property of all $\mathcal{F}$-free graphs.
		Now we prove $\mathcal{P}\subseteq \mathcal{P}_1\cup\mathcal{P}_2$. Towards a contradiction assume $G$ is $\mathcal{F}$-free but $G$ is in neither $\mathcal{P}_1$ nor in $\mathcal{P}_2$.  
		Then for every $i\in \{1,2\}$ there is a graph $F_i\in \mathcal{F}_i$ such that $G$ is not $F_i$-free. It is easy to see that  there is a  union $F_\cup$ of $F_1$ and $F_2$  
		such that $G$ is not $F_\cup$-free, which contradicts that $G$ is $\mathcal{F}$-free. 
		
		Conversely, in order to prove $\mathcal{P}_1\cup \mathcal{P}_2 \subseteq \mathcal{P}$, if $G$ is $\mathcal{F}_i$ free for some 
		$i\in\{1,2\}$ then $G$ must be $\mathcal{F}$-free by construction of $\mathcal{F}$.
	\end{proof}
	Combining the two claims above proves the  Theorem~\ref{thm:subsetOfFOIsLocal}.
\end{proof}
\paragraph{Further discussion of the relation between FO and GSF-locality} First let us remark that it 
is neither true that every FO definable property
is GSF-local, nor that every GSF-local property is FO definable. 
\begin{example}\label{exa:FONotContainedInGSF}
	The property of bounded-degree  graphs containing a triangle is FO definable but not GSF-local.
\end{example}
Indeed, the existence  of 
a fixed number of vertices of certain neighbourhood types can be expressed in FO, while in general, this cannot be expressed by forbidding generalised subgraphs. 
If a formula has a $0$-profile 
(and hence does not require the existence of any types) 
then the property defined by that formula is GSF-local, as shown in Theorem~\ref{thm:subsetOfFOIsLocal}.
\begin{example}\label{exa:GSFNotContainedInFO}
	The class of all bounded-degree graphs with an even number of vertices is GSF-local but not FO definable.
\end{example}

Let us remark that Theorem~\ref{thm:subsetOfFOIsLocal} combined with Lemma~\ref{lemma:FO-neighbourhood} proves that every finite union of properties definable by $0$-profiles is both FO definable and GSF-local. 
Hence it is natural to ask whether the intersection of FO definable properties and GSF-local properties is precisely the set of finite unions of properties definable by $0$-profiles. However, this is not the case.
The following example shows that there are properties which are both FO definable and GSF-local but cannot be expressed by $0$-profiles.
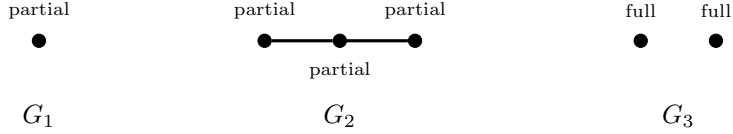
\begin{figure*}
	\centering
	\begin{tikzpicture}
	\tikzstyle{ns1}=[line width=1.2]
	\tikzstyle{ns2}=[line width=1.2]
	\definecolor{C1}{RGB}{1,1,1}
	\definecolor{C2}{RGB}{0,0,170}
	\definecolor{C3}{RGB}{251,86,4}
	\definecolor{C4}{RGB}{50,180,110}
	\def \dist {0.7}
	\def \heightWriting {0.4}
	\def \heightName {-1}
	\node[circle,fill=white,inner sep=0pt, minimum width=0pt] (15) at (0,0) {};
	\node[draw,circle,fill=black,inner sep=0pt, minimum width=5pt] (0) at (0,0) {};
	\node[draw,circle,fill=black,inner sep=0pt, minimum width=5pt] (5) at (3,0) {};
	\node[draw,circle,fill=black,inner sep=0pt, minimum width=5pt] (6) at (4,0) {};
	\node[draw,circle,fill=black,inner sep=0pt, minimum width=5pt] (7) at (5,0) {};
	\node[draw,circle,fill=black,inner sep=0pt, minimum width=5pt] (9) at (8,0) {};
	\node[draw,circle,fill=black,inner sep=0pt, minimum width=5pt] (10) at (9,0) {};
	
	\draw[ns1] (5)--(6)--(7);
	
	\node[minimum height=10pt,inner sep=0,font=\scriptsize] at (0,\heightWriting) {partial};
	\node[minimum height=10pt,inner sep=0,font=\scriptsize] at (3,\heightWriting) {partial};
	\node[minimum height=10pt,inner sep=0,font=\scriptsize] at (4,-\heightWriting) {partial};
	\node[minimum height=10pt,inner sep=0,font=\scriptsize] at (5,\heightWriting) {partial};
	\node[minimum height=10pt,inner sep=0,font=\scriptsize] at (8,\heightWriting) {full};
	\node[minimum height=10pt,inner sep=0,font=\scriptsize] at (9,\heightWriting) {full};
	\node[minimum height=10pt,inner sep=0] at (0,\heightName) {$G_1$};
	\node[minimum height=10pt,inner sep=0] at (4,\heightName) {$G_2$};
	\node[minimum height=10pt,inner sep=0] at (8.5,\heightName) {$G_3$};

	\end{tikzpicture} 
	\caption{Marked graphs for Example~\ref{ex:0ProfilesNotEntireIntersecion}.}\label{fig:setOfMarkedGraphs}
	
\end{figure*}
\begin{example}\label{ex:0ProfilesNotEntireIntersecion} We let $d\geq 2$ and let  $B_1:=(\{v\},\{\})$, $B_2=(\{v,w\},\{\{v,w\}\})$ be two graphs. We further let $\tau_1,\tau_2$ be the $1$-types of degree $d$ such that $(B_1,v)\in \tau_1$ and $(B_2,v)\in \tau_2$. Consider the property $\mathcal{P}$ defined by the following FO formula
	\begin{displaymath}
	\varphi:=\lnot \exists x (x=x)\lor \exists^{=1}x\big(\varphi_{\tau_1}(x)\land \forall y (x\not=y\rightarrow \varphi_{\tau_2}(y))\big). 	
	\end{displaymath}
	$\mathcal{P}$ contains, besides the empty graph, unions of an arbitrary amount  of disjoint edges and one isolated vertex. To define a sequence of forbidden subgraphs we let $G_1,G_2,G_3$ be the marked graphs in Figure~\ref{fig:setOfMarkedGraphs}. Let $\mathcal{F}_{\operatorname{even}}:=\{G_1\}$ and $\mathcal{F}_{\operatorname{odd}}:=\{G_2,G_3\}$ and let $\overline{\mathcal{F}}=(\mathcal{F}_n)_{n \in \mathbb{N}}$ where $\mathcal{F}_i=\mathcal{F}_{\operatorname{even}}$ if $i$ is even and $\mathcal{F}_i=\mathcal{F}_{\operatorname{odd}}$ if $i$ is odd. Note that every graph on more than one 
	vertex with an odd number of vertices which is $\mathcal{F}_{\operatorname{odd}}$-free must contain a vertex of neighbourhood type $\tau_1$, 
	and that the set of $\mathcal{F}_{\operatorname{even}}$-free graphs contains only the empty graph. Hence $\mathcal{P}$ is $\overline{\mathcal{F}}$-local. Now assume towards a contradiction that $\mathcal{P}=\bigcup_{1\leq i\leq k}\mathcal{P}_{\rho_i}$ for $0$-profiles $\rho_i$. Let $G_m$ be the graph consisting of $m$ disjoint edges and one isolated vertex and $H_m$ the graph consisting of $m$ disjoint edges. Since $G_m\in\mathcal{P}$ there is $i\in \{1,\dots,k\}$ such that $G_m\sim \rho_i$. By choice of $G_m$ and $H_m$ we have $0\leq (\vet{v}_{d,r}(H_m))_j\leq (\vet{v}_{d,r}(G_m))_j\in \rho_i(j)$ for every $j\in \{1,\dots,n_{d,r}\}$. Since additionally $\rho_i(j)\in \mathfrak{I}_0$ this implies that $(\vet{v}_{d,r}(H_m))_j\in \rho_i(j)$. But then $H_m\sim \rho_i$ which yields a contradiction as $H_m\notin \mathcal{P}$. Hence $\mathcal{P}$ can not be defined as a finite union of $0$-profiles. 
\end{example}
Figure~\ref{fig:overview} gives a schematic overview of all classes of properties discussed here and their relationship.
\begin{figure}
	\centering
	\begin{tikzpicture}[scale=1,use Hobby shortcut,closed=true]
	\tikzstyle{LW1}=[line width=1.3]
	\tikzstyle{LW3}=[line width=1]
	\definecolor{C4}{RGB}{251,86,4}
	\definecolor{C3}{RGB}{170,0,0}
	\definecolor{C2}{RGB}{0, 153, 51}
	\definecolor{C1}{RGB}{0,0,170}
	
	\begin{scope}
	\clip[postaction={fill=white,fill opacity=0.2}] ([closed]0,0)..(3,2)..(6,0)..(3,-2);
	\foreach \x in {-7.1,-7,...,15.2}%
	\draw[C1!30](\x, -5)--+(-12,14.4);
	\end{scope}
	\fill[white]([closed]3,0)..(6,2)..(8,0)..(6,-2);
	\begin{scope}
	\clip[postaction={fill=white,fill opacity=0.2}] ([closed]3,0)..(6,2)..(8,0)..(6,-2);
	\foreach \x in {-7.1,-7,...,15.2}%
	\draw[C2!30](\x, 5)--+(-12,-14.4);
	\end{scope}
	\fill[white]([closed]2,1)..(2.5,1.92)..(3,1.975)..(5.62,1)..(5.9,0.4)..(3,0);
	\begin{scope}
	\clip[postaction={fill=white,fill opacity=0.2}] ([closed]2,1)..(2.5,1.92)..(3,1.975)..(5.62,1)..(5.9,0.4)..(3,0);
	\foreach \x in {-7.1,-7,...,15.2}%
	\draw[C3!30](\x, 5)--+(-12,-14.4);
	\end{scope}
	\fill[white]([closed]4,0)..(5.05,1)..(5.7,0.8)..(5.97,-0.1)..(4,-1.2);
	\begin{scope}
	\clip[postaction={fill=white,fill opacity=0.2}] ([closed]4,0)..(5.05,1)..(5.7,0.8)..(5.97,-0.1)..(4,-1.2);
	\foreach \x in {-7.1,-7,...,15.2}%
	\draw[C4!30](\x, -5)--+(-12,14.4);
	\end{scope}
	\draw[LW1,C1] ([closed]0,0)..(3,2)..(6,0)..(3,-2);
	\draw[LW1,C2] ([closed]3,0)..(6,2)..(8,0)..(6,-2);
	\draw[LW3,C3] ([closed]2,1)..(2.5,1.92)..(3,1.975)..(5.62,1)..(5.9,0.4)..(3,0);
	\draw[LW3,C4] ([closed]4,0)..(5.05,1)..(5.7,0.8)..(5.97,-0.1)..(4,-1.2);

	\node[C1,minimum height=10pt,inner sep=0,font=\scriptsize] at (1,0) {GSF-local};
	\node[C2,minimum height=10pt,inner sep=0,font=\scriptsize] at (7,1) {FO};
	\node[C3,minimum height=10pt,inner sep=0,font=\scriptsize] at (2.7,1.4) {POT};
	\node[C4,minimum height=10pt,inner sep=0,font=\scriptsize] at (5,0) {$0$-profiles};
	\node[draw,circle,fill=black,inner sep=0pt, minimum width=3pt] (1) at (5,-1.2) {};
	\node[minimum height=10pt,inner sep=0,font=\scriptsize] at (5,-1) {$\graphProp$};
	\node[draw,circle,fill=black,inner sep=0pt, minimum width=3pt] (1) at (4.2,0.95) {};
	\node[minimum height=10pt,inner sep=0,font=\scriptsize] at (4.2,1.2) {$\mathcal{P}_{\ref{ex:0ProfilesNotEntireIntersecion}}$};
	\node[draw,circle,fill=black,inner sep=0pt, minimum width=3pt] (1) at (2.7,0.9) {};
	\node[minimum height=10pt,inner sep=0,font=\scriptsize] at (2.7,0.65) {$\mathcal{P}_{\ref{exa:GSFNotContainedInFO}}$};
	\node[draw,circle,fill=black,inner sep=0pt, minimum width=3pt] (1) at (7,-0.55) {};
	\node[minimum height=10pt,inner sep=0,font=\scriptsize] at (7,-0.8) {$\mathcal{P}_{\ref{exa:FONotContainedInGSF}}$};
	\node[draw,circle,fill=black,inner sep=0pt, minimum width=3pt] (1) at (5.3,0.6) {};
	\node[minimum height=10pt,inner sep=0,font=\scriptsize] at (5.1,0.45) {$\mathcal{C}_d$};
	
	\end{tikzpicture}
	\caption{Overview of the classes of properties, here $\mathcal{P}_i$ refers to the property from Example $i$, $\mathcal{C}_d$ refers to the property of all graphs of bounded degree $d$ and $\graphProp$ is the property defined in Section~\ref{sec:localreduction}.}\label{fig:overview}
\end{figure}
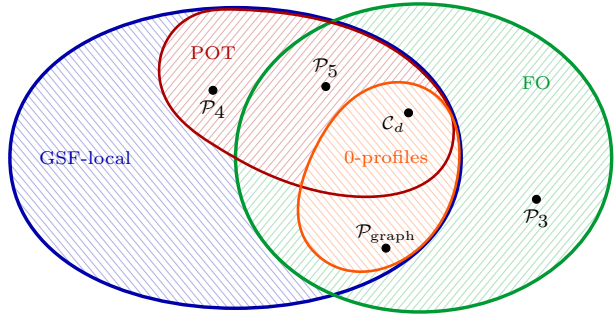

\subsection{Proving the existence of a GSF-local non-testable property}\label{sec:proofMainThm}

In this section we prove Theorem~\ref{thm:existenceLocalNonTestableProperty}.  We show that the property $\classStruc{P}_{\zigzag}$ from Section~\ref{sec:FOnontestability}
can be expressed by a union of $0$-profiles. We then show that the local reduction from $\classStruc{P}_{\zigzag}$ to $\graphProp$ given in Section~\ref{sec:localreduction} preserves the expressibility by $0$-profiles, and hence by Theorem~\ref{thm:subsetOfFOIsLocal} $\graphProp$ is GSF-local. 

Let $\sigma$ be the signature, $d\in \mathbb{N}$  and $\classStruc{P}_{\zigzag}$ be the property of $d$ $\sigma$-structures of bounded-degree from Section~\ref{sec: definitionFormula}. 
\subsubsection{Characterisation  of the relational structure property by neighbourhood profiles}\label{sec:charBy0Profiles}
Our aim in this section is to prove that the property $\classStruc{P}_{\zigzag}$ of relational structures can be written as a finite union of properties defined by $0$-profiles
of radius $2$.

For all $\sigma$-structures in $\classStruc{P}_{\zigzag}$ (excluding $\struc{A}_{\emptyset}$) it is crucial that they are allowed to contain precisely one root element. Hence the neighbourhood profile describing $\classStruc{P}_{\zigzag}$ must restrict the number of occurrences of the $2$-type of the root element. But since in $\classStruc{P}_{\zigzag}\setminus \{\struc{A}_{\emptyset}\}$, the root elements in different structures may have different $2$-types, 
we partition $\classStruc{P}_{\zigzag}\setminus \{\struc{A}_{\emptyset}\}$ into parts 
$\classStruc{P}_1,\dots,\classStruc{P}_m$  by the $2$-type of the root element. 
Note that the number $m$ of parts is constant as there are at most $n_{d,2,\sigma}$ $2$-types in total. For each of these parts we then define a neighbourhood profile  $\rho_k$ such that $\classStruc{P}_k\cup\{\struc{A}_{\emptyset}\}=\classStruc{P}_{\rho_k}$. 
We would like to remark here that the roots of all but one structure in  $\classStruc{P}_{\zigzag}$ actually have the same $2$-types. Hence the partition only contains two parts and one of the two parts only contains one structure. 
We now define the parts and corresponding profiles formally.

Assume without loss of generality that the $2$-types $\tau_{d,2,\sigma}^1,\dots, \tau_{d,2,\sigma}^{n_{d,2,\sigma}}$ of degree $d$ are ordered in such a way that for  $(\struc{B},b)\in \tau_{d,2,\sigma}^{k}$, it holds that $\struc{B}\models \varphi_{\operatorname{root}}(b)$  if and only if $k\in \{1,\dots,m\}$ for some $m\leq n_{d,2,\sigma}$.
For $k\in \{1,\dots,m\}$, let 
\[
\classStruc{P}_k:=\{\struc{A}\in \classStruc{P}_{\zigzag}\setminus\{\struc{A}_{\emptyset}\}\mid \text{ there is }a\in \univ{A}\text{ such that }(\mathcal{N}_2^{\struc{A}}(a),a)\in \tau_{d,2,\sigma}^k\}.
\] 
Since by Lemma~\ref{lem:connected} every $\struc{A}\in \classStruc{P}_{\zigzag}\setminus \{\struc{A}_{\emptyset}\}$ must contain exactly one root  we get that 
\[
\classStruc{P}_{\zigzag}=\bigcup_{1\leq k\leq m}\classStruc{P}_k\cup \{\struc{A}_{\emptyset}\}
\]
and this union is disjoint.   
Furthermore, for $k\in \{1,\dots,m\}$, let $I_k\subseteq \{1,\dots,n_{d,2,\sigma}\}$ be the set of indices $j$ such that there is a structure $\struc{A}\in \classStruc{P}_k$ and $a\in \univ{A}$ with $(\mathcal{N}_2^{\struc{A}}(a),a)\in\tau_{d,2,\sigma}^j$. 
For every $k\in \{1,\dots,m\}$ we define the $2$-neighbourhood profile   $\rho_k:\{1,\dots,n_{d,2,\sigma}\}\rightarrow \mathfrak{I}_{0}$ by
\begin{align*}
\rho_k(i):=
\begin{cases}
[0,1] &  \text{ if }i=k, \\
[0,\infty) & \text{ if }i\in I_k\setminus\{k\},\\
[0,0] & \text{ otherwise}.
\end{cases}
\end{align*}
To prove that these $0$-profiles of radius $2$ define the property $\classStruc{P}_{\zigzag}$, the crucial observation is that for every element $a$ of some structure in $\classStruc{C}_{\sigma,d}$, the FO-formula $\varphi_{\zigzag}$ only talks about elements of distance at most $2$ to $a$ (\ie $\varphi_{\zigzag}$ is $2$-local). Hence  
the $2$-histogram vector of a structure already captures whether the structure satisfies $\varphi_{\zigzag}$. We will now formally prove this.
\begin{lemma}\label{lem:neighbouhoodProfilOfPZigZag}
	It holds that $\classStruc{P}_{\zigzag}=\bigcup_{1\leq k \leq m}\classStruc{P}_{\rho_k}$. 
\end{lemma}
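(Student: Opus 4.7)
The plan is to derive both inclusions from a $2$-locality property of $\varphi_{\zigzag}$. My first goal is to show that the truth of $\varphi_{\zigzag}$ on a structure $\struc{A}\in\classStruc{C}_{\sigma,d}$ is completely determined by (i) the multiset of $2$-types of its elements and (ii) the number of elements whose $2$-type is one of the root types $\tau_{d,2,\sigma}^1,\ldots,\tau_{d,2,\sigma}^m$. Once this is established, the union $\bigcup_{1\leq k\leq m}\classStruc{P}_{\rho_k}$ is seen to encode precisely these constraints through the $2$-histogram vector.

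\textbf{Locality step.} I would first split $\varphi_{\zigzag}$ into its global constraint $\exists^{\leq 1}x\,\varphi_{\operatorname{root}}(x)$ and its four universal conjuncts $\varphi_{\operatorname{tree}}$, $\varphi_{\operatorname{rotationMap}}$, $\varphi_{\operatorname{base}}$ and $\varphi_{\operatorname{recursion}}$, and argue that for each of the latter the satisfaction of its universal quantifier block at an element $a$ depends only on $(\mathcal{N}_2^{\struc{A}}(a),a)$. This is immediate for the first three, since they explicitly speak about $a$ and its first or second neighbourhood. The subtle case is $\varphi_{\operatorname{recursion}}=\forall x\forall z\,\psi(x,z)$: the premise $\exists y[E_\cdot(x,y)\wedge E_\cdot(y,z)]$ forces $\dist_{\struc{A}}(x,z)\leq 2$, and the demanded edge $E(x',z')$ between the children $x'$ of $x$ and $z'$ of $z$ forces $z'\in N_1(x')\subseteq N_2(x)$, so $x,z,x',z'$ together with all relevant $F$- and $E$-tuples live in the induced substructure on $N_2^{\struc{A}}(x)$. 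Consequently a structure satisfies $\varphi_{\zigzag}$ iff (a) it has at most one root and (b) every $2$-type realised in it is \emph{good}, i.e.\ realised at some element of some model of $\varphi_{\zigzag}$; in particular every type in $I_k$ is good, since by definition it is realised in some structure of $\classStruc{P}_k\subseteq\classStruc{P}_{\zigzag}$.

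\textbf{Matching profiles.} For $\classStruc{P}_{\zigzag}\subseteq\bigcup_k\classStruc{P}_{\rho_k}$, if $\struc{A}=\struc{A}_{\emptyset}$ the zero histogram trivially lies in every $\rho_k$; otherwise Lemma~\ref{lem:connected} provides a unique root $a$ of some $2$-type $\tau_{d,2,\sigma}^k$ with $k\in\{1,\ldots,m\}$, and I verify $\struc{A}\sim\rho_k$ coordinate by coordinate. Only roots realise root $2$-types, so $(\vet{v}_{d,2,\sigma}(\struc{A}))_k=1\in[0,1]$ and $(\vet{v}_{d,2,\sigma}(\struc{A}))_j=0$ for $j\in\{1,\ldots,m\}\setminus\{k\}$; the same uniqueness shows $I_k\cap\{1,\ldots,m\}=\{k\}$, which makes the condition $\rho_k(j)=[0,0]$ consistent. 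The remaining nonzero entries are indexed by $I_k\setminus\{k\}$ by the very definition of $I_k$, and there $\rho_k$ imposes no upper bound. Conversely, if $\struc{A}\sim\rho_k$, every $2$-type appearing in $\struc{A}$ lies in $I_k$ and is hence good by the locality step, so the four universal conjuncts of $\varphi_{\zigzag}$ all hold on $\struc{A}$; moreover $\rho_k$ bounds $\sum_{j=1}^m(\vet{v}_{d,2,\sigma}(\struc{A}))_j$ by $1$, giving $\struc{A}\models\exists^{\leq 1}x\,\varphi_{\operatorname{root}}(x)$. Combining both parts yields $\struc{A}\models\varphi_{\zigzag}$.

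\textbf{Main obstacle.} The crux is the $2$-locality of $\varphi_{\operatorname{recursion}}$, where the $F$-children $x',z'$ could a priori live in the $3$-neighbourhood of $x$; the key observation is that the required edge $E(x',z')$ together with $F_\ell(z,z')$ collapses the search back into $N_2(x)$. Once this radius-collapsing fact is secured, the rest of the proof reduces to bookkeeping driven by Lemma~\ref{lem:connected} and the definition of the index sets $I_k$.
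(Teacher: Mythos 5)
Your proof is correct and follows the same underlying technique as the paper: transfer satisfaction of each (universal) conjunct of $\varphi_{\zigzag}$ across an isomorphism of $2$-neighbourhoods from a structure $\other{\struc{A}}\models\varphi_{\zigzag}$ that realises the same $2$-type. The paper carries this out conjunct-by-conjunct (Claim~\ref{claim:satisfyingRecursion} in the main text and Claims~\ref{claim:satisfyingTree}--\ref{claim:satisfyingBase} in the appendix), whereas you package it up front as a single $2$-locality observation --- including the correct radius-collapsing argument for $\varphi_{\operatorname{recursion}}$, where the demanded edge $E(x',z')$ pulls the child $z'$ of $z$ back into $N_2(x)$ --- and then apply it uniformly; the two presentations are mathematically equivalent. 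Your auxiliary observation that $I_k\cap\{1,\ldots,m\}=\{k\}$ (needed to deduce $\struc{A}\models\exists^{\leq 1}x\,\varphi_{\operatorname{root}}(x)$ from $\struc{A}\sim\rho_k$) is correct, is used implicitly by the paper as well, and is worth having made explicit.
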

\begin{proof}
	We first prove that $\classStruc{P}_{\zigzag}\subseteq \bigcup_{1\leq k \leq m}\classStruc{P}_{\rho_k}$. First note that trivially $\struc{A}_{\emptyset}\in \bigcup_{1\leq k \leq m}\classStruc{P}_{\rho_k}$. Now assume $\struc{A}\in \classStruc{P}_{\zigzag}\setminus \{\struc{A}_{\emptyset}\}$. This implies that there is $k\in \{1,\dots,m\}$ such that $\struc{A}\in \classStruc{P}_k$.  By construction we have that for every $a\in \struc{A}$, there is  $i\in I_k$  such that $(\mathcal{N}_2^{\struc{A}}(a),a)\in \tau_{d,2,\sigma}^i$. Furthermore, since $\struc{A}\models \varphi_{\zigzag}$ and $\univ{A}\not= \emptyset$, we have by Lemma~\ref{lem:connected} that $\struc{A}\models \exists^{=1}x\varphi_{\operatorname{root}}(x)$, and that there can be at most one $a\in \univ{A}$   such that $(\mathcal{N}_2^{\struc{A}}(a),a)\in \tau_{d,2,\sigma}^k$.   Therefore $\struc{A}\in \classStruc{P}_{\rho_k}$. \\ 
	
	To prove $\bigcup_{1\leq k \leq m}\classStruc{P}_{\rho_k}\subseteq \classStruc{P}_{\zigzag}$, we  prove that  
	every structure in $\bigcup_{1\leq k \leq m}\classStruc{P}_{\rho_k}$ 
	must satisfy $\varphi_{\zigzag}$. We will prove that every $\struc{A}\in \bigcup_{1\leq k \leq m}\classStruc{P}_{\rho_k}$ satisfies $\varphi_{\operatorname{recursion}}$, and refer for the proof that $\struc{A}$ satisfies  $\varphi_{\operatorname{tree}}\land \varphi_{\operatorname{rotationMap}}\land \varphi_{\operatorname{base}}$ to  Claim~\ref{claim:satisfyingTree}, Claim~\ref{claim:satisfyingRotationMap} and Claim~\ref{claim:satisfyingBase} in Appendix~\ref{app:B}. Note that $\struc{A}_{\emptyset}\models \varphi_{\zigzag}$ by Lemma~\ref{lem:exactFormOfModels} and hence we exclude $\struc{A}_{\emptyset}$ in the following.
	
	\begin{claim}\label{claim:satisfyingRecursion}
		Every structure $\struc{A}\in \bigcup_{1\leq k \leq m}\classStruc{P}_{\rho_k}\setminus \{\struc{A}_{\emptyset}\}$ satisfies $\varphi_{\operatorname{recursion}}$.
	\end{claim}
	\begin{proof}
		Let $\struc{A}\in \bigcup_{1\leq k \leq m}\classStruc{P}_{\rho_k}\setminus \{\struc{A}_{\emptyset}\}$. Then there is a $k\in \{1,\dots,m\}$ such that $\struc{A}\in \classStruc{P}_{\rho_k}$. 
		
		By definition, $\varphi_{\operatorname{recursion}}:= \forall x \forall z\big(\varphi(x,z)\lor \psi(x,z)\big)$ (see Section~\ref{sec: definitionFormula}),  where 
		\begin{align*}
		\varphi(x,z):=&\lnot \exists y F(x,y)\land \lnot \exists y F(z,y) \text{ and }\\
		%\\&
		\psi(x,z):=&\bigwedge_{\substack{k_1',k_2'\in \indexSetRotation\\\ell_1',\ell_2'\in \indexSetRotation}}\bigg(\exists y \big[E_{k_1',\ell_1'}(x,y)\land E_{k_2',\ell_2'}(y,z)\big]\rightarrow \\&
		\bigwedge_{\substack{i,j,i',j'\in [D], k,\ell\in \indexSetH\\\rot_H(k,i)=((k_1', k_2'),i')\\ \rot_H((\ell_2', \ell_1'),j)=(\ell,j')}}\exists x'\exists z'\big[ F_k(x,x')\land F_\ell(z,z')\land%\\&
		E_{(i,j),(j',i')}(x',z')\big]\bigg).
		\end{align*}
		Let $a,c\in \univ{A}$. Assume first that there is $b\in \univ{A}$ with $(a,b)\in \rel{F}{\struc{A}}$.  Hence $\struc{A}\not\models \varphi(a,c)$. Since $\varphi_{\operatorname{recursion}}:= \forall x \forall z\big(\varphi(x,z)\lor \psi(x,z)\big)$ we aim to prove $\struc{A}\models \psi(a,c)$.  
		By construction of $\rho_k$, there is an $i\in I_{k}$ such that $(\mathcal{N}_2^{\struc{A}}(a),a)\in \tau_{d,2,\sigma}^i$. Therefore there is a structure $\other{\struc{A}}\models \varphi_{\zigzag}$ and $\other{a}\in \univ{\other{A}}$ such that $(\mathcal{N}_2^{\struc{A}}(a),a)\cong (\mathcal{N}_2^{\other{\struc{A}}}(\other{a}),\other{a})$.  Let $f$ be an isomorphism from $(\mathcal{N}_2^{\struc{A}}(a),a)$ to $(\mathcal{N}_2^{\other{\struc{A}}}(\other{a}),\other{a})$. Since $b\in N_2^{\struc{A}}(a)$, we get that $f(b)$ is defined. Since $f$ is an isomorphism mapping $a$ onto $\other{a}$, we have that $(a,b)\in \rel{F}{\struc{A}}$ implies that $(\other{a},f(b))\in \rel{F}{\other{\struc{A}}}$. Hence $\other{\struc{A}}\not\models \varphi(\other{a},\other{c})$, for every $\other{c}\in \univ{\other{A}}$. But since $\other{\struc{A}}\models \varphi_{\operatorname{recursion}}$, as $\other{\struc{A}}\models \varphi_{\zigzag}$, this shows that $\other{\struc{A}}\models \psi(\other{a},\other{c})$ for every $\other{c}\in \univ{\other{A}}$. 
		
		Let $k_1',k_2'\in \indexSetRotation$ and $\ell_1',\ell_2'\in \indexSetRotation$ be indices such that there is  $b'\in \univ{A}$ with $(a,b')\in \rel{E_{k_1',\ell_1'}}{\struc{A}}$ and $(b',c)\in \rel{E_{k_2',\ell_2'}}{\struc{A}}$. 
		Since $b',c\in N_2^{\struc{A}}(a)$, by assumption we get that $f(b')$ and $f(c)$ are defined. Furthermore, $(a,b')\in \rel{E_{k_1',\ell_1'}}{\struc{A}}$ and $(b',c)\in \rel{E_{k_2',\ell_2'}}{\struc{A}}$ imply that $(\other{a},f(b'))\in \rel{E_{k_1',\ell_1'}}{\other{\struc{A}}}$ and $(f(b'),f(c))\in \rel{E_{k_2',\ell_2'}}{\other{\struc{A}}}$, since $f$ is an isomorphism mapping $a$ onto $\other{a}$. We proved in the previous paragraph that $\other{\struc{A}}\models \psi(\other{a},f(c))$. Hence we can conclude that for all indices $i,j,i',j'\in [D]$,  $k,\ell\in \indexSetH$ for which $\rot_H(k,i)=((k_1', k_2'),i')$ and  $\rot_H((\ell_2', \ell_1'),j)=(\ell,j')$, there are elements $\other{a}',\other{c}'\in \univ{\other{A}}$ such that $ (\other{a},\other{a}')\in \rel{F_k}{\other{\struc{A}}}$, $(f(c),\other{c}')\in \rel{F_\ell}{\other{\struc{A}}}$, and $(\other{a}',\other{c}')\in \rel{E_{(i,j),(j',i')}}{\other{\struc{A}}}$. Since  $\other{a}',\other{c}'\in N_2^{\other{\struc{A}}}(\other{a})$, we get that $a':=f^{-1}(\other{a}')$ and $c':=f^{-1}(\other{c}')$ are defined. Furthermore, we get that $ (a,a')\in \rel{F_k}{\struc{A}}$, $(c,c')\in \rel{F_\ell}{\struc{A}}$ and $(a',c')\in \rel{E_{(i,j),(j',i')}}{\struc{A}}$. This proves that $\struc{A}\models \psi(a,c)$.\\
		
		In the case that there is  $b\in \univ{A}$ with $(c,b)\in \rel{F}{\struc{A}}$, we can prove similarly that $\struc{A}\models \psi(a,c)$, by considering that there exist $\other{\struc{A}}\models \varphi_{\zigzag}$ and $\other{c}\in \univ{\other{A}}$ such that $(\mathcal{N}_2^{\struc{A}}(a),c)\cong (\mathcal{N}_2^{\other{\struc{A}}}(\other{c}),\other{c})$ by construction of $\rho_k$. Finally if there is no $b\in \univ{A}$ such that $(a,b)\in \rel{F}{\struc{A}}$ or $(c,b)\in \rel{F}{\struc{A}}$ then $\struc{A}\models \varphi(a,c)$. Since this covers every case we get that $\struc{A}\models \varphi_{\operatorname{recursion}}$.
	\end{proof}
	Assume $\struc{A}\in \bigcup_{1\leq k \leq m}\classStruc{P}_{\rho_k}$. As proved in Claims~\ref{claim:satisfyingTree}, \ref{claim:satisfyingRotationMap}, \ref{claim:satisfyingBase} and \ref{claim:satisfyingRecursion} this implies that $\struc{A}\models \varphi_{\operatorname{tree}}$, $\struc{A}\models \varphi_{\operatorname{rotationMap}}$, $\struc{A}\models \varphi_{\operatorname{base}}$ and $\struc{A}\models \varphi_{\operatorname{recursion}}$. Since $\varphi_{\zigzag}$ is a conjunction of these formulas, we get $\struc{A}\models \varphi_{\zigzag}$ and hence $\struc{A}\in \classStruc{P}_{\zigzag}$.
\end{proof}

\subsubsection{The graph property is GSF-local} 
Let $\graphProp$ be the graph property as defined in Section~\ref{sec:localreduction} and let $f:\classStruc{C}_{\sigma,d}\rightarrow \mathcal{C}_3$ be the local reduction from $\classStruc{P}_{\zigzag}$ to $\graphProp$. We now use this local reduction and the expressibility of $\classStruc{P}_{\zigzag}$ by $0$-profiles to show that $\graphProp$ is GSF-local. 
\begin{lemma}\label{lemma:graphproperty_gsf_local}
	The graph property $\graphProp$ is GSF-local.
\end{lemma}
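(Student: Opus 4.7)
My strategy is to transfer the $0$-profile characterisation $\classStruc{P}_{\zigzag} = \bigcup_{k=1}^m \classStruc{P}_{\rho_k}$ from Lemma~\ref{lem:neighbouhoodProfilOfPZigZag} across the local reduction $f$ into an analogous $0$-profile characterisation of $\graphProp$, and then invoke Theorem~\ref{thm:subsetOfFOIsLocal} to conclude GSF-locality. Since $f$ replaces every element of a $\sigma$-structure by a cycle of $d$ element-vertices together with a bounded number of relation-vertices arranged into arrow, loop, and non-arrow gadgets whose topology is determined by the $1$-neighbourhood of that element in $\struc{A}$, the local structure of $G_\struc{A}$ around any vertex is a local function of the local structure of $\struc{A}$ around the corresponding element.

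Concretely, I first pick a radius $r'$ on the order of $2(d + 6\ell + 5)$, so that for every $\struc{A} \in \classStruc{C}_{\sigma,d}$ and every $v \in V(G_\struc{A})$ the $r'$-ball around $v$ in $G_\struc{A}$ fully contains all gadgets corresponding to the $2$-neighbourhood in $\struc{A}$ of the element(s) associated with $v$. I then establish a bijection between the pairs $(\tau, p)$, where $\tau$ is a $2$-type of $\sigma$-structures of degree at most $d$ and $p$ is a ``position'' in the gadget of an element (an index $i \in \{1,\dots,d\}$ for element-vertices, or a pair $(i,k)$ with $k \in \{1,\dots,6\ell+5\}$ for relation-vertices), and the $r'$-types realised in some graph $G_\struc{A}$; and I verify that if $a \in \univ{A}$ has $2$-type $\tau$, then the vertex at position $p$ of the gadget for $a$ has the corresponding $r'$-type. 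This yields a linear map $M$ from $2$-histogram vectors of structures to $r'$-histogram vectors of graphs in $f(\classStruc{C}_{\sigma,d})$. Via $M$ each $\rho_k$ translates into a $0$-profile $\rho_k'$ of radius $r'$ on $\mathcal{C}_3$: bounds $[0,N]$ or $[0,\infty)$ on a $2$-type become bounds on all corresponding $r'$-types with the multiplicity of that position in the gadget, and every $r'$-type not realised in any gadget of some $G_{\struc{A}'}$ with $\struc{A}' \in \classStruc{P}_k$ is assigned the bound $[0,0]$. I would then aim to show $\graphProp = \bigcup_{k=1}^m \classStruc{P}_{\rho_k'}$ and apply Theorem~\ref{thm:subsetOfFOIsLocal}.

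The main obstacle is the inclusion $\bigcup_k \classStruc{P}_{\rho_k'} \subseteq \graphProp$: I must argue that any $G \in \mathcal{C}_3$ whose $r'$-histogram vector obeys some $\rho_k'$ arises as $G \cong f(\struc{A}_G)$ for some $\struc{A}_G \in \classStruc{P}_{\rho_k}$. The point is that by forcing every ``invalid'' $r'$-type to have count $0$, the profile $\rho_k'$ compels every $r'$-ball in $G$ to look exactly like a piece of a valid gadget, so the gadgets must patch together globally. Using Fact~\ref{rem:elementVerts} to identify element-vertices as exactly those lying on a cycle of length $d$, together with the rigidity of the arrow, loop, and non-arrow gadgets dictated by the $r'$-types, I would decompose $G$ into cycles linked by gadgets and read off $\struc{A}_G$, essentially mirroring the reconstruction in the second part of the proof of Lemma~\ref{lem:correspondenceOfModels}, but driven by local profile information rather than by satisfaction of $\graphFormula$. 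By the correspondence from Step~2, $\struc{A}_G$ then obeys $\rho_k$, hence $\struc{A}_G \in \classStruc{P}_{\zigzag}$ by Lemma~\ref{lem:neighbouhoodProfilOfPZigZag}, which yields $G \cong f(\struc{A}_G) \in \graphProp$ and completes the proof via Theorem~\ref{thm:subsetOfFOIsLocal}.
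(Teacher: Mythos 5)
Your proposal is correct and takes essentially the same route as the paper: transfer the $0$-profile characterisation $\classStruc{P}_{\zigzag}=\bigcup_k\classStruc{P}_{\rho_k}$ from Lemma~\ref{lem:neighbouhoodProfilOfPZigZag} across the local reduction $f$ to obtain a $0$-profile characterisation of $\graphProp$ at a suitably enlarged radius, and then invoke Theorem~\ref{thm:subsetOfFOIsLocal}. The paper phrases the transferred profiles $\hat\rho_k$ a little more loosely (defining $\hat I_k$ directly as the set of realised $\ell'$-types rather than via your explicit type/position correspondence, and choosing $\ell'=24\ell+18+d$), but the reconstruction of $\struc{A}_G$ from a graph obeying the profile via Fact~\ref{rem:elementVerts} and the rigidity of the gadgets, followed by reading the constraints back to $\rho_k$, is the same argument.
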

\begin{proof}
	For this we will prove that $\graphProp$ is equal to a finite union of properties defined by $0$-profiles, and then use Theorem~\ref{thm:subsetOfFOIsLocal} to prove that $\graphProp$ is GSF-local. We define the $0$-profiles for $\graphProp$ in a very similar way to the relational structure case, and then use the description of $\classStruc{P}_{\zigzag}$ by $0$-profiles shown in Lemma~\ref{lem:neighbouhoodProfilOfPZigZag}. To this end, let $\ell':=24\ell+18+d$ and assume that the $\ell'$-types $\tau_{d,\ell'}^1,\dots,\tau_{d,\ell'}^{n_{d,\ell'}}$ are ordered in such a way that $(\mathcal{N}_{\ell'}^{f(\struc{B})}(u_{b,1}),u_{b,1})\in \tau_{d,\ell'}^{k}$, for every $k\in \{1,\dots,m\}$ and $(\struc{B},b)\in \tau_{d,2,\sigma}^{k}$, where $m$ is the number of parts of the partition of $\classStruc{P}_{\zigzag}$ defined in Subsection~\ref{sec:charBy0Profiles}. For $k\in \{1,\dots,m\}$, let  $\hat{I}_{k}$ be the set of indices $i$ such that there is $\struc{A}\in \classStruc{P}_k$, and $v\in V(f(\struc{A}))$ for which $(\mathcal{N}_{\ell'}^{f(\struc{A})}(v),v)\in \tau_{d,\ell'}^i$. Let $\hat{\rho_k}:\{1,\dots,n_{d,\ell'}\}\rightarrow \mathfrak{I}_{0}$ be defined by
	\begin{align*}
	\hat{\rho_k}(i):=
	\begin{cases}
	[0,1] &  \text{ if }i=k, \\
	[0,\infty) & \text{ if }i\in \hat{I}_{k}\setminus\{k\},\\
	[0,0] & \text{ otherwise}.
	\end{cases}
	\end{align*}
	\begin{claim}\label{claim:profilOfGraphProperty}
		It holds that $\graphProp=\bigcup_{1\leq k \leq m}\classStruc{P}_{\hat{\rho}_k}$.
	\end{claim}
	\begin{proof}
		First we prove $\graphProp\subseteq \bigcup_{1\leq k \leq m}\classStruc{P}_{\hat{\rho}_k}$. Assume $G\in \graphProp$ and let $\struc{A}\in \classStruc{P}_{\zigzag}$ be a structure such that $G=f(\struc{A})$. If $\struc{A}=\struc{A}_{\emptyset}$ then clearly $G\in \bigcup_{1\leq k \leq m}\classStruc{P}_{\hat{\rho}_k}$. Hence assume $\struc{A}\not=\struc{A}_{\emptyset}$. Then $\struc{A}\in \classStruc{P}_k$ for some $k\in \{1,\dots,m\}$. By the construction of $\hat{I}_{k}$ we know that for every $v\in V(G)$ we have  $(\mathcal{N}_{\ell'}^G(v),v)\in \tau_{d,\ell'}^i$ for some $i\in \hat{I}_{k}$. Furthermore, since $\struc{A}\in \classStruc{P}_k$ there is at most  one $a\in \univ{A}$ with $(\mathcal{N}_{2}^{\struc{A}}(a),a)\in \tau_{d,2,\sigma}^{k}$. This implies directly that there can be at most one vertex $v\in V(G)$ with  $(\mathcal{N}_{\ell'}^G(v),v)\in \tau_{d,\ell'}^{k}$ and hence $G\in \classStruc{P}_{\hat{\rho}}$. \\
		
		Now we prove that $\bigcup_{1\leq k \leq m}\classStruc{P}_{\hat{\rho}_k}\subseteq \graphProp$. Let $G\in \bigcup_{1\leq k \leq m}\classStruc{P}_{\hat{\rho}_k}$ and let $k\in \{1,\dots,m\}$ be an index such that $G\in \classStruc{P}_{\hat{\rho}_k}$. Further assume that $G$ is not the empty graph, as $f(\struc{A}_{\emptyset})\in \graphProp$ is the empty graph. 
		
		Since for every $i$ for which $\hat{\rho}(i)\not=[0,0]$, there is a graph $G'\in \graphProp$ and $v\in V(G')$ such that $(\mathcal{N}_{\ell'}^{G'}(v'),v')\in \tau_{d,\ell'}^i$, we get that the $\ell'$-neighbourhood of every vertex in $G$ appears in some graph $G'\in \graphProp$. By choice of $\ell'$ we get that every vertex $v\in V(G)$ either is contained in a cycle of length $d$ and is the endpoint of some $k$-arrow, $k$-loop or non-arrow or $v$ is an internal vertex of a $k$-arrow, $k$-loop or non-arrow. Hence, we obtain a $\sigma$-structure $\struc{A}$ with $f(\struc{A})\cong G$ by replacing any cycle $C$ of length $d$ by an element $a_C$ and adding a tuple $(a_C,a_{C'})$ to the relation $\rel{R_k}{\struc{A}}$ if there are vertices $u$ on $C$ and $v$ on $C'$ such that $u\xrightarrow{k}v$ in $G$. Let $g$ be an isomorphism from $f(\struc{A})$ to  $G$.  
		
		Now we argue that $\struc{A}\in \classStruc{P}_{\rho_k}$. First assume that there are two elements $a,b\in \univ{A}$ with $(\mathcal{N}_2^{\struc{A}}(a),a)\in \tau_{d,2,\sigma}^{k}$ and $(\mathcal{N}_2^{\struc{A}}(b),b)\in \tau_{d,2,\sigma}^{k}$. By definition, we get that $(\mathcal{N}_{\ell'}^{f(\struc{A})}(u_{a,1}),u_{a,1})\in\tau_{d,\ell'}^{k}$ and $(\mathcal{N}_{\ell'}^{f(\struc{A})}(u_{b,1}),u_{b,1})\in\tau_{d,\ell'}^{k}$. Since $g$ is an isomorphism, the restriction of $g$ to $N_{\ell'}^{f(\struc{A})}(u_{a,1})$ must be an isomorphism from $\mathcal{N}_{\ell'}^{f(\struc{A})}(u_{a,1})$ to $\mathcal{N}_{\ell'}^{G}(g(u_{a,1}))$, and hence $(\mathcal{N}_{\ell'}^{G}(g(u_{a,1})),g(u_{a,1}))\cong (\mathcal{N}_{\ell'}^{f(\struc{A})}(u_{a,1}),u_{a,1})\in \tau_{d,\ell'}^{k}$. But the same holds for the $\ell'$-ball of $g(u_{b,1})$, and hence we contradict the assumption that $G\in \classStruc{P}_{\hat{\rho}_k}$ since $\hat{\rho}_k(k)=[0,1]$.
		Let us further assume that there is an $a\in \univ{A}$ such that $(\mathcal{N}_2^{\struc{A}}(a),a)\in \tau_{d,2,\sigma}^i$ for some $i\notin I_{k}$.  Since $G\in\classStruc{P}_{\hat{\rho}_k}$   we get $(\mathcal{N}_{\ell'}^G(g(u_{a,1})),g(u_{a,1}))\in \tau_{d,\ell'}^j$ for some $j\in \hat{I}_k$. 
		But then by construction of $\hat{\rho}_k$, there must be $G'\in \graphProp$, and a vertex $v\in V(G')$ such that $(\mathcal{N}_{\ell'}^{G'}(v),v)\in \tau_{d,\ell'}^j$. Furthermore, since $\ell'>d$ vertex $v$ must be contained in cycle of length $d$. By construction of $\graphProp$, there is a structure  $\struc{A}\in \classStruc{P}_{\zigzag}$ such that $f(\struc{A}')=G'$. Since $v$ is contained in a cycle of length $d$, $v$ must be an element-vertex corresponding to some element $a'\in \univ{A'}$. Since we picked $\ell'$ in such a way, that $f(\mathcal{N}_2^{\struc{A}'}(a'))\subseteq \mathcal{N}_{\ell'}^{G'}(v)$ we get $(\mathcal{N}_2^{\struc{A}'}(a'),a')\in \tau_{d,2,\sigma}^i$  by choice of $i$ and $j$. Hence  $\struc{A}'\notin \classStruc{P}_{\rho_k}$. But this contradicts Lemma~\ref{lem:neighbouhoodProfilOfPZigZag}. 
		
		Hence we have shown that $\struc{A}\in P_{\rho_k}$. Then by Lemma~\ref{lem:neighbouhoodProfilOfPZigZag} $\struc{A}\in \classStruc{P}_{\zigzag}$, and by construction $G\in \graphProp$. 
	\end{proof}
	
	Since by Claim~\ref{claim:profilOfGraphProperty} we can express $\graphProp$ as a finite union of properties each defined by a $0$-profile,  Theorem~\ref{thm:subsetOfFOIsLocal} implies that $\graphProp$ is GSF-local.
\end{proof}

\subsubsection{Putting everything together}
Now we prove Theorem~\ref{thm:existenceLocalNonTestableProperty}.

\begin{proof}[Proof of Theorem~\ref{thm:existenceLocalNonTestableProperty}]
	 Combining Theorem~\ref{thm:nonTestabilityForStructures}, Lemma~\ref{lem:localReduction} and Lemma~\ref{lem:local_reduction} we obtain that the graph property $\graphProp$  is not testable. Lemma \ref{lemma:graphproperty_gsf_local} shows that $\graphProp$ is also a GSF-local property. Hence there exists a GSF-local property of bounded-degree graphs which is not testable. 
	Furthermore, since having a POT implies being testable, this proves that there is a GSF-local property which has no POT. By Theorem~\ref{thm:charOfPOT} this implies that not all GSF-local  
	properties are non-propagating.
\end{proof}
\subsection{GSF-local properties of graphs of bounded degree 1 and 2 are non-propagating}\label{subsec:degree1-2}
In this section, we show that the degree $3$ from Theorem~\ref{thm:existenceLocalNonTestableProperty} of the example of a GSF-local property which is propagating is optimal, in the sense that all GSF-local properties of graphs of bounded degree $1$ and $2$ are non-propagating. We note that Ito et al. \cite{ito2019characterization} claimed that every GSF-local sequence of bounded degree at most $2$ is non-propagating in the appendix of their paper. However, there is one subtle issue in their proof, as they only considered \emph{connected} forbidden generalized subgraphs (which are called forbidden configurations in \cite{ito2019characterization}). In the following, we resolve this issue. Indeed, the extension from connected forbidden generalised subgraphs to arbitrary forbidden generalized subgraphs is non-trivial and requires an involved proof which we present in this section.

We first observe that even for graphs of bounded degree $1$, not every sequence of marked graphs $\overline{\mathcal{F}}$ is non-propagating as the following example shows. A similar example was given in \cite{goldreich2011proximity}.
\begin{example}\label{ex:propagatingSequence}
	Let $\mathcal{P}\subseteq \mathcal{C}_1$ be the property of $\overline{\mathcal{F}}$-free graphs, where $F$ is the marked graph depicted in Figure~\ref{fig:propagatingSequence},			
	$\mathcal{F}_n=\{F\}$ and   $\overline{\mathcal{F}}=(\mathcal{F}_n)_{n\in \mathbb{N}}$. Let $G_k$ be the graph consisting of $k$ edges and one isolated vertex. Then the set $B$ containing the one isolated vertex of $G_k$ covers all embeddings of $F$ (see Figure~\ref{fig:propagatingSequence}). But the only way to make $G_k$ $F$-free is to remove all $k$ edges of $G_k$. Hence $G_k$ is $1/2$-far from being $F$-free, which implies that  $\mathcal{P}$ is propagating for $\overline{\mathcal{F}}$.

	However, the property $\mathcal{P}$ is non-propagating, as we show in the proof of Theorem~\ref{thm:degreeTwoCase}. Indeed, consider the alternative sequence of marked graphs $\overline{\mathcal{F}}=(\mathcal{F}_n)_{n\in \mathbb{N}}$, where $\mathcal{F}_n=\{F\}$ for $n$ even and $\mathcal{F}_n=\{F,\tilde{F}\}$ for $n$ odd. Clearly, in $G_{2k+1}$  any set $\tilde{B}$ 
	covering $\mathcal{F}_{2k+1}$ must contain one incident vertex of every edge. Hence the number of necessary modifications is at most $|B|$, suggesting that $\mathcal{P}$ is non-propagating.  
	\end{example}
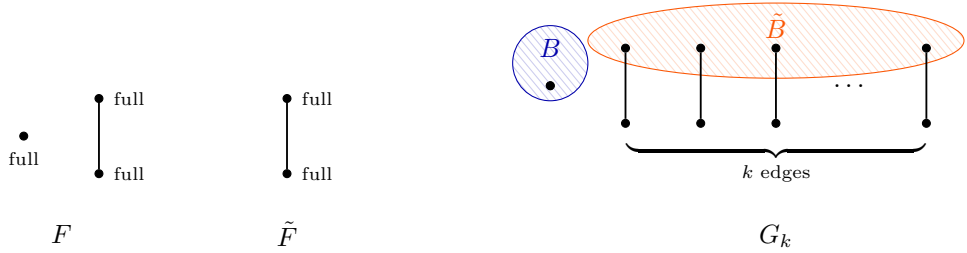
\begin{figure*}
	\centering
	\begin{tikzpicture}[scale = 1]
	\definecolor{C1}{RGB}{1,1,1}
	\definecolor{C2}{RGB}{0,0,170}
	\definecolor{C3}{RGB}{251,86,4}
	\definecolor{C4}{RGB}{50,180,110}
	\tikzstyle{ns1}=[line width=0.7]
	\tikzstyle{ns2}=[line width=1.2]
	\node[draw,C1,circle,fill=C1,inner sep=0pt, minimum width=3pt] (0) at (-1,0.5) {};	
	\node[draw,C1,circle,fill=C1,inner sep=0pt, minimum width=3pt] (1) at (0,0) {};
	\node[draw,circle,fill=black,inner sep=0pt, minimum width=3pt] (2) at (0,1) {};
	\path[ns1]          (1)  edge   (2);
	\node[minimum height=10pt,inner sep=0,font=\scriptsize] at (-1,0.2) {full}; 
	\node[minimum height=10pt,inner sep=0,font=\scriptsize] at (0.4,0) {full}; 
	\node[minimum height=10pt,inner sep=0,font=\scriptsize] at (0.4,1) {full}; 
	\node[minimum height=10pt,inner sep=0] at (-0.5,-0.8) {$F$};
	
	\node[draw,C1,circle,fill=C1,inner sep=0pt, minimum width=3pt] (3) at (2.5,0) {};
	\node[draw,circle,fill=black,inner sep=0pt, minimum width=3pt] (4) at (2.5,1) {};
	\path[ns1]          (3)  edge   (4);
	\node[minimum height=10pt,inner sep=0,font=\scriptsize] at (2.9,0) {full}; 
	\node[minimum height=10pt,inner sep=0,font=\scriptsize] at (2.9,1) {full}; 
	\node[minimum height=10pt,inner sep=0] at (2.5,-0.8) {$\tilde{F}$};
	
	\begin{scope}[xshift=7cm,yshift=0.67cm]
	\begin{scope}
	\clip[postaction={fill=white,fill opacity=0.2}](2,1.1) ellipse (2.5cm and 0.5cm);
	\foreach \x in {-7.1,-7,...,15.2}%
	\draw[C3!20](\x, -5)--+(-12,14.4);
	\end{scope}
	\draw [C3](2,1.1) ellipse (2.5cm and 0.5cm);
	\node[draw,C1,circle,fill=C1,inner sep=0pt, minimum width=3pt] (0) at (-1,0.5) {};	
	\node[draw,C1,circle,fill=C1,inner sep=0pt, minimum width=3pt] (1) at (0,0) {};
	\node[draw,circle,fill=black,inner sep=0pt, minimum width=3pt] (2) at (0,1) {};
	\node[draw,C1,circle,fill=C1,inner sep=0pt, minimum width=3pt] (3) at (1,0) {};
	\node[draw,circle,fill=black,inner sep=0pt, minimum width=3pt] (4) at (1,1) {};
	\node[draw,C1,circle,fill=C1,inner sep=0pt, minimum width=3pt] (5) at (2,0) {};
	\node[draw,circle,fill=black,inner sep=0pt, minimum width=3pt] (6) at (2,1) {};
	\node[draw,C1,circle,fill=C1,inner sep=0pt, minimum width=3pt] (7) at (4,0) {};
	\node[draw,circle,fill=black,inner sep=0pt, minimum width=3pt] (8) at (4,1) {};
	\path[ns1]          (1)  edge   (2);
	\path[ns1]          (3)  edge   (4);
	\path[ns1]          (5)  edge   (6);
	\path[ns1]          (7)  edge   (8);
	\node[minimum height=10pt,inner sep=0] at (3,0.5) {$\dots$};
	\begin{scope}
	\clip[postaction={fill=white,fill opacity=0.2}](-1,0.8) circle (0.5);
	\foreach \x in {-7.1,-7,...,15.2}%
	\draw[C2!20](\x, -5)--+(-12,14.4);
	\end{scope}
	\draw [C2](-1,0.8) circle (0.5);
	\node[draw,C1,circle,fill=C1,inner sep=0pt, minimum width=3pt] (0) at (-1,0.5) {};
	\node[C2,minimum height=10pt,inner sep=0] at (-1,1) {$B$};
	\node[C3,minimum height=10pt,inner sep=0] at (2,1.3) {$\tilde{B}$};
	\node[minimum height=10pt,inner sep=0] at (2,-0.4) {$\underbrace{\phantom{textllllllllllllllllllllllllllllll}}_{k\text{ edges}}$};
	\node[minimum height=10pt,inner sep=0] at (2,-1.5) {$G_k$};
	\end{scope}
	
	\end{tikzpicture}
	\caption{Marked graphs $F$ and $\tilde{F}$ and graph $G_k$ from Example~\ref{ex:propagatingSequence}.}\label{fig:propagatingSequence}
\end{figure*}
Indeed, adding certain redundant marked graphs to the sequence $\overline{\mathcal{F}}=(\mathcal{F}_n)_{n\in \mathbb{N}}$ to control the behaviour of sets covering $\mathcal{F}_n$ as in Example~\ref{ex:propagatingSequence}  works in general both in the degree $1$ and degree $2$ case and will be our proof strategy for the following theorem.
More precisely, for a property $\mathcal{P}$ of graphs of bounded degree $2$, a sequence $\overline{\mathcal{F}}$ of marked graphs such that $\mathcal{P}$ is $\overline{\mathcal{F}}$-local and a bound $k$ on the size of any graph appearing in $\overline{\mathcal{F}}$, we add forbidden generalized subgraphs to $\overline{\mathcal{F}}$ in the following way. In case there is no graph in $\mathcal{P}$ on $n$ vertices containing a set of different small connected components (connected components with at most $k$ vertices) each with frequency at least $k$, we add a generalised subgraph forbidding precisely this combination of connected components to $\mathcal{F}_n$. Additionally, if no graph in $\mathcal{P}$ on $n$ vertices contains a set of different small connected components each with frequency at least $k$ and one large component (connected component with at least $k+1$ vertices), we add a generalised subgraph forbidding precisely this combination of connected components to $\mathcal{F}_n$. Now for a graph $G$ which is not in $\mathcal{P}$ and a set $B$ covering all forbidden generalised subgraphs in $G$, we look at what types of connected components appear in the part of $G$ not containing vertices from $B$. In case $G$ is large enough and $B$ is small enough, we observe that some types of connected components have to appear with high frequency, or there must be a large component in the part of $G$ which is not covered by $B$. By adding redundant subgraphs as described earlier, this now implies that there must be a graph $G'$ in $\mathcal{P}$, which has the same structure as $G$ on a large subset of the part of $G$ which is not covered by $B$. Hence we can modify $G$ to obtain a graph satisfying the property $\mathcal{P}$ (by changing $G$ to $G'$) without modifying $G$ much beyond $B$. The restriction to bounded degree at most $2$ is crucial in this argument as it gives us the necessary control over large connected components. 

\begin{theorem}\label{thm:degreeTwoCase}
	Any GSF-local property $\mathcal{P}\subseteq \mathcal{C}_d$ for $d\leq 2$ is non-propagating.
\end{theorem}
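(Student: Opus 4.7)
The plan is to start from a sequence $\overline{\mathcal{F}}=(\mathcal{F}_n)_{n\in\mathbb{N}}$ witnessing that $\mathcal{P}$ is $\overline{\mathcal{F}}$-local, with a uniform size bound $k$ on the marked graphs in every $\mathcal{F}_n$, and then to enlarge each $\mathcal{F}_n$ to a new set $\tilde{\mathcal{F}}_n$ by appending ``redundant'' marked graphs in the spirit of Example~\ref{ex:propagatingSequence}. The key structural fact I would exploit is that every graph in $\mathcal{C}_d$ for $d\leq 2$ is a disjoint union of finite paths and finite cycles. Its connected components split into \emph{small} components (at most $k$ vertices, of which there are only finitely many isomorphism types $T_1,\dots,T_t$) and \emph{large} components (more than $k$ vertices, each either a path or a cycle). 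The isomorphism type of such a graph is essentially captured by the multiplicities of the $T_i$ together with some bookkeeping about the large components, and a marked-graph condition can force a large component to be present by means of a short partial-marked path.

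For each $n$, I would enumerate every ``small-components profile'' $(c_1,\dots,c_t)\in\{0,\dots,k\}^t$ and every flag $\ell\in\{0,1\}$ indicating the presence of a large component, and check whether $\mathcal{P}_n$ contains any graph exhibiting at least $c_i$ copies of $T_i$ for every $i$ together with at least $\ell$ large components. Whenever it does not, I would add to $\tilde{\mathcal{F}}_n$ a single marked graph built from $c_i$ full-marked copies of each $T_i$ and (if $\ell=1$) an additional partial-marked path that forces a large component. These additions are redundant, so $\mathcal{P}$ is still $\tilde{\mathcal{F}}$-local, but any cover $B$ of $\tilde{\mathcal{F}}_n$ in a graph $G\notin\mathcal{P}_n$ now carries much stronger information: the small-components profile and large-component flag of the subgraph of $G$ induced on components disjoint from $B$ must be realisable in $\mathcal{P}_n$, for otherwise our construction would have placed into $\tilde{\mathcal{F}}_n$ a marked graph that embeds into $G$ avoiding $B$, contradicting that $B$ is a cover.

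Using this, I would define the repairing function $\tau$ and construct, for any such $G$ and $B$, a graph $H\in\mathcal{P}_n$ realising the observed profile on the same vertex set $V(G)$: one first deletes all edges incident to $B$ and to the $O(|B|)$ small components that meet $B$, keeps the bulk of $V(G)\setminus B$ unchanged, and then makes $O(|B|)+O(k)$ local edge modifications to correct small-component multiplicities within the constant slack $k$ and, if necessary, to introduce or destroy a single large component by splicing/joining two path ends. This yields at most $O(d\cdot|B|)+O(k)$ edge modifications, which is $\tau(|B|/n)\cdot dn$ for a suitable non-decreasing $\tau$ with $\tau(\beta)\to 0$. The main obstacle I expect is designing the partial-marked ``large component'' gadget so that its embedding into $G$ is genuinely equivalent to $G$ having a large component avoiding $B$, and simultaneously ensuring that the repair step needs only $O(|B|)$ modifications beyond $B$ rather than a chain reaction; this is precisely where the hypothesis $d\leq 2$ is essential, since only for paths and cycles can a constant-size partial-marked pattern faithfully certify unbounded extension.
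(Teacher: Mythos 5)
Your overall strategy matches the paper's: exploit that a graph of maximum degree at most $2$ is a disjoint union of paths and cycles, enlarge the sequence $\overline{\mathcal{F}}$ with ``redundant'' marked graphs that forbid unrealisable combinations of small-component multiplicities (truncated at the constant $k$) together with a large-component flag, and then use the cover $B$ to certify that the part of $G$ away from $B$ has a realisable profile, so that it can be repaired locally. The reduction of degree $1$ to degree $2$, the partial-marked path gadget for large components, and the profile-encoding by marked graphs all correspond directly to the paper's $F_{I,J}$, $F_{I,J}^{\operatorname{large}}$ and the set $\mathcal{F}'_n$.

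There is, however, a genuine gap in your repair step. You argue that once the truncated profile $(c_1,\dots,c_t;\ell)$ of the uncovered part $G[S]$ is known to be realisable in $\mathcal{P}_n$, one can keep the bulk of $G[S]$ and ``correct small-component multiplicities within the constant slack $k$'' by $O(|B|)+O(k)$ modifications. But realisability of the truncated profile only guarantees some $H\in\mathcal{P}_n$ with at least $c_i\le k$ copies of each small type $T_i$; it says nothing about an $H$ whose multiplicities match the \emph{actual} multiplicities of $G[S]$, which can exceed $k$ by $\Theta(n)$. If you keep the bulk of $G[S]$ unchanged, you are implicitly committing to a target graph whose multiplicities are those of $G[S]$, and you must separately prove that such a graph lies in $\mathcal{P}_n$. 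This is not automatic: knowing that some graph in $\mathcal{P}_n$ has $\ge k$ copies of $T_i$ does not by itself show that a graph with, say, $10^6 k$ copies of $T_i$ (and the complementary adjustments elsewhere) is also in $\mathcal{P}_n$. The paper closes this gap with a dedicated exchange lemma (Claim~\ref{claim:existanceOfGraphWithProperty}): starting from a witness with threshold multiplicities, it iteratively swaps whole components to push the multiplicities up to any prescribed values $a_i,b_j\ge k$ consistent with the vertex budget, and it verifies that every such swap preserves $\mathcal{F}'_n$-freeness by remapping any purported embedding of a forbidden marked graph into unaltered isomorphic components. An analogous existence lemma is also needed in the complementary case (the paper's Claim~\ref{claim:existanceOfGraphWithLargeCycle}), where the surplus is absorbed into a single long cycle. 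Without one of these lemmas, ``realisable profile'' does not yield a nearby element of $\mathcal{P}_n$, and the claimed $O(|B|)$ bound on the number of modifications does not follow.
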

\begin{proof}
	We only consider the case that $\mathcal{P}\subseteq \mathcal{C}_2$. We can consider any property $\mathcal{P}\subseteq \mathcal{C}_1$ as a property in $\mathcal{C}_2$ by forbidding any vertex to have degree $2$, \ie adding a path of length 2 in which both degree 1 vertices are marked `partial' and the degree 2 vertex is marked `full' to every set of forbidden marked graphs in any sequence defining $\mathcal{P}$, and adjusting constants in the following argument to account for the degree being $1$ instead of $2$. 
	
	Let $\mathcal{P}=\bigcup_{n\in \mathbb{N}}\mathcal{P}_n$ and $\overline{\mathcal{F}}=(\mathcal{F}_n)_{n\in \mathbb{N}}$ be a sequence of marked graphs such that $\mathcal{P}$ is $\overline{\mathcal{F}}$-local. By definition there exists $k\in \mathbb{N}$ such that every marked graph appearing in $\overline{\mathcal{F}}$ contains at most $k$ vertices.
	
	For two sets $I\subseteq [k]:=\{0,\dots,k-1\}$, $J\subseteq \{3,\dots,k\}$ such that $I\cup J \not= \emptyset$ let $F_{I,J}$ be the marked graph which is the disjoint union  of $k$ paths of length $i$ for every $i\in I$ and $k$ cycles of length $j$ for every $j\in J$ in which every vertex is marked as `full'. Be aware that 
	a path of length $i$ contains $i+1$ vertices and a cycle of length $j$ contains $j$ vertices. Note that graphs that are  $F_{I,J}$-free can not contain at the same time $k$ connected components that are paths of length $i$ for every $i\in I$ and $k$ connected components which are cycles of length $j$ for every $j\in J$. We let $F_{\emptyset,\emptyset}^{\operatorname{large}}$ be a path of length $k+1$ in which both  vertices of degree $1$ are marked as `partial' and every other vertex is marked `full'. We further let $F_{I,J}^{\operatorname{large}}$ be the disjoint union of $F_{I,J}$ and  $F_{\emptyset,\emptyset}^{\operatorname{large}}$ for $I\subseteq [k]$, $J\subseteq \{3,\dots,k\}$ with $I\cup J \not= \emptyset$. Note that graphs that are  $F_{I,J}^{\operatorname{large}}$-free can not contain at the same time $k$ connected components that are paths of length $i$ for every $i\in I$ and $k$ connected components which are cycles of length $j$ for every $j\in J$ and one connected component containing at least $k+1$ vertices.
	
	We obtain a sequence $\overline{\mathcal{F}}'=(\overline{\mathcal{F}}'_n)_{n\in \mathbb{N}}$ by setting
\begin{align*}\mathcal{F}'_n:= \mathcal{F}_n\cup & \Big\{F\in \{F_{\emptyset,\emptyset}^{\operatorname{large}},F_{I,J},F_{I,J}^{\operatorname{large}}: I\subseteq [k], J\subseteq \{3,\dots,k\}, \\
&I\cup J\not=\emptyset\}: \text{ every }G\in \mathcal{P}_n \text{ is } F\text{-free}\Big\}.
\end{align*}
	First observe that by construction $\mathcal{P}$ must be $\overline{\mathcal{F}}'$-local.
	
	We use the following notation. For a graph $G\in \mathcal{C}_2$, $i\in [k]$ and $j\in \{3,\dots,k\}$ we let 
	\begin{itemize}
		\item $p_i(G)$ be the number of connected components of $G$ that are path of length $i$.
		\item $c_j(G)$ be the number of connected components of $G$ that are cycles of length $j$.
		\item $cc^{\operatorname{large}}(G)$ be the number of connected components of $G$ with more than $k$ vertices.
	\end{itemize} 
	We choose the following (monotonically non-decreasing) function
	$\tau(\epsilon):= \min(1,8k^3\epsilon) $ for $\epsilon\in (0, 1]$. 	
	Let $G$ be a graph on $n$ vertices which is not $\mathcal{F}_n'$-free and $\mathcal{P}_n$ 
	is not empty. Let $B\subseteq V(G)$ be any set covering $\mathcal{F}_n'$. 
	To show that $\mathcal{F}'$ is non-propagating 
	it is sufficient to show that  $G$ is 
	$\tau(|B|/n)$-close to $\mathcal{P}$. By choice of $\tau$ this means that we have to argue 
	that we can make  $G$ have property $\mathcal{P}_n$ by modifying at most $16 k^3 |B|$ edges.
	Hence for the remainder of this proof we argue that $G$ is $\tau(|B|/n)$-close to $\mathcal{P}$. 
	 
	Assuming $n< 8k^3$, we get that $\tau(|B|/n)= 1$ (since $G$ is not 
	$\mathcal{F}_n'$-free we know that $|B|\geq 1$), 
	which means $G$ is $\tau(|B|/n)$-close  to $\mathcal{P}$, as in this case we can modify all edges of $G$ and hence we can make $G$ into any graph 
	in $\mathcal{P}_n$. Hence we now assume that $n\geq 8k^3$.
	 
	Now consider the case that $|B|\geq  \frac{n}{8k}$. In this case $\tau(|B|/n)=1$ and $G$ is 
	$\tau(|B|/n)$-close to $\mathcal{P}$ again because we are allowed to modify all edges of $G$ which allows us to make $G$ into any graph 
	in $\mathcal{P}_n$. Hence from now on we only consider the case that $|B|\leq  \frac{n}{8k}$.

	Let $S$ be the set of vertices for which the $k$-neighbourhood does not contain any vertex from $B$.
	Let $I\subseteq [k]$, $J\subseteq \{3,\dots,k\}$ be the sets of indices such that $i\in I$ if and only 
	if $p_i(G[S])\geq k$  and $j\in J$ if and only if $c_j(G[S])\geq k$. Note that $I\cup J$ could be empty.
	\bigskip
	
	\textbf{Case 1:} Assume that $F_{I,J}^{\operatorname{large}}\notin \mathcal{F}'_n$. 
	\bigskip
	
	First note that every component of size at most $k$ which contains a vertex from $S$ cannot contain a vertex from $B$ by definition of $S$. Hence every connected component of $G$ of size at most $k$ is either fully contained in $S$ or disjoint from $S$. Since there are at most $2k$ isomorphism types of connected  
	components of size at most $k$ we know that there are at most $2k^2$ connected components $X$ of $G[S]$ 
	 such that  	there are at most $k-1$ other connected components of $G[S]$ isomorphic to $X$. In other words, there are
	 at most $2k^2$ components $X$ of $G$ containing no element from $B$ such that if $X$ is a path of length $i$ then $i\notin I$ and if 
	 $X$ is a cycle of length $j$ then $j\notin J$. We now obtain $G'$ 
	by the following edge modifications from $G$. For every cycle of length $j$ where 
	$j\notin J$,  
	we delete one edge (at most $2k^2+|B|$ edge by our previous argument). Then we add edges 
	connecting all path (including the paths obtained in the last step) of length $i$ for 
	$i\notin I$ to one long cycle $C$ (at most $2k^2+|B|$ edge additions). If $C$ has length 
	less or equal to $k$ there must be $i\in I$ or $j\in J$ such that $p_i(G)>k$ or $c_j(G)>k$, 
	in which case we include one respective component in $C$ and repeat this until $C$ has 
	length at least $k+1$ (at most $2k$ modifications). Since in total we did at most 
	$4k^2+2|B|+2k\leq 16 k^2|B|$ edge modifications, $G$ is $\tau(|B|/n)$-close to $G'$. 
	The following claim completes the proof of Case 1 by showing that $G'\in \mathcal{P}_n$.

	\begin{claim}\label{claim:existanceOfGraphWithLargeCycle}
		Let $I\subseteq [k]$, $J\subseteq \{3,\dots,k\}$ and  $a_i,b_j\geq k$ where $i\in I$, $j\in J$ be any
		selection of integers  such that 
		\begin{equation}\label{eq:conditionQuantitiesWithLargeCycle}
		\sum_{i\in I}i\cdot a_i+\sum_{j\in J}j\cdot b_j\leq n-(k+1).
		\end{equation}
		If  
		$F_{I,J}^{\operatorname{large}}\notin \mathcal{F}'_n$ then
		any graph $H\in \mathcal{P}_n$ with $p_i(H)=a_i$, $c_j(H)=b_j$ for $i\in I$, $j\in J$ 
		and  one additional connected component which is a cycle is $\mathcal{F}'_n$-free. 
	\end{claim}
	\begin{proof}
		Assume there is a graph $H\in \mathcal{P}_n$ as given in the statement which is not 
		$\mathcal{F}_n'$-free and let $C$ be the cycle in $H$ of length larger than $k$.
		Then there is $F\in \mathcal{F}_n'$ such that there is an embeddings $f:V(F)\rightarrow V(H)$.
		Since $F_{I,J}^{\operatorname{large}}\notin \mathcal{F}'_n$, by construction there is a graph 
		$H'\in \mathcal{P}_n$ with $p_i(H')\geq k$, $c_j(H')\geq k$ for $i\in I$, $j\in J$ and 
		$cc^{\operatorname{large}}(H')\geq 1$. We let $C'$ be a connected component of $H'$ of size larger than $k$.
		To find an embedding of $F$ into $H'$, for  every connected component $X$ of $H$ of size 
		at most $k$ which contains a vertex from $f(V(F))$, we pick a unique connected component $X'$ 
		of $H'$ which is isomorphic to $X$. Note that 
		because $|f(V(F))|\leq k$ and $p_i(H')\geq k$, $c_j(H')\geq k$ we can pick the connected component 
		in $H'$ uniquely. For  every connected component $X$ of $H$ of size 
		at most $k$ which contains a vertex from $f(V(F))$, we now define $f_X$ to be an isomorphism from $X$ to $X'$. 
		Furthermore, we pick an injective  graph homomorphism  $f^{\operatorname{large}}: f(V(F))\cap C \rightarrow C'$. 
		Again, this is possible because  $|f(V(F))|\leq k$.
		We now let $f'(v):=f_X(f(v))$ if $f(v)$ is in the connected component $X$ and $f'(v):=f^{\operatorname{large}}(f(v))$ if 
		$f(v)$ is in $C$. Note that $f'$ is injective by construction. Furthermore,  as a consequence of picking $f_X$ to 
		to be isomorphisms and $f^{\operatorname{large}}$ to be a homomorphism we get that $f'$ is an embedding of $F$ into $H'$. 
		To see this we observe that for any vertex $v\in V(F)$ which is marked as `full' and for which $f'(v)$ is in a connected 
		component $X$ with at most $k$ vertices 
		we obtain the condition $N_1^{H'}(f'(v))=f'(N_1^F(v))$ from $f_X$ being an isomorphism. 
		On the other hand, in case $f'(v)$ is in $C'$ and $v$ is marked `full' we get that $v$ has 
		two neighbours $w_1,w_2$ in $F$ and $f(w_1)$, $f(w_2)$ are neighbours of $f(v)$ 
		(since $f(v)$ must be on $C$) which implies that $f'(w_1)$ and $f'(w_2)$ are neighbours of $f'(v)$ (since $f^{\operatorname{large}}$ is a homomorphism).
		Since $f'$ is an embedding of $F$ into $H'$ we obtain a contradiction to $H'\in \mathcal{P}_n$ and hence $H$ is $\mathcal{F}_n'$-free. 
		Therefore $H$ must be $\mathcal{F}_n'$-free as claimed.
	\end{proof}
	\bigskip
	
	\textbf{Case 2:} Assume that $F_{I,J}^{\operatorname{large}}\in \mathcal{F}'_n$. In this case our strategy 
	is to modify the connected components of $G$ containing a vertex from $B$ into paths and cycles of length $i$ 
	for $i\in I$ or $i\in J$, respectively. 
	
	Since the $k$-neighbourhood of every vertex contains no more than $2k+1$ vertices, 
	$|B|\leq  \frac{n}{8k}$  implies that  $|S|\geq n/2$.  Furthermore, since 
	$F_{I,J}^{\operatorname{large}}\in \mathcal{F}'_n$ no vertex in $S$ can be contained in 
	a connected component of size larger than $k$ as otherwise there would be an embedding of 
	$F_{I,J}^{\operatorname{large}}$ into $G$ which is not covered by $B$. Hence $G[S]$ is the 
	disjoint union of paths
	of length at most $k-1$ and cycles of length at most $k$.
	Since $|S|\geq 4k^3$ and $G[S]$ contains at most $2k$ different isomorphism types of 
	connected components and each of the connected components has at most $k$ vertices we 
	conclude that at least $2k\geq k+1$ of the connected components of $G[S]$ are pairwise isomorphic. Hence 
	$I\cup J\not= \emptyset$. Furthermore,  $F_{I,J}$ is defined and not in $\mathcal{F}_n'$ 
	since $B$ covers $\mathcal{F}_n'$. 
	
	The next claim is the key to showing that we can modify $G$ into having property $\mathcal{P}$ without
	modifying more than a constant number of edges in $G[S]$.
	
	\begin{claim}\label{claim:existanceOfGraphWithProperty}
		If for $I\subseteq [k]$, $J\subseteq \{3,\dots,k\}$ with $I\cup J \not= \emptyset$ we have 
		that $F_{I,J}^{\operatorname{large}}\in \mathcal{F}'_n$ and $F_{I,J}\notin \mathcal{F}'_n$ then 
		for any selection of integers $a_i,b_j\geq k$ where $i\in I$, $j\in J$ such that 
		\begin{equation}\label{eq:conditionQuantities}
		\sum_{i\in I}i\cdot a_i+\sum_{j\in J}j\cdot b_j\leq n-k^3
		\end{equation}  there is an $\mathcal{F}_n'$-free graph $H\in \mathcal{P}_n$ such that $p_i(H)\geq a_i$ and $c_j(H)\geq b_j$.
	\end{claim}
\begin{proof}
	We set $a_i=0$ for $i\in [k]\setminus I$ and $b_j=0$ for $j\in \{3,\dots,k\}\setminus J$.
	Since $F_{I,J}^{\operatorname{large}}\in \mathcal{F}'_n$ and $F_{I,J}\notin \mathcal{F}'_n$ 
	by construction of $\overline{\mathcal{F}}$ there must be a graph in $\mathcal{P}_n$ whose 
	connected components include at least $k$ paths of length $i$ for every $i\in I$, $k$ cycles 
	of length $j$ for every $j\in J$ and no connected component containing more than $k$ vertices.  
	Pick $H$ amongst all graphs in $\mathcal{P}_n$ with these properties such that %\pan{the expressions (here and the long one below) have typos, $p_j$ should be $b_j$.	}
	$$(\ast):=\sum_{i\in [k] \atop{p_i(H) < a_i}}a_i-p_i(H)+\sum_{j\in \{3,\dots,k\}\atop{c_j(H)<b_j}}c_j(H)-b_j$$ 
	is minimal. In case $(\ast)>0$ there is $i\in [k+1]$ such that either $p_i(H)<a_i$ or $c_i(H)<b_i$.
	Combining this with Equation~\ref{eq:conditionQuantities} we obtain that there must be $j\in [k+1]$ 
	such that either $p_j(H)-a_j>k$ or $c_j(H)-b_j>k$.  We let $H'$ be the graph obtained from $H$ by  
	replacing $i$ connected components which are paths of length $j$ or $i$ connected components which 
	are cycles of length $j$, respectively, and  adding $j$ disjoint paths of length $i$ or $j$ disjoint 
	cycles of length $i$, respectively. By choice of $i,j$ we get that 
	$$(\ast)>\sum_{i\in [k] \atop{p_i(H') < a_i}}a_i-p_i(H')+\sum_{j\in \{3,\dots,k\}\atop{c_j(H')<b_j}}c_j(H')-b_j.$$ 
	Furthermore, $H'$ must be $\mathcal{F}_n'$-free which we will argue in the following. Assume this is 
	not the case and there is $F\in \mathcal{F}_n'$ and an embedding $f:V(F)\rightarrow V(H')$. 
	We obtain a map $f':V(F)\rightarrow V(H)$ from $f$ as follows. For every connected component $X$ in $H'$ 
	which has been altered we pick a unique connected component $X'$ of $H'$ which is isomorphic to $X$ and 
	contains no vertex in the image $f(V(F))$. This is possible as the assumption that  $X$ was altered implies 
	that either $|X|-1\in I$ or $|X|\in J$ and hence there are at least $k$ connected components isomorphic to 
	$X$ in $H'$ which were not altered. Since further $|f(V(F))|\leq k$ we can pick the $X'$ uniquely.
	We now let $f_X$ be an isomorphism from $X$ to $X'$ for every connected component $X$ which has been altered and $f_X$ the identity
	for every connected component $X$ which has not been altered. We define $f'(v):=f_X(v)$ for $v\in X$. By construction $f'$ is obviously an embedding of $F$ into $H$.
	Since $H\in \mathcal{P}_n$ this yields a contradiction.
	Hence the existence of $H'$ contradicts the assumption that $(\ast)>0$ which implies that $H$ has the claimed properties.
\end{proof}
	First observe that $n\geq 8k^3$ allows us to chose  $a_i$ and $b_j$ for every $i\in I$ and $j\in J$ in such a way that $k\leq a_i \leq p_i(G[S])$, $k\leq b_j \leq c_j(G[S])$ 
	 and $\sum_{i\in I}i\cdot a_i+\sum_{j\in J}j\cdot b_j\leq  n-k^3$. Amongst all such choices we pick $a_i$ and $b_j$ such that 
	$\sum_{i\in I}i\cdot a_i+\sum_{j\in J}j\cdot b_j$ is maximum. 
	Let $M$ be a set of vertices containing all connected components of $G$ apart from  $a_i$ 
	paths and $b_j$ cycles from $G[S]$ for every $i\in I$, $j\in J$. Then $|M|\leq  2k|B|+|B|+4k^3$ 
	since $M$ consists of $N_k^G(B)$ (at most $2k|B|+|B|$ vertices), all vertices in a connected 
	component which is either a path of length $i$ for $i\notin I$ or a cycle of length $j$ for 
	$j\notin J$ (since there are at most $2k^2$ such paths and cycles (as argued in Case 1) and each contains at most $k$ vertices) 
	or in case $a_i\not= p_i(G)$ or $b_j\not=c_j(G)$ for some $i\in I$, 
	$j\in J$, $M$ consist of at most $k^3+k$ vertices as we picked $a_i$, $b_j$ to maximise 
	$\sum_{i\in I}i\cdot a_i+\sum_{j\in J}j\cdot b_j$. 
	
	Now we use Claim~\ref{claim:existanceOfGraphWithProperty} and obtain an $\mathcal{F}_n'$-free graph $H\in \mathcal{P}_n$ such that $p_i(H)\geq a_i$ and $c_j(H)\geq b_j$. Hence we can modify $G$ into a graph $G'$ which is isomorphic to $H$ by only modifying $G[M]$. Since we can modify $G[M]$ into any graph with no more than $4k|B|+2|B|+8k^3\leq 16 k^3|B|$ modifications we showed that $G$ is $\tau(|B|/n)$-close to having $\mathcal{P}$.
\end{proof}

\section{Testing properties of neighbourhoods}
\label{sec:freeness}
In this section we only consider simple graphs, \ie undirected graphs without self-loops and without parallel edges, and  for any  $d\in \mathbb{N}$ let $\mathcal{C}_d$  be the class of simple graphs of bounded degree $d$.
We view simple graphs as structures over the signature $\sigma_{\operatorname{graph}}:=\{E\}$, where $E$ encodes a binary, symmetric and irreflexive relation. This allows transferring the notions from Section~\ref{sec:preliminaries} to graphs. 

Let $r\geq 1$ and let $\tau$ be an $r$-type and let $\varphi_{\tau}(x)$ be a FO formula saying that $x$ has $r$-type $\tau$. 
We say that a graph $G$ is \emph{$\tau$-neighbourhood regular}, if $G\models \forall x\varphi_{\tau}(x)$. We say that a graph $G$ is \emph{$\tau$-neighbourhood free}, if $G \models \lnot \exists x \varphi_{\tau}(x)$. Let $\tau_1,\dots,\tau_t$ be a list of all $r$-types in $\mathcal{C}_d$. If $F\subseteq \{\tau_1,\dots,\tau_t\}$ we say that $G$ is $F$-free, if $G$ is $\tau$-neighbourhood free for all $\tau\in F$. 

Observe that both  $\tau$-neighbourhood-freeness and $\tau$-neighbourhood regularity can be defined by formulas in $\Pi_2$ for any neighbourhood type $\tau$. Hence the next Lemma shows that there exist neighbourhood properties that are in $\Pi_2$, but not in $\Sigma_2$. 
\begin{lemma}\label{lem:existencesigma2}
There exist $1$-types $\tau,\tau'$ such that neither $\tau$-neighbourhood freeness nor $\tau'$-neighbourhood regularity can be defined by a formula in $\Sigma_2$. 
\end{lemma}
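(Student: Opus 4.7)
I plan to prove the lemma by taking $\tau$ to be the $1$-type of an isolated vertex and $\tau'$ to be the $1$-type of a degree-$1$ vertex, working in $\mathcal{C}_d$ for any fixed $d \geq 1$. Both halves are proved by the same construction, so I treat them together. The strategy is to assume for a contradiction that some $\Sigma_2$-sentence $\phi = \exists x_1 \cdots \exists x_k\, \forall y_1 \cdots \forall y_\ell\, \psi(\bar x, \bar y)$ defines the property in question, and then exhibit graphs $G, G'$ on which $\phi$ cannot separate the property from its complement.

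For $n > k + \ell$, I would let $G$ be the perfect matching on $2n$ vertices. Every vertex of $G$ has degree $1$, so $G$ contains no isolated vertex (i.e.\ is $\tau$-free) and every vertex has $1$-type $\tau'$ (i.e.\ $G$ is $\tau'$-regular). Fix a witness tuple $\bar a \in V(G)^k$ with $G \models \forall \bar y\, \psi(\bar a, \bar y)$, and set $G' := G \sqcup \{v\}$ for a fresh isolated vertex $v$. Then $v$ has $1$-type $\tau$ in $G'$ and not $\tau'$, so $G'$ satisfies neither $\tau$-freeness nor $\tau'$-regularity.

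I would then derive the contradiction by showing $G' \models \phi$ via the same $\bar a$. Given $\bar b \in V(G')^\ell$, put $I := \{j : b_j = v\}$ and observe that $|\bar a \cup \{b_j : j \notin I\}| \leq k + \ell < n$, so there is a matching edge $\{v_*, w_*\}$ of $G$ whose both endpoints avoid $\bar a \cup \{b_j : j \notin I\}$. Let $\bar b'$ be obtained from $\bar b$ by replacing every occurrence of $v$ with $v_*$. A position-by-position check confirms that every atomic formula over $\sigma_{\operatorname{graph}}$ in $\bar x, \bar y$ has the same truth value in $(G, \bar a, \bar b')$ and in $(G', \bar a, \bar b)$: edge atoms at positions in $I$ are false on the $G'$-side because $v$ is isolated and on the $G$-side because $v_*$'s only neighbour $w_*$ lies outside $\bar a \cup \bar b$ (with irreflexivity handling the case $j = j' \in I$), while equality atoms agree because $v \notin V(G)$ and $v_* \notin \bar a \cup \{b_j : j \notin I\}$. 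Since $\psi$ is quantifier-free, $G \models \psi(\bar a, \bar b')$ transfers to $G' \models \psi(\bar a, \bar b)$, and thus $G' \models \phi$.

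The only delicate point is the bookkeeping for the atomic-type match on edge atoms: the $I \times I$ case relies on irreflexivity of $E$, and the mixed case relies on $v_*$ having no $G$-neighbour in $\bar a \cup \bar b$. Both are guaranteed by choosing $v_*$ from a matching edge disjoint from $\bar a \cup \bar b$, which exists as soon as $n > k + \ell$. Beyond this I do not expect any further obstacle, and the same contradiction settles both the non-$\Sigma_2$-definability of $\tau$-freeness and of $\tau'$-regularity in one stroke.
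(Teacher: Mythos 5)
Your proof is correct, and it takes a genuinely different concrete route from the paper while relying on the same core technique. The paper proves a transfer claim between the cycle $C_n$ and the path $P_{n-1}$: if $C_n\models\varphi$ for $\varphi\in\Sigma_2$ with $k$ existential quantifiers, then for $n>k$ one can fix a vertex $i$ of $C_n$ outside the witness tuple, roll the indices by $n-1-i$ modulo $n$, and check that the deletion of $i$ (turning the cycle into the path) is invisible to the shifted tuple; the types $\tau$, $\tau'$ chosen there are ``exactly one neighbour'' and ``two non-adjacent neighbours,'' distinguishing $C_n$ from $P_{n-1}$. You instead go from a perfect matching $G$ to $G\sqcup\{v\}$, with $\tau$ the isolated-vertex type and $\tau'$ the degree-one type, and the transfer step replaces occurrences of the fresh isolated vertex in the universal tuple by a matching vertex $v_*$ whose partner $w_*$ avoids $\bar a\cup\bar b$. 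Both arguments reduce to a position-by-position atomic-type match, and yours is, if anything, slightly cleaner because you \emph{add} a vertex rather than delete and re-index, you never have to track a cyclic shift, and the same pair of graphs settles both halves of the lemma simultaneously. Your version also works for the degree bound $d\geq 1$, whereas the paper's $C_n/P_{n-1}$ construction and its $\tau'$ require $d\geq 2$. The bookkeeping you flag (irreflexivity for the $I\times I$ case, and $v_*$ having no neighbour in $\bar a\cup\bar b$) is exactly where the argument's load is carried, and you have handled it correctly.
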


Note that the above lemma implies that we cannot simply invoke the testers for testing $\Sigma_2$ properties from Theorem \ref{thm:sigma2} to test these two properties. 

\begin{proof}[Proof of Lemma \ref{lem:existencesigma2}]
	For $n\in \mathbb{N}$, let $C_n$ be the cycle graph with vertex set $[n]:=\{0,1,\dots,n-1\}$. 
	Let $P_{n-1}$ be the path graph with vertex set $[n-1]$. We first show the following claim. 
	 \begin{claim}\label{claim:sigma2DistinguishingCycleFromPath}
Let $\varphi= \exists \overline{x}\forall \overline{y} \chi(\overline{x},\overline{y})$ where $\overline{x}=(x_1,\dots,x_k)$, $\overline{y}=(y_1,\dots,y_\ell)$ are tuples of variables and $\chi(\overline{x},\overline{y})$ is a quantifier-free formula. If $C_n\models \varphi$ then  $P_{n-1}\models \varphi$ for any $n>k$.
	\end{claim}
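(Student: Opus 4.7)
The plan is to use a cutting argument: since $\varphi$ only asserts the existence of $k$ witnesses and $n>k$, some vertex of $C_n$ is unused by the existential witnesses, and deleting it turns the cycle into a path while preserving everything the universal quantifier block can observe.

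Assume $C_n\models\varphi$, so there is a tuple $\overline{a}=(a_1,\dots,a_k)\in[n]^k$ with $C_n\models\forall \overline{y}\,\chi(\overline{a},\overline{y})$. Since $n>k$, I will pick a vertex $v\in[n]\setminus\{a_1,\dots,a_k\}$. The induced subgraph $C_n[[n]\setminus\{v\}]$ is obtained from the cycle by deleting $v$ together with its two incident edges, and is thus isomorphic to $P_{n-1}$; fix a graph isomorphism $g:C_n[[n]\setminus\{v\}]\to P_{n-1}$, and set $\overline{a}':=g(\overline{a})\in[n-1]^k$.

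I would then show $P_{n-1}\models\forall\overline{y}\,\chi(\overline{a}',\overline{y})$. Given any $\overline{b}'\in[n-1]^\ell$, let $\overline{b}:=g^{-1}(\overline{b}')\in([n]\setminus\{v\})^\ell$. The key observation is that, since $v\notin\{a_1,\dots,a_k\}\cup\{b_1,\dots,b_\ell\}$, the induced substructures $C_n[\{a_1,\dots,a_k,b_1,\dots,b_\ell\}]$ and $P_{n-1}[\{a'_1,\dots,a'_k,b'_1,\dots,b'_\ell\}]$ are isomorphic via $g$: every edge of $C_n$ not incident to $v$ is still present in $C_n[[n]\setminus\{v\}]$, and deleting $v$ does not create or destroy any adjacency between remaining vertices. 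Because $\chi$ is quantifier-free and built only from the edge relation and equality, its truth value on a tuple depends solely on the induced substructure, so $C_n\models\chi(\overline{a},\overline{b})$ transfers to $P_{n-1}\models\chi(\overline{a}',\overline{b}')$. Since $\overline{b}'$ was arbitrary, $P_{n-1}\models\forall\overline{y}\,\chi(\overline{a}',\overline{y})$, and hence $P_{n-1}\models\varphi$.

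The only substantive step is the isomorphism-of-induced-substructures observation, which is routine once the cut vertex $v$ has been chosen outside the existential witnesses; there is no real obstacle beyond being careful that $v$ avoids $\overline{a}$ (which is exactly what $n>k$ gives us) and that quantifier-freeness of $\chi$ lets us localise its evaluation to the induced substructure.
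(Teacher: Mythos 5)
Your proof is correct and takes essentially the same approach as the paper: pick a vertex outside the range of the existential witnesses (possible since $n>k$), delete it to turn the cycle into a path, and use the fact that quantifier-free formulas only see the induced substructure of the assigned tuple. The paper does it by contradiction with an explicit arithmetic relabelling $x\mapsto(x+n-1-i)\mod n$; you work forward with an abstract isomorphism $g$, which is slightly cleaner but substantively identical.
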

	\begin{proof}
		
		Assume on the contrary that for some $n>k$, 
		 it holds that $C_n \models \varphi$, while $P_{n-1}\not\models \varphi$. Since $C_n\models \varphi$ there are $k$ vertices $v_1,\dots,v_k$ in $C_n$ such that $C_n\models \forall \overline{y} \chi((v_1,\dots,v_k),\overline{y})$. Since $n>k$, there exists at least one vertex $i\in [n]$ that is not amongst $v_1,\dots,v_k$. Let $v_j':= (v_j+n-1-i)\mod n$ be a vertex of $P_{n-1}$. Since $P_{n-1}\not\models \varphi$ and $v_j'\in [n-1]$, we have that $P_{n-1}\not\models \forall \overline{y}\chi((v_1'\dots,v_k'),\overline{y})$. Hence there must be vertices $w_1',\dots,w_\ell'$ in $P_{n-1}$ such that $P_{n-1}\not\models \chi((v_1',\dots,v_k'),(w_1',\dots,w_\ell'))$. Now let $w_j:=(w_j'+i+1)\mod n$. Then $v_j\mapsto v_j'$ and $w_j\mapsto w_j'$ defines an isomorphism from $C_n[\{v_1,\dots,v_k,w_1,\dots,w_\ell\}]$ and $P_{n-1}[\{v_1',\dots,v_k',w_1',\dots,w_\ell'\}]$. Hence $C_n\not \models \chi((v_1,\dots,v_k),(w_1,\dots,w_\ell))$ which contradicts that $C_n\models \varphi$.
	\end{proof}
	
Now we let $\tau$ be the $1$-neighbourhood type saying that the center vertex $x$ has exactly one neighbour. Let $\tau'$ be the $1$-neighbourhood type saying that the center vertex has two non-adjacent vertices. Since $C_n$ is $\tau$-neighbourhood free and $\tau'$-neighbourhood regular, while $P_{n-1}$ is neither, the statement of the lemma follows from Claim~\ref{claim:sigma2DistinguishingCycleFromPath}. 
\end{proof}

Now we state our main algorithmic results in this section. The first result shows that if $\tau$ is an $r$-type with degree smaller than the degree bound of the class of graphs, then the $\tau$-neighbourhood-freeness is testable.
\begin{theorem}\label{thm:dNeighbourhoodFreeness}
	Let $\tau$ be an $r$-type, where $r\geq 1$. If $\tau \subseteq \mathcal{C}_{d'}$ and $d'<d$, then $\tau$-neighbourhood freeness is uniformly testable on the class $\mathcal{C}_{d}$ with constant running time.
\end{theorem}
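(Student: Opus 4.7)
The plan is to give a sampling-based tester and analyze it via a structural lemma bounding the edit distance to $\tau$-freeness by the number of type-$\tau$ vertices. The tester samples $s := \lceil C/\varepsilon \rceil$ independent uniform vertices (with $C = C(r,d,d')$ a constant), determines each one's $r$-type by BFS using $O(d^r)$ queries, and rejects iff some sampled vertex has $r$-type $\tau$. Completeness is immediate, and for soundness it suffices to show that there exists $c = c(r,d,d') > 0$ such that whenever $G \in \mathcal{C}_d$ is $\varepsilon$-far from $\tau$-free, a fraction at least $c\varepsilon$ of the vertices of $G$ has $r$-type $\tau$. Writing $B \subseteq V(G)$ for the set of type-$\tau$ vertices, I would prove the contrapositive: whenever $|B|/n$ lies below a threshold $\delta_0(r,d,d')$, the graph $G$ is $O(|B|/(dn))$-close to $\tau$-free; in the complementary regime the sampling detects $B$ trivially.

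For the repair I would process $v \in B$ sequentially and for each $v$ still of type $\tau$ add a single edge $\{v, u_v\}$, where the \emph{anchor} $u_v$ is chosen to satisfy (a) $\dist_G(u_v, v) > 3r$, (b) $d' \leq \deg_G(u_v) \leq d-1$, and (c) $\dist_G(u_v, x) > 3r$ for every vertex $x$ touched by an earlier repair. Existence of such an anchor follows from the hypothesis $d' < d$, which opens the degree window in (b), combined with the degree bound $|N_{3r}^G(\cdot)| \leq d^{3r+1}$ and the assumption $|B| < \delta_0 n$. The degenerate subcase in which $G$ contains no vertex of degree in $[d', d-1]$ at all reduces $\tau$-freeness to a property determined by the degree histogram and is handled directly by a short separate argument.

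For the analysis of a single repair, I would verify that adding $\{v, u_v\}$ destroys the type at $v$ but creates no new type-$\tau$ vertices. Vertex $v$ itself now has degree $d'+1$, so its $r$-ball no longer lies in $\mathcal{C}_{d'}$ and cannot have type $\tau$. For any $w$ with $\dist_G(w, v) < r$, all $G'$-neighbours of $v$ lie in $N_r^{G'}(w)$, so $v$ has degree $d'+1$ inside $w$'s $r$-ball, again ruling out $\tau$; the symmetric statement for $w$ with $\dist_G(w, u_v) < r$ uses (b) to conclude $\deg_{G'}(u_v) > d'$. The remaining case is $w$ with $\dist_G(w, v) = r$ (and symmetrically $\dist_G(w, u_v) = r$): by (a) we have $\dist_G(w, u_v) > 2r$, so no path of length $\leq r$ from $w$ uses the new edge to reach $u_v$, whence $u_v \notin N_r^{G'}(w)$ and $\{v, u_v\}$ lies entirely outside $w$'s induced $r$-ball; consequently $\mathcal{N}_r^{G'}(w) = \mathcal{N}_r^G(w)$ and the type of $w$ is preserved. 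Condition (c) decouples successive repairs in an entirely analogous manner.

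The main obstacle is exactly the case $\dist_G(w, v) = r$: a naive degree argument fails here because $v$'s degree within such a $w$'s $r$-ball can be as small as one, so one cannot simply appeal to $v$ having exceeded the degree bound $d'$. The resolution is the distance computation above, which shows that pushing the anchor beyond radius $3r$ leaves the entire $r$-ball of $w$ literally unchanged, so its type is automatically preserved. This is the \emph{careful fixing} alluded to in the introduction, and it is where the hypothesis $d' < d$ is used essentially: it is precisely what guarantees that the edge needed to destroy type $\tau$ at $v$ can be placed incident to a vertex outside the $r$-neighbourhood of $v$, hence avoiding the chain reaction that would otherwise arise.
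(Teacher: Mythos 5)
Your repair has a gap in the case $\dist_G(w,v) = r-1$ (which falls under your ``$\dist_G(w,v) < r$'' case). You assert that ``vertex $v$ itself now has degree $d'+1$,'' but $v$'s new degree is $\deg_G(v)+1 = \deg_B(a)+1$, where $a$ is the centre of a representative $r$-ball $B$ of $\tau$. The hypothesis $\tau \subseteq \mathcal{C}_{d'}$ only forces $\deg_B(a) \leq d'$; it can happen that $\deg_B(a) < d'$, in which case $v$'s new degree stays $\leq d'$ and the degree argument (for $v$) gives nothing. A concrete failure with $r=2$, $d'=3$, $d=4$: let $B$ have centre $a$ of degree $2$, neighbours $b_1$ of degree $3$ and $b_2$ of degree $2$, and leaves $c_1,c_2$ (attached to $b_1$) and $c_3$ (attached to $b_2$). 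Take $G$ with $v$ of type $\tau$, neighbours $w_1$ (deg.~$3$) and $w_2$ (deg.~$2$), and $w_2$'s other neighbour $x_3$ of degree $2$ with a pendant $y$ (so $w_2$'s $2$-ball has only $5$ vertices and is \emph{not} of type $\tau$). After adding $\{v,u_v\}$, $w_2$'s $2$-ball gains the extra vertex $u_v$ as a pendant of $v$, and $v$'s local degree becomes $3$ --- and the resulting $6$-vertex ball is precisely isomorphic to $B$ with $w_2\mapsto a$. Thus your single repair \emph{creates} a new type-$\tau$ vertex $w_2$ at distance $r-1$ from $v$, and since you only iterate over the original set $B$, the resulting graph is not $\tau$-free; worse, the number of type-$\tau$ vertices did not decrease, so the repair makes no progress.

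The paper's proof sidesteps exactly this by never increasing the degree of the \emph{centre} of a forbidden ball. Instead, Lemma~\ref{lemma:farImpliesManyCounterexamples-forbidden$r$-ball-oneDegreeMissing} takes $\tilde d$ to be the \emph{maximum} degree occurring in $\mathcal{N}_{r-1}^B(a)$ (so every $r$-ball of type $\tau$ has a degree-$\tilde d$ vertex in its interior, and \emph{no} interior vertex has degree $\tilde d + 1$), then inserts edges only between degree-$\tilde d$ vertices and far-away degree-$\tilde d$ vertices. Both endpoints of every new edge end up with degree $\tilde d + 1$, a degree value that cannot appear in the interior of any type-$\tau$ ball; the path argument in that lemma then shows any remaining type-$\tau$ vertex $v'$ must have one of the two endpoints strictly inside its $r$-ball, giving the contradiction uniformly in all distance regimes. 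Your anchor $u_v$ satisfies this (its new degree exceeds $d'$), but $v$ does not, and it is precisely the intermediate shell $\dist_G(w,v) = r-1$ where that asymmetry bites. If you insist on adding a single edge per offending vertex, the fix is to attach it to a maximum-degree vertex $b$ in $N_{r-1}^G(v)$ rather than to $v$, pairing it with a far-away vertex of the same degree $\tilde d$ --- essentially recovering the paper's construction.
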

The second result shows if $\tau$ is a $1$-type, then $\tau$-neighbourhood-freeness is testable.
\begin{theorem}\label{thm:1NeighbourhoodFreeness}
	For every $1$-type $\tau$, $\tau$-neighbourhood freeness is uniformly testable on the class $\mathcal{C}_{d}$ with constant time.%\hfill $(\ast)$
\end{theorem}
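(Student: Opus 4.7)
The plan is to prove testability via a simple sample-and-check tester: sample $s = \Theta(1/\varepsilon)$ vertices of $G$ uniformly at random, compute the $1$-neighbourhood of each (using $O(d^2)$ neighbour queries per vertex), and accept iff no sampled vertex has $1$-type $\tau$. Completeness is immediate, so the work lies in proving a ``many witnesses'' soundness lemma: if $G$ is $\varepsilon$-far from being $\tau$-neighbourhood free, then the set $S_\tau \subseteq V(G)$ of $\tau$-typed vertices has size at least $\varepsilon n / 2$. A standard Chernoff-style bound then finishes the argument, as $s = \lceil 4/\varepsilon \rceil$ samples miss $S_\tau$ with probability at most $(1-\varepsilon/2)^s \leq e^{-2} < 1/3$.

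The first step is to dispatch an easy subcase. Fixing a representative $(B,b)$ of $\tau$, let $k := \deg_B(b)$. Since every vertex of $B$ lies in $N_1^B(b)$, every vertex of $B$ has degree at most $k$ in $B$, so $\tau \subseteq \mathcal{C}_k$. If $k < d$ then Theorem \ref{thm:dNeighbourhoodFreeness} already supplies the required tester, so the only remaining case is the tight one $k = d$, in which the centre of $\tau$ saturates the degree bound.

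In the case $k = d$ the witness lemma will be proved by exhibiting a one-shot global repair. Let $G'$ be obtained from $G$ by deleting every edge incident to at least one vertex of $S_\tau$. The cost is at most $d|S_\tau|$ edge deletions, i.e., at most $2d|S_\tau|$ tuple modifications, which fits within the $\varepsilon$-closeness budget of $\varepsilon dn$ tuples whenever $|S_\tau| \leq \varepsilon n / 2$. To verify that $G'$ is $\tau$-neighbourhood free I partition $V(G) = S_\tau \sqcup \partial \sqcup V_{\geq 2}$, where $\partial := N_1^G(S_\tau) \setminus S_\tau$ and $V_{\geq 2} := V(G) \setminus N_1^G(S_\tau)$, and rule out type $\tau$ on each part. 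A vertex of $S_\tau$ becomes isolated, hence has centre degree $0 \neq d$ and is not of type $\tau$. A vertex $w \in \partial$ loses at least one edge (to some neighbour in $S_\tau$), so $\deg_{G'}(w) \leq d - 1 < d$, again ruling out $\tau$. For $u \in V_{\geq 2}$, the set $N_1^G(u)$ is disjoint from $S_\tau$, so every deleted edge has an endpoint outside $N_1^G(u)$ and thus lies outside the induced subgraph on $N_1^G(u)$; the $1$-neighbourhood of $u$ is therefore identical in $G$ and $G'$, and $u$'s type is unchanged.

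The hard part is conceptually the ``chain reaction'' issue highlighted in the techniques section: naively deleting one edge to fix a single $\tau$-typed vertex typically changes the $1$-neighbourhood of several surrounding vertices, potentially turning some of them into new $\tau$-typed vertices, and iterating the fix could cascade uncontrollably. The clean escape is exactly the reduction to the case $k = d$: under this tight hypothesis, any vertex that loses even a single incident edge is immediately pushed below the degree required by $\tau$, so it cannot accidentally become $\tau$-typed. This is precisely what makes the one-shot global deletion safe and bypasses the more delicate clustering or iterative arguments that would otherwise be needed.
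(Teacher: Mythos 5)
Your proof is correct and takes essentially the same route as the paper: the paper proves Theorem~\ref{thm:1NeighbourhoodFreeness} by splitting on whether the centre of $\tau$ has degree $d$ or degree $<d$, invoking Lemma~\ref{lemma:farImpliesManyCounterexamples-forbidden$r$-ball-OnlyDegree$d$} in the first case (whose $r=1$ proof is exactly your ``isolate every $\tau$-typed vertex'' repair, safe because any affected neighbour drops below degree $d$) and Lemma~\ref{lemma:farImpliesManyCounterexamples-forbidden$r$-ball-oneDegreeMissing} in the second. Your packaging of the $<d$ case as a direct appeal to the already-proved Theorem~\ref{thm:dNeighbourhoodFreeness} is just a cleaner way to cite the same underlying lemma, and the rest is the standard sample-and-check tester from Lemma~\ref{lemma:testerFramework}.
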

The third result says that $\tau$-neighbourhood regularity is testable for every $1$-type $\tau$ consisting of cliques, which only overlap in the centre vertex. 
\begin{theorem}\label{thm:neighbourhoodRegularity}
	Let $\tau$ be a $1$-type such that  vertex $a$ having $1$-type $\tau$ in $B$  implies that $B\setminus \{a\}$ is a union of disjoint cliques for every $1$-ball $B$ with centre $a$. Then $\tau$-neighbourhood regularity is uniformly testable on $\mathcal{C}_{d}$ in 
	constant time. 
\end{theorem}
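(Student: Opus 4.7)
The natural neighbourhood-sampling tester is the obvious one: sample $s=\Theta(1/\varepsilon)$ vertices of the input graph $G$ uniformly at random, determine the $1$-ball around each sampled vertex using at most $d+\binom{d}{2}$ queries, and accept iff every sampled $1$-ball belongs to $\tau$. The total query complexity is $O(d^2/\varepsilon)$, which is constant (since $d$ is a constant of the model and $\varepsilon$ is given), and completeness holds trivially because every vertex of a $\tau$-neighbourhood regular graph has $1$-type $\tau$ by definition. The substantive task is to prove the matching soundness, namely that if $G$ is $\varepsilon$-far from $\tau$-neighbourhood regularity, then an $\Omega(\varepsilon)$-fraction of its vertices have $1$-type different from $\tau$.

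I would reduce this to the contrapositive distance bound: there is a constant $c=c(d,\tau)$ such that whenever fewer than $\delta n$ vertices of $G$ have $1$-type different from $\tau$, the graph $G$ can be turned into a $\tau$-neighbourhood regular graph using at most $c\delta dn$ edge modifications. Call the set of bad vertices $X$, and let $G_1$ be obtained from $G$ by deleting all edges incident to $X$; this costs at most $d|X|$ deletions. The heart of the argument exploits the disjoint-clique hypothesis on $\tau$: if $u,v$ are adjacent good vertices in $G$, the maximal clique of $G$ through the edge $uv$ appears simultaneously in the neighbourhood partitions of $u$ and of $v$, because the neighbourhood cliques at $v$ have no edges between them and so any other neighbour of $u$ adjacent to $v$ must lie in the same clique as $u$. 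Consequently, the maximal clique through every edge incident to a clean vertex $v\in V\setminus N_G[X]$ lies entirely in $V\setminus X$ and is therefore untouched in $G_1$, so $v$ retains $1$-type $\tau$. All the damage is confined to the subgraph induced on $N_G[X]$, whose size is at most $(d+1)|X|$.

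The final step is to rewire inside $N_G[X]$ so that every vertex there attains $1$-type $\tau$, without touching any edge between $N_G[X]$ and $V\setminus N_G[X]$. By the argument above, each $u\in N_G(X)\setminus X$ still has all its clean-touching $\tau$-cliques intact in $G_1$, and only needs to have the remaining cliques -- those whose members all lie in $N_G[X]$ -- re-created; each $v\in X$ is now isolated and needs its whole $\tau$-neighbourhood fresh. Since each vertex has at most a constant number of unsatisfied clique slots and each slot is filled by a clique of constant size, a clique-packing that uses the isolated $X$-vertices together with, if necessary, a constant number of extra clean vertices discarded into the bad region can be performed with $O(d^2)$ edge additions per vertex of $N_G[X]$, for a total of $O(d^3\delta n)$ changes. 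The main obstacle is the arithmetic bookkeeping inside this packing: one must ensure that the multiset of still-needed clique sizes across all vertices of $N_G[X]$ can be realised simultaneously using only vertices from $N_G[X]$ plus an $O(1)$ residue. This is precisely where the disjoint-cliques assumption on $\tau$ is essential, since it lets the repair proceed clique by clique and prevents the chain reaction that would otherwise propagate edge modifications across the whole graph, exactly the subtle difficulty flagged in the preamble and in the analogous degree-regularity analysis of Goldreich.
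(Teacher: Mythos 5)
Your proposed tester is not sound. You propose a pure neighbourhood-sampling tester (sample $\Theta(1/\varepsilon)$ vertices, inspect their $1$-balls, accept iff all are of type $\tau$), and reduce correctness to the claim that, for some constant $c=c(d,\tau)$, ``fewer than $\delta n$ vertices of $1$-type different from $\tau$ implies $G$ can be made $\tau$-neighbourhood regular with $c\delta dn$ modifications.'' That claim is false, and with it the soundness of your tester, because for certain values of $n$ no $\tau$-neighbourhood regular graph on $n$ vertices exists at all. The paper records a concrete instance right after Lemma~\ref{lemma:testerFramework}: take $\tau$ to be the $1$-type of $K_4$ (so $B\setminus\{a\}\cong K_3$, a single clique, and the hypothesis of the theorem is met), and let $G_m$ be $m$ disjoint copies of $K_4$ together with one isolated vertex, so $n=4m+1$. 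Only one vertex of $G_m$ has the wrong $1$-type, so your tester accepts $G_m$ with probability $1-o(1)$ as $m\to\infty$. But every $\tau$-neighbourhood regular graph is $3$-regular and hence has an even number of vertices, while $4m+1$ is odd; so $\mathcal{P}_n=\emptyset$ and $G_m$ is $\epsilon$-far from $\mathcal{P}$ for every $\epsilon$. More generally, $G\in\mathcal{P}$ forces $i$ to divide $\maxcl^B(a,i)\cdot n$ for every $i\le d$; the paper's tester therefore begins with the deterministic step ``reject if $n$ violates one of these congruences'' (the set $M$ of Claim~\ref{claim:DefiningMFor1NieghbourhoodRegularity}), and your tester has no analogue of it. No purely sampling-based decision rule can detect this arithmetic obstruction, so some preliminary test on $n$ is unavoidable.

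There is a second, independent gap: even on $n$ that pass the divisibility check, your repair strategy is confined to $N_G[X]$ and is never shown to terminate without propagation. After isolating $X$, the vertices in $N_G(X)\setminus X$ have lost some of their cliques while retaining those that lie entirely outside $X$; you then propose a clique-packing inside $N_G[X]$ that may ``discard a constant number of extra clean vertices into the bad region.'' But that is precisely the chain reaction you must rule out: moving a clean vertex $w$ into the repair zone severs $w$ from its former clique-mates in $V\setminus N_G[X]$, who then lose type $\tau$ and must themselves be repaired, and so on. The paper does not localise the repair. In the intermediate graph $G^{(3)}$ of Claim~\ref{claim:farImpliesManyCounterexamples-negihbourhoodRegularity}, a vertex $v$ with a deficit of $i$-cliques and a vertex $v'$ with a surplus are balanced by a swap routed through an auxiliary vertex $w$ of type $\tau$ at distance at least $4$ from both; the swap is arranged so that $w$ and all of its clique-mates end with exactly the same $\maxcl$-statistics as before, so no new bad vertices arise. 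The final step $G^{(4)}$ then uses $n\notin M$ to show the residual per-size imbalance is a multiple of $i$ and can be cancelled. This global balancing mechanism, not a local packing, is what prevents propagation and yields the $O(\delta dn)$ distance bound; your sketch names the difficulty in its final sentence but does not resolve it.
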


By previous discussions, the above theorems imply that there are formulas in $\Pi_2\setminus\Sigma_2$ which are testable.

\subsection{Proofs of Theorem \ref{thm:dNeighbourhoodFreeness}, \ref{thm:1NeighbourhoodFreeness} and \ref{thm:neighbourhoodRegularity}}
Consider the following algorithm $\sampler_{r,s}$: given access to a graph $G\in \mathcal{C}_{d}$,
it samples a set $S$ of $s$ vertices of $G$ uniformly and independently, and then explores their $r$-balls; it then returns
the \emph{distribution vector} $\bar v$ of length $t$ of the $r$-types of this sample, \ie $\bar v_i=|\{v\in S\mid \mathcal{N}_r^G(v)\in \tau_i\}|/s$.

\begin{lemma}[Lemma 5.1 in~\cite{NewmanSohler2013}]\label{lem:estimate-frequencies}
	Let $\lambda\in (0,1]$, $r\in \mathbb{N}$ and let $G\in \mathcal{C}_{d}$ be a graph with $n$ vertices.
	Let $s\geq ({t^2}/{\lambda^2})\ln(t+40)$.
	Then the vector $\bar v$ returned
	by $\sampler_{r,s}(G)$ satisfies
	$\sum_{i=1}^{t}|\rho_{G,r}(\{\tau_i\})-\bar v_i|  \leq \lambda$  with probability at least $19/20$ .
	
If  $\rho_{G,r}({\tau_i})$ is $0$, then $\Pr[v_i=0]=1$.
\end{lemma}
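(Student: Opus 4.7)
The plan is to view $\bar v_i$ as the empirical mean of $s$ i.i.d.\ Bernoulli trials and apply a concentration inequality together with a union bound over the $t$ types. Specifically, for each fixed $i \in \{1,\dots,t\}$, let $X_1^{(i)},\dots,X_s^{(i)}$ denote the indicator random variables that the $j$-th sampled vertex has $r$-type $\tau_i$. Since the vertices are sampled uniformly and independently from $V(G)$, the $X_j^{(i)}$ are i.i.d.\ Bernoulli with mean $\rho_{G,r}(\{\tau_i\})$, and $\bar v_i = \tfrac{1}{s}\sum_{j=1}^s X_j^{(i)}$.

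First I would apply Hoeffding's inequality to each coordinate: for any $\epsilon > 0$,
\[
\Pr\bigl[\,|\bar v_i - \rho_{G,r}(\{\tau_i\})| > \epsilon\,\bigr] \;\leq\; 2\exp(-2s\epsilon^2).
\]
Next I would choose $\epsilon := \lambda/t$ and take a union bound over $i \in \{1,\dots,t\}$, so that with probability at least $1 - 2t\exp(-2s\lambda^2/t^2)$ we have $|\bar v_i - \rho_{G,r}(\{\tau_i\})| \leq \lambda/t$ simultaneously for every $i$; summing then yields $\sum_{i=1}^t |\bar v_i - \rho_{G,r}(\{\tau_i\})| \leq \lambda$. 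It remains to verify that the hypothesis $s \geq (t^2/\lambda^2)\ln(t+40)$ forces $2t\exp(-2s\lambda^2/t^2) \leq 1/20$, i.e.\ $s \geq (t^2/(2\lambda^2))\ln(40t)$; this is a short elementary calculation using $\ln(40t) \leq 2\ln(t+40)$, which holds for all $t \geq 1$ since $40t \leq (t+40)^2$.

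For the final assertion, if $\rho_{G,r}(\{\tau_i\}) = 0$, then no vertex of $G$ has $r$-type $\tau_i$, so $X_j^{(i)} = 0$ deterministically for every $j$, hence $\bar v_i = 0$ with probability $1$.

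I do not anticipate any substantive obstacle; the only mild care point is matching the constants in the stated bound $(t^2/\lambda^2)\ln(t+40)$ to what Hoeffding with a union bound actually yields, which is handled by the elementary inequality above.
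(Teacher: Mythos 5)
Your proof is correct. The paper does not actually include a proof of this lemma---it is imported by citation from Newman and Sohler---but the Hoeffding-plus-union-bound argument you give is the standard derivation, and your constant check ($40t \leq (t+40)^2$, hence $\ln(40t) \leq 2\ln(t+40)$) correctly reconciles the hypothesis $s \geq (t^2/\lambda^2)\ln(t+40)$ with the $s \geq (t^2/(2\lambda^2))\ln(40t)$ that Hoeffding with per-coordinate tolerance $\lambda/t$ requires.
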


The following Lemma provides a framework that will be used in Theorems~\ref{thm:dNeighbourhoodFreeness},\ref{thm:1NeighbourhoodFreeness} and \ref{thm:neighbourhoodRegularity}.
\begin{lemma}\label{lemma:testerFramework}
	Let $\F$  
	be a finite set of $r$-types of bounded maximum degree $d$ and let
	$\mathcal{P}\subseteq \mathcal{C}_{d}$ be the set of all graphs being $\F$-free.  
	Let
	$M\subseteq \mathbb{N}$  be a decidable set such that
	$G\in \mathcal{P}$ implies that $|V(G)|\notin M$. Let $f_M:\mathbb{N}\rightarrow \mathbb{N}$
	be a function such that $M$ can be decided in time $f_M$. Assume for every
	$\epsilon\in (0,1]$ there exist $\lambda:=\lambda(\epsilon) \in (0,1]$ and
	$n_0:=n_0(r,\epsilon)\in \mathbb{N}$  such that every graph $G\in \mathcal{C}_{d}$ on $n\geq
	n_0$, $n\notin M$ vertices, which is $\epsilon$-far from $\mathcal{P}$, contains
	more than $\lambda n$ elements $v$ with
	$\mathcal{N}_r^G(v)\in \tau\in \F$. Then $\mathcal{P}$ is uniformly
	testable on $\mathcal{C}_{d}$ in time $\mathcal{O}(f_M)$.  
\end{lemma}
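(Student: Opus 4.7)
The plan is to design a single tester that branches on the input size $n=|V(G)|$ and on whether $n\in M$. First the algorithm computes, in time $f_M(n)$ but without querying $G$, whether $n\in M$; if so it immediately rejects, which is correct because by hypothesis every $G\in\mathcal{P}$ has $|V(G)|\notin M$. Next, if $n<n_0(\epsilon)$, it queries the entire $r$-neighbourhood of every vertex (a constant number of queries since $n_0$ depends only on $\epsilon$ and $r$) and decides $\F$-freeness exactly. In the remaining case $n\geq n_0$ and $n\notin M$, it runs $\sampler_{r,s}(G)$ with $s=\lceil (2t/\lambda)^2\ln(t+40)\rceil$ where $\lambda=\lambda(\epsilon)$, and rejects iff $\sum_{\tau\in\F}\bar v_\tau\geq \lambda/2$, where $\bar v$ is the returned distribution vector.

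For correctness on yes-instances, suppose $G\in\mathcal{P}$. Then $n\notin M$, so the first branch does not fire; if $n<n_0$ the exhaustive check correctly accepts. Otherwise $G$ is $\F$-free, so $\rho_{G,r}(\tau)=0$ for every $\tau\in\F$, and by the deterministic clause of Lemma~\ref{lem:estimate-frequencies} we have $\bar v_\tau=0$ with probability $1$ for every $\tau\in\F$, so the sampling branch accepts. For no-instances, suppose $G$ is $\epsilon$-far from $\mathcal{P}$. If $n\in M$ we reject in step one. If $n<n_0$ and $n\notin M$ the exhaustive branch rejects. In the main case $n\geq n_0$ and $n\notin M$, the hypothesis yields $\sum_{\tau\in\F}\rho_{G,r}(\tau)>\lambda$, and applying Lemma~\ref{lem:estimate-frequencies} with deviation parameter $\lambda/2$ gives $\sum_{i=1}^{t}|\rho_{G,r}(\{\tau_i\})-\bar v_i|\leq \lambda/2$ with probability at least $19/20$. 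In particular $\sum_{\tau\in\F}\bar v_\tau>\lambda/2$ and the tester rejects.

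The query complexity is bounded by $s\cdot d^{r+1}$, a constant depending only on $\epsilon$, $r$, and $d$; the overall running time is dominated by the membership test for $M$, giving the claimed $\mathcal{O}(f_M)$ bound. There is no substantial obstacle here: the construction is essentially a direct application of the sampling framework of Lemma~\ref{lem:estimate-frequencies}, with the explicit $M$-check handling the degenerate situation where $\mathcal{P}_n$ may be empty (so that a tester relying purely on neighbourhood statistics could not decide acceptance vs. rejection correctly). The only point requiring care is that the threshold $\lambda/2$ must be strictly smaller than the guaranteed lower bound $\lambda$ on the true frequency in no-instances, while being compatible with exact zero frequency in yes-instances; both are secured by the two clauses of Lemma~\ref{lem:estimate-frequencies}.
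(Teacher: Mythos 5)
Your proposal is correct and follows essentially the same approach as the paper's proof: reject when $n\in M$, decide small instances ($n<n_0$) exhaustively, and otherwise run $\sampler_{r,s}$ and reject based on the observed frequency of types in $\F$. The only cosmetic difference is that the paper rejects as soon as $\sum_{\tau_i\in\F}\bar v_i>0$ (with $s=(t^2/\lambda^2)\ln(t+40)$), relying on the deterministic zero-frequency clause of Lemma~\ref{lem:estimate-frequencies} for completeness, whereas you use the slack threshold $\lambda/2$ with a correspondingly adjusted sample size; both are valid.
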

\begin{proof}
	Consider the following probabilistic algorithm  $T$, which is given direct access to a graph $G\in \mathcal{C}_{d}$ and gets the number of vertices $n$ as input.  Let $s=({t^2}/{\lambda^2})\ln(t+40)$.
	\medskip
	\begin{enumerate}
		\item Reject if $n\in M$.
		\item If $n < n_0$: use a precomputed table to decide exactly if $G\in \mathcal{P}$.
		\item Otherwise run $\sampler_{r,s}(G)$ to get $\bar v$, which satisfies that 	
		\[\sum_{i=1}^{t}|\rho_{G,r}(\{\tau_i\})-\bar v_i| \leq \lambda\]
		 with probability at least ${19}/{20}$.
		\item Reject $G$ if $\sum_{\tau_i\in \F}\bar v_{i}>0$. Accept otherwise.
	\end{enumerate}
	
	The query complexity of $T$ is clearly constant, since $s$ is constant and
	the number of vertices in any $r$-neighbourhood is bounded by $d^{r+1}+1$
	for graphs in $\mathcal{C}_{d}$. The running time of the first step is $f_M(n)$ and
	for the other steps it is constant.
	
	To prove that $T$ is an $\epsilon$-tester, first assume that $G\in \mathcal{P}$. Then $n\notin M$ and $\mathcal{N}_r^G\in \tau \notin \F$ for all vertices $v$ . Hence $\sum_{\tau_i\in \F}\bar v_{i}=0$ and $T$ accepts $G$.
	Now consider that $G$ is $\epsilon$-far from $\mathcal{P}$. If $n\in M$ then
	$G$ is rejected in the first step. Hence let $n\notin M$, and assume
	$\sum_{i=1}^{t}|\rho_{G,r}(\{\tau_i\})-\bar v_i|  \leq \lambda$, which occurs with
	probability at least ${19}/{20}\geq {2}/{3}$. Then
	\begin{align*}
	\sum_{\tau_i\in \F}\bar v_{i}&=\sum_{\tau_i\in \F}\rho_{G,r}(\{\tau_i\})-\sum_{\tau_i\in \F}\big(\rho_{G,r}(\{\tau_i\})-\bar v_{i}\big)\\&>
	 \lambda-\Big|\sum_{\tau_i\in \F}\big(\rho_{G,r}(\{\tau_i\})-\bar v_{i}\big)\Big|\geq\lambda-\sum_{\tau_i\in \F}\big|\rho_{G,r}(\{\tau_i\})-\bar v_{i} \big|\geq 0,
	\end{align*}
	where 
	the first inequality holds by the  assumption that in graphs that are $\epsilon$-far from $\mathcal{P}$ there are more than $\lambda n$ vertices whose type is in $F$. 
	Hence $T$ rejects $G$.
\end{proof}
To illustrate the use of the set $M$ in Lemma~\ref{lemma:testerFramework}, let
$\mathcal{P}$ be the property of being $K_4$-neighbourhood regular. Let $G_m$ be the
graph consisting of $m$ disjoint copies of $K_4$ and one isolated vertex. First
note that $G_m$ contains $4m+1$ vertices. Being $K_4$-regular implies that
every vertex has degree $3$. But because every graph contains an even number of
vertices of odd degree, $G_m$ cannot be made $K_4$-neighbourhood regular by
edge modifications. Therefore $G_m$ is $\epsilon$-far from $\mathcal{P}$. But for
$m\rightarrow \infty$ the probability of sampling the isolated vertex in $G_m$
tends to $0$ meaning that with high probability the tester with $M=\emptyset$
will accept $G_m$. We will show in Theorem~\ref{thm:neighbourhoodRegularity}
that $\mathcal{P}$ is testable if we set $M=\mathbb{N} \setminus \{4m\mid m\in \mathbb{N}\}$.
\begin{lemma}\label{lemma:farImpliesManyCounterexamples-forbidden$r$-ball-oneDegreeMissing}
	For $r\geq 1$ let  $\tau$ be an $r$-type. Let $B$ be an $r$-ball with constant $a$ of type $\tau$. Let $\tilde{d}< d$, $d\not= 0$ be integers and
	assume that $\mathcal{N}_{r-1}^B(a)$ contains a vertex $b$ with
	$\deg_B(b)=\tilde{d}$ and that $\deg_B(v)\not=\tilde{d}+1$ for all
	vertices $v$ in $\mathcal{N}_{r-1}^B(a)$. Let $\epsilon \in (0,1]$ be fixed,
	$n_0={2d^2}/{\epsilon}$ and $\lambda={\epsilon
		d}/(14(1+d^{2r+1}))$. Then every graph $G\in \mathcal{C}_{d}$ on
	$n\geq n_0$ vertices which is $\epsilon$-far from being
	$\tau$-neighbourhood free contains more than $\lambda n$ vertices
	of $r$-type $\tau$.
\end{lemma}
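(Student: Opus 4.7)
The argument is by contraposition: assume $G\in\mathcal{C}_{d}$ has $n\ge n_0$ vertices and that the set $T$ of vertices of $r$-type $\tau$ satisfies $|T|\le\lambda n$; I will build a $\tau$-neighbourhood-free graph $G'\in\mathcal{C}_{d}$ at distance $\le\epsilon$ from $G$. For each $v\in T$, using the assumption that $\mathcal{N}_r^G(v)\in\tau$, I fix an isomorphism $\iota_v$ from $\mathcal{N}_r^G(v)$ to a representative $B$ sending $v$ to $a$, and set $b_v:=\iota_v^{-1}(b)$, so $b_v\in N_{r-1}^G(v)$ and $\deg_G(b_v)=\tilde d$. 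Since $\tilde d<d$, I may add a new edge incident to $b_v$, raising its degree to $\tilde d+1$. Because distances can only shrink when edges are added, $b_v$ remains in $N_{r-1}^{G'}(v)$, and the hypothesis that no vertex of $\mathcal{N}_{r-1}^B(a)$ has degree $\tilde d+1$ in $B$ forces $\mathcal{N}_r^{G'}(v)\not\in\tau$. Thus a single edge addition suffices to destroy $v$'s bad type.

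The partner of the new edge is a vertex $u_v$ chosen carefully: $u_v$ should have degree strictly less than $d$ in the current graph (if not, I first delete one arbitrary edge incident to $u_v$), and $u_v$ should lie outside the $2r$-neighbourhood, in $G$, of every other vertex of $T$ and of every previously chosen partner $u_{v'}$. A simple counting argument shows such a $u_v$ exists: the forbidden set has size at most $2|T|(1+d^{2r+1})\le \tfrac{\epsilon d n}{7}$ by the choice of $\lambda$, while $n\ge n_0=2d^2/\epsilon$ leaves enough candidates. Overall, I perform at most $2|T|\le 2\lambda n\le\epsilon d n$ edge modifications, so $\dist(G,G')\le\epsilon$ as required.

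The main obstacle is the chain reaction: a single edge addition could in principle create a new vertex of $r$-type $\tau$ somewhere else in the graph. I therefore have to verify that no $w\notin T$ becomes of type $\tau$ in $G'$. If $w$'s $r$-ball is unchanged between $G$ and $G'$, there is nothing to check. Otherwise $w$ lies within distance $r$, in $G'$, of some modified edge $\{b_v,u_v\}$; the $2r$-separation of the modifications guarantees that only one such edge can be relevant. If $b_v\in N_{r-1}^{G'}(w)$, the forbidden-degree hypothesis on $\tau$ again precludes $w$ from having type $\tau$ in $G'$. The remaining boundary cases, where $b_v$ or $u_v$ sits on the sphere of radius $r$ around $w$, are handled by a direct local case analysis that uses the fact that $u_v$ has (by construction) low degree and that no other modification touches its vicinity; it is these boundary cases that justify the exponent $2r+1$ and the constant $14$ in the definition of $\lambda$. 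Together with the counting above, this completes the construction of $G'$ and proves the contrapositive.
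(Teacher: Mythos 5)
Your high-level plan (contraposition, add an edge incident to a degree-$\tilde d$ vertex in the $(r-1)$-ball of each type-$\tau$ vertex so its degree becomes $\tilde d+1$, which the hypothesis on $\tau$ forbids) coincides with one branch of the paper's proof, but two ingredients that make the argument close are missing.

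First, the partner $u_v$ must be chosen with degree \emph{exactly} $\tilde d$, not merely $<d$. The paper's ``no new type-$\tau$ vertex'' argument works because after the insertion \emph{both} endpoints of the new edge have degree $\tilde d+1$: for any $w$ whose $r$-ball changed, a $2r$-separation argument forces one of the two endpoints into $N_{r-1}^{G'}(w)$, and the degree-$(\tilde d+1)$ hypothesis then excludes type $\tau$. With your choice the degree of $u_v$ in $G'$ is uncontrolled, so when $u_v\in N_{r-1}^{G'}(w)$ and $b_v\notin N_{r-1}^{G'}(w)$ nothing precludes $w$ from becoming of type $\tau$. The ``direct local case analysis'' you defer to does not go through without this degree constraint; the constants $2r+1$ and $14$ are in fact dictated by the paper's other case (below), not by such a boundary analysis.

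Second, and more fundamentally, a suitable partner may simply not exist, and the proof must split into two cases. Your counting gives a forbidden set of size at most $\epsilon d n/7$, but this exceeds $n$ whenever $\epsilon d\ge 7$, and the hypothesis $n\ge n_0 = 2d^2/\epsilon$ is of no help because the ratio is independent of $n$. Even when the forbidden set is small, there may be no far-away vertices of degree $\tilde d$ at all. The paper first asks whether there are $\ge\lambda n$ vertices of degree $\tilde d$, pairwise $>2r$-separated and far from the covering set. If yes, these serve as partners. If no, it concludes there are at most $2\lambda n(1+d^{2r+1})$ vertices of degree $\tilde d$ overall, and then \emph{eliminates all vertices of degree $\tilde d$} by local edge swaps (and, for $\tilde d\in\{0,1\}$, by pairing them up), which is an entirely different construction. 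Your proposal omits this second case, which is a genuine gap; as stated, the argument breaks down precisely when the far-apart degree-$\tilde d$ vertices are scarce.
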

\begin{proof}We proceed by contraposition. Assume $G\in
	\mathcal{C}_{d}$ is a graph on $n\geq n_0$ vertices containing no more than $\lambda n
	$ vertices $v$ of $r$-type $\tau$. 
	
	\textbf{Case $\tilde{d}=0$, $d>1$.} 
	In this case the forbidden type $\tau$ must be an isolated vertex. Since every vertex of degree $0$ must be of the forbidden type,  we add one edge to every pair of vertices of degree $0$.  If there is only one vertex $v$ of degree $0$ left, we  add an edge from $v$ to any other vertex of degree $<d$. If there is no such vertex then there must be vertex $u$ contained in two edges and we replace one edge $\{u,w\}$ by $\{v,w\}$. That way we obtain $G'$ which is $2\lambda n\leq \epsilon dn$ close to $G$.
	
	\textbf{Case $\tilde{d}\geq 1$.} Let us pick a set
	$\{v_1,\dots,v_k\}$ of $k\leq \lambda n$ vertices of degree $\tilde{d}$ such
	that for every vertex $v$ of $r$-type $\tau$ there is an
	index $1\leq i\leq k$ with $v_i\in N_{r-1}^G(v)$. We will
	distinguish the following two cases.
	
	First assume that there are less than $\lambda n$ vertices of degree
	$\tilde{d}$, of pairwise distance greater than $2r$ and of distance greater
	than $2r$ from $\{v_1,\dots,v_k\}$. Then there are less than $2\lambda n
	(1+d^{2r+1})$ vertices of degree $\tilde{d}$ in total. We distinguish two cases. First consider the case that $\tilde{d}=1$. In this case
	we  add edges between pairs of degree $1$ vertices. If there are two vertices of degree $1$ left   who are adjacent, we delete the edge between them. If there is only one vertex $v$ of degree $1$ left, then there is another vertex $u$ of odd degree. By removing an edge $\{u,w\}$ and adding $\{v,w\}$ we get that $\deg_G(v),\deg{w}\geq 1$. We obtain $G'$  which is $2\lambda n(1+d^{2r+1})\leq \epsilon dn$ close to $G$ and  contains no vertex of degree $\tilde{d}$ and therefore has to be $\tau$-free. 
	
	Now consider the case that $\tilde{d}\geq 2$. In this case we let $G'$ be a
	graph obtained from $G$ by the following modifications. For every vertex
	$v$ of degree $\tilde{d}$ we pick edges
	$\{v,v_1\},\{v,v_2\},\{w,w'\},\{u,u'\}$ such that $v,w,u$ have pairwise
	distance at least $3$. We delete the  edges $\{v,v_1\},\{v,v_2\}$, $\{w,w'\},\{u,u'\}$ and insert the edges
	$\{v_1,w\},\{v_2,u\},\{w',u'\}$, reducing the degree of $v$ while
	maintaining the degrees of all other vertices. The resulting graph has no
	vertex of degree $\tilde{d}$. Note that if such edges do not exist at any
	point during the iteration the graph contains no more than $2d^3\leq
	\epsilon dn$ edges, and we delete them all resulting in a graph with no
	vertex of degree $\tilde{d}$.
	In total we did no more than $7\cdot
	2\lambda n (1+d^{2r+1})\leq \epsilon dn$ edge modifications which implies that
	$G'$ is $\epsilon$-close to $G$. In addition, $G'$ is
	$\tau$-neighbourhood free, because a neighbourhood
	of type $\tau$ would imply having a vertex of degree
	$\tilde{d}$.\\
	
	Now assume that there are at least $\lambda n$ vertices of degree
	$\tilde{d}$, of pairwise distance greater than $2r$ and of distance greater
	than $2r$ from $\{v_1,\dots,v_k\}$. Let $\{v_1',\dots,v_k'\}$ be a set of
	vertices of degree $\tilde{d}$ such that $\dist_G(v_i,v_j')> 2r$ for all
	$1\leq i,j\leq k$ and $\dist(v_i',v_j')> 2r$ for all $1\leq i<j\leq k$. Let
	$G'$ be the graph obtained from $G$ by inserting the edges $\{v_i,v_i'\}$.
	First note that this takes no more than $\lambda n \leq \epsilon d n$ edge
	modifications which implies that $G$ is $\epsilon$-close to $G'$.  Further
	assume that $v'$ is a vertex in $G'$ of $r$-type $\tau$.
	By choice of the set $\{v_1,\dots,v_k\}$  we altered the isomorphism type of each vertex of type $\tau$ in $G$. Therefore
	$\mathcal{N}_r^{G'}(v')\not=\mathcal{N}_{r}^G(v')$.
	Therefore $\mathcal{N}_r^{G'}(v')$  contains an
	inserted edge $(v_i,v_i')$. We will first prove that either
	$\dist_{G'}(v',v_i)<r$ or $\dist_{G'}(v',v_i')<r$. Assume that this is not the case. Then there is a path 
	$P=$\linebreak $(v_i=w_{-r},w_{-r+1},\dots,w_{-1},w_0=v',w_1,\dots,w_{r-1},w_r=v_i')$
	such that $w_j\not=v_i$ and $w_j\not=v_i'$ for all $-r<j<r$. Let $-r\leq
	j<r$ be the largest index such that $w_j\in
	\{v_1,\dots,v_k,v_1',\dots,v_k'\}$. Then the path $(w_j,\dots,w_r=v_i')$ is
	a path in $G$ of length $\leq 2r$, which contradicts the choice of  $v_1,\dots,v_k,v_1',\dots,v_k'$. Since
	$\deg_{G'}(v_i)=\deg_{G'}(v_i')=\tilde{d}+1$, this implies that
	$\mathcal{N}_{r-1}^G(v')$ contains a vertex of degree $\tilde{d}+1$,
	which contradicts that $v'$ has $r$-type
	$\tau$. Hence $G'$ is
	$\tau$-neighbourhood free.
\end{proof}
\begin{lemma}\label{lemma:farImpliesManyCounterexamples-forbidden$r$-ball-OnlyDegree$d$}
	For $r\geq 1$ let $\tau$ be an $r$-type. Let $B$ be an $r$-ball with constant $a$ of type $\tau$. Assume $\deg_B(v)=d$ for all vertices $v\in N^{B}_{r-1}(a)$. Let $\epsilon\in (0,1]$ be fixed and let $\lambda=\epsilon$. Then every graph $G\in \mathcal{C}_{d}$ on $n\geq 1$ vertices which is $\epsilon$-far from being $\tau$-neighbourhood free contains more than $\lambda n$ vertices of $r$-type $\tau$.
\end{lemma}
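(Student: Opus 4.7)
I would proceed by contraposition: assume that $G\in\mathcal{C}_d$ on $n\geq 1$ vertices contains at most $\lambda n=\epsilon n$ vertices of $r$-type $\tau$, and exhibit a modification of $G$ to a $\tau$-neighbourhood free graph that uses at most $\epsilon d n$ tuple modifications. Let $T$ denote this set of type-$\tau$ vertices. Since $a\in N_{r-1}^{B}(a)$, the hypothesis on $\tau$ forces every $v\in T$ to satisfy $\deg_G(v)=d$; more strongly, every vertex within distance $r-1$ of $v$ in $G$ has degree exactly $d$ and has all of its neighbours contained in $v$'s $r$-ball.

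\textbf{Modification.} For each $v\in T$, select a suitable neighbour $v'$ of $v$ (guided by the structure of $B$, e.g.\ picking one whose image in $B$ has smallest in-$B$ degree) and delete the edge $\{v,v'\}$. Let $G'$ denote the resulting graph. Since $\deg_{G'}(v)=d-1<d$, no $v\in T$ retains type $\tau$ in $G'$.

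\textbf{Cost.} At most $|T|\leq\epsilon n$ edges are removed, amounting to at most $2|T|\leq\epsilon d n$ tuple modifications whenever $d\geq 2$. In the degenerate case $d=1$ one observes that type-$\tau$ vertices come in isolated-edge pairs, so only $|T|/2$ edges need be deleted, which remains within the $\epsilon d n=\epsilon n$ tuple budget.

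\textbf{Key claim and main obstacle.} It remains to show $G'$ is $\tau$-neighbourhood free. Suppose for contradiction that $u$ has type $\tau$ in $G'$. From $\deg_{G'}(u)=d$ and the fact that only deletions were performed, no edge incident to $u$ was removed and $\deg_G(u)=d$. By induction on distance $k\leq r-1$, using that $u$'s type in $G'$ forces every vertex in $N_k^{G'}(u)$ to have degree $d$ in $G'$, one shows that $N_{r-1}^G(u)=N_{r-1}^{G'}(u)$ with identical incident edges in both graphs. In particular $u\notin T$ (otherwise its degree would have dropped), and any discrepancy between $u$'s $r$-ball in $G$ and in $G'$ must consist of deleted edges both of whose endpoints lie at distance exactly $r$ from $u$. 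Since one endpoint of every deleted edge lies in $T$, such a boundary vertex $v\in T$ of $u$'s $r$-ball has its own $r$-ball equal to $B$, and the constraint $\deg_B(\cdot)=d$ on $N_{r-1}^B(a)$, combined with the rule used to choose $v'$, imposes enough rigidity to force $u$'s $r$-ball in $G$ to already match $B$, contradicting $u\notin T$. The main obstacle is making this rigidity argument precise and specifying the edge-selection rule explicitly, since small examples show that an \emph{arbitrary} incident edge of $v$ can create new type-$\tau$ vertices; the full-degree hypothesis on the interior of $B$ is exactly the leverage needed to guarantee a safe choice of $v'$.
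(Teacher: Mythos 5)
Your modification step diverges from the paper's, and the divergence is exactly where your argument breaks down. You delete a \emph{single} carefully chosen edge incident to each type-$\tau$ vertex; the paper instead \emph{isolates} every type-$\tau$ vertex (deletes all its incident edges). Your cost accounting is fine either way ($|T|\le\epsilon n$, and also $d|T|\le\epsilon dn$), so the issue is purely correctness, and you acknowledge it yourself: you say the ``rigidity argument'' showing no new type-$\tau$ vertices are created ``remains to be made precise,'' and that an arbitrary choice of edge can fail. That acknowledgment is right, and it is a real gap, not a routine detail. Here is why the single-edge scheme is hard to salvage: if $u$ gains type $\tau$ in $G'$, the full-degree hypothesis forces the $(r{-}1)$-ball of $u$ in $G'$ to be untouched, so the deleted edge must lie entirely among vertices at distance exactly $r$ from $u$. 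At that boundary the type $\tau$ does \emph{not} force degree $d$, so there is no degree obstruction to a boundary edge disappearing. A type-$\tau$ vertex $v$ can lie on the boundary of $u$'s $r$-ball, and for each of $v$'s $d$ incident edges there can be a different such $u$ whose $r$-ball becomes isomorphic to $B$ exactly when that edge is removed; nothing in the hypotheses rules out all $d$ choices being ``bad,'' so a clever edge-selection rule is not obviously available, and you do not supply one.

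The paper's choice of modification sidesteps all of this. After isolating every vertex of $T$, suppose $u$ has type $\tau$ in $G'$. By the full-degree hypothesis every $w\in N_{r-1}^{G'}(u)$ has $\deg_{G'}(w)=d$; since only deletions were performed this forces $\deg_G(w)=d$ with no incident edge deleted, hence $w\notin T$. This propagates outward: every vertex $w'$ at distance exactly $r$ from $u$ is adjacent to some $w$ at distance $r-1$ whose edges were untouched, so $w'$ cannot have been isolated either, hence $w'\notin T$ and none of $w'$'s edges were deleted. Therefore \emph{no} edge inside $\mathcal{N}_r^G(u)$ was deleted, so $\mathcal{N}_r^{G'}(u)=\mathcal{N}_r^G(u)$, so $u$ already had type $\tau$ in $G$ and should have been isolated, contradiction. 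The point is that deleting \emph{all} of $v$'s edges makes ``some edge at $v$ was deleted'' equivalent to ``$v$ is isolated,'' which then contradicts $v$ being reachable at distance $r$ through an untouched interior. Your partial deletion destroys that equivalence, and with it the clean contradiction. So while the contrapositive framing, the observation that nothing in $N_{r-1}$ of a surviving $\tau$-vertex could be touched, and the cost bound all match the paper, the core of the correctness proof is missing and the repair you gesture at is not justified.
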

\begin{proof}
	If $d=0$, then the Lemma holds. 
	Hence we can assume that $B$ is not just an isolated vertex.
	We proceed by contraposition. Assume $G\in \mathcal{C}_{d}$ is a graph on $n\geq 1$
	vertices containing no more than $\lambda n $ vertices $v$ of $r$-type
	$\tau$. Let $G'$ be the graph obtained from $G$ by
	isolating all vertices $v$ of $r$-type $\tau$. First note
	that $G'$ is $\epsilon$-close to $G$ since we did no more than $d\lambda n
	\leq \epsilon dn$ edge modifications. Now assume that $v'$ is a vertex of
	$r$-type $\tau$. Since we isolated all vertices having
	$r$-type $\tau$ we know that
	$\mathcal{N}_{r}^{G'}(v')\not=
	\mathcal{N}_{r}^G(v')$.  Therefore there
	must be a vertex $v$ in $N_{r}^G(v')$ such that
	$v$ has type $\tau$, because otherwise the
	$r$-ball of $v'$ could not witness any of the edge modifications. This means
	that there is a path $(v'=v_0,v_1,\dots,v_{k-1},v_k=v)$ of length $k\leq r$
	in $G$. Now pick the maximum index $i$ such that
	$\dist_{G'}(v',v_i)<\infty$. First observe that because $v=v_k$ is isolated
	in $G'$ we get that $i<k$ and therefore $\dist_{G'}(v',v_i)<r$.  Since
	$\dist_{G'}(v',v_{i+1})=\infty$ by construction and $\{v_i,v_{i+1}\}\in
	E(G)$, we get
	$\deg_{\mathcal{N}_{r}^{G'}(v')}(v_i)=\deg_{G'}(v_i)<\deg_G(v_i)\leq
	d$. Since $\mathcal{N}_{r}^{G'}(v')\in \tau$ this yields a
	contradiction to our previous assumption that all vertices in
	$\mathcal{N}^{B}_{r-1}{a}$ have degree $d$. Hence the graph $G'$ can not
	contain a vertex $v'$ of $r$-type $\tau$ and is therefore
	$\tau$-neighbourhood free. 
\end{proof}
The next Lemma follows from Lemmas~\ref{lemma:farImpliesManyCounterexamples-forbidden$r$-ball-OnlyDegree$d$} and~\ref{lemma:farImpliesManyCounterexamples-forbidden$r$-ball-oneDegreeMissing}
%since for radius $r=1$ the $(r-1)$-ball contains only one vertex.
for $r=1$, since the $0$-ball has one vertex.
\begin{lemma}\label{lemma:farImpliesManyCounterexamples-forbidden1-ball}
	Let  $\tau$ be a $1$-type. Let $\epsilon\in (0,1]$ be fixed, $n_0={2d^2}/{\epsilon}$ and $\lambda={\epsilon d}/(14(1+d^{3}))$. Then every graph $G\in \mathcal{C}_{d}$ on $n\geq n_0$ vertices which is $\epsilon$-far from being $\tau$-neighbourhood free contains more than $\lambda n$ vertices of $1$-type $\tau$.\qed
\end{lemma}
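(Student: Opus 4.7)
\textbf{Proof plan for Lemma~\ref{lemma:farImpliesManyCounterexamples-forbidden1-ball}.} The plan is to specialize the two previous lemmas to the case $r=1$ and take a case distinction on the degree of the centre vertex $a$ of a $1$-ball $B$ of type $\tau$. The key observation is that for $r=1$ the set $\mathcal{N}_{r-1}^B(a) = \mathcal{N}_0^B(a)$ consists of the single vertex $a$ itself, so the hypotheses of the two lemmas reduce to a simple condition on $\deg_B(a)$.

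First I would fix $B$ to be a $1$-ball with centre $a$ of type $\tau$, and write $\tilde d := \deg_B(a)$. If $\tilde d = d$, then every vertex in $\mathcal{N}_0^B(a)=\{a\}$ has degree $d$, so Lemma~\ref{lemma:farImpliesManyCounterexamples-forbidden$r$-ball-OnlyDegree$d$} applies (with $r=1$) and guarantees that any $G\in\mathcal{C}_d$ on $n\geq 1$ vertices which is $\epsilon$-far from being $\tau$-neighbourhood free already contains more than $\epsilon n$ vertices of $1$-type $\tau$; since $\epsilon\geq \epsilon d/(14(1+d^3))=\lambda$, we are done in this case.

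If instead $\tilde d < d$, then the single vertex $a\in \mathcal{N}_0^B(a)$ has degree $\tilde d$, and trivially no vertex of $\mathcal{N}_0^B(a)$ has degree $\tilde d + 1$ (there is only $a$, and $\deg_B(a)=\tilde d$). Thus the hypotheses of Lemma~\ref{lemma:farImpliesManyCounterexamples-forbidden$r$-ball-oneDegreeMissing} are satisfied with $r=1$, yielding $n_0=2d^2/\epsilon$ and the detection rate $\epsilon d/(14(1+d^{2r+1}))=\epsilon d/(14(1+d^3))=\lambda$, exactly matching the parameters in the statement. Combining the two cases and taking the smaller threshold $\lambda=\epsilon d/(14(1+d^3))$ and the larger threshold $n_0=2d^2/\epsilon$ uniformly gives the claim. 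There is essentially no obstacle here beyond checking that the hypotheses of the two earlier lemmas really do specialize as above; the $r=1$ case is clean because $\mathcal{N}_0^B(a)=\{a\}$, so no additional degree condition inside the neighbourhood needs to be verified.
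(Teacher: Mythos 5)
Your proposal is correct and matches the paper's approach exactly. The paper gives only a one-line justification ("follows from Lemmas~\ref{lemma:farImpliesManyCounterexamples-forbidden$r$-ball-OnlyDegree$d$} and~\ref{lemma:farImpliesManyCounterexamples-forbidden$r$-ball-oneDegreeMissing} for $r=1$, since the $0$-ball has one vertex"), and your write-up is precisely the intended elaboration: set $\tilde d := \deg_B(a)$, note that $\mathcal{N}_0^B(a)=\{a\}$ makes both hypotheses trivial to verify, use the second lemma when $\tilde d = d$ (with the observation $\epsilon \geq \lambda$) and the first when $\tilde d < d$, and take the worse of the two thresholds.
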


\begin{proof}[Proof of Theorem~\ref{thm:dNeighbourhoodFreeness}]
	Lemma~\ref{lemma:testerFramework} with $\F=\{\tau\}$   and $M=\emptyset$ combined with Lemma~\ref{lemma:farImpliesManyCounterexamples-forbidden$r$-ball-oneDegreeMissing} proves Theorem~\ref{thm:dNeighbourhoodFreeness} in all cases apart from when $d=1$. In case $d=1$ we have $\tilde{d}=0$. In this case we set $M:=\{n\in\mathbb{N}\mid n\equiv 1\mod 2 \}$ and get that for $\epsilon\in (0,1]$ and  $\lambda=\epsilon$ we have that every graph $G\in \mathcal{C}_{d}$ on $n\equiv 0\mod 2$ vertices which is $\epsilon$-far from being $\tau$-neighbourhood free contains more than $\lambda n$ vertices of $r$-type $\tau$. This is the case as assuming the number of vertices of $r$-type $\tau$ is no more than $\lambda n$ we can add an edge between any pair of vertices of degree $0$, obtaining a graph $G'$ which is $\lambda n\leq \epsilon dn$ close to $G$. 
\end{proof}

Theorem~\ref{thm:1NeighbourhoodFreeness} follows from Lemma~\ref{lemma:testerFramework} and Lemma~\ref{lemma:farImpliesManyCounterexamples-forbidden1-ball} where in Lemma~\ref{lemma:testerFramework} we use either $\F=\{\tau\}$ or $\F=\emptyset$ depending on whether $\tau$ has degree bounded by $d$.
\begin{proof}[Proof of Theorem \ref{thm:neighbourhoodRegularity}] Let $\tau$ be a $1$-type such that   $B\setminus \{a\}$ is a union of disjoint cliques for all $(B,a)\in \tau$ as in the statement of the theorem.
	We define $\mathcal{P}$ to be the property of being $\tau$-neighbourhood regular and let $\setOfMaxCliques$ be the set of maximal cliques in $G$. Let us define the function $\maxcl^G:V(G)\times\mathbb{N}\rightarrow \mathbb{N}$ where $\maxcl^G(v,i):=|\{K\in \setOfMaxCliques\mid |K|=i,v\in K\}|$  is the number of maximal $i$-cliques containing $v$.
	\begin{claim}\label{claim:DefiningMFor1NieghbourhoodRegularity}
		If $G\in \mathcal{P}$ then $\maxcl^B(a,i)\cdot n\equiv 0\mod i$. 
	\end{claim}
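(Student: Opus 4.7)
The plan is to deduce the divisibility via a double-counting identity once we know that every vertex $v$ of $G$ has exactly $\maxcl^B(a,i)$ maximal $i$-cliques through it. So first I would analyse the local structure around a fixed vertex $v\in V(G)$. Since $G\in\mathcal{P}$, the $1$-ball $(\mathcal{N}_1^G(v),v)$ lies in $\tau$, so by the hypothesis on $\tau$ the open neighbourhood of $v$ in $G$ decomposes as a disjoint union of cliques $K_1,\dots,K_\ell$ (disjoint in the sense that there are no edges between distinct $K_j$'s). Then each set $\{v\}\cup K_j$ is a clique of $G$ containing $v$.

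Next I would verify that these are precisely the maximal cliques of $G$ containing $v$. Any clique containing $v$ is of the form $\{v\}\cup C$ where $C\subseteq \mathcal{N}_1^G(v)$ induces a clique, hence $C$ must lie entirely inside one of the $K_j$'s (two vertices from different $K_i,K_j$ are non-adjacent by disjointness), so every maximal clique containing $v$ equals $\{v\}\cup K_j$ for some $j$. Conversely $\{v\}\cup K_j$ is maximal: if some $w\notin\{v\}\cup K_j$ extended it then $w\in\mathcal{N}_1^G(v)$, so $w\in K_i$ for some $i\neq j$, and then $w$ would need to be adjacent to every vertex of $K_j$, contradicting the disjointness of $K_i$ and $K_j$. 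Hence $\maxcl^G(v,i)$ equals the number of $j$'s with $|K_j|+1=i$, and this number depends only on $\tau$. Applying the same argument to the ball $(B,a)\in\tau$ itself shows this number equals $\maxcl^B(a,i)$. Therefore
\[
\maxcl^G(v,i)=\maxcl^B(a,i)\qquad\text{for every }v\in V(G).
\]

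Finally I would double-count the pairs $(v,K)$ where $K\in\mathcal{K}^G$ is a maximal $i$-clique and $v\in K$. Summing over vertices gives
\[
\sum_{v\in V(G)}\maxcl^G(v,i)=n\cdot\maxcl^B(a,i),
\]
while summing over maximal $i$-cliques gives $i\cdot|\{K\in\mathcal{K}^G:|K|=i\}|$, since each such clique is counted exactly once per vertex. Equating the two expressions yields $i\mid n\cdot\maxcl^B(a,i)$, which is the claim. The only nontrivial step is the maximality verification in the middle paragraph; everything else is bookkeeping.
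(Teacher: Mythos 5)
Your argument is correct and follows the same outline as the paper's one-line proof: establish $\maxcl^G(v,i)=\maxcl^B(a,i)$ for every $v\in V(G)$, then double-count pairs $(v,K)$ with $K$ a maximal $i$-clique containing $v$. The only difference is that the paper treats the middle equality as immediate, whereas you derive it from the disjoint-clique structure of $\tau$; worth noting is that this step holds even without that structural hypothesis, since a clique through $v$ and any potential extension of it live entirely inside $N_1^G(v)$, so the set of maximal cliques through $v$ (and hence $\maxcl^G(v,i)$) is determined by the isomorphism type of $(\mathcal{N}_1^G(v),v)$ alone. One small slip: your explicit formula ``$\maxcl^G(v,i)$ equals the number of $j$ with $|K_j|+1=i$'' misses the case $\ell=0$ (isolated-vertex type), where $\{v\}$ is itself a maximal $1$-clique yet the count of $K_j$'s is zero; this does not affect the conclusion, since both sides of the equality $\maxcl^G(v,1)=\maxcl^B(a,1)$ equal $1$ and the divisibility by $i=1$ is trivial.
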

	\begin{proof}
		First note that $G\in \mathcal{P}$ implies that $\mathcal{N}_1^G(v)\in \tau$ for all $v\in V(G)$. Then $\maxcl^B(a,i)=\maxcl^G(v,i)$ for all $v\in V(G)$ and 
		\[\maxcl^B(a,i)\cdot n  =\sum_{v\in V}\maxcl^G(v,i) =|\{K\in\setOfMaxCliques\mid |K|=i \}|\cdot i\equiv 0 \mod i.\]
	\end{proof}
	Let $M:=\{n\in \mathbb{N}\mid \text{there is }1\leq i \leq d \text{ such that }\maxcl^B(a,i)\cdot n\not\equiv 0 \mod i\}$. 
	Note that deciding whether $n\in M$ 
	only requires standard arithmetic operations.
	%is in $\mathcal{O}(\log n)$.
	\begin{claim}\label{claim:farImpliesManyCounterexamples-negihbourhoodRegularity}
		For $\epsilon\in (0,1]$ let $\lambda={\epsilon}/(20d^6) $ and $n_0=20 d^8$.  Than any graph $G\in \mathcal{C}_{d}$ on $n\geq n_0$, $n\notin M$ vertices, which is $\epsilon$-far from $\mathcal{P}$, contains more than $\lambda n$ vertices $v$ which are not of $1$-type $\tau$. 
	\end{claim}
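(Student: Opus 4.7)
We argue by contraposition. Let $U:=\{v\in V(G):\mathcal{N}_1^G(v)\notin\tau\}$ be the set of bad vertices and suppose $|U|\le\lambda n$; we show $G$ is $\epsilon$-close to $\mathcal{P}_n$. The hypothesis on $\tau$ yields a canonical clique decomposition of any type-$\tau$ graph: every edge lies in a unique maximal clique, distinct maximal cliques share at most one vertex, and each vertex is contained in exactly $\maxcl^B(a,i)$ maximal cliques of size $i$ for every $i$. Counting vertex--clique incidences forces the number of size-$i$ cliques in any type-$\tau$ graph on $n$ vertices to equal $\maxcl^B(a,i)\cdot n/i$, which is an integer exactly when $n\notin M$; so $n\notin M$ is precisely the arithmetic precondition for a type-$\tau$ graph on $n$ vertices to exist at all.

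My plan is to perform a surgery inside a bounded neighbourhood of $U$. Set $W:=N_r^G(U)$ for a suitable constant $r$, so that $|W|\le d^r\lambda n$, and delete every edge incident to $W$ at a cost of $O(d^{r+1}\lambda n)$ deletions. Every vertex $v$ with $d_G(v,U)\ge r+2$ then has its closed $1$-neighbourhood disjoint from $W$ and thus keeps its original type $\tau$. The interface vertices---those $v$ with $d_G(v,U)=r+1$---have lost some members of their $\tau$-clique decomposition to $W$; I repair each by drawing isolated vertices from $W$ and adding the edges needed to re-complete each truncated clique to its prescribed size. A fresh vertex $w\in W$ placed into an interface clique carries only one clique so far and is not yet of type $\tau$; to remedy this, I reuse each fresh $w$ across several interface repairs so that its eventual neighbourhood realises exactly the clique profile $\maxcl^B(a,\cdot)$ required by $\tau$. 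Finally, the still-unused vertices of $W$ are assembled into disjoint type-$\tau$ components that share no edge with the rest of $G$; the divisibility granted by $n\notin M$, together with a constant number of local swaps of cliques between these new components and the repaired interface, absorbs any residue in the vertex count, and the lower bound $n\ge n_0=20d^8$ supplies the headroom needed for those swaps.

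Every edit in this procedure is incident to $W$, so the surgery is confined to $N_{r+1}^G(U)$ and contributes $O(d^{2r+1}\lambda n)$ edge modifications. Choosing $r$ to be a small constant (the analysis should show $r\le 3$ suffices) keeps the total well within the budget $\epsilon dn=20d^7\lambda n$ allowed by $\lambda=\epsilon/(20d^6)$. The principal obstacle, flagged in the introduction as the chain-reaction phenomenon, is that each local repair can corrupt an adjacent good vertex; the $N_r$-buffer in the first step combined with the rule that only edges incident to $W$ are modified thereafter prevents any propagation beyond $N_{r+1}^G(U)$. The technically subtlest step is the \emph{interleaved} reuse of the isolated vertices of $W$ across multiple interface repairs, so that every fresh vertex ends up with exactly the prescribed $\maxcl^B(a,\cdot)$ profile, together with the final arithmetic balancing; both rely on $n\notin M$ and the slack afforded by $n_0$.
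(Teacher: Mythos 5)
Your high-level strategy — contraposition, followed by a local surgery confined to a bounded neighbourhood of the bad set, with the integrality from $n\notin M$ absorbing residues — is the same as the paper's, but your construction is genuinely different: you bulk-delete every edge incident to $W=N_r^G(U)$ and then rebuild, whereas the paper performs a sequence of four targeted edits (remove overlapping cliques, remove wrong-size cliques, pairwise swaps to equalise $\maxcl^{G}(v,i)$, and a final bulk correction that uses the divisibility). The paper's route never tries to realise the full profile $\maxcl^B(a,\cdot)$ on a single fresh vertex; instead it moves cliques between existing vertices of the correct type, using distant type-$\tau$ vertices as swap partners, which is why the arithmetic closes cleanly.

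The genuine gap is the step you yourself flag as subtle: the \emph{interleaved reuse} of fresh vertices from $W$. The multiset of partial-clique sizes waiting at the interface is dictated by $G$, not chosen by you, so you need a matching that partitions those partial cliques across fresh vertices such that each fresh vertex sees exactly the prescribed numbers $\maxcl^B(a,i)$ of size-$i$ cliques for every $i$. You give no argument that such a matching exists, and in general it need not: the interface can present an excess of, say, size-$3$ partial cliques with no compensating deficit of size-$4$ ones to lay at the same fresh vertex, and then you are stuck. The paper resolves exactly this by the $G^{(3)}$ swap step, which first balances per-vertex clique counts so that for each $i$ the discrepancy $\maxcl^{G^{(3)}}(v,i)-\maxcl^B(a,i)$ has a single sign across all $v$; only then does the $G^{(4)}$ step invoke the divisibility $\sum_v(\maxcl^{G^{(3)}}(v,i)-\maxcl^B(a,i))\equiv 0\ (\mathrm{mod}\ i)$ (which follows from the double-count of clique incidences together with $n\notin M$) to finish. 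Your proposal also leans on ``a constant number of local swaps'' to absorb the vertex-count residue when assembling leftover $W$-vertices into fresh type-$\tau$ components, but the number of leftovers is $\Theta(\lambda n)$ (not a constant) and the condition $n\notin M$ speaks to the divisibility of the \emph{global} count $n$, not of the leftover count, so the residue argument as stated does not go through. Until those two steps are made explicit — a matching lemma for the interleaved reuse, and a correct accounting of the leftover count — the proposal does not constitute a proof.
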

	\begin{proof}
		We proceed by contraposition. Let $G\in \mathcal{C}_{d}$ be a graph on $n\geq n_0$, $n\notin M$ vertices and assume that $G$ contains $\leq \lambda n$ vertices which are not of $1$-type $\tau$. We will now describe an algorithmic procedure which takes $<\epsilon d n$ edge modifications and transforms $G$ into a graph $G^{(4)}\in \mathcal{P}$, which will prove the claim. 
		
		Let $\tilde{E}^{(1)}:=\{e\in E(G)\mid \text{there are distinct } K, K'\in \setOfMaxCliques, |K\cap K'|>1,e\subseteq K\}$.  Let $G^{(1)}$ be the graph $G^{(1)}=(V(G), E^{(1)})$, where $E^{(1)}=E\setminus\tilde{E}^{(1)}$. First note that $G^{(1)}$ has no distinct maximal cliques $K,K'$ with $|K\cap K'|>1$. Furthermore 
		\begin{align*}
		|\tilde{E}^{(1)}|&\leq {d\choose 2}\cdot |\{K \in \setOfMaxCliques \mid \text{exists } K'\in \setOfMaxCliques , |K\cap K'|>1 \}|\leq \frac{d^3\lambda n}{2},
		\end{align*}
		where the second inequality holds because every $K\in \setOfMaxCliques$ such that there is $K'\in\setOfMaxCliques$ with $|K\cap K'|>1$ and $K\neq K'$ must contain one of the $\lambda n$ vertices $v$ which are not of $1$-type $\tau$ and there are $\leq d\lambda n$ maximal cliques containing such a vertex. In addition, the removal of the edges in $\tilde{E}^{(1)}$ will affect no more than $d^4\lambda n$ vertices because there are no more than $d^3\lambda n$ vertices contained within an edge of $\tilde{E}^{(1)}$, each of their $1$-neighbourhoods contains $d$ vertices and any vertex, whose $1$-neighbourhood is affected, must be of distance $1$ to one of the vertices contained in an edge in $\tilde{E}^{(1)}$. Hence $G^{(1)}$ contains $\leq (d^4+1)\lambda n<2d^4\lambda n$ vertices $v$ which are not of $1$-type $\tau$.

		Note that in $G^{(1)}$ for all vertices $v$ the graph $\mathcal{N}^{G^{(1)}}_{1}(v)\setminus \{v\}$ is a disjoint union of cliques but there might be $K\in \setOfMaxCliques[G^{(1)}]$ such that $\maxcl^B(a,|K|)=0$. We define the edge set  $\tilde{E}^{(2)}:=\{e\in E^{(1)}\mid \text{exists }K\in \setOfMaxCliques[G^{(1)}], e\subseteq K,\maxcl^B(a,|K|)=0\}$ and let $G^{(2)}$ be the graph $G^{(2)}=(V(G), E^{(2)}))$, where $E^{(2)}=E^{(1)}\setminus\tilde{E}^{(2)}$.
		Furthermore 
		\begin{align*}
		|\tilde{E}^{(2)}|&\leq d\cdot |\{v\mid \text{exists }K\in \setOfMaxCliques[G^{(1)}], v\in K,\maxcl^B(a,|K|)=0\}|\leq d\cdot 2d^4\lambda n,
		\end{align*} 
		where the first inequality holds because every clique in $G^{(1)}$ has size $\leq d$ and the second because $\mathcal{N}^{G^{(1)}}_{1}(v)\notin \tau$ for every $v\in \{v\mid \text{exists }K\in \setOfMaxCliques[G^{(1)}], v\in K,\maxcl^B(a,|K|)=0\}$.
		
		Note that $\maxcl^B(a,|K|)\not= 0$ for all $K\in
		\setOfMaxCliques[G^{(2)}]$, but there might be $v\in V(G)$ and
		$i\leq d$ with $\maxcl^B(a,i)\not=\maxcl^{G^{(2)}}(v,i)$. Moreover,
		note that because $n\geq n_0$ there are at least $2d$ $4$-balls in
		$G^{(2)}$ which are completely disjoint from the $4$-balls of any vertex
		$v$ of $1$-type $\tau$. $G^{(3)}$ will also have this
		property.
		Let $G^{(3)}=(V(G),E^{(3)})$ be the graph obtained from $G^{(2)}$ by the following operations.
		For every pair $v,v'$ such that there is $i\leq d$ with
		$\maxcl^B(a,i)>\maxcl^{G^{(2)}}(v,i)$ and
		$\maxcl^B(a,i)<\maxcl^{G^{(2)}}(v',i)$, let $w$ be a vertex of type
		$\tau$ which has at least distance $4$ to $v$ and to $v'$.
		Let $ K'=\{v'_1,\dots,v'_{i-1},v'\}\in \setOfMaxCliques[G^{(2)}]$ and
		$K=\{v_1,\dots,v_{i-1},w\}\in \setOfMaxCliques[G^{(2)}]$. Delete the
		edges $\{\{v',v'_j\},\{w,v_j\}\mid  j\in [i-1]\}$ and add the edges
		$\{\{v,v_j\}, \{w,v'_j\}\mid j\in [i-1]\}$. Note that the vertices
		$v_1,\dots,v_{i-1},v'_1,\dots,v'_{i-1},w$ are still contained in the same
		number of cliques as before, while $v$ is contained in one additional
		$i$-clique and $v'$ is contained in one less. 
		
		Note that in $G^{(3)}$, it holds that either $\maxcl^B(a,i)\geq
		\maxcl^{G^{(3)}}(v,i)$ for all vertices $v$, or $\maxcl^B(a,i)\leq
		\maxcl^{G^{(3)}}(v,i)$ for all $v$ for every $i\in \{1,\dots,d\}$.
		Let $G^{(4)}$ be the graph obtained from $G^{(3)}$ by the following
		operations. For every $i$ such that there is a vertex $v$ with
		$\maxcl^B(a,i)<\maxcl^{G^{(3)}}(v,i)$, we pick $i$ not necessarily
		distinct vertices 
		$v_1,\dots,v_i$ with the following property. In case a vertex $v$ appears $k$ times amongs $v_1,\dots, v_i$ then 
		$\maxcl^{G^{(3)}}(v,i)-\cdot\maxcl^B(a,i)\geq k$. Note that
		these choices are possible because $\sum_{v\in
			V(G^{(3)})}\maxcl^{G^{(3)}}(v,i)\equiv 0 \mod i$ and $\maxcl^B(a,i)\cdot
		n\equiv 0 \mod i$ by assumption $n\notin M$ and hence we have $\sum_{v\in
			E^{(3)}}(\maxcl^{G^{(3)}}(v,i)-\maxcl^B(a,i))\equiv 0 \mod i$. Let
		$K_1,\dots,K_i\in \setOfMaxCliques[G^{(3)}]$  be distinct cliques such that $v_j\in K_j$ for every
		$1\leq j\leq i$. Let $K=\{w_1,\dots,w_i\}\in \setOfMaxCliques[G^{(3)}]$
		such that the distance between any pair $v_j,w_k$ is at least $4$. Remove
		the set of edges $\{\{w_j,w_k\},\{v_j,v\}\mid v\in K_j, j,k\in [i]\}$ and
		add the set of edges $\{\{w_j,v\}\mid v\in K_j,j\in [i]\}$. Note that
		this reduces the number of maximal $i$-cliques $v_1,\dots,v_i$ are in by
		one, while leaving the number of cliques $w_1,\dots,w_i$ are in the
		same.
		Similarly, for  every $i$  such that there is a vertex $v$ with
		$\maxcl^B(a,i)>\maxcl^{G^{(3)}}(v,i)$ we pick $i$ not necessarily
		distinct vertices %\pan{Here there is an analogous problem with the handling of non-distinct vertices in $v_1,...,v_i$.} 
		$v_1,\dots,v_i$ with the following property. In case $v$ appears $k$ time amongst $v_1,\dots,v_i$ we have
		$\maxcl^B(a,i)-\maxcl^{G^{(3)}}(v,i)\leq k$. Let
		$w_1,\dots,w_i$ be vertices with $\maxcl^B(a,i)=\maxcl^{G^{(3)}}(w_j,i)$
		such that $w_1,\dots,w_i$ are of distance at least $4$ from every $v_j$, $1\leq j\leq i$, and $w_1,\dots,w_i$ are pairwise of distance at least
		$4$. 
		Let $K_j\in \setOfMaxCliques[G^{(3)}]$ with $w_j\in K_j$ for $ j\in
		\{1,\dots,i\}$. Remove the set of edges  $\{\{w_j,w\}\mid w\in K_j,1\leq j\leq i\}$
		and add the set of edges $\{\{v_j,w\}\{w_j,w_k\}\mid w\in K_j,j,k\in
		\{1,\dots,i\}\}$. Note that this adds one to the number of $i$-cliques
		$v_1,\dots,v_i$ are in, while leaving the number of $i$-cliques
		$w_1,\dots,w_i$ are in the same.
		
		By construction $G^{(4)}\in \mathcal{P}$. The number of edge modifications in
		total  is $|E^{(1)}|+|E^{(2)}|$ plus the number of modifications it takes
		to transform $G^{(2)}$ into $G^{(4)}$. First note that 
		\begin{displaymath}
		\sum_{i=3}^{d}\sum_{v\in
			V(G)}|\maxcl^B(a,i)-\maxcl^{G^{(2)}}(v,i)|\leq 2d \cdot
		2d^4\lambda n
		\end{displaymath}
		 since every of the $2d^4\lambda n$ vertices $v$ in
		$G^{(2)}$ of $1$-type $\tau$ can contribute at most $2d$ to
		the sum above. Since transforming $G^{(2)}$ into $G^{(4)}$ we proceed
		greedily, meaning we reduce the number $\sum_{i=3}^{d}\sum_{v\in
			V(G)}|\maxcl^B(a,i)-\maxcl^{G^{(2)}}(v,i)|$ by at least one in
		every step, and every such reduction takes a maximum of $4d^2$ edge
		modifications in total we need less than 
		\begin{displaymath}
		|E^{(1)}|+|E^{(2)}|+4d^2\cdot 2d\cdot 2d^4\lambda n\leq 20 d^7\lambda n=\epsilon d n
		\end{displaymath}
		edge modifications. 
	\end{proof}
	Let $\mathcal{F}:=\{\tau'\mid \tau \text{ is a }1\text{-type },\tau\not=\tau'\}$. Note that $|\mathcal{F}|\leq t<\infty$, where equality occurs when $B\notin \mathcal{C}_{d}$. Then Claim~\ref{claim:farImpliesManyCounterexamples-negihbourhoodRegularity} combined with Lemma~\ref{lemma:testerFramework} for $M$ and $\mathcal{F}$ defined as above proves the Theorem. 
\end{proof}

\section{Conclusion}\label{sec:conclusion}
We studied testability of properties definable in first-order logic in the bounded-degree model of property testing for graphs and relational structures, where \emph{testability} of a property means if it is testable with constant query complexity. We showed that all properties in $\Sigma_2$ are testable (Theorem~\ref{thm:sigma2}), 
and we complemented this by exhibiting a property (of relational structures) in $\Pi_2$ that is not testable (Theorem~\ref{thm:pi2}). 
Using a hardness reduction, we also exhibit a property of undirected, 3-regular graphs in $\Pi_2$ that is not testable (Theorem~\ref{thm:simpleDelta2}). The question whether first-order definable properties are testable with
a \emph{sublinear} number of queries  (e.g. $\sqrt{n}$) in the bounded-degree model is left open. 

Similar results (on the separation between $\Sigma_2$ and $\Pi_2$ properties) were obtained in the dense graph model in~\cite{alon2000efficient}, albeit with very different methods. Indeed, non-testability of first-order logic in the bounded-degree model is somewhat unexpected: Testing algorithms proceed by sampling vertices and then exploring their local neighbourhoods, and
it is well-known that first-order logic can only express `local' properties. On
graphs and structures of bounded degree this is witnessed by Hanf's strong normal form of first-order logic~\cite{Hanf1965}, which is built around the absence and presence of different isomorphism types of local neighbourhoods.
However, our negative result shows that locality of first-order logic is not sufficient for testability. This also answers an open question from~\cite{AdlerH18}. 

We obtained our non-testable properties by 
encoding the zig-zag construction of bounded-degree expanders into first-order logic on
relational structures (Theorem~\ref{thm:nonTestabilityForStructures}) and then extending this to undirected graphs (Theorem~\ref{thm:simpleDelta2}). 
We believe that this will be of independent interest.
We remark that it might also be possible to use the iterative construction of replacement product graphs of~\cite{Reingold00entropywaves}  
instead of the zig-zag construction to obtain a similar example.

We then used our non-testable graph property to answer a question on \emph{proximity oblivious} testers in the bounded-degree model, asked 
 by Goldreich and Ron more than 10 years ago~\cite{goldreich2011proximity}. Such a tester is particularly simple: it performs a basic test a number of times that may depend on the proximity parameter, 
whereas the basic test is oblivious of the parameter. In~\cite{goldreich2011proximity}, the properties that are testable in this model have been characterised as those that are both \emph{GSF-local}, and \emph{non-propagating}. Roughly speaking, \emph{GSF-local} means that the graph class omits a family of \emph{generalised subgraphs} (\ie subgraphs with constraints on how the subgraphs interact with the rest of the graph), and
\emph{non-propagating} means that 
graphs in which a forbidden generalised subgraph is unlikely to be detected by sampling vertices
are actually close to having the property in terms of edge modifications. In other words, no `chain reactions' occur, where repairing one edge will produce new unwanted configurations that again need repairing, etc.
Goldreich and Ron asked, whether `non-propagating' is necessary.
We showed that this is the case.
Our proof is based on relating first-order definable properties to GSF-local properties, via a notion that we call neighbourhood profiles, which captures first-order definability.
    Finally, we took an approach suggested by Hanf's normal form, and we proved testability of some 
	 first-order properties that speak about isomorphism types of neighbourhoods, including testability 
	 of $1$-neighbourhood-freeness, and  $r$-neighbourhood-freeness under a mild assumption on the 
	 degrees (Theorem~\ref{thm:dNeighbourhoodFreeness}, Theorem~\ref{thm:1NeighbourhoodFreeness}, and Theorem~\ref{thm:neighbourhoodRegularity}). 
In particular, these theorems imply that there are properties defined by formulas in $\Pi_2\setminus \Sigma_2$ which are testable. Since subgraph-freeness and subgraph containment are testable, Hanf's normal form suggests 
studying testability of (negated) Hanf sentences, i.\,e.\ neighborhood properties, as a next step. 
While Hanf sentences are trivially testable, we pose as an open problem whether every negated Hanf sentence is testable.

Let us remark that not all $\Sigma_2$-properties are GSF-local. Indeed, subgraph containment can be expressed in $\Sigma_1$,  
 while it is not a GSF-local property.
Also observe that our testers for neighbourhood properties in Section~\ref{sec:freeness} have \emph{one-sided error}, i.\,e.\ the testers always accept the graphs that satisfy the property. Finally, note that, in contrast to subgraph-freeness and induced subgraph-freeness, the properties $\tau$-neighbourhood regularity and $\tau$-neighbourhood-freeness are neither \emph{monotone} nor \emph{hereditary}, which are properties that are closed under edge deletion and closed under vertex deletion, respectively.  
As we mentioned before, Ito et al. \cite{ito2019characterization} recently characterised one-sided error (constant-query) testable monotone and hereditary graph properties in the bounded-degree (directed and undirected) graph model. In order to give a full characterisation of one-sided error testable properties in the bounded-degree graph model, it is important to take a step beyond monotone and hereditary graph properties. 

\vspace{1em}

\appendix
\section{Deferred Proofs from Section~\ref{sec:expansionOfGraphProperty}: Alternative proof of Theorem \ref{thm:simpleDelta2}}
\label{app:C}
Now we make use of a result from \cite{fichtenberger2019every} and our Lemma \ref{lemma:undirected_expander} that the models of $\graphFormula$ is a family of expanders to give an alternative proof of Theorem \ref{thm:simpleDelta2}.
We first introduce a definition of ``hyperfinite graphs''.
\begin{definition}\label{def:hyperfinite}
	Let $\varepsilon\in (0,1]$ and $k\geq 1$. A graph $G$ with maximum degree bounded by $d$ is called \emph{$(\varepsilon,k)$-hyperfinite} if one can remove at most $\varepsilon d |V|$ edges from $G$ so that each connected component of the resulting graph has at most $k$ vertices. For a function $\rho: (0,1]\rightarrow\mathbb{N}^+$, a graph $G$ is called \emph{$\rho$-hyperfinite} if $G$ is $(\varepsilon,\rho(\varepsilon))$-hyperfinite for every $\varepsilon>0$. A set (or property) $\Pi$ of graphs is called \emph{$\rho$-hyperfinite} if every graph in $\Pi$ is $\rho$-hyperfinite. A set (or property) $\Pi$ of graphs is called \emph{hyperfinite} if it is $\rho$-hyperfinite for some function $\rho$. 
\end{definition}

Now we recall that a graph property is a set of graphs that is invariant under graph isomorphism. A subproperty of a property $P$ is a subset of graphs in $P$ that is also invariant under graph isomorphism.
\begin{lemma}[Corollary 1.1 in \cite{fichtenberger2019every}]\label{lem:FPS19}
	Let $C_d$ be the class of graphs of bounded maximum degree $d$. Let $P\subseteq C_d$ be a property that does not contain an infinite hyperfinite subproperty, and let $P'\subseteq C_d$ be arbitrary property such that $P\cap P'$ is an infinite set. Then $P\cap P'$ is not testable.    
\end{lemma}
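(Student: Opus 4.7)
The plan is to derive this as a direct corollary of the main dichotomy theorem of Fichtenberger, Peng, and Sohler \cite{fichtenberger2019every}, which states that every testable property in the bounded-degree graph model is either finite or contains an infinite hyperfinite subproperty. Given that deep result, the statement reduces to a simple contrapositive argument, so the work is really in \emph{how} we invoke it, not in reproving it.

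First I would proceed by contradiction: suppose, toward a contradiction, that $P \cap P'$ is testable. Since $P \cap P'$ is assumed to be infinite, the main theorem of \cite{fichtenberger2019every} yields an infinite hyperfinite subproperty $H \subseteq P \cap P'$. By definition of subproperty, $H$ is a set of graphs closed under isomorphism, and since $H \subseteq P \cap P' \subseteq P$, it is also a subset of $P$ closed under isomorphism; hence $H$ qualifies as a subproperty of $P$ itself. Because $H$ is infinite and $\rho$-hyperfinite for some function $\rho$ (inherited from the conclusion of the main theorem), $H$ is an infinite hyperfinite subproperty of $P$, directly contradicting the hypothesis on $P$. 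Therefore $P \cap P'$ cannot be testable.

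The only step that requires any care is verifying that ``subproperty'' composes as needed, i.e., that if $H$ is a subproperty of $P \cap P'$ then $H$ is a subproperty of $P$. This is immediate from the definition (closure under isomorphism is preserved under taking subsets, and $\subseteq$ is transitive), so there is no substantive obstacle here. The entire content of the argument is imported from the main theorem of \cite{fichtenberger2019every}; the lemma is really just a convenient packaging of that dichotomy for use in the subsequent application to the expanding property $\graphProp$ (which by Lemma~\ref{lemma:undirected_expander} is a family of expanders and hence, as an expanding class, contains no infinite hyperfinite subproperty).
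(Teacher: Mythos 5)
Your derivation is correct and is the natural (and essentially unique) way to obtain the corollary from the main dichotomy theorem of Fichtenberger, Peng, and Sohler. Note that the paper itself does not prove this lemma—it imports it verbatim as Corollary~1.1 of \cite{fichtenberger2019every}—so there is no in-paper proof to compare against; your reconstruction of the contrapositive argument (testable $\Rightarrow$ finite or contains an infinite hyperfinite subproperty, combined with the observation that a subproperty of $P\cap P'$ is also a subproperty of $P$) is exactly how that corollary is derived from their main theorem.
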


Now we are ready to give another proof of Theorem \ref{thm:simpleDelta2}.
\begin{proof}[(An Alternative) Proof of Theorem \ref{thm:simpleDelta2}]
	We show that the property $\graphProp$ does not contain an infinite hyperfinite subproperty. If this is true, then by applying Lemma \ref{lem:FPS19} with $P=\graphProp$ and $P'=C_d$, we have that $\graphProp$ is not testable. This will then finish the proof of Theorem \ref{thm:simpleDelta2}.
	
	Suppose towards contradiction that $\graphProp$ contains an infinite hyperfinite subproperty. That is, there exists an infinite subset $Q\subseteq \graphProp$ and a function $\rho : (0, 1] \rightarrow \mathbb{N}$ such that $Q$ is $(\varepsilon, \rho(\varepsilon))$-hyperfinite for every $\varepsilon > 0$. 	
	That is, for any graph $G=(V,E)\in Q$, for any $\varepsilon>0$, we can remove $\varepsilon d |V|$ edges from $G$ so that each connected component of the resulting graph has at most $\rho(\varepsilon)$ vertices. Now let $\varepsilon$ be an arbitrarily small constant such that $\rho(\varepsilon)\ll |V|$ and that $\varepsilon\leq \frac{\xi}{100d}$, where $\xi$ is the constant from Lemma \ref{lemma:undirected_expander}. Let $V_1,V_2,\dots$ be a vertex partitioning of $V$ such that $|V_i|\leq \rho(\varepsilon)$ and the number of edges crossing different parts is at most $\varepsilon d |V|$. Let $S$ be a vertex subset that is a union of the first $j$ parts $V_1,\cdots,V_j$ such that $|\cup_{i\leq j-1} V_i|< \frac{|V|}{3}$ and $|\cup_{i\leq j} V_i|\geq\frac{|V|}{3}$. Note that such a set always exists as $|V_i|\leq \rho(\varepsilon)\ll |V|$ and furthermore, $|S|=|\cup_{i\leq j} V_i|<\frac{|V|}{2}$. Now on one hand, $|\langle S, \bar{S}\rangle|$ is at most the number of edges crossing different parts and thus at most $\varepsilon d |V|$. On the other hand, since $G\in \graphProp$, $G$ is a $\xi$-expander for some constant $\xi$ from Lemma \ref{lemma:undirected_expander}. Thus, $|\langle S, \bar{S}\rangle|\geq \xi \frac{|V|}{3} > \varepsilon d |V|$, which is a contradiction by our setting of $\varepsilon$. Therefore,  $\graphProp$ does not contain an infinite hyperfinite subproperty. This finishes the proof of the theorem. 
\end{proof}

\section{Deferred Proofs from Section~\ref{sec:charBy0Profiles}}
\label{app:B}
\begin{claim}\label{claim:satisfyingTree}
	Every structure $\struc{A}\in \bigcup_{1\leq k \leq m}\classStruc{P}_{\rho_k}\setminus \{\struc{A}_{\emptyset}\}$ satisfies $\varphi_{\operatorname{tree}}$.
\end{claim}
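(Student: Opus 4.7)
The plan is to exploit the fact that $\varphi_{\operatorname{tree}}$ decomposes into a single global conjunct $\exists^{\leq 1} x\,\varphi_{\operatorname{root}}(x)$ and a universal conjunct $\forall x\,\psi(x)$ whose body $\psi(x)$ is $2$-local around $x$ (in fact it is evaluated entirely within the $1$-neighbourhood of $x$, since every bound variable occurs inside an atomic formula relating it to $x$). Fix $k \in \{1, \ldots, m\}$ and $\struc{A} \in \classStruc{P}_{\rho_k} \setminus \{\struc{A}_{\emptyset}\}$; we will verify each conjunct from the profile $\rho_k$.

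For the global conjunct, observe that if $a \in \univ{A}$ satisfies $\varphi_{\operatorname{root}}(a)$, then $a$ has no $F$-parent in $\struc{A}$, hence none in $\mathcal{N}_2^{\struc{A}}(a)$, so the centre of the $2$-ball is itself a root in the induced structure. Consequently the $2$-type of $a$ is one of $\tau_{d,2,\sigma}^1, \ldots, \tau_{d,2,\sigma}^m$, i.e.\ an index in $\{1, \ldots, m\}$. For $k' \in \{1, \ldots, m\}$ with $k' \neq k$, no structure in $\classStruc{P}_k$ contains an element of $2$-type $\tau_{d,2,\sigma}^{k'}$ (each structure in $\classStruc{P}_k$ has exactly one root, and that root has $2$-type $\tau_{d,2,\sigma}^k$ by construction of $\classStruc{P}_k$), so $k' \notin I_k$ and therefore $\rho_k(k') = [0, 0]$. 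Combining this with $\rho_k(k) = [0, 1]$, $\struc{A}$ contains at most one root, as required.

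For the universal conjunct $\forall x\,\psi(x)$, let $a \in \univ{A}$ be arbitrary and let $i \in \{1, \ldots, n_{d,2,\sigma}\}$ be such that $(\mathcal{N}_2^{\struc{A}}(a), a) \in \tau_{d,2,\sigma}^i$. Since $\struc{A} \sim \rho_k$ and at least one $2$-type appears on $a$, we have $\rho_k(i) \neq [0,0]$, hence $i \in I_k$ by definition of $\rho_k$. By definition of $I_k$, there exists $\struc{A}' \in \classStruc{P}_k \subseteq \classStruc{P}_{\zigzag}$ and $a' \in \univ{A}'$ with $(\mathcal{N}_2^{\struc{A}'}(a'), a') \in \tau_{d,2,\sigma}^i$, so we may fix an isomorphism of $2$-balls $f \colon (\mathcal{N}_2^{\struc{A}}(a), a) \to (\mathcal{N}_2^{\struc{A}'}(a'), a')$. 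Because $\struc{A}' \models \varphi_{\zigzag}$, in particular $\struc{A}' \models \psi(a')$, and since $\psi(a')$ is a $2$-local statement about $a'$ whose witnesses all lie in $\mathcal{N}_2^{\struc{A}'}(a')$, it follows that $\mathcal{N}_2^{\struc{A}'}(a') \models \psi(a')$. Transporting this across $f^{-1}$ gives $\mathcal{N}_2^{\struc{A}}(a) \models \psi(a)$, and because $\psi(a)$ is $2$-local we conclude $\struc{A} \models \psi(a)$. Since $a$ was arbitrary, $\struc{A} \models \forall x\,\psi(x)$, and combined with the first conjunct this yields $\struc{A} \models \varphi_{\operatorname{tree}}$.

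The only delicate point — and the step that needs to be spelled out carefully in a full write-up — is the $2$-locality of $\psi(x)$: one has to check that every atomic subformula and every bound variable in $\psi(x)$ remains within the $2$-ball around $x$, so that truth of $\psi(a)$ in $\struc{A}$ is determined by $(\mathcal{N}_2^{\struc{A}}(a), a)$. This is a straightforward but tedious syntactic inspection of the three universal clauses of $\varphi_{\operatorname{tree}}$ (root/parent conditions, leaf-loop conditions, and the $F_k$-child uniqueness conditions), each of which only quantifies over elements in an $F$-, $R$-, or $L_k$-relation with $x$ and thus within distance $1$.
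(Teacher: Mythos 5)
Your proof is correct and takes essentially the same route as the paper: you bound the root count via the ordering of $2$-types together with the definition of $\rho_k$, and transfer the universal conjuncts through the $2$-ball isomorphism supplied by membership in $I_k$; the paper carries out this transfer clause by clause (explicitly pushing each atom through the isomorphism $f$ and back) rather than invoking $2$-locality as a single abstraction, but the substance is identical. One small caution worth flagging: the informal criterion you state for locality, that ``every bound variable occurs inside an atomic formula relating it to $x$'', is not by itself sufficient --- $\forall y\, E(x,y)$ satisfies it but is not local --- and what actually makes the universal body of $\varphi_{\operatorname{tree}}$ $1$-local is that each inner $\forall y$ rewrites as $\lnot\exists y\,(\alpha(x,y) \land \cdots)$ with $\alpha$ a relational atom forcing $y$ into the $1$-neighbourhood of $x$, which does hold for every quantifier in $\varphi_{\operatorname{tree}}$.
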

\begin{proof}
	Let $\struc{A}\in \bigcup_{1\leq k \leq m}\classStruc{P}_{\rho_k}\setminus \{\struc{A}_{\emptyset}\}$. Then there is $k\in \{1,\dots,m\}$ such that $\struc{A}\in \classStruc{P}_{\rho_k}$.

	By definition, $\varphi_{\operatorname{tree}}:= \exists^{\leq 1} x \varphi_{\operatorname{root}}(x)\land \varphi \land \forall x (\psi(x)\lor \chi(x))$,  where
	\begin{align*}
		\varphi:= &\forall x \Big(\big(\varphi_{\operatorname{root}}(x)\land R(x,x)\big)\lor  
		\big(\exists^{=1} y F(y,x)\land \lnot \exists y R(x,y)\land \lnot \exists y R(y,x)\big)\Big),
		\\\psi(x):= &\neg\exists y F(x,y)\land \bigwedge_{k\in \indexSetH} L_k(x,x)\land \forall y \Big(y\not= x \rightarrow 
		\\& \bigwedge_{k\in \indexSetH}\lnot L_k(x,y) \wedge \bigwedge_{k\in \indexSetH}\lnot L_k(y,x)\Big)
	\end{align*}
and
\begin{align*}	
		\chi(x):=&\lnot\exists y \bigvee_{k\in \indexSetH}\big(L_k(x,y)\lor L_k(y,x)\big)\land 
		\bigwedge_{k\in \indexSetH}\exists y_{k} \Big(x\not=y_{k}\land F_{k}(x,y_{k})\\
		&\land %\\&
		(\bigwedge_{k'\in \indexSetH,k'\not=k}\lnot F_{k'}(x,y_k))\land \forall y(y\not=y_k\rightarrow \lnot F_{k}(x,y))\Big).
	\end{align*}
	Thus, it is sufficient to prove that $\struc{A}\models \exists^{\leq 1} x \varphi_{\operatorname{root}}(x)$, $\struc{A}\models \varphi$ and $\struc{A}\models\forall x (\psi(x)\lor \chi(x))$. 
	
	To prove $\struc{A}\models \exists^{\leq 1} x \varphi_{\operatorname{root}}(x)$ we note that by construction of $\rho_k$ we have $\struc{A}\not\models \varphi_{\operatorname{root}}(a)$ 
	for any $a\in \univ{A}$ for which $(\mathcal{N}_2^{\struc{A}}(a),a)\notin \tau_{d,2,\sigma}^k$. Since $\rho_k$ restricts the number of occurrences of elements of neighbourhood type $\tau_{d,2,\sigma}^{k}$ to at most one, this proves that there is at most one $a\in \univ{A}$ with $\struc{A}\models \varphi_{\operatorname{tree}}(a)$ and hence $\struc{A}\models \exists^{\leq 1} x \varphi_{\operatorname{root}}(x)$. 
	
	To prove $\struc{A}\models \varphi$, let $a\in \univ{A}$ be an arbitrary element. 
	Since $\struc{A}\in \classStruc{P}_{\rho_k}$, there is an $i\in I_{k}$ such that $(\mathcal{N}_2^{\struc{A}}(a),a)\in \tau_{d,2,\sigma}^i$. 
	But then by definition, there exist $\other{\struc{A}}\models \varphi_{\zigzag}$ and $\other{a}\in \univ{\other{A}}$ such that $(\mathcal{N}_2^{\struc{A}}(a),a)\cong (\mathcal{N}_2^{\other{\struc{A}}}(\other{a}),\other{a})$. 
	Assume $f$ is an isomorphism from  $(\mathcal{N}_2^{\struc{A}}(a),a)$ to  $(\mathcal{N}_2^{\other{\struc{A}}}(\other{a}),\other{a})$. First consider the case that $\struc{A}\models \varphi_{\operatorname{root}}(a):=\forall y \lnot F(y,a)$. Assume there  is $\other{b}\in \univ{\other{A}}$ such that $(\other{b},\other{a})\in \rel{F}{\other{\struc{A}}}$. Since $\other{b}\in N_2^{\other{\struc{A}}}(\other{a})$, there must be an element $b\in N_2^{\struc{A}}(a)$ such that $f(b)=\other{b}$. Since $f$ is an isomorphism mapping $a$ to $\other{a}$, this implies $(b,a)\in \rel{F}{\struc{A}}$, which contradicts $\struc{A}\models \varphi_{\operatorname{root}}(a)$. 
	Hence $\other{\struc{A}}\models \varphi_{\operatorname{root}}(\other{a})$. Since $\other{\struc{A}}\models \varphi_{\operatorname{tree}}'$, it holds that $\other{\struc{A}}\models \varphi$, which means that $(\other{a},\other{a})\in \rel{R}{\other{\struc{A}}}$. But since $f$ is an isomorphism mapping $a$ onto $\other{a}$, this implies $(a,a)\in \rel{R}{\struc{A}}$. Now consider the case that  
	$\struc{A}\not\models \varphi_{\operatorname{root}}(a)$. Then there is $b\in \univ{A}$ with $(b,a)\in \rel{F}{\struc{A}}$. Since $f$ is an isomorphism, this implies $(f(b),\other{a})\in \rel{F}{\other{\struc{A}}}$. Hence $\other{\struc{A}}\models \exists^{=1} y F(y,\other{a})\land \lnot \exists y R(\other{a},y)\land \lnot \exists y R(y,\other{a})$, as $\other{\struc{A}}\models \varphi$. Now assume that  there is $b'\not=b$ such that $(b',a)\in \rel{F}{\struc{A}}$. Then $f(b)\not=f(b')$ and $(f(b),\other{a}),(f(b'),\other{a})\in \rel{F}{\other{\struc{A}}}$. Since this contradicts $\other{\struc{A}}\models \exists^{=1} y F(y,\other{a})$ we have $\struc{A}\models \exists^{=1} y F(y,a)$. Furthermore, assume that there is $b'\in \univ{A}$ such that either $(a,b')\in \rel{R}{\struc{A}}$ or $(b',a)\in \rel{R}{\struc{A}}$. Then either $(\other{a},f(b'))\in \rel{R}{\other{\struc{A}'}}$ or $(f(b'),\other{a})\in \rel{R}{\other{\struc{A}}}$, which contradicts $\other{\struc{A}}\models \lnot \exists R(\other{a},y)\land \lnot \exists y R(y,\other{a})$. Therefore $\struc{A}\models \lnot \exists y R(a,y)\land \lnot \exists y R(y,a)$ which completes the proof of $\struc{A}\models\varphi$.  
	
	We prove $\struc{A}\models \forall x (\psi(x)\lor \chi(x))$ by considering the two cases $\struc{A}\models \neg\exists y F(a,y)$ and $\struc{A}\models \exists yF(a,y)$ for each element $a\in \univ{A}$. For this, let $a\in \univ{A}$ be any element. By the construction of $\rho_k$ there is $\other{\struc{A}}\models \varphi_{\zigzag}$ and $\other{a}\in \univ{\other{A}}$ such that $(\mathcal{N}_2^{\struc{A}}(a),a)\cong (\mathcal{N}_2^{\other{\struc{A}}}(\other{a}),\other{a})$. Let $f$ be an isomorphism from  $(\mathcal{N}_2^{\struc{A}}(a),a)$ to $(\mathcal{N}_2^{\other{\struc{A}}}(\other{a}),\other{a})$. First consider the case that $\struc{A}\models \neg\exists y F(a,y)$. If there was $\other{b}\in \univ{\other{A}}$ with $(\other{a},\other{b})\in \rel{F}{\other{\struc{A}}}$ then $(a,f^{-1}(\other{b}))\in \rel{F}{\struc{A}}$ contradicting our assumption. Hence $\other{\struc{A}}\models \neg\exists y F(\other{a},y)$  which implies that $\other{\struc{A}}\not\models \chi(\other{a})$. But since $\other{\struc{A}}\models \varphi_{\zigzag}$, it holds that $\other{\struc{A}}\models \forall x (\psi(x)\lor \chi(x))$, which implies that $\other{\struc{A}}\models \psi(\other{a})$. Hence $(\other{a},\other{a})\in \rel{L_k}{\other{\struc{A}}}$ for every $k\in \indexSetH$. Since $f$ is an isomorphism and $f(a)=\other{a}$, it holds that $(a,a)\in \rel{L_k}{\struc{A}}$ for every $k\in \indexSetH$, and hence $\struc{A}\models \bigwedge_{k\in \indexSetH} L_k(a,a)$. Furthermore, assume that there is $b\in \univ{A}$, $b\not=a$ and $k\in \indexSetH$ such that either $(a,b)\in \rel{L_k}{\struc{A}}$ or $(b,a)\in \rel{L_k}{\struc{A}}$. Since $f$ is an isomorphism this implies that either $(\other{a},f(b))\in \rel{L_k}{\other{\struc{A}}}$ or $(f(b),\other{a})\in \rel{L_k}{\other{\struc{A}}}$ which contradicts $\other{\struc{A}}\models \chi(\other{a})$. Hence $\struc{A}\models\forall y \Big(y\not= a \rightarrow \bigwedge_{k\in \indexSetH}\lnot L_k(a,y) \wedge \bigwedge_{k\in \indexSetH}\lnot L_k(y,a)\Big)$ proving that $\struc{A}\models \psi(a)$. 
	
	Now consider the case that there is an element $b\in \univ{A}$ such that $(a,b)\in \rel{F}{\struc{A}}$. Since this implies that $(\other{a},f(b))\in \rel{F}{\other{\struc{A}}}$, we get that $\other{\struc{A}}\not\models \psi(\other{a})$, and hence $\other{\struc{A}}\models \chi(\other{a})$. Now assume that there is  $b\in \univ{A}$ and $k\in \indexSetH$ such that either $(a,b)\in \rel{L_k}{\struc{A}}$ or $(b,a)\in \rel{L_k}{\struc{A}}$. But then either $(\other{a},f(b))\in \rel{L_k}{\other{\struc{A}}}$ or $(f(b),\other{a})\in \rel{L_k}{\other{\struc{A}}}$, which contradicts $\other{\struc{A}}\models \chi(\other{a})$. Hence $\struc{A}\models  \lnot\exists y \bigvee_{k\in \indexSetH}\big(L_k(a,y)\lor L_k(y,a)\big)$. For each $k\in \indexSetH$, let $\other{b}_k\in \univ{\other{A}}$ be an element such that $\other{\struc{A}}\models \other{a}\not=\other{b}_{k}\land F_{k}(\other{a},\other{b}_{k})\land 
	(\bigwedge_{k'\in \indexSetH,k'\not=k}\lnot F_{k'}(\other{a},\other{b}_k))\land \forall y(y\not=\other{b}_k\rightarrow \lnot F_{k}(\other{a},y))$. Since $f$ is an isomorphism,  this implies that $a\not= b_k:=f^{-1}(\other{b}_k)$, $(a,b_k)\in \rel{F_k}{\struc{A}}$ and $(a,b_k)\notin \rel{F_{k'}}{\struc{A}}$, for each $k'\in \indexSetH,k'\not=k$. Furthermore, assume there is  $b\in \univ{A}$, $b\not=b_k$ such that $(a,b)\in \rel{F_k}{\struc{A}}$. Since $f$ is an isomorphism, this implies $f(b)\not=f(b_k)=\other{b}_k$ and $(\other{a},\other{b})\in \rel{F_k}{\other{\struc{A}}}$, which contradicts $\other{\struc{A}}\models \forall y(y\not=\other{b}_k\rightarrow \lnot F_{k}(\other{a},y))$. Hence $\struc{A}\models \forall y(y\not=b_k\rightarrow \lnot F_{k}(a,y))$ and therefore concluding that $\struc{A}\models \chi(a)$. This proves that in either case $\struc{A}\models \psi(a)\lor\chi(a)$ and therefore $\struc{A}\models \forall x(\psi(x)\lor \chi(x))$. 
\end{proof}
\begin{claim}\label{claim:satisfyingRotationMap}
	Every structure $\struc{A}\in \bigcup_{1\leq k \leq m}\classStruc{P}_{\rho_k}\setminus \{\struc
	{A}_{\emptyset}\}$ satisfies $\varphi_{\operatorname{rotationMap}}$.
\end{claim}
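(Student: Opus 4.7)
The plan is to mimic the argument used for Claim~\ref{claim:satisfyingTree} and Claim~\ref{claim:satisfyingRecursion}, exploiting that $\varphi_{\operatorname{rotationMap}}$ is $1$-local in each of its free variables: both the self-inverse clause and the ``exactly one outgoing $E$-edge per index'' clause only quantify over witnesses $y$ that are forced to lie in $N_1^{\struc A}(x)$. Since $\rho_k$ records $2$-neighbourhood types, every $1$-neighbourhood of an element of $\struc A$ is captured with room to spare in some model of $\varphi_{\zigzag}$.

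Fix $\struc A\in\classStruc{P}_{\rho_k}\setminus\{\struc{A}_{\emptyset}\}$ and any $a\in\univ{A}$. By definition of $\rho_k$ there is a model $\tilde{\struc A}\models\varphi_{\zigzag}$, an element $\tilde a\in\univ{\tilde A}$, and an isomorphism $f\colon(\mathcal N_2^{\struc A}(a),a)\to(\mathcal N_2^{\tilde{\struc A}}(\tilde a),\tilde a)$. The whole proof consists in transporting the two conjuncts of $\varphi_{\operatorname{rotationMap}}$ across~$f$.

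\emph{Self-inverse clause.} Take any $b\in\univ{A}$ and $i,j\in\indexSetRotation$ with $(a,b)\in\rel{E_{i,j}}{\struc A}$. Then $b\in N_1^{\struc A}(a)$, so $f(b)$ is defined and $(\tilde a,f(b))\in\rel{E_{i,j}}{\tilde{\struc A}}$. Using $\tilde{\struc A}\models\varphi_{\operatorname{rotationMap}}$ gives $(f(b),\tilde a)\in\rel{E_{j,i}}{\tilde{\struc A}}$, and pulling back via $f^{-1}$ yields $(b,a)\in\rel{E_{j,i}}{\struc A}$. Since $a$ is arbitrary and the symmetry is applied to every ordered pair appearing in some $E_{i,j}$, this clause holds.

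\emph{Unique-successor clause.} Fix $i\in\indexSetRotation$. Applying $\tilde{\struc A}\models\varphi_{\operatorname{rotationMap}}$ to $\tilde a$ gives a unique pair $(j,\tilde b)$ with $(\tilde a,\tilde b)\in\rel{E_{i,j}}{\tilde{\struc A}}$ and no witness for any other $j'\neq j$. Existence transports back: $\tilde b\in N_1^{\tilde{\struc A}}(\tilde a)$, so $b:=f^{-1}(\tilde b)$ lies in $\univ A$ and $(a,b)\in\rel{E_{i,j}}{\struc A}$. Uniqueness (and the absence of witnesses for $j'\neq j$) also transports back: any $b'\in\univ A$ with $(a,b')\in\rel{E_{i,j'}}{\struc A}$ forces $b'\in N_1^{\struc A}(a)$, whence $(\tilde a,f(b'))\in\rel{E_{i,j'}}{\tilde{\struc A}}$, forcing $j'=j$ and $f(b')=\tilde b$, hence $b'=b$. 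Combining the two clauses gives $\struc A\models\varphi_{\operatorname{rotationMap}}$.

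The only conceptual point to verify is that all witnesses of the relevant subformulas actually lie within the radius-$2$ ball around~$a$ where $f$ is defined; this is immediate because $E_{i,j}(x,y)$ places $y$ at distance~$1$ from $x$. There is therefore no real obstacle, and the argument is entirely parallel to (but simpler than) those of Claims~\ref{claim:satisfyingTree} and~\ref{claim:satisfyingRecursion}.
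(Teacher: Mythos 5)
Your proof is correct and takes essentially the same approach as the paper's: both fix an element $a$, use the isomorphism $f$ from $(\mathcal N_2^{\struc A}(a),a)$ to a $2$-ball in some model $\tilde{\struc A}$ of $\varphi_{\zigzag}$ (guaranteed by the $0$-profile $\rho_k$), note that all witnesses of $\varphi_{\operatorname{rotationMap}}$ lie at distance $1$ so are captured by $f$, and transport satisfaction back along $f$. The paper phrases both clauses as proofs by contradiction while you argue directly, but the mechanics are identical.
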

\begin{proof}
	Let $\struc{A}\in \bigcup_{1\leq k \leq m}\classStruc{P}_{\rho_k}\setminus \{\struc{A}_{\emptyset}\}$. Then there is a $k\in \{1,\dots,m\}$ such that $\struc{A}\in \classStruc{P}_{\rho_k}$.
	
	By definition, $\varphi_{\operatorname{rotationMap}}=\varphi\land \psi$, where 
	\begin{align*}
		&\varphi:= \forall x \forall y \Big(\bigwedge_{i,j\in \indexSetRotation}(E_{i,j}(x,y)\rightarrow E_{j,i}(y,x))\Big) \text{ and }\\
		&\psi:= \forall x \Big(\bigwedge_{i\in \indexSetRotation}\Big(\bigvee_{j\in \indexSetRotation}\big(\exists^{=1}y E_{i,j}(x,y)\land \bigwedge_{\substack{j'\in \indexSetRotation\\ j'\not=j}}\lnot \exists y E_{i,j'}(x,y)\big)\Big)\Big).
	\end{align*}
	Thus, it is sufficient to prove that $\struc{A}\models\varphi$ and $\struc{A}\models \psi$. 
	
	To prove $\struc{A}\models\varphi$, assume towards a contradiction that there are $a,b\in \univ{A}$ such that for some pair $i,j\in \indexSetRotation$, we have that $(a,b)\in \rel{E_{i,j}}{\struc{A}}$, but $(b,a)\notin \rel{E_{j,i}}{\struc{A}}$. By construction of  $ \classStruc{P}_{\rho_k}$, there is a structure $\other{\struc{A}}\models \varphi_{\zigzag}$ and $\other{a}\in \univ{\other{A}}$ such that $(\mathcal{N}_2^{\struc{A}}(a),a)\cong (\mathcal{N}_2^{\other{\struc{A}}}(\other{a}),\other{a})$. Assume $f$ is an isomorphism from $(\mathcal{N}_2^{\struc{A}}(a),a)$ to  $(\mathcal{N}_2^{\other{\struc{A}}}(\other{a}),\other{a})$. Note that $f(b)$ is defined since $b$ is in the $2$-neighbourhood of $a$. Furthermore since $f$ is an isomorphism, $(a,b)\in \rel{E_{i,j}}{\struc{A}}$ implies $(\other{a},f(b))\in \rel{E_{i,j}}{\other{\struc{A}}}$, and $(b,a)\notin \rel{E_{j,i}}{\struc{A}}$ implies $(f(b),\other{a})\notin \rel{E_{j,i}}{\other{\struc{A}}}$. Hence $\other{\struc{A}}\not\models \varphi$, which contradicts $\other{\struc{A}}\models \varphi_{\operatorname{rotationMap}}$. 
	
	To prove $\struc{A}\models\psi$, assume towards a contradiction that there is an $a\in \univ{A}$ and $i\in \indexSetRotation$ such that  $\struc{A}\not\models \exists^{=1}y E_{i,j}(a,y)\land \bigwedge_{\substack{j'\in \indexSetRotation\\ j'\not=j}}\lnot \exists y E_{i,j'}(a,y)$ for every  $j\in \indexSetRotation$. We know that there is a structure $\other{\struc{A}}\models \varphi_{\zigzag}$ and $\other{a}\in \univ{\other{A}}$ such that $(\mathcal{N}_2^{\struc{A}}(a),a)\cong (\mathcal{N}_2^{\other{\struc{A}}}(\other{a}),\other{a})$. Let $f$ be an isomorphism from $(\mathcal{N}_2^{\struc{A}}(a),a)$ to $(\mathcal{N}_2^{\other{\struc{A}}}(\other{a}),\other{a})$. Since $\other{\struc{A}}\models \psi$, there must be $j\in \indexSetRotation$ such that $\other{\struc{A}}\models \exists^{=1}y E_{i,j}(\other{a},y)\land \bigwedge_{\substack{j'\in \indexSetRotation\\ j'\not=j}}\lnot \exists y E_{i,j'}(\other{a},y)$. Hence there must be $\other{b}\in \univ{\other{A}}$ such that $(\other{a},\other{b})\in \rel{E_{i,j}}{\other{\struc{A}}}$, which implies that $(a,f^{-1}(\other{b}))\in \rel{E_{i,j}}{\struc{A}}$. Since we assumed that $\struc{A}\not\models \exists^{=1}y E_{i,j}(a,y)\land \bigwedge_{\substack{j'\in \indexSetRotation\\ j'\not=j}}\lnot \exists y E_{i,j'}(a,y)$, there must be either $b\not=f^{-1}(\other{b})$ with $(a,b)\in \rel{E_{i,j}}{\struc{A}}$, or there must be $j'\in \indexSetRotation$, $j'\not=j$ and $b'\in \univ{A}$ such that $(a,b')\in \rel{E_{i,j'}}{\struc{A}}$.  In the first case  $(\other{a},f(b))\in \rel{E_{i,j}}{\other{\struc{A}}}$, since $f$ is an isomorphism. But then $\other{\struc{A}}\not\models \exists^{=1}y E_{i,j}(\other{a},y)$, which is a contradiction. In the second case, we get that $(\other{a},f(b'))\in \rel{E_{i,j'}}{\other{\struc{A}}}$. But then $\other{\struc{A}}\not\models \bigwedge_{\substack{j'\in \indexSetRotation\\ j'\not=j}}\lnot \exists y E_{i,j'}(\other{a},y)$, which is a contradiction. Hence $\struc{A}\models \varphi \land \psi$.
\end{proof}
\begin{claim}\label{claim:satisfyingBase}
	Every structure $\struc{A}\in \bigcup_{1\leq k \leq m}\classStruc{P}_{\rho_k}\setminus \{\struc{A}_{\emptyset}\}$ satisfies $\varphi_{\operatorname{base}}$.
\end{claim}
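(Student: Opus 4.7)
The plan is to argue exactly in the pattern used for Claims~\ref{claim:satisfyingTree}, \ref{claim:satisfyingRotationMap} and \ref{claim:satisfyingRecursion}, exploiting that $\varphi_{\operatorname{base}}$ is $2$-local: every atomic subformula only asserts something about elements of distance at most $2$ from the element witnessing $\varphi_{\operatorname{root}}$. Fix $\struc{A}\in \classStruc{P}_{\rho_k}\setminus\{\struc{A}_{\emptyset}\}$ for some $k\in\{1,\dots,m\}$ and let $a\in\univ{A}$ with $\struc{A}\models\varphi_{\operatorname{root}}(a)$ (if no such $a$ exists, the implication in $\varphi_{\operatorname{base}}$ is vacuously satisfied). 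By definition of $\rho_k$ there exist $\other{\struc{A}}\models\varphi_{\zigzag}$ and $\other{a}\in\univ{\other{A}}$ together with an isomorphism $f:(\mathcal{N}_2^{\struc{A}}(a),a)\to(\mathcal{N}_2^{\other{\struc{A}}}(\other{a}),\other{a})$. Since $\varphi_{\operatorname{root}}(x)=\forall y\,\lnot F(y,x)$ is $1$-local and $f$ is an isomorphism on the $2$-neighbourhood, the same argument used at the start of the proof of Claim~\ref{claim:satisfyingTree} yields $\other{\struc{A}}\models\varphi_{\operatorname{root}}(\other{a})$, so the conclusion of $\varphi_{\operatorname{base}}$ must hold for $\other{a}$ in $\other{\struc{A}}$.

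Now I would transfer the three sub-assertions of $\varphi_{\operatorname{base}}$ from $\other{a}$ to $a$ one by one, in each case by contraposition. For the self-loop clause $E_{i,j}(x,x)$: since $(\other{a},\other{a})\in\rel{E_{i,j}}{\other{\struc{A}}}$ and $f^{-1}(\other{a})=a$, using that $f$ is an isomorphism on $\mathcal{N}_2^{\other{\struc{A}}}(\other{a})$ (which contains $\other{a}$ itself) gives $(a,a)\in\rel{E_{i,j}}{\struc{A}}$. For the uniqueness clause $\forall y\,(x\neq y\to \lnot E_{i,j}(x,y)\land\lnot E_{i,j}(y,x))$: any violating witness $b\neq a$ with $(a,b)\in\rel{E_{i,j}}{\struc{A}}$ or $(b,a)\in\rel{E_{i,j}}{\struc{A}}$ lies in $N_1^{\struc{A}}(a)\subseteq N_2^{\struc{A}}(a)$, so $f(b)\neq \other{a}$ and the corresponding tuple appears in $\other{\struc{A}}$, contradicting the clause for $\other{a}$.

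For the existential clause, fix indices $k,k'\in\indexSetH$ and $i,i'\in\indexSetRotation$ with $\rot_{H^2}(k,i)=(k',i')$. By the conclusion of $\varphi_{\operatorname{base}}$ for $\other{a}$ there are $\other{y},\other{y}'\in\univ{\other{A}}$ such that $(\other{a},\other{y})\in\rel{F_k}{\other{\struc{A}}}$, $(\other{a},\other{y}')\in\rel{F_{k'}}{\other{\struc{A}}}$ and $(\other{y},\other{y}')\in\rel{E_{i,i'}}{\other{\struc{A}}}$. The elements $\other{y}$ and $\other{y}'$ are at distance $1$ from $\other{a}$, hence inside $\mathcal{N}_2^{\other{\struc{A}}}(\other{a})$; also the tuple between $\other{y}$ and $\other{y}'$ lies inside this $2$-neighbourhood. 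Thus $y:=f^{-1}(\other{y})$ and $y':=f^{-1}(\other{y}')$ are well-defined, and because $f$ is an isomorphism onto $\mathcal{N}_2^{\other{\struc{A}}}(\other{a})$ we obtain $(a,y)\in\rel{F_k}{\struc{A}}$, $(a,y')\in\rel{F_{k'}}{\struc{A}}$ and $(y,y')\in\rel{E_{i,i'}}{\struc{A}}$, as required. Combining the three parts gives $\struc{A}\models\varphi_{\operatorname{base}}$.

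The routine but only mildly subtle point is verifying that the existential witnesses $\other{y},\other{y}'$ and the edge between them actually sit inside the $2$-neighbourhood used in the construction of $\rho_k$; once this locality bookkeeping is done, every other step is a direct translation across the isomorphism $f$, exactly as in the preceding claims.
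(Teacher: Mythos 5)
Your proof is correct and follows essentially the same approach as the paper: fix a root element $a$, use the $0$-profile $\rho_k$ to obtain $\other{\struc{A}}\models\varphi_{\zigzag}$ with an isomorphism $f$ of the $2$-neighbourhoods, transfer $\varphi_{\operatorname{root}}$ to $\other{a}$, and then pull the three sub-assertions of $\varphi_{\operatorname{base}}$ back through $f$, using that all witnesses live within distance $2$ of $a$. The locality bookkeeping you flag at the end is exactly the only place where care is needed, and the paper handles it the same way.
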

\begin{proof}
	Let $\struc{A}\in \bigcup_{1\leq k \leq m}\classStruc{P}_{\rho_k}\setminus \{\struc{A}_{\emptyset}\}$. Then there is a $k\in \{1,\dots,m\}$ such that $\struc{A}\in \classStruc{P}_{\rho_k}$.
	
	By definition, $\varphi_{\operatorname{base}}:=\forall x  \big(\varphi_{\operatorname{root}}(x)\rightarrow (\varphi(x)\land\psi(x))\big)$, where 
	\begin{align*}
		\varphi(x):=&\bigwedge_{i,j\in \indexSetRotation}\Big(E_{i,j}(x,x)\land \forall y \Big(x\not= y\rightarrow \big(\lnot E_{i,j}(x,y)\land \lnot E_{i,j}(y,x)\big)\Big)\Big)\text{ and }\\
		\psi(x):=& \bigwedge_{\substack{ \rot_{H^2}(k,i)=(k',i')\\k,k'\in \indexSetH\\i,i'\in \indexSetRotation }}\exists y \exists y'\big(F_k(x,y)\land F_{k'}(x,y')\land E_{i,i'}(y,y')\big).
	\end{align*}
	Thus, it is sufficient to prove that $\struc{A}\models \varphi(a)$ and $\struc{A}\models \psi(a)$ for every $a\in \univ{A}$ for which $\struc{A}\models \varphi_{\operatorname{root}}(a)$. Therefore assume $a\in \univ{A}$ is any element such that $\struc{A}\models \varphi_{\operatorname{root}}(a)$. Because $\struc{A}\in \classStruc{P}_{\rho_k}$ there is an $i\in I_{k}$ such that $(\mathcal{N}_2^{\struc{A}}(a),a)\in \tau_{d,2,\sigma}^i$. Then by definition there is a structure $\other{\struc{A}}\models \varphi_{\zigzag}$ and $\other{a}\in \univ{\other{A}}$ such that $(\mathcal{N}_2^{\struc{A}}(a),a)\cong (\mathcal{N}_2^{\other{\struc{A}}}(\other{a}),\other{a})$.  Let $f$ be an isomorphism from $(\mathcal{N}_2^{\struc{A}}(a),a)$ to $(\mathcal{N}_2^{\other{\struc{A}}}(\other{a}),\other{a})$. Assume that there is an element $\other{b}\in \univ{\other{A}}$ such that $(\other{b},\other{a})\in \rel{F}{\other{\struc{A}}}$. Since $f$ is an isomorphism and $\other{b}\in N_2^{\other{\struc{A}}}(\other{a})$ we get that $(f^{-1}(\other{b}),a)\in \rel{F}{\struc{A}}$ which contradicts that $\struc{A}\models\varphi_{\operatorname{root}}(a)$ as $\varphi_{\operatorname{root}}(x):= \forall y \lnot F(y,x)$. Hence there is no element  $\other{b}\in \univ{\other{A}}$ such that $(\other{b},\other{a})\in \rel{F}{\other{\struc{A}}}$ which implies that $\other{\struc{A}}\models \varphi_{\operatorname{root}}(\other{a})$. But since $\other{\struc{A}}\models \varphi_{\zigzag}$ we have that $\other{\struc{A}}\models \varphi_{\operatorname{base}}$ and hence $\other{\struc{A}}\models \varphi(\other{a})$ and $\other{\struc{A}}\models \psi(\other{a})$.
	
	To prove $\struc{A}\models\varphi(a)$ first observe that $(a,a)\in \rel{E_{i,j}}{\struc{A}}$ for every $i,j\in \indexSetRotation$ since $\other{\struc{A}}\models \varphi(\other{a})$ implies that $(\other{a},\other{a})\in \rel{E_{i,j}}{\other{\struc{A}}}$ for every $i,j\in \indexSetRotation$ and $f$ is an isomorphism mapping $a$ onto $\other{a}$. Assume that there is an element $b\in \univ{A}$, $b\not=a$ and indices $i,j\in \indexSetRotation$ such that either $(a,b)\in \rel{E_{i,j}}{\struc{A}}$ or $(b,a)\in \rel{E_{i,j}}{\struc{A}}$. Since $b\in N_2^{\struc{A}}(a)$ and $f$ is an isomorphism we get that $f(b)\not=f(a)=\other{a}$ and either $(\other{a},f(b))\in \rel{E_{i,j}}{\other{\struc{A}}}$ or $(f(b),\other{a})\in \rel{E_{i,j}}{\other{\struc{A}}}$. But this contradicts $\other{\struc{A}}\models \varphi(\other{a})$ and hence $\struc{A}\models \varphi(a)$.
	
	We now prove that $\struc{A}\models \psi(a)$. Let $k,k'\in \indexSetH$ and $i,i'\in \indexSetRotation$ such that $\rot_{H^2}(k,i)=(k',i')$. Since $\other{\struc{A}}\models \psi(\other{a})$ there must be elements $\other{b},\other{b}'\in \univ{\other{A}}$ such that $(\other{a},\other{b})\in \rel{F_k}{\other{\struc{A}}}$, $(\other{a},\other{b}')\in \rel{F_{k'}}{\other{\struc{A}}}$ and $(\other{b},\other{b}')\in \rel{E_{i,i'}}{\other{\struc{A}}}$. But since $\other{b},\other{b}'\in N_2^{\other{\struc{A}}}(\other{a})$ we get that $f^{-1}(\other{b})$ and $f^{-1}(\other{b}')$ are defined  and since $f$ is an isomorphism we get that $(a,f^{-1}(\other{b}))\in \rel{F_k}{\struc{A}}$, $(a,f^{-1}(\other{b}'))\in \rel{F_{k'}}{\struc{A}}$ and $(f^{-1}(\other{b}),f^{-1}(\other{b}'))\in \rel{E_{i,i'}}{\struc{A}}$. Hence $\struc{A}\models \exists y \exists y'\big(F_k(a,y)\land F_{k'}(a,y')\land E_{i,i'}(y,y')$ for any $k,k'\in \indexSetH$ and $i,i'\in \indexSetRotation$ such that $\rot_{H^2}(k,i)=(k',i')$ which implies that $\struc{A}\models \psi(a)$.
\end{proof}

\section*{Acknowledgments}
We thank Sebastian Ordyniak for inspiring discussions.
The second author thanks Micha\l{} Pilipczuk and Pierre Simon for inspiring discussions at the docCourse on Sparsity in Prague 2018.

\bibliographystyle{siamplain}
\bibliography{testingFO}

\newpage
%\textbf{}

\printunsrtglossary[type=symbols,title={List of Notation}]

\end{document}